\pgfplotsset{compat=1.17}
\definecolor{martin}{rgb}{0,.4,1}
\definecolor{markus}{HTML}{006600}
\definecolor{ingo}{rgb}{.8,.5,0}
\newtheorem{theorem}{Theorem}
\newtheorem*{theorem*}{Theorem}
\newtheorem{corollary}[theorem]{Corollary}
\newtheorem{lemma}[theorem]{Lemma}
\newtheorem{proposition}[theorem]{Proposition}
\newtheorem{definition}[theorem]{Definition}
\theoremstyle{definition}
\newtheorem{remark}{Remark}
\newcommand{\rmd}{\ensuremath\mathrm{d}}
\newcommand{\dd}{\,\rmd}
\DeclareMathOperator{\tr}{tr}
\renewcommand{\Re}{\operatorname{Re}}
\DeclareMathOperator{\ran}{ran}
\DeclareMathOperator{\rank}{rank}
\DeclareMathOperator{\supp}{supp}
\newcommand{\fro}{\mathrm{F}}
\DeclareMathOperator{\spec}{spec}
\DeclareMathOperator{\poly}{poly}
\newcommand{\id}{\mathrm{id}}
\DeclareMathOperator{\U}{U}
\DeclareMathOperator{\GL}{GL}
\DeclareMathOperator{\Var}{Var}
\DeclareMathOperator{\sep}{sep} 
\newcommand{\loc}{\mathrm{loc}}
\newcommand{\SPAM}{\mathrm{SPAM}}
\newcommand{\ideal}{\mathrm{ideal}}
\DeclareMathOperator{\Herm}{Herm} 
\DeclareMathOperator{\End}{End}
\DeclareMathOperator{\Hom}{Hom}
\newcommand{\Ad}{\mathrm{ad}}
\DeclareMathOperator{\Irr}{Irr}
\DeclareMathOperator{\Borel}{Borel}
\NewDocumentCommand\Cl{mm}{
    \ensuremath{\mathrm{Cl}_{#1}\IfNoValueTF{#2}{}{(#2)}}%
}
\NewDocumentCommand\HW{mm}{
    \ensuremath{\mathrm{HW}_{#1}\IfNoValueTF{#2}{}{(#2)}}%
}
\newcommand{\ClG}[1]{\ensuremath{\mathrm{Cl}_{#1}}}
\newcommand{\CC}{\mathbb{C}}
\newcommand{\RR}{\mathbb{R}}
\newcommand{\QQ}{\mathbb{Q}}
\newcommand{\1}{\mathbbm{1}}
\newcommand{\EE}{\mathbb{E}}
\newcommand{\PP}{\mathbb{P}}
\newcommand{\FF}{\mathbb{F}}
\newcommand{\N}{\mathbb{N}}
\newcommand{\Z}{\mathbb{Z}}
\newcommand{\R}{\mathbb{R}}
\newcommand{\C}{\mathbb{C}}
\newcommand{\F}{\mathbb{F}}
\newcommand{\one}{\mathbbm{1}}
\newcommand{\argdot}{{\,\cdot\,}}
\newcommand{\myleft}{\mathopen{}\mathclose\bgroup\left}
\newcommand{\myright}{\aftergroup\egroup\right}
\newcommand{\ad}{^\dagger}
\DeclarePairedDelimiterX{\abs}[1]{\lvert}{\rvert}{%
  \ifblank{#1}{\,\cdot\,}{#1}
}   
\DeclarePairedDelimiterX\norm[1]\lVert\rVert{%
  \ifblank{#1}{\,\cdot\,}{#1}
}   
\DeclarePairedDelimiterX{\iiiNorm}[1]{\lvert}{\rvert}{%
  \delimsize\lvert\delimsize\lvert#1\delimsize\rvert\delimsize\rvert%
}
\DeclarePairedDelimiterXPP\snorm[1]{}\lVert\rVert{_\infty}{\ifblank{#1}{\,\cdot\,}{#1}}   
\newcommand{\snormb}[1]{\snorm[\big]{#1}}
\DeclarePairedDelimiterXPP\twonorm[1]{}\lVert\rVert{_2}{\ifblank{#1}{\,\cdot\,}{#1}}   
\newcommand{\twonormb}[1]{\twonorm[\big]{#1}}
\DeclarePairedDelimiterXPP\trnorm[1]{}\lVert\rVert{_1}{\ifblank{#1}{\,\cdot\,}{#1}}   
\DeclarePairedDelimiterXPP\fnorm[1]{}\lVert\rVert{_{\fro}}{\ifblank{#1}{\,\cdot\,}{#1}}   
\DeclarePairedDelimiterXPP\dnorm[1]{}\lVert\rVert{_\diamond}{\ifblank{#1}{\,\cdot\,}{#1}}   
\DeclarePairedDelimiterXPP\cbnorm[1]{}\lVert\rVert{_\mathrm{cb}}{\ifblank{#1}{\,\cdot\,}{#1}}   
\DeclarePairedDelimiterXPP\onenorm[1]{}\lVert\rVert{_{1\rightarrow 1}}{\ifblank{#1}{\,\cdot\,}{#1}}   
\DeclarePairedDelimiterXPP\ddnorm[1]{}\lVert\rVert{_{\diamond\rightarrow \diamond}}{\ifblank{#1}{\,\cdot\,}{#1}}   
\DeclarePairedDelimiterXPP\ssnorm[1]{}\lVert\rVert{_{\infty\rightarrow\infty}}{\ifblank{#1}{\,\cdot\,}{#1}}   
\DeclarePairedDelimiterX\Set[1]\{\}{%
  
  #1
}
\DeclarePairedDelimiterX\innerp[2]{\langle}{\rangle}{%
  \ifblank{#1}{\,\cdot\,}{#1} , \ifblank{#2}{\,\cdot\,}{#2}%
}
\DeclarePairedDelimiter{\bra}{\langle}{\vert}
\DeclarePairedDelimiter{\ket}{\vert}{\rangle}
\DeclarePairedDelimiterX\braket[2]{\langle}{\rangle}%
  {#1\kern0.15ex\delimsize\vert\kern0.15ex\mathopen{}#2}
\DeclarePairedDelimiterX\ketbra[2]{\vert}{\vert}%
  {#1\kern0.15ex\delimsize\rangle\delimsize\langle\kern0.15ex\mathopen{}#2}
\DeclarePairedDelimiterX\sandwich[3]{\langle}{\rangle}%
  {#1\,\delimsize\vert\kern0.15ex\mathopen{}#2\kern0.15ex\delimsize\vert\kern0.15ex\mathopen{}#3}
\DeclarePairedDelimiter{\obra}{(}{\vert}
\DeclarePairedDelimiter{\oket}{\vert}{)}
\DeclarePairedDelimiterX\obraket[2]{(}{)}%
  {#1\kern0.15ex\delimsize\vert\kern0.15ex\mathopen{}#2}
\DeclarePairedDelimiterX\oketbra[2]{\vert}{\vert}%
  {#1\kern0.15ex\delimsize)\delimsize(\kern0.15ex\mathopen{}#2}
\DeclarePairedDelimiterX\osandwich[3]{(}{)}%
  {#1\,\delimsize\vert\kern0.15ex\mathopen{}#2\kern0.15ex\delimsize\vert\kern0.15ex\mathopen{}#3}
\newcommand{\osandwichb}[3]{\osandwich[\big]{#1}{#2}{#3}}
\renewcommand{\Pr}{\operatorname{\PP}}
\newcommand{\ie}{i.\,e.}
 \newcommand{\FD}{F}
\newcommand{\MO}{\mathsf{M}}
\newcommand{\MA}{\mathcal{M}}
\newcommand{\LRC}{\mathrm{LRC}}
\newcommand{\BW}{\mathrm{BW}}
\newcommand{\EM}{\mathcal{E}_{\mathrm{M}}}
\newcommand{\ESP}{\mathcal{E}_{\mathrm{SP}}}
  \renewcommand*{\AC@hyperlink}[2]{%
    \begingroup
      \hypersetup{hidelinks}%
      \hyperlink{#1}{#2}%
    \endgroup
  }%
\newcommand{\hhu}{Institute for Theoretical Physics,
	Heinrich-Heine-Universit{\"a}t D{\"u}sseldorf, 
	Germany
}
\newcommand{\tuhh}{%
    Institute for Quantum Inspired and Quantum Optimization,
    Technische Universit{\"a}t Hamburg, Germany
}
\begin{document}

\title{Randomized benchmarking with random quantum circuits}

\author{Markus Heinrich}\thanks{corresponding author}
\email{markus.heinrich@hhu.de}
\affiliation{\hhu}
\author{Martin Kliesch}
\email{martin.kliesch@tuhh.de}
\affiliation{\hhu}
\affiliation{\tuhh}
\author{Ingo Roth}
\affiliation{Quantum research centre, Technology Innovation Institute, Abu Dhabi, United Arab Emirates}
\email{ingo.roth@tii.ae}

\begin{abstract}
In its many variants, randomized benchmarking (RB) is a broadly used technique for assessing the quality of gate implementations on quantum computers. 
A detailed theoretical understanding and general guarantees exist for the functioning and interpretation of RB protocols if the gates under scrutiny are drawn \emph{uniformly} at random from a compact group. 
In contrast, many practically attractive and scalable RB protocols implement random quantum circuits with local gates randomly drawn from some gate-set. 
Despite their abundance in practice, for those \emph{non-uniform} RB protocols, general guarantees for gates from arbitrary compact groups under experimentally plausible assumptions are missing. 
In this work, we derive such guarantees for a large class of RB protocols for random circuits that we refer to as \emph{filtered RB}. 
Prominent examples include linear cross-entropy benchmarking, character benchmarking, Pauli-noise tomography and variants of simultaneous RB. 
Building upon recent results for random circuits, we show that many relevant filtered RB schemes can be realized with random quantum circuits in linear depth, and we provide explicit small constants for common instances. 
We further derive general sample complexity bounds for filtered RB. 
We show filtered RB to be sample-efficient for several relevant groups, including protocols addressing higher-order cross-talk. 
Our theory for non-uniform filtered RB is, in principle, flexible enough to design new protocols for non-universal and analog quantum simulators. 
\end{abstract}

\maketitle

\section{Introduction}

Assessing the quality of quantum gate implementations is a crucial task in developing quantum computers \cite{Eisert2020QuantumCertificationAnd,Kliesch2020TheoryOfQuantum}. 
Arguably, the most widely employed protocols for this task are 
\acf{RB} \cite{EmeAliZyc05,Levi07EfficientErrorCharacterization,DanCleEme09, EmeSilMou07,KniLeiRei08,Magesan2012} and its many variants 
(see Ref.~\cite{helsen_general_2022} for a recent overview)
including linear \ac{XEB} \cite{Arute2019QuantumSupremacy}. 
The basic idea of \ac{RB} is to measure the accuracy of random gate sequences of different lengths. 
Typically, this results in an experimental signal described by (a mixture of) exponential decays.  
Stronger noise results in faster decays with smaller decay parameters. 
Hence, those decay parameters are used to capture the average fidelity of the implemented quantum gates. 
A crucial advantage of these methods besides their experimental efficiency is that the reported decay parameters are robust against \ac{SPAM} errors. 

Generally speaking, many experimental signatures can be rather well fitted by an exponential decay. 
Experimentally observing an exponential decay in an \ac{RB} experiment does by itself not justify the interpretation of the decay parameter as a measure for the quality of the gates. 
In addition, \ac{RB} requires a well-controlled theoretical model that explains the observed decays under realistic assumptions and provides the desired interpretation of the decay parameters. 

Extensive research has already established a solid theoretical foundation for \ac{RB}, particularly when the gates comprising the sequences are drawn \emph{uniformly} at random from a compact group. 
Generalizing the arguments of Refs.~\cite{MagGamEme11,Magesan2012,proctor2017WhatRandomizedBenchmarking,wallman2018randomized,Merkel18}, 
\textcite{helsen_general_2022} 
derived general guarantees for the signal form of the entire zoo of \ac{RB} protocols with finite groups (which readily generalizes to compact groups \cite{kong_framework_2021}): 
If the noise of the gate implementation is sufficiently small, 
each decay parameter is associated to an irreducible representation (irrep) of the group generated by the gates. 
Thus, the decay parameter indeed quantifies the average deviation of the gate implementation from their ideal action on the subspace carrying the irrep. 
For example, the `standard' RB protocol draws random multi-qubit gates uniformly from the Clifford group, except for the last gate of the sequence, which is supposed to restore the initial state. 
This protocol results in a single decay parameter associated with the irreducible action on traceless matrices and related to the average gate fidelity.

In practice, however, the suitability of \emph{uniform} \ac{RB} protocols for holistically assessing the quality of noisy and intermediate-scale quantum (NISQ) hardware is restricted. 
On currently available hardware, sufficiently long sequences of multi-qubit Clifford unitaries lead to way too fast decays to be accurately estimated for already moderate qubit counts. 
More scalable \ac{RB} protocols \emph{directly} draw sequences of local random gates, implementing a \emph{random circuit} \cite{Franca2018ApproximateRB,Proctor2019DirectRandomized,chasseur_complete_2015,chasseur_hybrid_2017}. 
We refer to those protocols that use a non-uniform probability distribution over a compact group 
as \emph{non-uniform} \ac{RB} protocols. 
Arguably, the most prominent example of non-uniform \ac{RB} is the \emph{linear \ac{XEB}} protocol, which  was used for the first demonstration of a quantum computational advantage in sampling tasks \cite{Arute2019QuantumSupremacy,hangleiter_computational_2023}. 

Establishing theoretical guarantees for non-uniform \ac{RB} is considerably more subtle. 
Roughly speaking, the interpretation of the decay parameter is more complicated as one additionally witnesses the convergence of the non-uniform distribution to the uniform one with the sequence length---causing a superimposed decay in the experimental data. 
These obstacles are 
well-known in the RB literature \cite{boone2019randomized,Proctor2019DirectRandomized} 
and have raised suspicion in the context of linear \ac{XEB} 
\cite{RinottShohamKalai:2022,BarakEtAl:2020:Spoofing}. 
If not carefully considered, one easily ends up significantly overestimating the fidelity of the gate implementations. 
In the context of their \emph{universal randomized benchmarking} framework, \textcite{chen_randomized_2022} have given a comprehensive analysis of non-uniform \ac{RB} protocols using random circuits which form approximate unitary 2-designs.
As such, the results in Refs.~\cite{chen_randomized_2022} are e.g.~applicable to linear \ac{XEB} with universal gate sets or with gates from the Clifford group \cite{chen_linear_2022}.

The original theoretical analysis of linear \ac{XEB} relies on the assumption that for every circuit, one observes an ideal implementation up to global depolarizing noise \cite{Arute2019QuantumSupremacy}. 
Building more trust in linear \ac{XEB} has motivated a line of theoretical research, introducing different heuristic estimators 
\cite{RinottShohamKalai:2022} 
and analyzing the behaviour of different noise models in random circuits \cite{Liu21BenchmarkingNear-term,Dalzell21RandomQuantumCircuits} using mappings of random circuits to statistical models \cite{Hun19}. 
But general guarantees that work under minimal plausible assumptions on the gate implementation and for random circuits generating arbitrary compact groups---akin to the framework \cite{helsen_general_2022,kong_framework_2021} for uniform RB and going beyond unitary 2-designs \cite{chen_randomized_2022,chen_linear_2022} --- are missing. 
Moreover, the sampling complexity of protocols like linear \ac{XEB} for non-uniform distributions and general gate-dependent noise remains unclear.

In this work, we close these gaps by developing a general theory of \emph{`filtered' randomized benchmarking} with \emph{random circuits} using gates from \emph{arbitary compact groups} under arbitrary gate-dependent (Markovian and time-stationary) noise. 
Under minimal assumptions, we guarantee the functioning of the protocol, and give explicit bounds on sufficient sequence lengths as well as on the number of samples.
Moreover, we specialize our general findings to concrete groups and random circuits, and give explicit constants.

{
Concretely, the filtered \ac{RB} protocol requires the execution of random circuit instances with a varying number of layers, c.f.~Fig.~\ref{fig:intro-protocol}.
Deviating from standard \ac{RB},  the last gate inverting the sequence is omitted and a simple computational basis measurement is performed instead.
This approach simplifies the experimental procedure and is arguably a core requirement for experimentally scalable \emph{non-uniform} \ac{RB}. 
The inversion of the circuit is effectively performed in the classical post-processing of the data.  
At this stage, the experimental data is additionally filtered to show only the specific decay associated with a single irrep of the group generated by the random circuits.
This step is the motivation for the name `filtered \ac{RB}' \cite{helsen_general_2022}.
It is especially useful if the relevant group decomposes into many irreps which would otherwise result in multiple, overlapping decays.
}

Besides  linear \ac{XEB}, \emph{filtered \ac{RB}} \cite{helsen_general_2022} encompasses character benchmarking \cite{HelsenEtAl:2019:character}, matchgate benchmarking \cite{Helsen20MatchgateBenchmarking}, and Pauli-noise tomography \cite{Flammia2019EfficientEstimation} as well as variants of simultaneous \cite{Gambetta2012CharacterizationAddressability} and correlated \cite{McKay2020CorrelatedRB} {RB} as additional examples.

The filtering allows for a more fine-grained perspective on the perturbative argument at the heart of the framework of Ref.~\cite{helsen_general_2022}, in that the different irreps of a group can be analyzed individually. 
In this way, we derive new perturbative bounds based on the harmonic analysis of compact groups that can be naturally combined with results from the theory of random circuits.
Thereby, we go significantly beyond previous works and treat uniform and non-uniform filtered RB on the same footing.

\begin{figure}
 \centering
 \includegraphics[width=\linewidth]{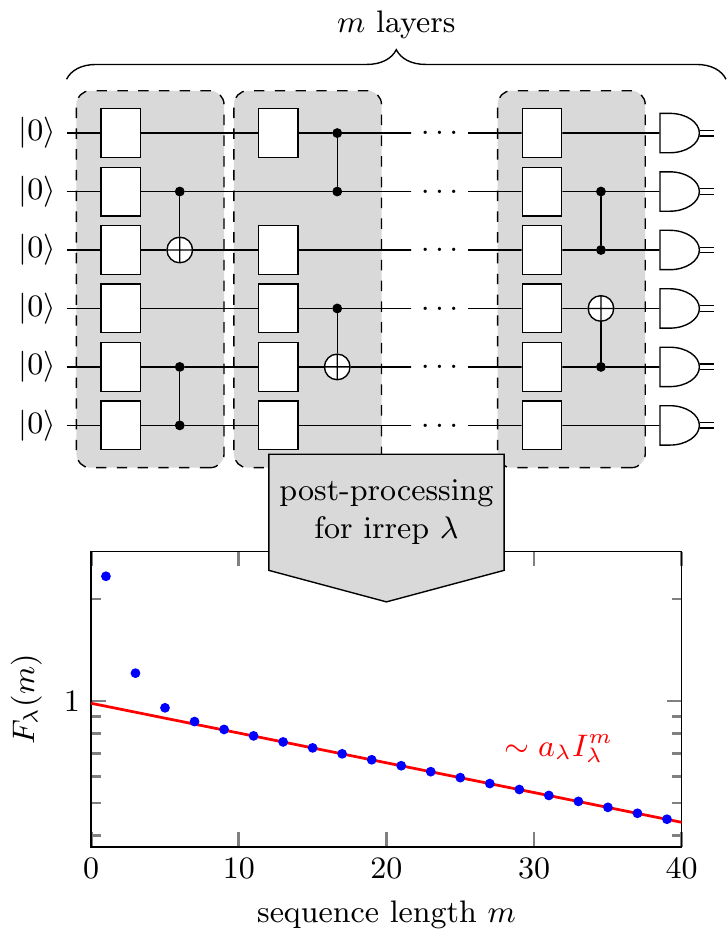}
 \caption{{
  Illustration of the filtered randomized benchmarking protocol.
  Random quantum circuits with varying number of layers $m$ are applied to the all-zero state and subsequently measured in the computational basis.
  The experimental data is post-processed depending on a parameter $\lambda$, labelling the irreps of the group generated by the random quantum circuits.
  {In a perturbative noise regime and }for sufficiently large $m$, the resulting signal decays exponentially with $m$ and the decay parameter $I_\lambda$ quantifies the average quality of the used gates on the considered irrep.
 }}
 \label{fig:intro-protocol}
\end{figure}

More precisely, our guarantees assume that the \emph{error of the average implementation} (per irrep) of the gates appearing in the random circuit is sufficiently small compared to the spectral gap of the random circuit.  
Then, the signal of filtered RB is well-described by a suitable exponential decay after a sufficient circuit depth. 
The required depth depends inversely on the spectral gap and logarithmically on the dimension of the irrep. 
We show that for practically relevant examples, our results imply that a \emph{linear} circuit depth in the number of qubits suffices for filtered \ac{RB}. 
{Furthermore, a sufficiently small average implementation error is ensured if the noise rate per gate scales reciprocally with the system size.}

Omitting the inversion gate comes at the price that the simple arguments for the sample-efficiency of standard randomized benchmarking do not longer apply to filtered \ac{RB}. 
As in shadow estimation for quantum states \cite{Huang2020Predicting}, the post-processing introduces estimators that are generally only bounded exponentially in the number of qubits. 
Thus, the precise convergence of estimators calculated from polynomially many samples is a priori far from clear. 

Generalizing our perturbative analysis of the filtered \ac{RB} signal to its variance, we derive general expressions for the sample complexity of filtered \ac{RB}. 
In particular and under essentially the same assumptions that guarantee the signal form of the protocol, filtered \ac{RB} is as sample-efficient as the analogous protocol that uses uniformly distributed unitaries.
Again important examples are found to be already sample-efficient using linear circuit depth.  
Perhaps surprisingly, we find that filtered \ac{RB} without entangling gates has constant sampling complexity independent of the non-trivial support of the irreps. 
This finding is in contrast to the related results in state shadow estimation. 

To showcase the general results, we explicitly discuss the cases where the random circuit generates the Clifford group, the local Clifford group, or the Pauli group.
Moreover, we discuss common families of random circuits and summarize spectral gap bounds with explicit, small constants from the literature and our own considerations \cite{harrow_random_2009,brown_convergence_2010,haferkamp_efficient_2023,haferkamp_improved_2021,HarMeh18}.

Finally, it is an open question whether the post-processing of filtered \ac{RB} can be modified so that meaningful decay constants can be extracted already from \emph{constant depth} circuits. 
In the context of linear \ac{XEB}, Ref.~\cite{RinottShohamKalai:2022} introduces a heuristic so-called `unbiased' estimator to this end. 
Using the general perspective of filtered \ac{RB}, we sketch two general approaches to construct modified linear estimators for constant-depth circuits. 
The first approach introduces a more costly computational task in the classical post-processing.  
The second approach requires that the random distribution of circuits is locally invariant of local Clifford gates. 
We formally argue that these estimators work under the assumption of global depolarizing noise, putting them at least on comparable footing as existing theoretical proposals. 
A detailed perturbative analysis is left to future work. 

We expect that the theory of non-uniform filtered \ac{RB} can be applied to many other practically relevant benchmarking schemes and bootstraps the development of new \ac{RB} schemes. 
In fact, one of our main motivations for deriving the flexible theoretical framework  
is its applications for the characterization and benchmarking of non-universal and analog quantum computing devices---consolidating and extending existing proposals \cite{Derbyshire2019RBanalogue,shaffer2021practical} in forthcoming and future work. 

On a technical level, we develop tools to analyze noisy random circuits  using harmonic analysis on compact groups and matrix perturbation theory. 
We expect that this perturbative description also 
finds applications in quantum computing beyond the randomized benchmarking of quantum gates. 
The tools and results might, in principle, be applicable to analyze the noise-robustness of any scheme involving random circuits, e.g.\ randomized compiling \cite{Wallman16NoiseTailoringFor}, shadow tomography and randomized measurements \cite{elben_randomized_2022} or error mitigation \cite{Temme_2017}. 
As a by-product, our variance bounds take a more direct representation-theoretic approach working with tensor powers of the adjoint representation rather than exploiting vector space isomorphisms and invoking Schur-Weyl duality \cite{huang_predicting_2020}.  
This approach also opens up a complimentary, illuminating perspective on the sample-efficiency of estimation protocols based on random sequences of gates more generally.

\paragraph*{Prior and related work.} 
Already one of the first \ac{RB} proposals, NIST \ac{RB} \cite{KniLeiRei08} classifies as non-uniform \ac{RB} and was later thoroughly analyzed and compared to standard Clifford \ac{RB} \cite{boone2019randomized}. 
A first discussion of the obstacles arising from decays associated with the convergence to the uniform measure was then given in Ref.~\cite{boone2019randomized}.
Further non-uniform \ac{RB}
protocols are approximate 
RB  \cite{Franca2018ApproximateRB} 
and direct \ac{RB} \cite{Proctor2019DirectRandomized} (sometimes called \emph{generator \ac{RB}}). 
The original guarantees for these protocols rely on the closeness of the probability distribution to the uniform one in total variance distance, thus generally requiring long sequences.  
Direct \ac{RB} ensures this closeness by starting from a random stabilizer state as the initial state---assumed noiseless in the analysis, which is additionally restricted to Pauli-noise. 
The restriction can be justified with randomized compiling \cite{knill05,Wallman16NoiseTailoringFor,Ware21ExperimentalPauli-frame},  which essentially requires the perfect implementation of Pauli unitaries. 
{After the publication of a preprint of this paper, a more thorough analysis of direct \ac{RB} under general gate-dependent noise was published \cite{polloreno_theory_2023}, using techniques which are similar to ours.}

The work by \textcite{helsen_general_2022} unifies and generalizes the guarantees for these RB protocols to gate-dependent noise but still works with convergence of the probability distribution to the uniform distribution in total variation distance.
The approach of Ref.~\cite{helsen_general_2022} extends previous arguments for the analysis of gate-dependent noise by Wallman \cite{wallman2018randomized} using the language of Fourier transforms of finite groups introduced to \ac{RB} by Merkel \emph{et al.} \cite{Merkel18}.  
The argument straightforwardly carries over to compact groups \cite{kong_framework_2021}. 
The assumptions on the gate implementation required for the guarantees of Ref.~\cite{helsen_general_2022}, closeness in average diamond norm error over all irreps, are too strong to yield practical circuit depths for \ac{RB} with random circuits. 

This obstacle has been overcome in the \emph{universal randomized benchmarking} framework by \textcite{chen_randomized_2022}.
There, the authors are able to relax the assumption on the probability measure for the above protocols (``twirling schemes'' in Ref.~\cite{chen_randomized_2022}) and only require that the channel twirl over this measure is within unit distance from the Haar-random channel twirl (in induced diamond norm or spectral norm).
Hence, it is sufficient for these schemes to implement random circuits which form approximate unitary 2-designs w.r.t.~the relevant norm.
As such, it is necessary that the used distributions have support on groups which are unitary 2-designs, such as the unitary or the Clifford group \cite{chen_linear_2022}.

Filtered \ac{RB}, as formulated in Ref.~\cite{helsen_general_2022}, is a variant of character \ac{RB} \cite{HelsenEtAl:2019:character}. 
Linear \ac{XEB} \cite{Arute2019QuantumSupremacy}, 
when averaged over multiple circuits, 
can be seen as the special case of filtered \ac{RB} when the group generated by the circuits is a unitary $2$-design. 
Ref.~\cite{helsen_general_2022} analyzes linear \ac{XEB}, including variance bounds, but only for uniform measures, not for random circuits. 
Ref.~\cite{Liu21BenchmarkingNear-term} puts forward a different perturbative analysis for filtered randomized benchmarking schemes by carefully tracing the effect of individual Pauli-errors in random circuits. 
To our understanding, the argument, however, crucially relies on the heuristic estimator proposed in Ref.~\cite{RinottShohamKalai:2022}.  See also the review~\cite{hangleiter_computational_2023} for a detailed literature overview on linear \ac{XEB}. 
Hybrid benchmarking \cite{chasseur_hybrid_2017} 
puts forward another approach to avoid the linear inversion using random Pauli observables.
In contrast to other randomized benchmarking schemes, the hybrid benchmarking signal consists of linear combinations of (exponentially) many decays with complex poles \cite{chasseur_complete_2015,chasseur_hybrid_2017}.
Estimating these poles, however, is typically infeasible, 
see the detailed discussion in Ref.~\cite{helsen_general_2022}.

The here discussed filtered \ac{RB} protocols for circuits generating local groups is an alternative to simultaneous \cite{Gambetta2012CharacterizationAddressability} and correlated \cite{McKay2020CorrelatedRB} \ac{RB} but is in addition capable of estimating higher-order correlations. 
A randomized benchmarking scheme with the Heisenberg-Weyl group is also proposed in Ref.~\cite{StilckFranca2020EfficientBenchmarkingAnd}.

Alternative approaches to filtered non-uniform RB aiming at better scalability of randomized benchmarking protocols are cycle \ac{RB} \cite{Erhard2019CharacterizingLarge-scale,Zhang22ScalableFastBenchmarking}, 
average circuit eigenvalue sampling \cite{flammia_averaged_2021} and the recent \ac{RB} with mirror circuits \cite{Proctor21ScalableRB}. 

\paragraph*{Outline.}

The remainder of this work is structured as follows:
We start by introducing and discussing the filtered \ac{RB} protocol in Sec.~\ref{sec:overview-protocol}.
Afterwards, in Sec.~\ref{sec:result-summary}, we give a non-technical overview of our main results and highlight the central messages of this work.
The technical part begins with Sec.~\ref{sec:preliminaries}, where we introduce necessary background and definitions.
This section is self-contained and gives a general introduction into the techniques used in this paper.
We then proceed by stating and proving our results in Sec.~\ref{sec:results}.
This section is structured into nine subsections which address the central assumptions of our work, the above described main results, some auxillary results, and the specialization to specific examples. 
In Sec.~\ref{sec:related-works}, we give a precise comparison to the technical assumptions and conclusions of related works.
Finally, the conclusion is given in Sec.~\ref{sec:conclusion}.

The appendices contain a summary 
of the relevant matrix perturbation theory in App.~\ref{sec:perturbation-theory}, a discussion of single-shot versus multi-shot estimators in App.~\ref{sec:estimators}, and a self-contained computation of noise-free second moments for various groups in App.~\ref{sec:sampling_complexity_ideal_case}.

\section{The filtered randomized benchmarking protocol}
\label{sec:overview-protocol}

We start by describing and motivating the general protocol of \emph{non-uniform filtered \acf{RB}}. 

We consider a quantum device with state space modelled by a $d$-dimensional Hilbert space $\mathcal H$. 
{Randomized benchmarking} aims at assessing the quality of the implementation of a set of coherent operations on the device that constitute a compact group $G<\U(\mathcal H)$. 
The random unitaries that are actually applied in the experiment are specified by a probability measure $\nu$ on $G$. 
For example, $\nu$ can be a uniform measure on a subset of operations generating $G$ that are `native' to the device. 
{
The quantum device is prepared in a fixed initial state $\rho$ and we consider measurements in a fixed basis $\{\ket{i}\}_{i\in[d]}$ where $[d]\coloneqq\{1,\dots,d\}$.
Usually, $\rho$ is taken to be one of the basis elements.
}
{It is instructive to  briefly recall the standard uniform \ac{RB} protocol for general compact groups $G$ \cite{Franca2018ApproximateRB,helsen_general_2022,kong_framework_2021} first.}

{
\emph{Standard randomized benchmarking} samples a sequence of gates $g_1,\dots, g_m$ uniformly from the Haar measure $\nu=\mu$ on $G$.
After applying the sequence and the inversion gate $g_\mathrm{inv}=(g_m \cdots g_1)^{-1}$ to the initial state $\rho$, the resulting state is measured in the given basis.
Let $\hat p_i(m)$ be the frequency of observing outcome $i$ and
\begin{equation}
\label{eq:rb-signal}
 p_i(m) \coloneqq \EE[\hat p_i(m)] \,,
\end{equation}
be the expected Born probabilities.
It is well-known \cite{Franca2018ApproximateRB,helsen_general_2022} that the probabilities $p_i(m)$ can be well-approximated by a linear combination of (potentially complex) exponential decays:
\begin{equation}
\label{eq:standard-rb-signal}
 p_i(m) \approx \sum_\lambda a_{i,\lambda} z_\lambda^m \,.
\end{equation}
Here, the {so-called} poles $z_\lambda$ can be identified with the (possibly repeated) irreducible subrepresentations (irreps) of the group $G$.
{Intuitively speaking, the poles are an effective depolarizing strength of the average noise acting on a specific irrep.}
In many cases, the $z_\lambda$ are real, for instance if all irreps are multiplicity-free (and of real type).
Then, one can observe the typical exponential decays in the RB signal.
Fitting Eq.~\eqref{eq:standard-rb-signal} becomes challenging if $G$ has many relevant irreps, especially if all $z_\lambda$ are real \cite{helsen_general_2022}.
Moreover, even if a reliable fit is possible, it is impossible to associate the poles $z_\lambda$ with the correct irreps if more than two irreps contribute.
}
{
\emph{Filtered \ac{RB}} is designed to address these and other problems in the standard approach to \ac{RB} \cite{helsen_general_2022,helsen_estimating_2021}.
In particular, it allows to isolate single poles in Eq.~\eqref{eq:standard-rb-signal} by ``filtering'' onto the irrep $\lambda$ of interest.
}

\paragraph{The protocol of non-uniform filtered \ac{RB}.}
The protocol can be divided into two distinct phases:
First, the \emph{data acquisition phase} involving a simple experimental prescription that is already routinely implemented in many experiments.
Second, the \emph{post-processing phase} in which this data is processed and the decay parameters are extracted. 

\begin{enumerate}[label=(\Roman*)]
\item \textbf{Data acquisition.} 
Repeat the following primitive for different sequence lengths $m$: 
Prepare the state $\rho$, apply independent and identically distributed gates $g_1,\dots, g_m \sim \nu$ and measure in the basis $\{\ket{i}\}_{i\in[d]}$. 
The output of a single run of this primitive is a tuple $(i, g_1, \ldots, g_m) \in [d] \times G^m$ where $i$ is the observed measurement outcome. 

\item \textbf{Post-processing.} Using the samples acquired before, we compute the mean estimator of a soon-to-be-defined \emph{filter function} $f_\lambda: [d] \times G^m \to \R$ depending on {the irrep $\lambda$.}
Assuming we collected a number of $N$ samples $(i^{(l)}, g_1^{(l)},\dots,g_m^{(l)})$, we thus evaluate 
\begin{align}
\label{eq:filtered-rb-data-estimator}
  \hat{\FD}_\lambda(m)%
  &= \frac{1}{N} \sum_{l=1}^N f_\lambda(i^{(l)}, g_1^{(l)}, \dots, g_m^{(l)}) \, .
\end{align}
We refer to $\hat{\FD}_\lambda$ as the filtered \ac{RB} signal and denote its expectation value as 
\begin{equation}\label{eq:filtered-rb-data}
    \FD_\lambda(m) \coloneqq \EE[\hat{\FD}_\lambda(m)]\,. 
\end{equation}
The main result of this work lies in deriving conditions that guarantee an exponential fitting model for the expected \ac{RB} signal $\FD_\lambda(m)$ and the variance of {its estimator} $\hat{\FD}_\lambda(m)$ for a suitable choice of filter function $f_\lambda$. 
This justifies to fit an exponential $a_\lambda r_\lambda^m$ to the \ac{RB} signal and 
obtain the $r_\lambda$, the result of the \ac{RB} protocol. 
\end{enumerate}

The setting differs from the standard \ac{RB} protocol in two major aspects: First, we allow that the gates $g_i$ are drawn from a suitable probability measure which does not need to be the Haar measure nor a unitary 2-design. 
In particular, it is explicitly allowed to draw them from a set of generators of the group $G$. 
Precise conditions on the measure will be formulated and discussed later.
Second, we omit the \emph{end} or \emph{inversion gate} $g_\mathrm{end} = (g_m\cdots g_1)^{-1}$ at the end of the sequence and record a basis measurement outcome. 
From a practical point of view, this is advantageous since even if the individual gates $g_1,\dots,g_m$ have short circuit implementations, this is not necessarily true for $g_\mathrm{end}$. 
Instead, the inversion gate is effectively accounted for in the post-processing phase \cite{helsen_general_2022}.
As we see shortly, the evaluation of $f_\lambda$ essentially requires the simulation of the gate sequence.
This may require run-time and memory scaling exponentially in the number of qubits. 

\paragraph{The choice of filter function.}
The device is intended to implement a certain \emph{target} or \emph{reference} representation $\omega$ of $G$. 
For all practical purposes, $\omega$ is given as the representation $\omega(g) = U_g( \cdot ) U_g\ad$  acting on $\End(\mathcal H)$, the {space of} linear operators on $\mathcal H$, and $U_g$ is the representation of $G$ on the Hilbert space $\mathcal H$ (i.e.~$\omega(g)$ is the unitary channel associated to $U_g$).
{Associated to $G$ and the measurement basis $E_i\coloneqq\ketbra{i}{i}$}, we introduce the quantum channels
\begin{align}
\label{eq:measurement-frame-operator}
 S
 &\coloneqq \int_{G}\omega(g)^\dagger M \omega(g) \dd\mu(g)\, , \;
 M \coloneqq { \sum_{i\in[d]} \oketbra{E_i}{E_i}\, .}
\end{align}
Here, $\mu$ is the Haar (uniform) probability measure on $G$ {and $\oketbra{A}{B}$ denotes the outer product of operators, i.e.~the superoperator $X\mapsto \obraket{B}{X} A = \tr(B^\dagger X) A$.}
In words, $S$ is the \emph{channel twirl} w.r.t.~to $G$ applied to the completely dephasing channel $M$ in the measurement basis $\{\ket{i}\}_{i\in[d]}$.

The representation $\omega$ has a decomposition into irreducible subrepresentations (irreps).
For example for $G = \U(d)$, $\omega$ has two irreps: The trivial action on the subspace spanned by the identity and the action on the subspace of traceless matrices. 
Given an irrep of $G$ labelled by $\lambda$, we denote by $P_\lambda$ the projector onto the irrep in $\End(\mathcal H)$.\footnote{More precisely, $P_\lambda$ projects onto the isotypic component, i.e.~onto the span of all irreps isomorphic to $\lambda$.}
Finally, we define the filter function by 
\begin{align}
    f_\lambda(i,g_1,\dots,g_m) &\equiv f_\lambda(i,g)
    \coloneqq \osandwich{\rho}{P_\lambda S^+ \omega(g)^\dagger }{E_i} \, , \label{eq:filter-function} \\
\end{align}
where we abused notation a bit to indicate that $f_\lambda$ only depends on the product of gates $g = g_m\cdots g_1$ and the round `bra-ket' notation again refers to the trace inner product of operators.
Moreover, $S^+$ is the Moore-Penrose pseudoinverse of $S$.

\paragraph{The ideal signal.} To motivate our choice of filter function, let us consider an ideal and noise-free implementation, and gates that are drawn uniformly from $G$.
For a given sequence of gates, the data acquisition phase produces samples from the distribution given by the Born probabilities 
\begin{align} 
  p(i|g_1, \dots, g_m) 
  &= 
  {
  \osandwich{E_i}{\omega(g_m) \cdots \omega(g_1)}{\rho}
  = 
  \osandwich{E_i}{\omega(g)}{\rho}} \,, &
\end{align}
where again $g = g_m \cdots g_1$.
Hence, we are effectively measuring $\rho$ with respect to the \ac{POVM} $(i,g)\mapsto \omega(g)^\dagger(E_i)\dd\mu(g)$. 
Let us, for the sake of the argument, assume that the \ac{POVM} is \emph{informationally complete}, i.e.~the operators $\omega(g)^\dagger(E_i)$ span the full operator space $\End(\mathcal H)$.
As this span is exactly the range of $S$, it is invertible and the pseudoinverse is the inverse, $S^+=S^{-1}$. 
We observe that the filtered \ac{RB} signal \eqref{eq:filtered-rb-data} becomes 
\begin{align}
 \FD_\lambda(m) 
 &= 
 \sum_{i \in [d]} \int_{G^{m}} 
 \osandwich{\rho}{P_\lambda S^{-1}\omega(g_1\cdots g_m)^\dagger}{E_i} \times \\
 & \qquad\times
 \osandwich{E_i}{\omega(g_1\cdots g_m)}{\rho} \dd\mu(g_1,\dots,g_m) \\
 &=
 \osandwich{\rho}{P_\lambda S^{-1} \int_{G}\omega(g)^\dagger M \omega(g) \dd\mu(g)}{\rho} \\
 &= 
 \osandwich{\rho}{P_\lambda}{\rho} \,. 
 \label{eq:ideal-signal-intro}
\end{align}
Hence, we have found that $\FD_\lambda(m)$ is the overlap of $\rho$ with the irrep $\tau_\lambda$.
In the case that the \ac{POVM} is not informationally complete, the result is $\osandwich{\rho}{P_\lambda S^+ S}{\rho} $, where $S^+ S$ is the projector onto the span of the \ac{POVM}.

\paragraph{Example: Linear XEB as filtered randomized benchmarking.} 
\label{par:linear-XEB}

The perhaps most prominent example of filtered RB is linear \acf{XEB}, as already observed in Ref.~\cite[Sec.~VIII.C]{helsen_general_2022}. 
Originally, the linear cross-entropy was proposed as a proxy to the cross-entropy between the output probability distribution of an \emph{individual} generic quantum circuit and its experimental implementation, designed to specifically discriminate against a uniform output distribution  \cite{Arute2019QuantumSupremacy}. 
The theoretical motivation, however, already stems from considering an ensemble of random unitaries and the original analysis of linear \ac{XEB} is based on typicality statements that hold on average or (by concentration of measure) with high-probability over the ensemble. 
When also explicitly \emph{taking the average} of the linear cross-entropy estimates of random instances from an ensemble of unitaries, linear \ac{XEB} becomes a randomized benchmarking scheme, more precisely, a \emph{non-uniform filtered randomized benchmarking for the full unitary group}. 
Let us reproduce this argument in a slightly more general form.

In linear \ac{XEB} on $n$ qubits, a {random quantum circuit, described by a unitary $U$,}
is applied to the initial state $\rho\coloneqq\ketbra{0}{0}$, followed by a computational basis measurement described by projectors $\{E_x\coloneqq\ketbra{x}{x}\}_{x\in\F_2^n}$, where $\FF_2$ denotes the binary field.
Having observed outcomes $x^{(1)},\dots,x^{(N)}$ for unitaries $U^{(1)},\dots,U^{(N)}$, one computes the estimator
\begin{equation}
\label{eq:XEB-estimator}
\hat F_\mathrm{XEB} = \frac{1}{N} \sum_{i=1}^N ( d \, p_\ideal(x^{(i)}|U^{(i)}) - 1),
\end{equation}
{
where $d=2^n$ and $p_\ideal(x|U) \coloneqq \abs{\sandwich{x}{U}{0}}^2 = \osandwich{E_x}{\omega(U)}{E_0}$ is the ideal, noiseless outcome distribution of the circuit $U$.

In this setting, we can readily compute our filter function defined in Eq.~\eqref{eq:filter-function}, using that the projector onto the traceless irrep {$\lambda=\mathrm{ad}$} of the unitary group is $P_\Ad(X) = X - \tr(X)\one/d$ and the operator $S$ is such that $S^{-1}P_\Ad = (d+1)P_\Ad$ (as derived later in Sec.~\ref{sec:frame-operator}):
\begin{align}
 f_\Ad(x,U)
 &=
 (d+1) \osandwich{E_x}{\omega(U)P_\Ad}{E_0} \\
 &=
 \frac{d+1}{d} \big( d \, \osandwich{E_x}{\omega(U)}{E_0} - 1 \big) \\
 &=
 \frac{d+1}{d}  \big(d\, p_\ideal(x|U) - 1\big) \,.
\end{align}
Suppose the random circuit has $m$ layers such that $U = U_m\cdots U_1$, where every layer $U_i$ is sampled independently according to a measure $\nu$ on the unitary group $\U(2^n)$.
Then, our estimator for the filtered \ac{RB} signal \eqref{eq:filtered-rb-data-estimator} is identical, up to a global factor, to the linear \ac{XEB} estimator \eqref{eq:XEB-estimator}:
\begin{equation}
 \hat{F}_\mathrm{XEB} = \frac{d}{d+1} \hat\FD_\Ad(m) \,.
\end{equation}
Thus, we observe that linear \ac{XEB} is in fact performing a filtered \ac{RB} protocol for the unitary group $G=\U(2^n)$ and the measure $\nu$.
The different normalization is negligible for a moderate number of qubits, $\frac{d}{d+1} \approx 1$.
}

\section{Overview of results}
\label{sec:result-summary}

Having introduced the protocol, we give an overview of our main results in this section.
The results 
can be grouped into three categories:
Guarantees for the \emph{signal form} of the filtered \ac{RB} signal $\FD_\lambda(m)$ and the required \emph{sequence lengths},
bounds on the \emph{number of samples} needed to estimate the signal (\emph{sampling complexity}),
and proposals of \emph{protocol modifications} for short-depth circuits.

Our results rely on the following assumptions.
We model the imperfect implementation of gates on the quantum device by a so-called \emph{implementation map} $\phi$ on the group $G$ such that $\phi(g)$ is completely positive and trace non-increasing for all $g\in G$.
Importantly, this model allows for highly gate-dependent noise and thus overcomes the common, but unrealistic assumption of gate-independent noise in the literature.
{If only a set of generator of $G$ can be implemented, it is admissable to choose $\phi(g)$ e.g.\ equal to the identity channel for every `non-native gate' in the group.}
The existence of an implementation map $\phi$ requires that the gate noise is \emph{Markovian and time-stationary}.
Moreover, we assume that the probability measure $\nu$ is sufficiently well-behaved in a later to be defined sense (e.g.~it has support on a set of generators of $G$).

\paragraph{Signal form guarantees.}
Similar to Eq.~\eqref{eq:ideal-signal-intro}, we start by showing that the filtered \ac{RB} signal $\FD_\lambda(m)$ is the {linear} contraction of {the $m$th power} of a linear operator $\tilde T_\lambda$.
This immediately implies that $\FD_\lambda(m)$ is a linear combination of exponentially many, possibly complex poles:
\begin{equation}
 \FD_\lambda(m) = \osandwich{Q_\lambda}{\tilde T_\lambda^m}{M_\lambda} = \sum_{i} a_{\lambda,i}  z_{\lambda,i}^m \,.
\end{equation}
Here, $Q_\lambda$ and $M_\lambda$ are suitable operators and the $z_{\lambda,i}$ are the eigenvalues of $\tilde T_\lambda$.
{Our central result now gives general conditions on the noise and minimal sequence length under which only `few' poles dominate in the expression of $\FD_\lambda(m)$.} 
{The simplest signal form arises for multiplicity-free irreps where there exist a single real dominant pole $I_\lambda = z_{\lambda,0}$. Then, the signal is well-approximated by the exponential decay $a_{\lambda}  I_{\lambda}^m$ for sufficiently large $m$.}

To this end, it is instructive to consider the situation in the absence of noise:
{Then, the implementation map should be exactly given by the \emph{reference representation} $\omega$ of $G$ on the support of $\nu$.
As noted above, the reference representation is 
$\omega(g) = U_g( \argdot ) U_g\ad$, where $U_g$ is the representation of $G$ on the Hilbert space $\mathcal H$.
In the noise-free setting $\tilde T_\lambda$ becomes $T_\lambda (\mathcal X) \coloneqq P_\lambda \int_G  \omega(g)^\dagger \mathcal X \omega(g) \dd\nu(g)$, the well-known \emph{channel twirl} w.r.t.~the measure $\nu$, subsequently projected onto the irrep $\tau_\lambda$.}
In the theory of random circuits, $T_\lambda$ is called a \emph{moment operator} and its spectral properties control the convergence of random sequences generated by $\nu$ to the uniform measure on $G$.
Its largest eigenvalue is always 1 with multiplicity given by the multiplicity $n_\lambda$ of $\tau_\lambda$ in $\omega$.
The difference to the second-largest eigenvalue, the spectral gap $\Delta_\lambda$, measures the convergence rate.
For $\FD_\lambda(m)$, this implies that the dominant pole is $I_\lambda = 1$ and the subdominant poles are suppressed by $1-\Delta_\lambda$. 

In the presences of noise, the relevant operator is $\tilde T_\lambda (\mathcal X) =  P_\lambda \int_G  \omega(g)^\dagger \mathcal X \phi(g) \dd\nu(g)$.
From a harmonic analysis point of view, $\tilde T_\lambda {\cong\widehat{\phi\nu}[\tau_\lambda^{\otimes n_\lambda}]}$ is an operator-valued Fourier transform of $\phi$.
In particular, if $\nu$ is the uniform (Haar) measure, $\tilde T_\lambda$ contains information about the action of $\phi$ on the irrep $\lambda$.
We then consider $\tilde T_\lambda$ as a perturbation of $T_\lambda$:
As long as $\tilde T_\lambda - T_\lambda$ is small compared to the spectral gap $\Delta_\lambda$, the gap cannot close and thus the subdominant poles are still suppressed.
{This formulates a condition on the admissable strength of the noise.}
The $n_\lambda$-fold degeneracy of the largest eigenvalue is generally lifted under the perturbation and the resulting poles may even become complex.
As a result, we may encounter signal forms such as damped oscillations.
In the multiplicity-free case, $n_\lambda = 1$, the largest eigenvalue has to stay real and a single exponential decay can be ensured.
Formally, we show the following result:

\begin{theorem*}[Signal form of non-uniform filtered \ac{RB}, informal version]
 Suppose there is a $\delta_\lambda >0$ such that
 \begin{equation}
  \snormb{\tilde T_\lambda - T_\lambda} \leq \delta_\lambda < \frac{\Delta_\lambda}{5}\,.
 \end{equation}
 Then, 
 \begin{equation}
  \FD_\lambda(m) = \tr(A_\lambda I_\lambda^m) + \tr\left( B_\lambda O_\lambda^m\right) \,,
 \end{equation}
 where $I_\lambda$ is a real $n_\lambda \times n_\lambda$ matrix captures the average gate noise, independent of SPAM, and the complex eigenvalues of $I_\lambda$ fulfil $|z|\leq 1$ and $|z-1|\leq 2\delta$.
 The second term is suppressed as 
\begin{equation}
\abs[\big]{ \tr\left( B_\lambda O_\lambda^m\right) }
 \leq
 c_\lambda\,
 (1-\Delta_\lambda + 2\delta_\lambda)^m\,,
 \label{eq:thm8-subdominant-decay-bound}
\end{equation}
with a constant $c_\lambda$ depending on the irrep, measurement basis, and SPAM.
Typically, we have $c_\lambda = O(d_\lambda)$.
\end{theorem*}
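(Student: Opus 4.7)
The plan is to derive an explicit spectral decomposition for $\tilde T_\lambda^m$ via matrix perturbation theory, and then insert it into the exact expression $\FD_\lambda(m) = \osandwich{Q_\lambda}{\tilde T_\lambda^m}{M_\lambda}$ already established in the excerpt. Since $T_\lambda$ has its largest eigenvalue at $1$ with multiplicity $n_\lambda$ and all other eigenvalues of modulus at most $1-\Delta_\lambda$, and since $\snormb{\tilde T_\lambda - T_\lambda}\leq \delta_\lambda < \Delta_\lambda/5$, the spectrum of $\tilde T_\lambda$ splits into a dominant cluster of exactly $n_\lambda$ eigenvalues near $1$ and a subdominant part well inside the disk of radius $1-\Delta_\lambda + \delta_\lambda$. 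I would make this split rigorous with the Riesz projector
\begin{equation}
 \tilde\Pi_\lambda = \frac{1}{2\pi\i}\oint_\gamma (z\one - \tilde T_\lambda)^{-1}\,\rmd z,
\end{equation}
where $\gamma$ is a contour of radius slightly below $\Delta_\lambda/2$ around $1$. A Neumann-series argument shows that the resolvent of $\tilde T_\lambda$ exists on $\gamma$ with norm $\LandauO(1/\Delta_\lambda)$, so $\tilde\Pi_\lambda$ is well defined, has rank $n_\lambda$, and sits within $\LandauO(\delta_\lambda/\Delta_\lambda)$ of the unperturbed $\Pi_\lambda$. This is the standard setup summarised in App.~\ref{sec:perturbation-theory}.

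With the projector in hand, I would decompose $\tilde T_\lambda^m = \tilde T_\lambda^m\tilde\Pi_\lambda + \tilde T_\lambda^m(\one - \tilde\Pi_\lambda)$ and treat the two pieces separately. The first piece acts on an $n_\lambda$-dimensional invariant subspace and, after choosing a basis, becomes the $m$th power of an $n_\lambda\times n_\lambda$ matrix $I_\lambda$. The eigenvalue bound $|z-1|\leq 2\delta_\lambda$ follows from a Bauer--Fike-type estimate together with $\Pi_\lambda T_\lambda\Pi_\lambda = \Pi_\lambda$ and the projector perturbation bound above. The bound $|z|\leq 1$ comes from the fact that $\tilde T_\lambda$ is built from the completely positive, trace non-increasing implementation map $\phi$, so it is contractive in a suitable norm and cannot have eigenvalues of modulus exceeding $1$. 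Pairing this piece with $Q_\lambda$ and $M_\lambda$ yields precisely $\tr(A_\lambda I_\lambda^m)$, with the SPAM-dependence absorbed into $A_\lambda$.

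For the subdominant piece, I would use the same Riesz calculus to write $\tilde T_\lambda^m(\one - \tilde\Pi_\lambda) = \frac{1}{2\pi\i}\oint_{\gamma'} z^m (z\one - \tilde T_\lambda)^{-1}\,\rmd z$, where $\gamma'$ is the circle of radius $1-\Delta_\lambda+2\delta_\lambda$ around the origin. The hypothesis $\delta_\lambda<\Delta_\lambda/5$ precisely guarantees that $\gamma'$ is separated from both eigenvalue clusters with enough margin that the resolvent is bounded by $\LandauO(1/\Delta_\lambda)$ on $\gamma'$. Estimating the integral gives
\begin{equation}
 \snormb{\tilde T_\lambda^m(\one - \tilde\Pi_\lambda)} \lesssim (1-\Delta_\lambda + 2\delta_\lambda)^m,
\end{equation}
and the remaining $c_\lambda = \LandauO(d_\lambda)$ prefactor arises from elementary operator-norm bounds on $Q_\lambda$ and $M_\lambda$ in terms of the irrep dimension together with the bounded norm of $\one-\tilde\Pi_\lambda$.

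The main obstacle I anticipate is the non-normality of $\tilde T_\lambda$: without self-adjointness, spectral-radius bounds do not automatically transfer to operator-norm bounds on matrix powers because Jordan blocks can introduce polynomial growth in $m$. The contour-integral/Riesz-projector formalism bypasses this cleanly, but at the cost of carefully tracking resolvent norms on the separating contours; this is where the concrete constant $5$ in $\delta_\lambda < \Delta_\lambda/5$ enters. A secondary book-keeping obstacle will be to separate out the $d_\lambda$-dependence from the SPAM contribution so that the stated $c_\lambda = \LandauO(d_\lambda)$ scaling is apparent and not hidden inside constants that depend on $\rho$ and the measurement basis.
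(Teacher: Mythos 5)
Your overall architecture matches the paper's: both split $\tilde T_\lambda$ into a rank-$n_\lambda$ dominant invariant block and a complement, power them separately, and contract with the SPAM-dependent boundary operators. The difference is in how the splitting is controlled. The paper does \emph{not} use Riesz projectors; it uses Stewart's perturbation theory of invariant subspaces (App.~\ref{sec:perturbation-theory}, Thm.~\ref{thm:perturbation-of-moment-operator}), which constructs an explicit similarity $\tilde T_\lambda = R_1 I_\lambda L_1^\dagger + R_2 O_\lambda L_2^\dagger$ via the solution $P$ of a quadratic (Riccati-type) equation, together with the \emph{operator-norm} bounds $\snorm{O_\lambda}\leq 1-\Delta_\lambda+2\delta_\lambda$ and $\snorm{L_2}\snorm{R_2}\leq g(\delta_\lambda/\Delta_\lambda)$. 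The subdominant power is then bounded by plain submultiplicativity, $\snorm{O_\lambda^m}\leq\snorm{O_\lambda}^m$, with no loss in $m$. Your eigenvalue localisation for $I_\lambda$ and the contractivity argument for $|z|\leq 1$ are essentially the paper's.

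The genuine soft spot is your treatment of the subdominant term, and your claim that the contour formalism "cleanly bypasses" non-normality is where it fails quantitatively. To extract the rate $(1-\Delta_\lambda+2\delta_\lambda)^m$ from $\frac{1}{2\pi\i}\oint_{\gamma'}z^m(z-\tilde T_\lambda)^{-1}\,\rmd z$ you must place $\gamma'$ at radius $r=1-\Delta_\lambda+2\delta_\lambda$; but the perturbed subdominant block has spectral radius (indeed norm) up to that very value, so the only resolvent control available without already having the block diagonalisation is the Neumann bound $\snorm{(z-\tilde T_\lambda)^{-1}}\lesssim 1/\bigl(r-(1-\Delta_\lambda)\bigr)\sim 1/\delta_\lambda$ on $\gamma'$. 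The resulting estimate $c\,(1-\Delta_\lambda+2\delta_\lambda)^{m+1}/\delta_\lambda$ diverges as $\delta_\lambda\to 0$ (where the true answer is $\sim d_\lambda(1-\Delta_\lambda)^m$), so it does not reproduce the stated $m$- and $\delta_\lambda$-independent constant $c_\lambda$. Enlarging the contour restores a bounded prefactor only at the cost of a strictly worse decay base, and optimising the radius over $m$ still leaves a factor growing linearly in $m$ --- precisely the Jordan-block penalty you set out to avoid. To get the sharp rate with a bounded prefactor you need a norm (not spectral) bound on the perturbed subdominant invariant block together with a norm bound on the non-unitary basis change, which is exactly what Stewart's theorem supplies; your proof would need to import an equivalent statement at that step.
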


An illustration of a typical filtered \ac{RB} signal for a multiplicity-free irrep, $n_\lambda=1$, is shown in Fig.~\ref{fig:rb-signal}.
{The theorem guarantees that in the perturbative noise regime }the filtered \ac{RB} signal is well-described by an exponential decay $a_\lambda I_\lambda^m$ if the sequence length $m$ is chosen large enough.
For small $m$, the second subdominant term $\tr(B_\lambda O_\lambda^m)$ leads to a complicated non-exponential behavior.
This initial regime can be understood as a mixing process of the (noisy) random circuit by Eq.~\eqref{eq:thm8-subdominant-decay-bound}.
Hence, it is imperative to understand the extent of this initial mixing regime to reliably extract decay rates.

The assumption that $\tilde T_\lambda$ is close to $T_\lambda$ is a statement about the \emph{error of the average implementation} of the gates being small compared to spectral gap of the random circuit.  
Importantly, only the error of the gates which actually appear in the random circuit matter.
It should be clear that such an assumption is necessary as strong noise will eventually close the gap of $T_\lambda$, thereby preventing us from extracting the relevant information from the signal. 
Instead, the decay will be dominated by the noisy mixing process in the strong noise regime.
We provide a detailed discussion of the assumption in Sec.~\ref{sec:implementation-quality} and connect the assumption to error measures of the individual gates.
{In particular, for local noise the noise per gate needs to improve as $O(1/n)$ with the system size $n$ in relevant examples, e.g.\ linear XEB.}

{Compared to the assumption 
formulated by 
\textcite{helsen_general_2022}, our perturbative analysis is able to treat individual irreps in isolation.}
Thereby, our guarantees depend only on `irrep-specific' quantities{, such as the restricted implementation error, multiplicity and dimension.}
Moreover, we only require that the measure $\nu$ `approximates Haar moments' of the irrep of $G$, a crucially weaker assumption than, e.g.~approximation in total variation distance used in Ref.~\cite{helsen_general_2022}.
At the same time our result shows that \emph{filtered \ac{RB} yields the same decay parameters as standard \ac{RB}}, as we obtain the  same dominant matrix $I_\lambda$ as \textcite{helsen_general_2022}.

\begin{figure}
\centering
\includegraphics{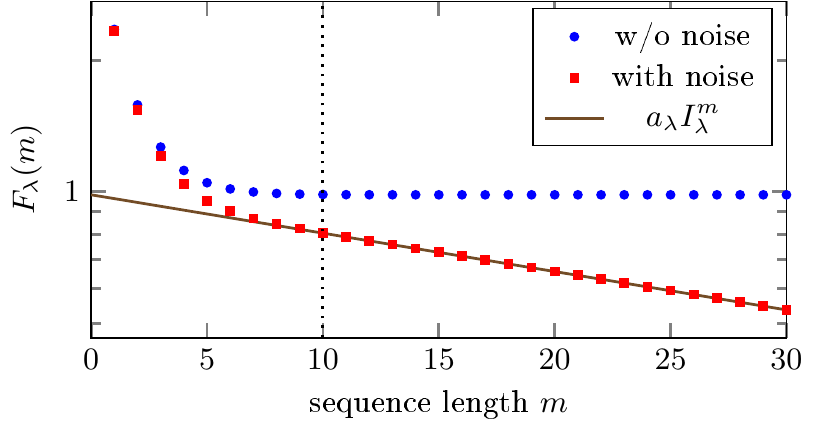}
\caption{\label{fig:rb-signal}
Illustration of a typical filtered \ac{RB} signal for a multiplicity-free irrep.
In the absence of noise (blue dots), the signal converges to a constant value.
With weak noise (red squares), the signal is asymptotically given by an exponential decay. 
In both cases, we observe an initial non-exponential regime which reflects a mixing process of the underlying (noisy) random circuit.
The dotted vertical line indicates the approximate extent of this regime and the point from which on the signal is well-described by an exponential decay.
}
\end{figure}

Deviations from  {monotonously decaying} signals  in \ac{RB} 
have been attributed to non-Markovian noise effects (or temporal drifts) in the literature \cite{wallman2018randomized,figueroa-romero_towards_2022}.
Indeed, if the considered irreps $\lambda$ of the group $G$ are multiplicity-free (the settings analysed in the literature), a monotonous decay (real pole) is guaranteed by our results. 
A non-monotonous signal requires that at least one of our assumptions (such as Markovianity) is broken.

\paragraph{Sequence length bounds.}
Provided that the perturbation assumption holds, the filtered \ac{RB} signal is well-described by a matrix exponential decay $\tr(A_\lambda I_\lambda^m)$, as long as the circuit is sufficiently deep to suppress the second sub-dominant term $\tr(B_\lambda O_\lambda^m)$ by Eq.~\eqref{eq:thm8-subdominant-decay-bound}.
The prefactor $c_\lambda$ is typically of the order $O(d_\lambda)$ and can be slightly improved for concrete examples, such as unitary 2-groups (e.g.~the Clifford group), where we find $c_\Ad =  O(d^{3/2})$ for the traceless irrep ($d_\Ad = d^2-1$).
Based on Eq.~\eqref{eq:thm8-subdominant-decay-bound}, we work out explicit sufficient conditions on the sequence length in Sec.~\ref{sec:sequence-lengths}.
We find that the following sequence length is typically sufficient to suppress the subdominant terms by $\alpha$:
\begin{align}
 m \geq 2 \Delta_\lambda^{-1}\big( \log(d_\lambda) + \log\tfrac{1}{\alpha} + 1.8 \big)\,,
 \label{eq:sequence-length-bound-1-summary}
\end{align} 
again with improved constants for concrete examples.

By evaluating the bound \eqref{eq:sequence-length-bound-1-summary} using results for the spectral gap of common random quantum circuits with universal or Clifford gates \cite{harrow_random_2009,brown_convergence_2010,haferkamp_efficient_2023,haferkamp_improved_2021,HarMeh18}, we arrive at concrete scalings of the circuit depth for these examples in Sec.~\ref{sec:application-to-random-circuits}.
The derived scalings in the number of qubits $n$ are summarized in Tab.~\ref{tab:circuit-length-summary}.
Our result implies that for brickwork circuits \emph{linear} circuit depth $m=O(n)$ suffices for filtered \ac{RB} even if one directly draws generators, either from the unitary or from the Clifford group.

\begin{table}[tb]
\centering
\begin{tabular}{lr}
\toprule
Brickwork circuit (BWC) & $9.8 n$ \\
Clifford generators BWC$^*$ & $470 n$ \\
Local random circuit (LRC) & $4.2 n^2$ \\
LRC nearest-neighbor (NN) & $17.5 n^2$ \\
Clifford generator LRC & $49 n^2$ \\
 Clifford generator LRC NN & $49.5 n^2$ \\
\bottomrule
\end{tabular}
\caption{\label{tab:circuit-length-summary}
Sufficient sequence lengths for filtered RB with different circuit architectures using $2$-qubit gates on $n$ qubits.
The constant in $*$ is expected to be highly non-optimal.
The bounds are based on own calculations and Refs.~\cite{haferkamp_improved_2021,brown_convergence_2010,Hun19}, see also Sec.~\ref{sec:application-to-random-circuits}.
}
\end{table}

One might argue that the linear bound for universal or Clifford random circuits is too pessimistic.
Indeed, in the noiseless setting, we know that $\FD_\Ad(m)$ converges to the Haar-random constant value $1-1/d$ in logarithmic depth, at least for random circuits on linear nearest-neighbor and all-to-all architectures \cite{BarakEtAl:2020:Spoofing,dalzell_random_2022}.
This convergence is somewhat surprising since $\FD_\Ad(m)$ is a second moment of the used measure and generally linear depth is required for the convergence of arbitrary second moments.
Thus, we recover this generic scaling in our results.
Nevertheless, one may hope that the logarithmic scaling persists in the presence of weak noise.
Indeed, for the mentioned architectures subject to \emph{gate-independent and local noise}, it is possible to show convergence in logarithmic depth using intermediate results from Ref.~\cite{Dalzell21RandomQuantumCircuits}.\footnote{To the best of our knowledge, this has not been explicitly shown in the literature before. The statement follows by directly bounding the occurring partition functions in $\FD_\Ad(m)$ using Ref.~\cite[{Lem.~1 and 2}]{Dalzell21RandomQuantumCircuits}.} 
However, there is no rigorous argument for general architectures, such as the two-dimensional layouts used in superconducting devices.
Moreover, it is unclear whether the used results persist in the presence of correlated noise \cite{hunter_hones_private}.
The noise model used in this work is considerably more general and explicitly allows for correlated, gate-dependent noise.
We do not make any explicit assumptions beyond the implementation map model and the perturbation assumption.
In particular, it may be conceivable that our assumptions allow for adversarial noise that makes a linear scaling necessary.
Unfortunately, we were not able to construct such an example, nor could we improve the scaling in the bound \eqref{eq:thm8-subdominant-decay-bound}.
We think that further insights into the properties of random circuits beyond spectral gaps are necessary and leave the resolution of this problem for future work.
In practice, this means that the regime in which the filtered \ac{RB} signal is well-described by the dominant term has to be determined empirically.

Despite the above discussion, we expect that the bound \eqref{eq:sequence-length-bound-1-summary} is a reasonable approximation for small to moderate number of qubits, see also the discussion in Sec.~\ref{sec:random-walk-numerics}.
Moreover, our bound should be rather tight if smaller groups $G$ are used. 
In particular, if $G$ is the local Clifford group or the Pauli group, we find system-size-independent bounds for the sequence lengths.\footnote{Assuming that the irreps of interest do not have a too large support in the case of the local Clifford group.}

\paragraph{Sampling complexity.}

For `large' irreps, the range of the estimator $\hat{\FD}_\lambda(m)$ for $\FD_\lambda(m)$ can scale exponentially in the number of qubits.
For this reason, additional effort is required to establish efficient sample complexity bounds.
To this end, we derive the variance of $\hat{\FD}_\lambda(m)$ through a perturbative expansion of the second moment of the filter function, Thm.~\ref{thm:second-moment}.
This allows us to give general bounds on the sample complexity of filtered \ac{RB} in terms of the corresponding second moments of the noise-free and uniformly random implementation. 
In particular, our results show that using \emph{non-uniform sampling and the presence of noise does not significantly change the sampling complexity}.
To this end, we again have to ensure that sub-dominant terms appearing in the perturbative expansion are suppressed by choosing the sequence length $m$ large enough.
Typically, this requires that $m$ has to be chosen approximately \emph{twice as large} compared to the bound \eqref{eq:sequence-length-bound-1-summary}, but in most relevant cases the overhead is smaller.
Denote by $\EE[f_\lambda^2]_\ideal$ the second-moment of the filter function when the implementation is noise-free and the gates are drawn uniformly from the group.
We prove the following statement:
\begin{theorem*}[Sampling complexity of filtered \ac{RB}, informal]
  Choose the sequence length $m$ such that the subdominant terms are bounded by $\alpha$. If the number of samples fulfills $N \geq (\EE[f_\lambda^2]_\ideal + \alpha )\varepsilon^{-2}\delta^{-1}$, then the mean estimator $\hat{\FD}_\lambda(m)$ 
  {has an additive error bounded by $\varepsilon$}
  with probability at least ${1-}\delta$.
\end{theorem*}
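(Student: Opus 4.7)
The plan is to proceed in three main steps: first reduce the claim to a variance bound via a concentration inequality, second lift the first-moment perturbative analysis to a second-moment operator, and third show that the sub-dominant pieces in that expansion can be absorbed into the $\alpha$ slack.

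\textbf{Step 1 (Concentration).} Since $\hat F_\lambda(m)$ is an empirical mean of $N$ i.i.d.\ copies of $f_\lambda(i, g_1,\dots, g_m)$, Chebyshev's inequality gives
\begin{equation}
\Pr\bigl[|\hat F_\lambda(m) - F_\lambda(m)| \geq \varepsilon\bigr] \;\leq\; \frac{\Var[f_\lambda]}{N\varepsilon^2} \;\leq\; \frac{\EE[f_\lambda^2]}{N\varepsilon^2}.
\end{equation}
Setting this $\leq \delta$ reduces the theorem to establishing the bound $\EE[f_\lambda^2] \leq \EE[f_\lambda^2]_\ideal + \alpha$ under the stated sequence-length choice.

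\textbf{Step 2 (Second-moment operator).} I would rewrite $\EE[f_\lambda^2]$ in the same "contracted power-of-a-transfer-operator" form used for the signal itself. Concretely, $f_\lambda^2 = f_\lambda \otimes f_\lambda$ viewed on $\End(\H)^{\otimes 2}$, so summing over the outcome $i$ and averaging over the gate sequence gives
\begin{equation}
\EE[f_\lambda^2] \;=\; \osandwich{Q_\lambda^{(2)}}{\bigl(\tilde T_\lambda^{(2)}\bigr)^{m}}{M_\lambda^{(2)}},
\end{equation}
with boundary vectors built from $\rho^{\otimes 2}$, $M^{\otimes 2}$, $S^+$ and $P_\lambda^{\otimes 2}$, and with the single-step operator
\begin{equation}
\tilde T_\lambda^{(2)}(\mathcal X) \;\coloneqq\; P_\lambda^{\otimes 2}\!\int_G \bigl(\omega(g)\otimes\omega(g)\bigr)^\dagger \mathcal X \bigl(\phi(g)\otimes\phi(g)\bigr) \dd\nu(g).
\end{equation}
This is exactly the object supplied by Thm.~\ref{thm:second-moment}, i.e.\ the second-moment analogue of the first-moment transfer operator $\tilde T_\lambda$. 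The gap to be used is now the second-moment spectral gap $\Delta_\lambda^{(2)}$ of the random circuit, which is generically about half of $\Delta_\lambda$ and explains the roughly doubled sequence length.

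\textbf{Step 3 (Perturbative expansion and dominant term).} Treat $\tilde T_\lambda^{(2)}$ as a perturbation of its noise-free, Haar-averaged counterpart $T_\lambda^{(2)}$ and apply the same matrix-perturbation toolkit summarised in App.~\ref{sec:perturbation-theory}. The leading eigenspace of $T_\lambda^{(2)}$ is fixed by the commutant of $\omega^{\otimes 2}$ on the relevant sector; its contraction against the boundary vectors reproduces exactly $\EE[f_\lambda^2]_\ideal$, the noise-free uniform second moment that was computed group-by-group in App.~\ref{sec:sampling_complexity_ideal_case}. All other eigenvalues are bounded in modulus by $1-\Delta_\lambda^{(2)} + 2\delta_\lambda^{(2)}$, so for $m$ chosen as in the statement their aggregate contribution is at most $\alpha$. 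Putting this together yields $\EE[f_\lambda^2] \leq \EE[f_\lambda^2]_\ideal + \alpha$, and combining with Step 1 gives the advertised $N \geq (\EE[f_\lambda^2]_\ideal + \alpha)\varepsilon^{-2}\delta^{-1}$.

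The main obstacle is Step 3: one must identify the correct leading eigenspace of $T_\lambda^{(2)}$ (which is no longer governed by a single irrep $\tau_\lambda$ but by the tensor-square commutant), verify that its contraction against the boundary data really produces $\EE[f_\lambda^2]_\ideal$ rather than a related quantity, and control the perturbation without picking up dimension factors from the enlarged commutant that would spoil the clean additive $\alpha$. Handling this sector structure carefully, together with ensuring that $\delta_\lambda^{(2)}$ stays well inside $\Delta_\lambda^{(2)}$ under the same implementation-quality assumption used for the signal, is where the bulk of the technical work lies.
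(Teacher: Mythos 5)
Your overall strategy — Chebyshev, then a tensor-square transfer-operator representation of $\EE[f_\lambda^2]$, then a perturbative split into dominant and subdominant parts — is exactly the paper's route (Thm.~\ref{thm:second-moment} together with Thm.~\ref{thm:sampling-complexity-additive}). However, two points in your Step 2 and Step 3 are off in ways that matter.

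First, your single-step operator is wrong: you write $\phi(g)\otimes\phi(g)$, but the physical channel enters the second moment only \emph{linearly}, because the Born probability $p(i|g)$ appears once while it is the classically computed filter value $f_\lambda(i,g)$ that is squared. The correct object is $\widehat{\phi\nu}[\omega_\lambda^{\otimes 2}]$, i.e.\ $\mathcal X\mapsto \int \omega(g)^{\dagger}\otimes\omega(g)^{\dagger}\,\mathcal X\,\phi(g)\,\dd\nu(g)$ acting on $\Hom(V,V(\lambda)^{\otimes 2})$; the relevant design order and spectral gap are therefore third-moment quantities ($\Delta_\lambda^{(3)}$, a $\tau_\lambda^{\otimes 2}\otimes\omega$-design condition), not second-moment ones, and the roughly doubled sequence length comes from larger prefactors ($2\log d_\lambda$, an extra $\snorm{S_\lambda^+}$), not from a halved gap — in fact $\Delta_\lambda^{(3)}=\Delta_\lambda$ in many relevant cases.

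Second, and more seriously, your Step 3 asserts that the dominant contribution "reproduces exactly $\EE[f_\lambda^2]_\ideal$" and that everything else is $\leq\alpha$. In the noisy case the dominant part is $\sum_{\sigma}\tr(C_\sigma I_\sigma^m)$, where the $C_\sigma$ carry SPAM noise and the $I_\sigma$ carry gate noise; it is \emph{not} the ideal second moment. To bound it by $\sum_\sigma\tr(C_\sigma)=\EE[f_\lambda^2]_\SPAM$ one needs $\spec(I_\sigma)\subset D_1(0)$ \emph{and} the per-irrep non-negativity $\tr(C_\sigma)\geq 0$ (the paper's "non-malicious SPAM" assumption, Prop.~\ref{prop:Csigma}): a negative $\tr(C_\sigma)$ modulated by $I_\sigma^m<1$ could \emph{increase} the sum. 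One then still needs $\EE[f_\lambda^2]_\SPAM\leq\EE[f_\lambda^2]_\ideal$ (Prop.~\ref{prop:2nd-moment-vis-leq-one}, proved under $\HW{n}{p}\subset G$) to reach the stated bound. Both of these ingredients are absent from your argument, and without them the chain $\EE[f_\lambda^2]\leq\EE[f_\lambda^2]_\ideal+\alpha$ does not follow.
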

The result allows us to derive sample complexity bounds for filtered \ac{RB} by calculating the moments of 
the analogous protocol using noise-free, Haar-distributed unitaries.
We give the results for groups that form global unitary 3-designs, local unitary 3-designs, and the Heisenberg-Weyl group in Prop.~\ref{prop:second-moments-ideal}.
Perhaps surprisingly, we find that filtered \ac{RB} with single-\emph{qubit} gates coming from a unitary 3-design (e.g.~the Clifford group) has constant sampling complexity irrespective of the size of the non-trivial support of the irreps.
Interestingly, a similar result in local dimension $q>2$ does not hold.
More generally, if the group $G$ contains the Heisenberg-Weyl group, Prop.~\ref{prop:2nd-moment-bounds} gives an upper and lower bound for the ideal second moment, generalizing the explicit calculations leading to Prop.~\ref{prop:second-moments-ideal}.

We find that the role of SPAM in the derivation of the sampling complexity is considerably more intricate than in the one of the signal form.
Technically, we have to assume a non-negativity condition on the involved SPAM coefficients.
Prop.~\ref{prop:Csigma} ensures non-negativity in the absence of SPAM and we argue that even with SPAM noise this is likely to still hold.
Furthermore, for the case that the Heisenberg-Weyl group is a subgroup of $G$, we show in Prop.~\ref{prop:2nd-moment-vis-leq-one} that the effect of SPAM is to reduce the absolute value of the involved coefficients

\paragraph{Modified filter functions.}

Finally, in Sec.~\ref{sec:other-filters}, we propose two potential modifications of filtered \ac{RB}
that can improve over the above scalings and yield meaningful decay parameters already from \emph{constant-depth} circuits, at least under simplifying assumptions on the noise. 
A detailed study of the properties of these modified protocols is left for future work.

The first modification is motivated from \emph{classical state shadows}:
We argue that the filtered \ac{RB} protocol can be improved by using the exact frame operator (also called \emph{measurement channel} in the literature) of the length-$m$ random circuit ensemble.
This should exactly correct for the non-uniformity of constant-depth circuits, thereby improving the estimates.
However, the computation of the frame operator of random circuits is an important open problem in classical shadows and in general computationally intractable \cite{hu_classical_2021,bu_classical_2022,akhtar_scalable_2022,bertoni_shallow_2022,arienzo_closed-form_2022}.

The second modification works for random circuits that are invariant under local Clifford unitaries or even only under Pauli operators.
In this case, a simple modification of the filter function is able to project onto the relevant subspace of the signal by `simulating' a trace inner product.
Again, this may lead to a significant reduction in the required sequence length.

\paragraph{Practical advice for designing RB protocols.}
Our results suggest the following blue-print for the randomized benchmarking of specific quantum devices.
To this end, our work enables an instructive `bottom-up' approach:
Practioners can now start directly from a set of native gates and the connectivity provided by the platform. 
Thereby, the benchmark directly reflects the implementation quality of the native gates.
Fast-scrambling random circuits with a large constant spectral gap such as brickwork circuits are particulary suited.
Even for non-uniform RB protocols
the \emph{underlying} group $G$ is
 of central importance as its representation theory determines the `simple' form of the {dominant} \ac{RB} signal.

We propose that a filtered \ac{RB} experiment should be preceded by a suitable theoretical and numerical analysis which ensures that the extracted decay rates are {reflecting the quality of the implementation} and assesses the scalibilty of the scheme.
{The following recipe summarizes the `bottom-up' approach:}
\begin{enumerate}[label=(\alph*)]
 \item Choose a gate set $\mathcal G$ and a layout for the random circuits, as well as a sampling scheme.
 \item Determine the generated group $G$ and the relevant irreps and multiplicities. Compute the frame operator $S$ and its (pseudo-)inverse.
 \item Estimate the spectral gap $\Delta$ of the measure defined in (a). 
    By adjusting the gate set $\mathcal{G}$ and the sampling scheme, the spectral gap can be optimized.
    To this end, a combination of numerical methods to compute `local' gaps and literature results may be useful (c.f.~Sec.~\ref{sec:application-to-random-circuits}).
    Moreover, numerical computation of spectral gaps for small system sizes ($n\leq 10$) and extrapolation to larger systems may be feasible and give good results.
 \item The theoretical bound \eqref{eq:sequence-length-bound-1-summary} gives a first estimate on the required sequence lengths.
 To get a more precise estimate, simulate the noiseless \ac{RB} experiment.
    The signal will quickly converge to a known constant $C_\lambda$ (c.f.~Eq.~\eqref{eq:ideal-signal-intro}) at a rate which matches (c), c.f.~Fig.~\ref{fig:rb-signal}.
    From this simulation, one can estimate the length of this mixing phase.
    More precisely, for a given error $\varepsilon$, determine the sequence length $m_0$ such that $|F_\lambda(m)-C_\lambda|\leq \varepsilon$ for $m\geq m_0$.
    This bound is an estimate for minimum sequence length required in the actual filtered \ac{RB} experiment.
 \item If the noise is too strong, the dominant decay in the filtered \ac{RB} experiment may originate in {an uncontrolled} mixing process  instead of desired scrambling of gate noise. 
    To ensure that we are in the {admissible} low-noise regime, {it is instructive to analyse the scaling of the perturbative assumption for realistic noise models (from theoretical consideration or experimental characterization) analytically or numerically.}
\end{enumerate}

\section{Preliminaries}
\label{sec:preliminaries}

In the following we introduce the mathematical definitions required for the precise statement and derivation of our results.  

\subsection{Operators, superoperators, and norms}
\label{sec:operators-norms}

\paragraph{Linear operators.}
Consider a finite-dimensional Hilbert space $\mathcal H$ over the field $\F = \C$ or $\F = \R$.
Then, the vector space $\End(\mathcal H)$ of linear operators on $\mathcal H$ is by itself a finite-dimensional Hilbert space over $\F$ with the \emph{Hilbert-Schmidt inner product}:
\begin{equation}
 \obraket{X}{Y} \coloneqq \tr( X^\dagger Y ).
\end{equation}
Here, $X^\dagger$ is the adjoint operator defined w.r.t.~the (complex or real) inner product on $\mathcal H$.
As in the usual Dirac notation, we can use the Hilbert-Schmidt inner product to define \emph{operator kets and bras} by $\oket{Y} \equiv Y$ and $\obra{X}: \, Y \mapsto \obraket{X}{Y}$.
Likewise, we can define outer products $\oketbra{X}{Y}$ which form linear maps on $\End(\mathcal H)$ acting as $A \mapsto \obraket{Y}{A} X$.
Following a common nomenclature, we refer to such linear maps as \emph{superoperators} (on $\mathcal H$).
As $\End(\mathcal H)$ is again a Hilbert space, it should not come as a surprise that the vector space of superoperators, $\End\End(\mathcal H)=\End^2(\mathcal H)$, can again be endowed with a Hilbert space structure using an analogue inner product.
By slightly overloading notation, we use $\obraket{\mathcal X}{\mathcal Y}$ to also denote the Hilbert-Schmidt inner product between superoperators $\mathcal X,\mathcal Y\in\End^2(\mathcal H)$.
Likewise, we denote outer products by $\oketbra{\mathcal X}{\mathcal Y}$, which are linear operators on superoperators and thus lie in $\End^3(\mathcal H)$.\footnote{We resist the urge to call these \emph{super duper operators} in public.}

We also consider linear maps $\mathcal V \rightarrow \mathcal W$ between different Hilbert spaces $\mathcal V$ and $\mathcal W$ over $\F$.
Analogue to above, these form a Hilbert space $\Hom(\mathcal V, \mathcal W)$ with the Hilbert-Schmidt inner product $\obraket{X}{Y}\coloneqq \tr (X^\dagger Y)$ where $X^\dagger: \mathcal W \rightarrow \mathcal V$ is the adjoint of $X$ defined by $\langle w, X(v)\rangle_\mathcal{W} = \langle X^\dagger(w), v \rangle_\mathcal{V}$.

\paragraph{Moore-Penrose pseudoinverse.}
Given a linear operator $X\in\End{\mathcal H}$, the restricted linear map $\tilde X:\, (\ker X)^\perp \rightarrow \ran X$ is an isomorphism and we define the \emph{Moore-Penrose pseudoinverse}, or simply \emph{pseudoinverse} of $X$ to be the linear operator $X^+$ which is $\tilde X^{-1}$ on $\ran X$ and identically zero on $(\ran X)^\perp$.
In a basis, $X^+$ can be computed using the singular value decomposition $X  = U\Sigma V^\dagger$ as the matrix $X^+ \coloneqq V \Sigma^+ U^\dagger$, where $\Sigma^+$ is the diagonal matrix obtained from $\Sigma$ by inverting all non-zero singular values.
Note that if $X$ is a real matrix, then the singular value decomposition is $X  = O\Sigma T^\dagger$ where $O$ and $T$ are orthogonal matrices; in particular, $X^+$ is a real matrix, too.

\paragraph{Norms.}
Throughout this paper, we use \emph{Schatten $p$-norms} which are defined for any linear map $X \in \Hom( \mathcal V, \mathcal W)$ between Hilbert spaces $\mathcal V$ and $\mathcal W$ and $p\in [1,\infty]$ as
\begin{equation}
 \norm{X}_p \coloneqq \Big(\tr \abs{X}^p \Big)^{\frac 1 p} = \left( \sum_{i=1}^d \sigma_i^p \right)^{\frac 1 p},
\end{equation}
where $|X|\coloneqq\sqrt{X^\dagger X}\in\End(\mathcal V)$ and $\sigma_i\geq 0$ are the singular values of $X$, i.e.~the square roots of the eigenvalues of the positive semidefinite operator $X^\dagger X$.
In particular, we use the \emph{trace norm} $p=1$, the \emph{spectral norm} $p=\infty$, as well as the \emph{Hilbert-Schmidt norm} $p=2$ which is simply the norm induced by the Hilbert-Schmidt inner product. 
The definition of Schatten norms only relies on the Hilbert space structure of the underlying vector space, thus these norms can be defined for operators, superoperators, and even higher-order operators alike.

\paragraph{Hermiticity-preserving maps.}
The Hilbert space of linear operators $\End(\mathcal H)$ over $\F=\C$ has a real structure in the sense that it decomposes as a direct sum $\End(\mathcal H) = \Herm(\mathcal H) \oplus i \Herm(\mathcal H)$, 
where $\Herm(\mathcal H)$ is the real Hilbert space of Hermitian matrices on $\mathcal H$.
The associated antilinear involution is given by the adjoint $\dagger$ with $\Herm(\mathcal H)$ as its fixed point space.
We call a linear map $\phi:\,\End(\mathcal H) \rightarrow \End(\mathcal H)$ \emph{Hermiticity-preserving} if it commutes with $\dagger$, i.e.~it maps $\Herm(\mathcal H)$ to itself.
Naturally, such a map induces a real linear map $\phi_\R$ on $\Herm(\mathcal H)$ by restriction.
Note that $\phi$ is Hermiticity-preserving if and only if it is represented by a real matrix in some basis of Hermitian matrices for $\End(\mathcal H)$ (and $\phi$ and $\phi_\R$ have the same matrix representation).

Finally, suppose $\phi:\,\End(\mathcal H) \rightarrow \End(\mathcal H)$ is Hermiticity-preserving, then so is its pseudoinverse $\phi^+$.
To see this, choose a basis of Hermitian matrices for $\End(\mathcal H)$ and let $A$ be the matrix representation of $\phi$ in this basis.
Since $\phi$ is Hermiticity-preserving, $A$ is real-valued and we find the singular value decomposition $A=O\Sigma T^\dagger$ with orthogonal matrices $O$ and $T$.
Next, note that the action of $O$ and $T$ on the complex Hilbert space $\End(\mathcal H)$ is unitary and hence this is also the singular value decomposition of $A$ seen as a complex matrix.
In particular, the matrix representation of $\phi^+$ is the real matrix $A^+ = T \Sigma^+ O^\dagger$ and hence $\phi^+$ is Hermiticity-preserving.

\subsection{Representation theory}

In this section, we briefly review some basic concepts from the representation theory of compact groups and introduce the relevant notation.
For more details, we refer the interested reader to standard text books \cite{FouHar91,brocker_representations_1985,goodman_symmetry_2009,bump_lie_2004}.

A \emph{topological group} $G$ is a group which is endowed with a topology such that group multiplication and inversion are continuous maps.
We call a topological group \emph{compact} if it is a compact topological Hausdorff space. 
A compact group comes with a unique Borel measure $\mu$, called the \emph{Haar measure}, which is left and right invariant under group multiplication, $\mu(g A) = \mu (A) = \mu(A g)$ for all $g\in G$ and open sets $A\subset G$, and normalized as $\mu(G) = 1$.

\subsubsection{Unitary representations}

Given a compact group $G$, a finite-dimensional \emph{unitary representation} of $G$ is a pair $(\rho,V)$ where $V$ is a finite-dimensional Hilbert space and $\rho:\,G\rightarrow \U(V)$ is a group homomorphism such that the map $G\times V \rightarrow V$ given by $(g,v)\mapsto \rho(g)v$ is continuous.
In general, we call two representations $(\rho,V)$ and $(\rho',V')$ \emph{isomorphic} or \emph{equivalent} if there is a unitary isomorphism $U:\, V\rightarrow V'$ such that $\rho(g) = U^\dagger \rho'(g) U$ for all $g\in G$.
A subspace $W\subset V$ is called \emph{invariant} w.r.t.~$\rho$ if $\rho(g)(W) = W$ for all $g\in G$.
We call $\rho$ an \emph{irreducible representation} or short \emph{irrep} if the only invariant subspaces are $\{0\}$ and $V$ itself.
Otherwise, we call $\rho$ \emph{reducible}.
It is well-known that any finite-dimensional unitary representation $(\rho,V)$ of $G$ is \emph{completely reducible}, i.e.~we can write the vector space $V$ as a direct sum of invariant subspaces $V_i$,
\begin{equation}
 V = \bigoplus_i V_i,
 \label{eq:complete-reducibility}
\end{equation}
such that each restriction $\rho_i \coloneqq \rho|_{V_i}$ is irreducible.
We can then write $\rho = \oplus_i \rho_i$.
However, the decomposition \eqref{eq:complete-reducibility} is in general not unique.
This is the case if two irreps $\rho_i$ and $\rho_j$ with $i\neq j$ are \emph{isomorphic}. 
Then, the possible ways of decomposing the representation $\rho_i\oplus\rho_j$ into irreps corresponds exactly to a $\U(2)$ symmetry. 

More generally, we call a representation \emph{isotypic} if it is a direct sum of mutually isomorphic irreps.
Let us denote by $\Irr(G)$ the set of \emph{inequivalent} irreducible unitary representations of $G$ and note that these are necessarily finite-dimensional for a compact group $G$.
Given an irrep $\tau\in\Irr(G)$, the $\tau$-isotype of a representation $(\rho,V)$ is defined as the subspace $V(\tau) \subset V$ given as the ordinary sum of all irreducible subspaces isomorphic to $\tau$.
One can show that the orthogonal projection onto $V(\tau)$ is given by the formula
\begin{equation}
 P_\tau \coloneqq \dim(\tau) \int_G \overline{\chi_\tau}(g) \rho(g) \dd\mu(g).
 \label{eq:projection-isotypes}
\end{equation}
Here, $\chi_\tau(g) \coloneqq \tr(\tau(g))$ is the character of the irrep $\tau$ and $\mu$ is the Haar measure on $G$.
In particular, we have the canonical decomposition into isotypes as follows
\begin{equation}
 V = \bigoplus_{\tau\in\Irr(G)} V(\tau).
 \label{eq:isotypic-decomposition}
\end{equation}
Note that $V(\tau) = \{0\}$ if $\tau$ is not contained in $\rho$.
Hence, the sum actually runs over the inequivalent irreps of $\rho$, which we denote by $\Irr(\rho)$.
The dimension of $V(\tau)$ is given as $n_\tau \dim(\tau)$ where $n_\tau$ is the \emph{multiplicity} of $\tau$: it is the unique number of copies of $\tau$ that appear in any decomposition of $\rho$.
For some choice of irrep decomposition we have
\begin{equation}
 V(\tau) \simeq V_\tau^{\oplus n_\tau} \simeq V_\tau \otimes \C^{n_\tau},
\end{equation}
where $V_\tau$ is the Hilbert space on which $\tau$ acts.
The corresponding decomposition of $\rho|_{V(\tau)}$ is $\tau^{\oplus n_\tau}$ under the first identification and $\tau\otimes\id_{n_\tau}$ under the second one.
The factor $\C^{n_\tau}$ is sometimes called the \emph{multiplicity space}.

For any compact group $G$, the vector space of square-integrable complex functions on $G$, $L^2(G,\mu)\equiv L^2(G)$, is a Hilbert space endowed with the inner product
\begin{equation}
 \langle f, g \rangle \coloneqq  \int_G \overline{f(t)} g(t) \dd\mu(t).
\end{equation}
An important example of functions in $L^2(G)$ are the characters $\chi_\rho(g) = \tr(\rho(g))$, where $\rho$  is a finite-dimensional (not necessarily irreducible) unitary representation. 
Characters are very useful in the representation theory of compact groups, hence let us summarize a few important facts.
Here, $\rho$ and $\rho'$ are two finite-dimensional unitary representations.
\begin{enumerate}[label=(\roman*)]
 \item $\rho$ and $\rho'$ are isomorphic if and only if their characters agree.
 \item $\rho$ is irreducible if and only if $\langle \chi_\rho, \chi_\rho \rangle = 1$.
 \item Characters of inequivalent irreps are orthogonal:  $\langle \chi_\tau, \chi_{\tau'} \rangle = 0$ for all $\tau,\tau'\in\Irr(G)$ with $\tau\neq \tau'$.
 \item For any $\tau\in\Irr(G)$, $n_\tau = \langle \tau, \rho\rangle$ is the multiplicity of $\tau$ in $\rho$ (see e.g.\ \cite[Corollary~2.16]{FouHar91}).
\end{enumerate}
The Hilbert space $L^2(G)$ has more interesting properties on which we comment in more detail in Sec.~\ref{sec:fourier-transform}.

\paragraph{Notation.}
For $G = \U(d)$, we denote by $g \mapsto U_g$ its defining representation as unitary matrices on $\C^d$. 
Likewise, we use the same notation for the restriction to a subgroup $G \subset \U(d)$.
In the following and throughout this paper, we assume that all representations are finite-dimensional and unitary, if not stated otherwise.

\subsubsection{Real representations}
\label{sec:real-representations}

{

More generally, for any vector space $W$ over a field $\F$, a continuous group homomorphism $\rho:\, G \rightarrow \GL(W)$ is called a \emph{linear representation}, or simply \emph{representation}.
A particularly relevant case is $\F=\R$ in which case we call $\rho$ a \emph{real} representation.
The \emph{complexification} $\rho^\C$ of $\rho$ is a representation on the complex vector space $W^\C \coloneqq W \oplus i W$.
Conversely, if $\omega$ is a representation on the \emph{complex} vector space $V$ endowed with a real structure $V = W \oplus i W$, and a representation $\rho:\, G \rightarrow \GL(W)$ such that $\rho^\C = \omega$, then $\omega$ is also called a \emph{real representation}.
We then also write $\omega_\R=\rho$ for the restriction of $\omega$ to $W$.

These concepts can be used to study the relations between the real and complex irreps of a compact group $G$. 
Suppose that the real representation $\rho$ decomposes into irreps as $\rho=\oplus_\lambda \tau_\lambda$, then we have $\rho^\C=\oplus_\lambda \tau_\lambda^\C$, however the complexified irreps $\tau_\lambda^\C$ may now be reducible.
As we see shortly, the possible cases that can occur are exactly captured by Schur's lemma:
In its general form, it says that the commutant of an irrep over a field $\F$ is a division algebra over $\F$.
The division algebras over $\F=\R$ are classified by Frobenius' theorem and given by $\R$, $\C$, and the quaternions $\mathbb H$.
Since the commutant of $\tau_\lambda^\C$ is given by the complexification of the commutant of $\tau_\lambda$, we find that it is either $\R \oplus i \R \simeq \C$, $\C\oplus i\C \simeq \C^2$, or $\mathbb H \oplus i \mathbb H \simeq \C^{2 \times 2}$.
This restricts the possible reductions of $\tau_\lambda^\C$ to either $\sigma$, $\sigma\oplus\overline\sigma$, or $\sigma\oplus\sigma$, where $\sigma$ is a suitable complex irrep.
Consequently, we say that $\sigma$ is of \emph{real, complex, or quaternionic type}, respectively.
The results are summarized in the following table:

\begin{table}[h]
\centering
\begin{tabular}{lccc}
\toprule
 Commutant & $\tau^\C$ & $\dim_\R\tau$ & $\sigma\simeq\overline\sigma$ \\
 \midrule
 $\R$ & $\sigma$ & $k$ & yes \\
 $\C$ & $\sigma\oplus\overline\sigma$ & $2k$ & no \\
 $\mathbb H$ & $\sigma\oplus\sigma$ & $4k$ & yes \\
 \bottomrule
\end{tabular} 
\caption{
  Summary of the possible cases for the commutant of a real irrep $\tau$.
  These cases also correspond to the possible reductions of the complexified representation $\tau^\C$ in terms of a complex irrep $\sigma$.
  In the real and quaternionic case, $\sigma$ is self-conjugate, in the complex case, it is not.
}
\label{tab:division-algebras}
\end{table}

}

\subsection{Fourier transform on compact groups}
\label{sec:fourier-transform}

Given a function $f\in L^2(G)$, we can construct its \emph{Fourier transform} $\widehat f$ which maps an irrep $\tau_\lambda$ of $G$ to an operator on the representation space $V_\lambda$ as follows \cite{folland_course_2015}:
\begin{equation}
  \label{eq:fourier-transform-of-function}
  \widehat f[\tau_\lambda] \coloneqq  \int f(g) \tau_\lambda(g)^\dagger \dd\mu(g) \in \End(V_\lambda).
\end{equation}
It is a classic result in harmonic analysis that the map $f\mapsto\widehat{f}$ induces an algebra isomorphism $L^2(G)\simeq \bigoplus_{\lambda\in\Irr(G)} \End(V_\lambda)$; this is one incarnation of the famous \emph{Peter-Weyl theorem}.

In the following, we introduce a generalization of the Fourier transform to operator-valued functions $\phi:\,G\rightarrow\End(V)$ which can be understood as the component-wise application of the above Fourier transform.
This has been studied in the mathematical literature \cite{gowers_inverse_2017} and introduced to the randomized benchmarking literature in Refs.~\cite{Merkel18,helsen_general_2022}.

\begin{definition}[Fourier transform]
 \label{def:fourier}
 Let $\phi:\, G\rightarrow \End(V)$ be a square-integrable operator-valued function on a compact group $G$ with Haar measure $\mu$.
 Let $\lambda\in\Irr(G)$ label the inequivalent irreducible representations $(V_\lambda,\tau_\lambda)$ of $G$.
 Then, for any $\lambda\in\Irr(G)$, we define a linear operator $\Hom(V,V_\lambda) \rightarrow \Hom(V,V_\lambda)$ as follows:
 \begin{equation}
 \label{eq:def-fourier-transform}
  \widehat\phi[\tau_\lambda] \coloneqq   \int_G  \tau_\lambda(g)^\dagger (\argdot) \phi(g) \dd\mu(g).
 \end{equation}
\end{definition}

Note that our definition of Fourier transform differs a bit from the one in Refs.~\cite{Merkel18,helsen_general_2022}.
The reason for this is that there is a certain ambiguity in defining the Fourier transform, even for ordinary functions as in Eq.~\eqref{eq:fourier-transform-of-function}.
We think that, for our purposes, Definition \ref{def:fourier} is suited best, although other, equivalent formulations can be useful in certain contexts.
Thus, let us briefly comment on these.
We have a canonical isomorphism 
\begin{align}
    \End(\Hom(V,V_\lambda)) 
    &\simeq (V_\lambda\otimes V^*)\otimes(V_\lambda\otimes V^*)^* \\
    &\simeq (V_\lambda\otimes V_\lambda^*)\otimes(V\otimes V^*) \\
    &\simeq \End(V_\lambda\otimes V) \,,
\end{align}
with respect to which Eq.~\eqref{eq:def-fourier-transform} becomes
\begin{equation}
 \label{eq:def-fourier-transform-dagger}
 \widehat\phi[\tau_\lambda] \simeq \int_G  \tau_\lambda(g)^\dagger\otimes\phi(g) \dd\mu(g) \in \End(V_\lambda\otimes V).
\end{equation}
However, under this isomorphism, the order of composition is exchanged on the first factor compared to the second.
This can be accounted for by applying a suitable algebra anti-automorphism to the first factor which, up to a choice of basis, is uniquely given by the transposition:
\begin{equation}
\label{eq:def-fourier-transform-dagger-cc}
 \widehat\phi[\tau_\lambda] \simeq \int_G  \bar\tau_\lambda(g)\otimes\phi(g) \dd\mu(g) \in \End(V_\lambda\otimes V)
\end{equation}
This is exactly the definition of the Fourier transform used in Refs.~\cite{Merkel18,helsen_general_2022}. 
Note that the proper isomorphism is similar to the \emph{vectorization} which is prominently used in quantum information to represent a unitary channel $\rho \mapsto U \rho\, U^\dagger$ as the operator $U \otimes \bar U$.

It is convenient to slightly extend the notation above and use the defining Eq.~\eqref{eq:def-fourier-transform} even if the argument is not an irrep $\tau_\lambda$.
For an arbitrary representation $\rho\simeq\bigoplus_{\lambda}\tau_\lambda^{\oplus n_\lambda}\simeq\bigoplus_{\lambda} \tau_\lambda \otimes \id_{n_\lambda}$, we then obtain, in the light of the above isomorphisms, 
\begin{equation}
\label{eq:fourier-transform-reducible-rep}
 \widehat\phi[\rho]  \simeq \bigoplus_\lambda \widehat\phi[\tau_\lambda]^{\oplus n_\lambda} \simeq \bigoplus_\lambda \widehat\phi[\tau_\lambda] \otimes \id_{n_\lambda}\,. 
\end{equation}
For example, if $\omega(g)= U_g (\argdot)U_g\ad$ is acting by conjugation, $\widehat \omega[\omega]$ is a convenient notation of what in quantum information is referred to as the \emph{channel twirl}.

We proceed by proving a property of the Fourier transform which we will frequently use. 
It is a direct consequence of Schur's lemma.
\begin{proposition}[Fourier transform of representations]
\label{prop:fourier-transform}
Let $\rho:\, G \rightarrow \U(V)$ be a representation of $G$. 
Then, $\widehat\rho[\tau_\lambda]$ is an orthogonal projection and its rank is the multiplicity $n_\lambda$ of the irrep $\tau_\lambda$ in $\rho$.
More precisely, if the $\tau_\lambda$-isotypic component in $V$ is $V(\lambda) \simeq V_\lambda \otimes \C^{n_\lambda}$ with $n_\lambda \neq 0$, then $\widehat\rho[\tau_\lambda]$ is block-diagonal w.r.t.~the induced decomposition and the only non-zero block is
\begin{equation}
\label{eq:fourier-transform-representation}
 \widehat\rho[\tau_\lambda] \simeq 
  d_\lambda^{-1}\oketbra{\id_\lambda}{\id_\lambda}\otimes \id_{n_\lambda}^*,
\end{equation}
where $\id_{n_\lambda}^*$ is the identity on the dual multiplicity space $(\C^{n_\lambda})^*$.
\end{proposition}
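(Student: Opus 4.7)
The plan is to exploit the isotypic decomposition of $V$ to reduce $\widehat\rho[\tau_\lambda]$ to a block-diagonal operator whose blocks are governed by Schur orthogonality. The whole argument is essentially a reformulation of the well-known twirling identity, repackaged in the $\Hom(V,V_\lambda)$-formalism of Definition~\ref{def:fourier}.

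Concretely, write $V=\bigoplus_\mu V(\mu)$ with $V(\mu)\simeq V_\mu\otimes\C^{n_\mu}$ and $\rho|_{V(\mu)}\simeq\tau_\mu\otimes\id_{n_\mu}$. Dualising the second factor yields a direct-sum decomposition
\begin{equation}
 \Hom(V,V_\lambda)\simeq\bigoplus_\mu \Hom(V_\mu,V_\lambda)\otimes(\C^{n_\mu})^*,
\end{equation}
and I would first observe that each summand is invariant under $\widehat\rho[\tau_\lambda]$: for $X\in\Hom(V(\mu),V_\lambda)$, the map $X\rho(g)$ again sends $V(\mu)$ to $V_\lambda$ because $\rho(g)$ preserves each isotype, so the integral stays inside the same summand. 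This establishes the claimed block-diagonal form.

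Next, I would unpack the action on a rank-one element $Y\otimes\xi$ of the $\mu$-block: since $\id_{n_\mu}$ commutes through the integrand,
\begin{equation}
 \widehat\rho[\tau_\lambda](Y\otimes\xi)=\left(\int_G \tau_\lambda(g)^\dagger Y\,\tau_\mu(g)\dd\mu(g)\right)\otimes\xi .
\end{equation}
The bracketed integral is an intertwiner from $\tau_\mu$ to $\tau_\lambda$ by left- and right-invariance of the Haar measure, so Schur's lemma forces it to vanish for $\mu\neq\lambda$. For $\mu=\lambda$ the standard twirling identity gives $\int\tau_\lambda(g)^\dagger Y\tau_\lambda(g)\dd\mu(g)=d_\lambda^{-1}\tr(Y)\id_\lambda$, which is exactly $d_\lambda^{-1}\oketbra{\id_\lambda}{\id_\lambda}(Y)$ in the Hilbert--Schmidt notation of Sec.~\ref{sec:operators-norms}. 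Tensoring with $\id_{n_\lambda}^*$ on the multiplicity factor produces the claimed formula \eqref{eq:fourier-transform-representation}.

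Finally, I would verify the projection statement by a one-line computation: $\obraket{\id_\lambda}{\id_\lambda}=\tr(\id_\lambda)=d_\lambda$, so $d_\lambda^{-1}\oketbra{\id_\lambda}{\id_\lambda}$ is self-adjoint and idempotent of rank one on $\End(V_\lambda)$; tensoring with the rank-$n_\lambda$ projection $\id_{n_\lambda}^*$ yields an orthogonal projection of rank $n_\lambda$, matching the multiplicity of $\tau_\lambda$ in $\rho$. No genuine obstacle arises; the only care needed is the bookkeeping between the $\Hom$-picture of Definition~\ref{def:fourier} and the $V_\lambda\otimes V^*$-picture of Eq.~\eqref{eq:def-fourier-transform-dagger}, in particular getting the dual $\id_{n_\lambda}^*$ on the multiplicity space correct rather than $\id_{n_\lambda}$ itself.
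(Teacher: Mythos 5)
Your proof is correct and follows essentially the same route as the paper's: both decompose $\Hom(V,V_\lambda)$ along the isotypic decomposition of $\rho$ and invoke Schur's lemma block-by-block, which is exactly the twirling identity you use. The only (cosmetic) difference is the ordering — the paper first establishes idempotency, self-adjointness, and the rank $n_\lambda = \langle\chi_\lambda,\tr\rho\rangle$ abstractly from Haar invariance and character orthogonality, and only then derives the explicit formula, whereas you derive the explicit block form first and read off the projector properties and rank from it.
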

\begin{proof}
If $\rho:\, G \rightarrow \U(V)$ is a representation, then we have
\begin{align}
 \widehat\rho[\tau_\lambda]^2 
 &= \int\int \tau_\lambda(g)^\dagger\tau_\lambda(g')^\dagger (\argdot) \rho(g')\rho(g) \dd\mu(g) \dd\mu(g') \\
 &= \int_G  \tau_\lambda(g)^\dagger (\argdot) \rho(g) \dd\mu(g) = \widehat\rho[\tau_\lambda]. 
\end{align}
Hence, $\widehat\rho[\tau_\lambda]$ is a projector.
Moreover, $\widehat\rho[\tau_\lambda]$ is also self-adjoint:
\begin{align}
 \obraket{\widehat\rho[\tau_\lambda](X)}{Y} 
 &= \tr\left( \int_G  \rho(g)^\dagger X^\dagger \tau_\lambda(g) \dd\mu(g) Y\right) \\
 &= \tr\left(X^\dagger \int_G    \tau_\lambda(g) Y \rho(g)^\dagger \dd\mu(g) \right) \\
 &= \obraket{X}{\widehat\rho[\tau_\lambda](Y)}.
\end{align}
In the last step, we use that the Haar measure is invariant under inversion.
Thus, $\widehat\rho[\tau_\lambda]$ is an \emph{orthogonal} projector and its rank is
\begin{align}
 \rank \widehat\rho[\tau_\lambda] 
 &= \tr \widehat\rho[\tau_\lambda] 
 = \int_G \overline{\tr \tau_\lambda(g)} \tr \rho(g) \dd\mu(g) \\
 &= \langle \chi_\lambda, \tr\rho \rangle = n_\lambda \,.
\end{align}
We can now explicitly compute $\widehat\rho[\tau_\lambda]$ by decomposing $\rho$ into irreps as follows:
\begin{align}
 \widehat\rho[\tau_\lambda]
  &= \int_G \tau_\lambda(g)^\dagger (\argdot) \rho(g) \dd\mu(g) \\
  &\simeq \bigoplus_{\lambda'\in\Irr(\rho)} \int_G \tau_\lambda(g)^\dagger (\argdot)  \tau_{\lambda'}(g) \otimes \id_{n_{\lambda'}} \dd\mu(g) . \label{eq:fourier-projector}
\end{align}
On every $\lambda'$ block, the integral on the right hand side is exactly the projection onto operators $X$ which are equivariant, i.e.~$\tau_\lambda(g)X  = X( \tau_{\lambda'}(g) \otimes \id_{n_{\lambda'}} )$ for all $g\in G$.
By Schur's lemma, such an operator has to be trivial if $\lambda\neq\lambda'$ and otherwise we can write it as $X = \id_\lambda \otimes x_{\lambda}^*$ for some $x_\lambda^*\equiv \bra{x_\lambda} \in (\C^{n_\lambda})^*$. 
An orthogonal basis for the subspace of these operators is given by $\id_\lambda\otimes e_{\lambda,i}^*$ where $(e_{\lambda,i})_{i \in [n_\lambda]}$ is a basis for $\C^{n_\lambda}$.
With respect to the second isomorphism in Eq.~\eqref{eq:fourier-transform-reducible-rep}, the Fourier transform then becomes
\begin{align}
 \widehat\rho[\tau_\lambda] 
 \simeq
 \frac{1}{d_\lambda}\sum_{i=1}^{n_\lambda}\oketbra{\id_\lambda \otimes e_{\lambda,i}^*}{\id_\lambda \otimes e_{\lambda,i}^*}
 =
 \frac{\oketbra{\id_\lambda}{\id_\lambda }}{d_\lambda}\otimes \id_{n_\lambda}^*,
\end{align}
where $\id_{n_\lambda}^*$ denotes the identity on the dual multiplicity space $(\C^{n_\lambda})^*$.
\end{proof}

For our purposes, it will be convenient to consider Fourier transforms where the integration is performed with respect to another measure than the Haar measure $\mu$.
As we will see in a moment, this is captured by a generalization of Fourier analysis from functions to measures.
For simplicity, let us return to the ordinary Fourier transform of a function $f\in L^2(G)$.
We can effectively change the measure by multiplying $f$ with a density $\varphi\in L^1(G)$:
\begin{align}
  \widehat{f\varphi}[\tau_\lambda] 
  &= \int \tau_\lambda(g)^\dagger f(g) \dd(\varphi\mu)(g).
\end{align}
However, this does not allow us to consider measures on $G$ which do not have a density w.r.t.~the Haar measure $\mu$, for instance discrete measures.\footnote{Admittedly, the conceptually difficult bit would be the singular continuous part of the measure.} 
Nevertheless, we can instead interpret $\widehat f$ as a transformation of the complex measure $f\mu$ and notice that this transformation is still well-defined if we replace $f\mu$ by a suitable (complex) measure $\nu$ on $G$.
More precisely, we denote by $\Borel(G)$ the Borel $\sigma$-algebra on $G$, i.e.~the one generated by the open sets of $G$.
We consider measures on $\Borel(G)$ called (complex) \emph{Radon measures}
which are exactly those measures $\nu$ that yield continuous linear functionals on $C(G)$ by integration $f \mapsto \int_G f\dd\nu$.
The vector space of complex Radon measures is denoted by $\MA(G)$.
Note that for the typical groups which we are considering, namely finite or compact Lie groups, any \emph{finite} measure on $\Borel(G)$ is a Radon measure.\footnote{More generally, this is true if our compact group $G$ is also second-countable.}
Then, we define the Fourier transform of $\nu\in \MA(G)$ as
\begin{equation}
  \label{eq:fourier-transform-of-measure}
  \widehat \nu[\tau_\lambda] \coloneqq  \int  \tau_\lambda(g)^\dagger \dd\nu(g).
\end{equation}
In this way, we can treat continuous and discrete measures on the same footing and recover the Fourier transform of functions for $\nu = f\mu$.
To generalize this discussion to operator-valued functions, we define \emph{operator-valued measures} in a natural way.

\begin{definition}[Operator-valued measure]
 \label{def:ovm}
 A map $\nu:\,\Borel(G) \rightarrow \End(V)$ taking values in the linear operators on a finite-dimensional real/complex Hilbert space $V$ is called an \emph{\ac{OVM}} if for all $v,w\in V$ the function
 \begin{equation}
  \Borel(G)\ni A \longmapsto \sandwich{v}{\nu(A)}{w} ,
 \end{equation}
 is a real/complex Radon measure on $G$.
\end{definition}

Suppose $f$ is a function on $G$ which is integrable w.r.t.~the measures $\nu_{ij}(A) \coloneqq  \sandwich{e_i}{\nu(A)}{e_j}$ for some orthonormal basis $\{e_i\}$ of $V$.
For instance, this is always the case if $f$ is continuous.
Then, we can define the integral of $f$ with respect to the operator-valued measure $\nu$,
\begin{equation}
 \int_G f(g) \dd\nu(g) \in \End(V),
\end{equation}
which is a well-defined linear operator on $V$.
{This definition extends in the obvious way to integrals of matrix-valued functions.}

\begin{definition}[Fourier transform of \acp{OVM}]
 \label{def:fourier-ovm}
 Let $\nu$ be an \acl{OVM} on a compact group $G$ taking values in $\End(V)$.
 As above, let $\lambda\in\Irr(G)$ label the inequivalent irreducible representations $(V_\lambda,\tau_\lambda)$ of $G$.
 Then, for any $\lambda\in\Irr(G)$, we define the following operator on $\Hom(V,V_\lambda)$:
 \begin{equation}
  \widehat\nu[\tau_\lambda] \coloneqq  \int_G  \tau_\lambda(g)^\dagger (\argdot) \dd\nu(g).
 \end{equation}
\end{definition}

We recover the Fourier transform for operator-valued functions $\phi$ in Def.~\ref{def:fourier} by considering the OVM $\nu_\phi \coloneqq \phi \mu$ where $\mu$ is the  Haar measure on $G$.
For this paper, we exclusively consider OVMs of the form $\nu_\phi \coloneqq \phi \nu$ where $\nu\in\MA(G)$ is a Radon probability measure on $G$.

Let us emphasize that Prop.~\ref{prop:fourier-transform} only holds for the Fourier transform in the sense of Def.~\ref{def:fourier} since it relies on the properties of the Haar measure.
In the next section, we discuss the structure of the Fourier transform for specific measures $\nu$ more generally.

\subsection{\texorpdfstring{$\rho$}{ρ}-designs, moment operators, and random walks}
\label{sec:rho-designs}

In this section, we introduce the concept of \emph{designs} and thereby adapt a quite general definition of this term from Ref.~\cite{bannai_explicit_2020}.
We then proceed by discussing the relation of designs to the previously introduced Fourier transform and summarize a few properties of \emph{moment operators}.

For reference, recall that a probability measure $\nu\in\MA(\U(d))$ on the unitary group $\U(d)$ is called a \emph{unitary $t$-design} if the following equality holds:
\begin{equation}
  \int_{\U(d)} U_g^{\otimes t} (\argdot) ( U_g^{\otimes t})^\dagger \dd\nu(g) = \int_{\U(d)} U_g^{\otimes t} (\argdot) ( U_g^{\otimes t})^\dagger \dd\mu(g), 
\end{equation}
where $\mu$ is again the normalized Haar measure on $\U(d)$.
In the following, we use a generalization of this definition where $\U(d)$ is replaced by an arbitrary compact group $G$ and likewise the representation $\tau^{(t)}(g)\coloneqq U_g^{\otimes t} (\argdot) ( U_g^{\otimes t})^\dagger$ is replaced by an arbitrary representation $\rho$ of $G$.

\begin{definition}[$\rho$-design]
\label{def:rho-design}
 Let $(\rho,V)$ be a representation of $G$ and let $\nu\in \MA(G)$ be a probability measure on $G$.
 \begin{enumerate}[label=(\roman*)]
  \item $\nu$ is called a \emph{$\rho$-design} if and only if
   \begin{equation}
    \MO_\rho(\nu) \coloneqq \int_G \rho(g) \dd\nu(g) = \int_G \rho(g) \dd\mu(g) = \MO_\rho(\mu).
   \end{equation}
   The linear operator $\MO_\rho(\nu)$ is called the \emph{moment operator} of $\nu$ (w.r.t.~$\rho$).
   For any set $\Lambda$ of $G$-representations, we call $\nu$ a $\Lambda$-design if $\nu$ is a $\rho$-design for all $\rho\in\Lambda$. 
  \item Let $\norm{\argdot}$ be an arbitrary norm on $\End(V)$ and $\varepsilon >0$.
    Then, we call $\nu$ an \emph{$\varepsilon$-approximate $\rho$-design} w.r.t.~$\norm{\argdot}$ if $\norm{\MO_\rho(\nu)-\MO_\rho(\mu)} < \varepsilon$.
    If the norm is not further specified, we mean the spectral norm $\norm{\argdot}=\snorm{\argdot}$.
 \end{enumerate} 
\end{definition}

If the representation $\rho$ is reducible, say $\rho \simeq \sigma\oplus\eta$, then the moment operator $\MO_\rho(\nu)$ is block-diagonal:
\begin{equation}
 \MO_\rho(\nu) \simeq \MO_\rho(\sigma) \oplus \MO_\rho(\eta).
\end{equation}
Hence, $\nu$ is a $\rho$-design if and only if it is a $\{\sigma,\eta\}$-design, and more generally a $\kappa$-design for any subrepresentation $\kappa$ of $\rho$.
In particular, if $\Irr(\rho)$ labels the irreps appearing in $\rho$, \ie\ if $\rho\simeq\bigoplus_{\lambda\in\Irr(\rho)} \rho_\lambda^{\oplus n_\lambda}$, then $\MO_\rho(\nu) \simeq \bigoplus_{\lambda\in\Irr(\rho)} \MO_\lambda(\nu)^{\oplus n_\lambda}$ and a $\rho$-design is exactly a $\Irr(\rho)$-design.

Let us briefly specialize this notion to unitary $t$-designs which, in the notation of Def.~\ref{def:rho-design}, are exactly $\tau^{(t)}$-designs for the compact group $G=\U(d)$ and the representation $\tau^{(t)}(g)= U_g^{\otimes t} (\argdot) ( U_g^{\otimes t})^\dagger$.
From the representation theory of the unitary group, we know that each of its irreps is labelled by a non-increasing integer sequence $\lambda_1 \geq \dots \geq \lambda_d$ (see e.g.~Ref.~\cite[Thm.~38.3]{bump_lie_2004}).
In particular, the irreps appearing in $\tau^{(t)}$ are exactly labelled by those sequences for which the sum of positive elements $\lambda^+$ is equal to the absolute sum of negative elements $\lambda^-$, and bounded by $t$, \ie~$\lambda^+=\lambda^-\leq t$ \cite{stembridge_rational_1987,stembridge_combinatorial_1989,benkart_tensor_1994} (see also \cite{roy_unitary_2009}).
Let us define the set of such sequences as
\begin{equation}
 \diagup^t_d \coloneqq  \big\{ \lambda \in \Z^d \; | \; \lambda_1 \geq \dots \geq \lambda_d, \, \lambda^+=\lambda^- \leq t \big\}.
\end{equation}
Then, a unitary $t$-design is exactly a $\diagup^t_d$-design in the sense of Def.~\ref{def:rho-design}.
From the definition, we directly find $\diagup^{t-1}_d \subset \diagup^t_d$, and hence we recovered the well-known fact that any unitary $t$-design is also a $(t-1)$-design.

Consider a fixed representation $\omega$ of a compact group $G$ (later called the \emph{reference representation}) and some probability measure $\nu\in \MA(G)$.
The Fourier transform of the operator-valued measure $\omega\nu$ is by Def.~\ref{def:fourier-ovm} given as
\begin{equation}
  \widehat{\omega\nu}[\tau_\lambda] = \int_G \tau_\lambda(g)^\dagger (\argdot)\, \omega(g)\dd\nu(g) \equiv \MO_\rho(\nu),
\end{equation}
i.e.~$\widehat{\omega\nu}[\tau_\lambda]$ is the \emph{moment operator} of $\nu$ w.r.t.~the representation $\rho=\tau_\lambda^\dagger (\argdot)\,\omega$.
Hence, $\nu$ is a $\rho$-design if and only if $\widehat{\omega\nu}[\tau_\lambda] = \widehat{\omega}[\tau_\lambda]$.

In the following, we discuss some properties of moment operators $\MO_\rho(\nu)$ for a certain family of probability measures $\nu$.

\begin{definition}
\label{def:random-walks}
Let $\nu\in \MA(G)$ be a Radon measure on a compact group $G$. 
\begin{enumerate}[label=(\roman*)]
  \item $\nu$ is called \emph{symmetric} if it is invariant under inversion.
  \item We say that $\nu$ has \emph{support on generators} if there is a set $\mathcal{G}$ of generators for $G$ such that every open neighbourhood of a generator $g\in\mathcal G$ has non-zero measure, i.e.~$\mathcal{G}\subset\supp\nu$.
  \item Suppose $\nu$ is a probability measure. The \emph{random walk generated by $\nu$} is the stochastic process $(g_m)_{m\in\N}$ on $G$ defined by the transition rule $g_{m+1} = h g_m$ where $h\sim\nu$. 
  Equivalently, the random variables $g_m$ are distributed as $\nu^{*m}$ for all $m\in \N$.
\end{enumerate}
\end{definition}

Note that having support on generators is not much of a restriction on $\nu$ since we can always consider the subgroup $G'\subset G$ generated by $\supp(\nu)$ instead.
If $G'$ is dense in $G$, then we will abuse terminology a bit and still say that $\nu$ has support on generators of $G$, because this detail does not affect our arguments.
In fact, this is a practically relevant situation as e.g.~$G=\U(d)$ is not finitely generated, but has finitely generated dense subgroups.
For instance, a measure which has support on generators of $\U(d)$ in the broad sense, is given by drawing from a finite, universal gate set.

Furthermore, we will encounter important examples of measures that are \emph{not} symmetric.
Most of the following discussion and all results in Sec.~\ref{sec:results} also hold for non-symmetric probability measures.
Nevertheless, symmetric ones have some nice properties which we need for certain conclusions.
To this end, we introduce the \emph{symmetrization trick} in the end of this section.

Let us now explore more properties of the moment operator $\MO_\rho(\nu)$, given a representation $(\rho,V)$ of $G$, and how the concepts of Def.~\ref{def:random-walks} are reflected in its spectrum (see e.g.~Refs.~\cite{diaconis_comparison_1993,brown_convergence_2010}).
Suppose $v\in V$ is a right eigenvector of $\MO_\rho(\nu)$ with eigenvalue $\lambda\in\C$, then we find:
\begin{align}
  |\lambda| \twonorm{v}^2
  &= \abs[\big]{\sandwich{v}{\MO_\rho(\nu)}{v} }
  \leq \int_G \big| \braket{v}{\rho(g)v} \big| \dd\nu(g) \\
  &\leq \int_G \twonorm{v} \twonorm{\rho(g)v} \dd\nu(g) 
  = \twonorm{v}^2\,,
\end{align}
where we have used the Cauchy-Schwarz inequality and that the representation is unitary. 
This shows that the spectrum of $\MO_\rho(\nu)$ is contained in the unit disk $\{z\in\C\,|\,|z|\leq 1\}$.
Next suppose $v\in V$ is a right eigenvector with eigenvalue 1.
Then, we find
\begin{align}
 \MoveEqLeft[2] \int_G \sandwich{v}{(\id-\rho(g))^\dagger (\id-\rho(g))}{v}\,\dd\nu(g) \\
 &=
 2\twonorm{v}^2 - \int_G \sandwich{v}{\rho(g)^\dagger + \rho(g)}{v} \,\dd\nu(g) \\
 &= 
 2\twonorm{v}^2 - \sandwich{v}{\MO_\rho(\nu)^\dagger + \MO_\rho(\nu)}{v} \\
 &=
 2 \big( \twonorm{v}^2  - \Re(\sandwich{v}{\MO_\rho(\nu)}{v}) \big) = 0.
\end{align}
Since $(\id-\rho(g))^\dagger (\id-\rho(g))$ is positive semidefinite, the left hand side is an integral of a non-negative function.
This can only be zero if the function is zero $\nu$-almost everywhere.
Hence, $v\in V$ is a right eigenvector with eigenvalue 1 if and only if $\rho(g)v = v$ $\nu$-almost everywhere.
Put differently, the right 1-eigenspace of $\MO_\rho(\nu)$ is the fixed point space of $\supp(\nu)$.
Likewise, we find that the left 1-eigenspace corresponds to the fixed point space of $\supp(\tilde\nu)$ where $\tilde\nu(A)\coloneqq \nu(A^{-1})$ is the inverted measure.

If $\nu$ has support on generators, the fixed point space of $\supp(\nu)$ coincides with the fixed point space of the whole group $G$.
The same holds for the fixed point space of $\supp(\tilde\nu)$, hence the left and right 1-eigenspaces of $\MO_\rho(\nu)$ coincide and agree with $V(1)$, the trivial isotype of the representation $\rho$ (possibly $\{0\}$).
Hence, the moment operator $\MO_\rho(\nu)$ can be unitarily block-diagonalized as follows:
\begin{equation}
\label{eq:moment-op-spectral-decomposition}
 \MO_\rho(\nu) 
 \simeq
 \begin{bmatrix}
  \id & 0 \\
  0 & \MO_{\rho\ominus 1}(\nu)
 \end{bmatrix} ,
\end{equation}
where $\rho\ominus 1$ denotes the representation on the orthocomplement of $V(1)$.
To quantify how much $\MO_\rho(\nu)$ differs from the moment operator $\MO_\rho(\mu)$ w.r.t.~the Haar measure $\mu$, one typically uses the spectral distance.
Note that the $\MO_\rho(\mu)$ is exactly the orthogonal projection onto the trivial isotype $V(1)$ and hence we have the following relation
\begin{align}
\label{eq:def-spectral-gap}
 \snorm{ \MO_\rho(\nu) - \MO_\rho(\mu) } &= 1 - \Delta_\rho(\nu), \\
 \Delta_\rho(\nu) &\coloneqq 1 - \snorm{\MO_{\rho\ominus 1}(\nu)}\, .
\end{align}
The number $\Delta_\rho(\nu)$ is called the \emph{spectral gap} of $\nu$.
Its importance lies in the following observation:
Suppose we draw repeatedly and independently from the measure $\nu$, such that the product of our samples performs a \emph{random walk} on the group $G$.
After $k$ steps, the distribution of the product is described by the $k$-fold convolution power $\nu^{*k}$ with moment operator $\MO_\rho(\nu^{*k}) = \MO_\rho(\nu)^k$.
Moreover, is is straightforward to check that 
\begin{equation}
 \MO_\rho(\nu)\MO_\rho(\mu) = \MO_\rho(\mu) = \MO_\rho(\mu) \MO_\rho(\nu),
\end{equation}
using the left and right invariance of the Haar measure.
Hence, we find
\begin{align}
 \snorm{ \MO_\rho(\nu)^k - \MO_\rho(\mu) } 
 &=  \snormb{ \left[\MO_\rho(\nu) - \MO_\rho(\mu) \right]^k } \\
 &\leq (1 - \Delta_\rho(\nu))^k, \label{eq:powers-of-moment-operator}
\end{align}
and hence the spectral distance decays exponentially with $k$, provided that $\Delta_\rho(\nu) > 0$.
Therefore, a random walk $\nu$ converges to an approximate $\rho$-design at a $\Delta_\rho(\nu)$-dependent rate.

Finally, let us assume that $\nu$ is in addition \emph{symmetric}.
This implies that the moment operator is self-adjoint:
\begin{equation}
 \MO_\rho(\nu)^\dagger = \int_G \rho(g^{-1}) \dd\nu(g) = \int_G \rho(g) \dd\nu(g^{-1}) = \MO_\rho(\nu).
\end{equation}
Hence, $\MO_\rho(\nu)$ has a spectral decomposition and real eigenvalues $\lambda\in[-1,1]$.
The operator $\MO_{\rho\ominus 1}(\nu)$ defined above is in this case given by the spectral decomposition for eigenvalues $\lambda < 1$.
In particular, the spectral gap of a symmetric measure is given as $\Delta_\rho(\nu) = 1 - \max\{ \lambda_2, |\lambda_\mathrm{min}| \}$, where $\lambda_2$ is the second-largest eigenvalue of $\MO_\rho(\nu)$ and $\lambda_\mathrm{min}$ is the smallest one.
Moreover, for symmetric $\nu$, we have equality in Eq.~\eqref{eq:powers-of-moment-operator} as the involved operators are self-adjoint. 

If $\nu$ is not symmetric, we can define the measure 
\begin{equation}\label{eq:inversted_measure}
  \tilde\nu(A)\coloneqq \nu(A^{-1})
\end{equation}
for any measurable set $A$. Then $\tilde\nu\ast\nu$ is symmetric and $\MO_\rho(\tilde\nu\ast\nu) = \MO_\rho(\tilde\nu)\cdot \MO_\rho(\nu) = \MO_\rho(\nu)^\dagger\cdot \MO_\rho(\nu)$.
In particular, we have 
\begin{align}
\MoveEqLeft[2] \snorm{\MO_\rho(\tilde\nu\ast\nu) - \MO_\rho(\mu)}\\
&=
\snorm{(\MO_\rho(\nu)^\dagger - \MO_\rho(\mu))(\MO_\rho(\nu) - \MO_\rho(\mu))}\\
&=
\snorm{\MO_\rho(\nu)- \MO_\rho(\mu)}^2,
\end{align}
where we have used $\MO_\rho(\nu)^\dagger\MO_\rho(\mu)=\MO_\rho(\mu)=\MO_\rho(\mu)\MO_\rho(\nu)$ by the left and right invariance of the Haar measure.
Hence, we find that $1- \Delta_\rho(\tilde\nu\ast\nu) = ( 1 - \Delta_\rho(\nu))^2$ for the respective spectral gaps which implies the following relation:
\begin{equation}
\label{eq:symmetrized_spectral_gap}
 \Delta_\rho(\tilde\nu\ast\nu) \geq \Delta_\rho(\nu) \geq \Delta_\rho(\tilde\nu\ast\nu) /2.
\end{equation}
This is a common trick used for non-symmetric probability measures, cf.~Ref.~\cite{varju_random_2015}.

The above discussion holds for any representation $\rho$.
In particular, it also applies to moment operators of the form $\widehat{\omega\nu}[\omega_\lambda]$ where $\omega_\lambda = \tau_\lambda^{\oplus n_\lambda}$ is a $\tau_\lambda$-isotypic representation.
However, the moment operators then have the form $\widehat{\omega\nu}[\omega_\lambda] \simeq  \widehat{\omega\nu}[\tau_\lambda]\otimes \id_{n_\lambda}$, see Eq.~\eqref{eq:fourier-transform-reducible-rep}.
Hence, the spectral gap is in this case
\begin{align}
\label{eq:spectral-gap-multiplicities}
 \snorm{ \widehat{\omega\nu}[\omega_\lambda] - \widehat{\omega}[\omega_\lambda] } 
 &= 
 \snorm{ \big(\widehat{\omega\nu}[\tau_\lambda] - \widehat{\omega}[\tau_\lambda]\big) \otimes \id_{n_\lambda} } \\
 &=
 \snorm{ \widehat{\omega\nu}[\tau_\lambda] - \widehat{\omega}[\tau_\lambda] }.
\end{align}

In general, the moment operator $\MO_\rho(\nu)$ can have negative eigenvalues.
If these are too negative, this can make the spectral gap very small, or even zero if $-1$ is an eigenvalue.
However, by adapting the measure $\nu$, it is possible to evade this problem.
To this end, note that the eigenvalue equation $v= -\rho(g)v$ is not fulfilled for $g=\id$.
Hence, if there is a neighborhood of $\id$ in $\supp\nu$, then $v= -\rho(g)v$ cannot hold $\nu$-almost everywhere and $-1$ is not an eigenvalue of $\MO_\rho(\nu)$.
More precisely, one can generalize Lem.~1 in Ref.~\cite{diaconis_comparison_1993} from finite to compact groups to obtain a lower bound on the smallest eigenvalue.

\begin{proposition}
\label{prop:spectral-gap}
Let $\nu\in \MA(G)$ be a symmetric probability measure.
Suppose there is an open neighborhood $\Omega$ of $\id\in G$ with $\nu(\Omega)<1$.
Then, the smallest eigenvalue of $\MO_\rho(\nu)$ obeys $\lambda_\mathrm{min} \geq -1 + 2\nu(\Omega)$.
\end{proposition}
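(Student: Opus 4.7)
Symmetry of $\nu$ makes $\MO_\rho(\nu)$ self-adjoint, as already derived in the preceding discussion, so by the Rayleigh quotient characterization a uniform lower bound on $\langle v, \MO_\rho(\nu) v\rangle$ over unit vectors $v$ suffices. My plan is to adapt to the compact setting the classical Diaconis--Saloff-Coste style argument for finite groups~\cite{diaconis_comparison_1993}, in which the analogue of $\nu(\Omega)$ is the atomic mass $\nu(\{e\})$ at the identity.

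The central identity I would use is
\begin{equation}
\langle v, (\MO_\rho(\nu) + \id) v\rangle = \tfrac{1}{2}\int_G \twonorm{v + \rho(g)v}^2 \, \dd\nu(g),
\end{equation}
valid for every unit $v$. It is derived by symmetrizing $\MO_\rho(\nu) = \tfrac{1}{2}\int_G (\rho(g)+\rho(g^{-1}))\,\dd\nu(g)$ via the symmetry of $\nu$ and then recognizing $2\id + \rho(g) + \rho(g^{-1}) = (\id + \rho(g))^\dagger(\id + \rho(g))$ using unitarity $\rho(g^{-1}) = \rho(g)^\dagger$. Since the integrand is pointwise non-negative, restricting the integration to $\Omega$ only decreases the value.

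On $\Omega$, a neighborhood of the identity, continuity of $\rho$ together with the parallelogram identity $\twonorm{v + \rho(g)v}^2 + \twonorm{v - \rho(g)v}^2 = 4$ (for unit $v$) forces the integrand to approach its maximum value $4$ as $g \to \id$, uniformly in $v$ via the estimate $\twonorm{v + \rho(g)v}^2 \geq 4 - \snorm{\rho(g) - \id}^2$. Integrating and substituting will then yield the claimed bound $\lambda_{\min} \geq -1 + 2\nu(\Omega)$, up to a correction $-\tfrac{1}{2}\int_\Omega \snorm{\rho(g) - \id}^2\,\dd\nu(g)$.

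The main technical obstacle is eliminating this correction term: the finite-group argument is exact because $\rho(e) = \id$ identically, whereas on a generic open neighborhood $\Omega$ the deviation $\snorm{\rho(g) - \id}$ is only small, not zero. I expect the resolution to proceed either by a limit over nested shrinking neighborhoods, recovering the atomic mass at the identity, or by using the continuity of $\rho$ together with Radon regularity of $\nu$ to choose $\Omega$ sufficiently small so that the correction is negligible compared to $\nu(\Omega)$. The condition $\nu(\Omega) < 1$ simply excludes the degenerate case $\Omega = G$, in which the claimed bound would contradict the general constraint $\lambda_{\min} \geq -1$.
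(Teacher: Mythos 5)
Your route (quadratic form plus the parallelogram identity) is genuinely different from the paper's. The paper splits the measure convexly: with $\nu_\Omega(A)\coloneqq\nu(A\cap\Omega)$ and $\xi\coloneqq(\nu-\nu_\Omega)/(1-\nu(\Omega))$ it uses that $\xi$ is a symmetric probability measure, so $\MO_\rho(\xi)\succeq-\id$, and then tries to close the argument with the norm bound $\lambda_\mathrm{max}(\MO_\rho(\nu_\Omega))\leq\nu(\Omega)$ through the relation $\lambda_\mathrm{min}(\MO_\rho(\xi))=\tfrac{1}{1-\nu(\Omega)}\big(\lambda_\mathrm{min}-\lambda_\mathrm{max}(\MO_\rho(\nu_\Omega))\big)$. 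That relation is not an identity (Weyl only gives an inequality), and, more to the point, what the decomposition actually requires is a \emph{lower} bound $\osandwich{v}{\MO_\rho(\nu_\Omega)}{v}\geq\nu(\Omega)$ at the minimizing unit vector $v$; since $\snorm{\MO_\rho(\nu_\Omega)}\leq\nu(\Omega)$, that forces $\rho(g)v=v$ for $\nu$-a.e.\ $g\in\Omega$. Writing $\Re\braket{v}{\rho(g)v}=1-\tfrac12\twonorm{(\id-\rho(g))v}^2$ shows that the deficit is exactly your correction term $\tfrac12\int_\Omega\twonorm{(\id-\rho(g))v}^2\dd\nu(g)$. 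So the two arguments are different in packaging but stall at the same point, and the term you cannot eliminate is precisely what the paper's proof silently drops.

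The correction term is not removable, so the obstacle you identify is a defect of the proposition, not of your method. Take $G=\Z_2\times\Z_2$ with the one-dimensional representation $\rho(x,y)=(-1)^x$ (every measure is symmetric), $\nu(0,0)=\nu(0,1)=\varepsilon/2$, $\nu(1,0)=1-\varepsilon$, and $\Omega=\{(0,0),(1,0)\}$, an open neighbourhood of the identity with $\nu(\Omega)=1-\varepsilon/2<1$. Then $\lambda_\mathrm{min}=\MO_\rho(\nu)=2\varepsilon-1$, far below the claimed $-1+2\nu(\Omega)=1-\varepsilon$, whereas your bound $-1+2\nu(\Omega)-\tfrac12\int_\Omega\snorm{\rho(g)-\id}^2\dd\nu=-1+\varepsilon$ is satisfied. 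The clean bound holds exactly when $\rho(g)=\id$ for $\nu$-a.e.\ $g\in\Omega$ — in particular for an atom at the identity, which is the finite-group Lemma~1 of Diaconis and Saloff-Coste that the proposition is meant to generalize; your shrinking-neighbourhood limit recovers precisely $\lambda_\mathrm{min}\geq-1+2\nu(\{\id\})$. The right move is therefore not to fight the correction term but to keep it (or to add the hypothesis that $\rho$ is trivial $\nu$-a.e.\ on $\Omega$): your inequality is the correct general statement, and the proposition as printed needs that amendment.
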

\begin{proof}
Note that the statement is true for $\nu(\Omega)=0$.
Hence, let us assume that $1>\nu(\Omega)>0$.
Then define the symmetric measure $\nu_\Omega (A) \coloneqq \nu(A\cap \Omega)$, and
\begin{equation}
 \xi \coloneqq \frac{1}{1-\nu(\Omega)}\left( \nu - \nu_\Omega \right).
\end{equation}
Note that $\xi$ is by construction a symmetric probability measure and thus 
\begin{align}
 -1 \geq \lambda_\mathrm{min}(\MO_\rho(\xi)) 
 &= \frac{1}{1-\nu(\Omega)}\big( \lambda_\mathrm{min} - \lambda_\mathrm{max}(\MO_\rho(\nu_\Omega)) \big) \\
 &\geq \frac{1}{1-\nu(\Omega)}\big( \lambda_\mathrm{min} - \nu(\Omega) \big).
\end{align}
In the last step, we used that $\nu_\Omega$ is not a probability measure as $\nu_\Omega(G) = \nu(\Omega)$ and thus, the largest eigenvalue of its moment operator is at most $\nu(\Omega)$.
Rewriting the above inequality proves the claim.
\end{proof}

\subsection{The Heisenberg-Weyl and Clifford group}
\label{sec:clifford-group}

Consider the Hilbert space $\mathcal H = (\C^p)^{\otimes n}$ of $n$ qudits of local dimension $p$, where we assume that $p$ is prime.
We label the computational basis $\ket{x}=\bigotimes_{i=1}^n\ket{x_i}$ by vectors $x=(x_1,\dots,x_n)$ in the discrete vector space $\F_p^n$. 
Here, $\F_p$ is the finite field of $p$ elements which, for concreteness, can be chosen as the residue field $\Z/p\Z$ of integers modulo $p$.
This section roughly follows the presentation in Ref.~\cite{heinrich_2021}; we refer the reader to this reference for more details.

\paragraph{The Heisenberg-Weyl group.}
Let $\xi=e^{2\pi i/p}$ be a $p$-th root of unity.
We define the $n$-qudit $Z$ and $X$ operators by their action on the computational basis:
\begin{equation}
\label{eq:pauli-zx}
 Z(z)\ket{y} \coloneqq \xi^{z\cdot y}\ket{y}, \quad X(x)\ket{y}\coloneqq \ket{y+x}, \quad z,x,y\in\F_p^n.
\end{equation}
Here, all operations take place in the finite field $\F_p$ (i.e.~modulo $p$), if not stated otherwise.
Note that the operators $Z(z)$ and $X(x)$ are unitary and have order $p$ except if $z=0$ or $x=0$, respectively.
To unify the slightly different definitions for $p=2$ and $p > 2$ in the following, we define $\tau \coloneqq  (-1)^p e^{i\pi/p}$ as a suitable square root of $\xi$.
Note that for $p=2$, $\tau=i$ has order 4 while for $p > 2$, $\tau = -e^{i\pi/p}$ has order $p$.
Next, we group the $Z$ and $X$ operators and their coordinates to define the so-called \emph{Weyl operators}
indexed by $a=(a_z,a_x)\in\F_p^{2n}$, 
\begin{align}
 \label{eq:weyl-ops}
 w(a) &\coloneqq  \tau^{-a_z\cdot a_x} Z(a_z) X(a_x) \, .
\end{align}
Here, it is understood that the exponent is computed modulo 4 for $p=2$ and modulo $p$ in the case $p > 2$.
Note that the definition in the case $p=2$ exactly reproduces the $n$-qubit Pauli operators.
In the quantum information literature, the Weyl operators for $p>2$ are thus also sometimes called \emph{generalized Pauli operators}.
It is straightforward to check the following commutation relation
\begin{equation}
 \label{eq:weyl-commutation-relation}
 w(a)w(b) = \xi^{[a,b]} w(b)w(a), \quad [a,b] \coloneqq  a_z\cdot b_x - a_x\cdot b_z.
\end{equation}
The non-degenerate, alternating form $[\argdot,\argdot]$ is the standard \emph{symplectic form} on $\F_p^{2n}$.
Furthermore, the Weyl operators are unitary, have order $p$, are traceless except for $w(0) = \one$, and form an orthogonal operator basis of $\End(\C^p)^{\otimes n}$:
\begin{equation}
\label{eq:weyl-orthogonality}
\obraket{w(a)}{w(b)} = p^n \delta_{a,b}.
\end{equation}
Finally, the \emph{Heisenberg-Weyl group} is the group generated by Weyl operators and thus given by:
\begin{align}
 \HW{n}{p} 
  &= \{ \tau^k w(a) \; | \; k\in\Z_D, a\in\F_p^{2n} \}, 
\end{align}
where $D$ is 4 if $p=2$ and $p$ otherwise.

Note that the center of $\HW{n}{p}$ is $\Z_D$.
The inequivalent irreducible representations of $\HW{n}{p}$ are either labelled by the additive characters of the center $\Z_D$, or by additive characters of the vector space $\F_p^{2n}$, see e.g.~Ref.~\cite{neuhauser_explicit_2002}.
In this paper, we encounter the second type as the irreps of the conjugation representation $\omega(g) = U_g(\argdot)U_g^\dagger$ restricted to $\HW{n}{p}$.
From Eq.~\eqref{eq:weyl-commutation-relation}, it is evident that any element $\tau^k w(a) \in \HW{n}{p}$ acts as $\xi^{[a,b]}$ on the one-dimensional vector spaces spanned by the operators $w(b)$.
Since these are orthogonal and span $\End(\C^p)^{\otimes n}$, we have the decomposition into irreps
\begin{equation}
\label{eq:irrep-decomp-HW-group}
 \omega|_{\HW{n}{p}} \simeq \bigoplus_{b\in\F_p^{2n}} \xi^{[\argdot,b]}.
\end{equation}
The characters $\xi^{[\argdot,b]}$ are mutually orthogonal, thus these irreps are mutually inequivalent.

\paragraph{The Clifford group.} is defined as the group of unitary symmetries of the Heisenberg-Weyl group:
\begin{equation}
 \label{eq:def-clifford-group}
 \Cl{n}{p} \coloneqq  \big\{ U \in U(p^n) \; | \; U \HW{n}{p} U^\dagger = \HW{n}{p} \big\} \, / \, U(1).
\end{equation}
We take the quotient with respect to irrelevant global phases in order to render the Clifford group a finite group.\footnote{Strictly speaking, we are here considering the \emph{projective} Clifford group. While it is possible to define a finite, non-projective version by restricting the matrix entries to $\QQ[\chi]$ where $\chi$ is a suitable root of unity depending on $p$ \cite{nebe_invariants_2001,ZhuKueGra16}, these details are not needed for this paper.}
It is well-known that the Clifford group forms a unitary 2-design for all primes $p$ and even a unitary 3-design for $p=2$ \cite{DiVincenzo985948,DanCleEme09,Web15,Zhu15,HelWalWeh16,ZhuKueGra16}.
In fact, the Clifford group is the canonical example for a unitary design forming a group \cite{guralnick_decompositions_2005,BannaiEtAl:2020:tgroups}, see also Ref.~\cite[Sec.~V]{haferkamp_efficient_2023}.

The above definition of the Clifford group can be generalized to the case when the local dimension is a prime power, $q=p^k$, by using arithmetic in the finite field $\F_q$.
The so-obtained groups $\Cl{n}{p^k}$ are subgroups of $\Cl{nk}{p}$, and form unitary 2-designs, but not 3-designs for any $p$ and $k>1$ \cite{Zhu15}.

It is well-known that $\omega(g) = U_g(\argdot)U_g^\dagger$ decomposes as a representation of $\U(d)$ as $\omega = 1 \oplus \Ad$ where $1$ is the trivial irrep supported on the identity matrix $\one$, and $\Ad$ is the \emph{adjoint irrep} supported on the traceless matrices.
The unitary 2-design property implies that $\omega$ decomposes into the same irreps over the Clifford group than over the unitary group.

The general representation theory of the Clifford group is significantly more difficult than the one of the unitary group.
For tensor power representations, a duality theory has been developed in a recent series of papers \cite{montealegre-mora_rank-deficient_2021,montealegre-mora_duality_2022}.

\section{Results}
\label{sec:results}

To set the stage for our results we first briefly state the setting and noise model we are considering.
If not mentioned otherwise, these hold throughout the remainder of this work.

\subsection{Setting and noise model}
\label{sec:setting}
In the following, $G$ is a compact group (finite or infinite), $\mu\in\MA(G)$ is the normalized Haar measure on $G$, and $\nu\in \MA(G)$ is a probability measure with support on generators of $G$.

We fix a finite-dimensional unitary representation $\omega$, called the \emph{reference representation}, on the operator space $V\coloneqq\End(\mathcal H)$ where $\mathcal H \simeq \C^d$ is a suitable $d$-dimensional Hilbert space.
We want $\omega$ to represent possible unitary dynamics of the system $\mathcal H$, hence it has the form $\omega(g) = \eta(g)(\argdot)\eta(g)^\dagger$, where $\eta$ is a suitable unitary representation of $G$ on $\mathcal{H}$.
Typically, we have $G\subset\U(\mathcal H)$ and $\eta(g) = U_g$ is simply the defining representation of $\U(\mathcal H)$ restricted to $G$.
The representation $\omega$ has an isotypic decomposition $\omega = \bigoplus_{\lambda} \omega_\lambda$ where $V=\bigoplus_\lambda V(\lambda)$ and the subrepresentations are of the form $\omega_\lambda \simeq \tau_\lambda^{\oplus n_\lambda}$ for irreps $\tau_\lambda$ with multiplicities $n_\lambda$. 
The dimension of $\tau_\lambda$ is $d_\lambda \coloneqq \dim(\tau_\lambda)$.
Note that by construction, the trivial irrep is contained in $\omega$ at least once.

Importantly, $\omega$ is Hermiticity-preserving and thus a real representation w.r.t.~the real structure $\End(\mathcal H) = \Herm(\mathcal H) \oplus i \Herm(\mathcal H)$.
Eventually, we are only interested in the action of $\omega$ and its irreps on the real subspace $\Herm(\mathcal H)$.
{
To avoid further technicalities in the statement of our results, we assume that all irreps of $\omega$ are of \emph{real type} such that they are simply the complexification of the real irreps of $\omega_\R$ on $\Herm(\mathcal H)$, see also Sec.~\ref{sec:real-representations}.
}
In particular, the complex representation spaces $V(\lambda)$ can be written as $V(\lambda) = H(\lambda) \oplus i H(\lambda)$ where $H(\lambda) \subset \Herm(\mathcal H)$ carries the real part of the representation $\omega_\lambda$.
{
However, our results readily generalize to the case when irreps of $\omega$ are of arbitrary type, and we comment on this briefly in Sec.~\ref{sec:data-guarantees}.
}

We assume the existence of an \emph{implementation function} $\phi:\, G \rightarrow \End(V)$ which should be understood as a noisy implementation of $\omega$.
This assumes in particular that the noisy implementation does not depend on the history of the experiment, nor does it change with time.
Hence, we assume \emph{Markovian, time-stationary} (but otherwise arbitrary) noise.
We assume that $\phi$ is integrable w.r.t.~the measure $\nu$ and that $\phi(g)$ is a completely positive, trace non-increasing superoperator for any $g\in G$.
Finally, we fix an \emph{initial state} $\rho\in V$ and a measurement basis $\ket{i}\in\mathcal H$ and write $E_i\coloneqq \ketbra{i}{i}\in V$.
We define $M = \sum_{i=1}^d \oketbra{E_i}{E_i}$ to be the completely dephasing channel in this basis.
We denote by $\tilde\rho = \ESP(\rho)$, $\tilde E_i = \EM(E_i)$, and $\tilde M = M\EM$ the noisy versions of the initial state and measurement.
Here, we assume that $\EM^\dagger$ and $\ESP$ are trace non-increasing quantum channels.

\subsection{The effective measurement frame}
\label{sec:frame-operator}

Besides the projection to a specific irrep, the only non-trivial ingredient to the filter function \eqref{eq:filter-function} is the pseudo-inverse of the superoperator $S$. 
Therefore, it is instructive to first analyze the structure of $S$ in the following. 

In filtered randomized benchmarking, one effectively approximates measurements of the \ac{POVM} given by $(i,g)\mapsto \omega(g)^\dagger\oket{E_i}\dd\mu(g)$. 
In Eq.~\eqref{eq:measurement-frame-operator} we defined an associated superoperator
\begin{align}
 S 
 &\coloneqq \sum_{i\in[d]} \int_G \omega(g)^\dagger \oketbra{E_i}{E_i} \omega(g) \dd\mu(g) \\
 &= \int_G \omega(g)^\dagger M\omega(g) \dd\mu(g) \, .
 \label{eq:measurement-frame-operator-1}
\end{align}
Recall that $M$ is defined as $M = \sum_{i\in[d]}\oketbra{E_i}{E_i}$.
From the definition, it is evident that $S$ is a quantum channel.
Moreover, it is easy to see that $S$ is a positive semidefinite operator.

In frame theory, the superoperator $S$ is called the \emph{frame operator} associated with the set of operators 
$\{\omega(g)^\dagger\oket{E_i}\}_{i\in [d], g\in G}$ \cite{waldron_introduction_2018}.
Strictly speaking, these might fail to form a proper \emph{frame} since their span might not be all of $\End(\mathcal H)$ (i.e.~the \ac{POVM} might not be \emph{informationally complete}).
Since the range of $S$ is the span of $\omega(g)^\dagger\oket{E_i}$, $S$ would then not be of full rank.
Despite the possible lack of invertibility, we still call $S$ a \emph{frame operator}.
A full rank is guaranteed if $G$ and its representation $\omega$ fulfill certain properties \cite[Ch.~10]{waldron_introduction_2018}.
For instance, this is the case if $G$ forms a unitary 2-design.
However, one can readily check that this is not always the case: Take $d=2^n$,  measurements in the computational basis, and $G$ the $n$-qubit Pauli group.
Then, $S$ is not of full rank, as shown explicitly below.

\paragraph{Decomposition of the frame operator.}
Recall that $\omega$ acts on the vector space of linear operators $V=\End(\mathcal H)$.
Let us consider an irreducible subrepresentation $\tau_\lambda$ of $\omega$ and denote the $\tau_\lambda$-isotypic subrepresentation of $\omega$ by $\omega_\lambda\simeq \tau_\lambda^{\oplus n_\lambda}$, where $n_\lambda$ is its multiplicity. 
Let $P_\lambda$ be the orthogonal projection onto the associated isotypic component $V(\lambda)\subset V$, c.f.~Eq.~\eqref{eq:projection-isotypes}.
We can write $P_\lambda$ using a suitable isometric embedding $X_\lambda: V(\lambda) \rightarrow V$ as 
\begin{equation}
  P_\lambda = X_\lambda X_\lambda^\dagger \, .
\end{equation}
In representation-theoretic terms, the frame operator \eqref{eq:measurement-frame-operator-1} is exactly the projection of the quantum channel $M$ onto the commutant of $\omega$.
In particular, Schur's lemma implies that
\begin{equation}
\label{eq:frame-op-block-decomposition}
 S = \bigoplus_{\lambda \in \Irr\omega} S_\lambda \simeq \bigoplus_{\lambda \in \Irr\omega} \id_\lambda \otimes s_\lambda,
\end{equation}
where $S_\lambda = X_\lambda^\dagger S X_\lambda$ and $s_\lambda\in\C^{n_\lambda\times n_\lambda}$ is a positive semidefinite matrix acting on the multiplicity space of $\tau_\lambda$ in $\omega$.
Alternatively, we can deduce this from the Fourier theory introduced in Sec.~\ref{sec:fourier-transform} by noting that $S = \widehat\omega[\omega](M)$.
Then, Eq.~\eqref{eq:fourier-transform-reducible-rep} implies $S = \bigoplus_{\lambda \in \Irr\omega} \widehat\omega[\omega_\lambda](M)$ and Prop.~\ref{prop:fourier-transform} gives the form of $\widehat\omega[\omega_\lambda](M)$.
In particular, we have $S_\lambda = \widehat\omega[\omega_\lambda](M)X_\lambda$.

In the case that $\tau_\lambda$ is multiplicity-free, we find $S_\lambda = s_\lambda\id_\lambda$ for a scalar $s_\lambda = \tr(P_\lambda M)/d_\lambda \geq 0$. 
In the language of frame theory, $\{\tau_\lambda(g)\ad X_\lambda\ad \oket{E_i}\}_{i \in [d], g \in G}$ then constitute a tight-frame for $V_\lambda$ if $s_\lambda \neq 0$.
In the case of multiplicities, $n_\lambda > 1$, we have the relation $\tr(s_\lambda) = \tr(P_\lambda M)/d_\lambda$ instead.
For some choice of irrep decomposition of $V(\lambda)$, let $P_\lambda^{(i)}=X_\lambda^{(i)}X_\lambda^{(i)\dagger}$ be the projection onto the $i$-th copy of the irrep $\tau_\lambda$; then we can write the matrix $s_\lambda$ explicitly as
\begin{equation}
\label{eq:frame-op-constant}
  s_\lambda = d_\lambda^{-1} \sum_{i,j=1}^{n_\lambda}  \tr\big(X_\lambda^{(i)\dagger} M X_\lambda^{(j)}\big) \ketbra{i}{j}.
\end{equation}
Note that since $s_\lambda$ is positive semi-definite, it can always be unitarily diagonalized.
The corresponding diagonalizing transformation can be understood as a change of the irrep decomposition of $V(\lambda)$ such that $\tr\big(X_\lambda^{(i)\dagger} M X_\lambda^{(j)}\big) = 0$ for all $i\neq j$.
In particular, $\{\tau_\lambda(g)^{(j)\dagger} X_\lambda^{(j)\dagger} \oket{E_i}\}_{i \in [d], g \in G}$ is a tight-frame for $V_\lambda^{(j)}$ if $s_\lambda^{(j)} = \tr(P_\lambda^{(j)} M) \neq 0$

\paragraph{Examples} For concreteness, let us discuss a few important examples.
We take $G$ to be a subgroup of the unitary group $\U(d)$, and the representation $\omega(g)=U_g (\argdot) U_g^\dagger$.
First, for $G=\U(d)$, we have the multiplicity-free trivial and adjoint irrep which we label by $1$ and $\Ad$, respectively.
The associated blocks of the frame operator are proportional to the scalars
\begin{align}
  s_1 
  &= 
   \tr\left(P_1 M\right)
  =
  \sum_{j\in[d]} d^{-1} \obraket{E_j}{\one}\obraket{\one}{E_j} = 1, \\
  s_\Ad &= \frac{1}{d^2-1} \tr\left(P_\Ad M\right) \\
  &= \frac{1}{d^2-1} \left( \sum_{j\in[d]} \obraket{E_j}{E_j} -  s_1 \right) = \frac{1}{d+1}, 
\end{align}
where we have used that $P_1 = \frac 1d \oketbra\1\1$ is the projector onto the trivial irrep, $P_\Ad = \id - P_1$ and $d_\lambda=\tr[P_\lambda]$. 
We can use this information to write the frame operator in a more recognizable form, namely as a convex combination of the identity and the \emph{completely depolarizing channel} $\mathcal{D}=P_1$:
\begin{align}
 S &= 
 P_1 + \frac{1}{d+1} P_\Ad = \left(1 - \frac{1}{d+1}\right) \frac{1}{d}\,\oketbra\1\1 + \frac{1}{d+1}\, \id \\
 &= 
 \frac{1}{d+1}\left( d \,\mathcal{D} + \id \right)\eqqcolon \mathcal{D}_{\frac{1}{d+1}}. \label{eq:frame-operator-as-dep-channel}
\end{align}
In other words, $1/(d+1)$ is the \emph{effective depolarizing parameter} of the dephasing channel $M$.
The same result holds for the multi-qudit Clifford group $\Cl{n}{p}$ with $d=p^n$ where $p$ is prime, and more generally, for any unitary 2-group since it has the same irreps as $\U(d)$.

Next, let us consider the \emph{local Clifford group} $G=\Cl{1}{p}^{\otimes n}$ (for $p$ prime).
Its irreps are multiplicity-free and given as all possible tensor products of the single-qudit trivial and adjoint irreps, i.e.\ there are $2^n$ many.
Let us label such an irrep by a binary string $b\in \{0,1\}^n$ where the `0' corresponds to an adjoint irrep on the respective system, and `1' to the trivial irrep. 
It is straightforward to see that $s_b$ is then 
the product 
\begin{equation}
    s_b = s_1^{|b|} s_\Ad^{n-|b|} = (p+1)^{-(n-|b|)}
\end{equation}
where $|b|$ denotes the Hamming weight of $b$.
Indeed, we have
\begin{align}
 s_b 
 &= \left(\frac{1}{p^2-1}\right)^{n-|b|} \sum_{y\in\F_p^n} \prod_{i=1}^n \osandwich{E_{y_i}}{P_{b_i}}{E_{y_i}} \\
 &= 
 p^n \left(\frac{1}{p^2-1}\right)^{n-|b|}   \left(\frac{1}{p}\right)^{|b|} \left( 1 - \frac{1}{p} \right)^{n-|b|} \\
 &= \left( \frac{1}{p+1} \right)^{n-|b|}.
\end{align}
The same argument holds more generally for a locally acting group $G=G_1\otimes\dots\otimes G_k$ when every factor $G_i$ is a unitary 2-design, possibly of different local dimension.

As a final example, consider the \emph{Heisenberg-Weyl} (or generalized Pauli) group $G=\HW{n}{p}$.
Its irreps are one-dimensional and given as the span of the Weyl operators $w(z,x)$ for $(z,x)\in\F_p^{2n}$, see Eq.~\eqref{eq:weyl-ops} and \eqref{eq:irrep-decomp-HW-group}.
Let us label the elements of the measurement basis by $\ket y$ with $y\in\F_p^n$. 
We find that
\begin{align}
 s_{x,z} 
 &= \sum_{y\in\F_p^n} d^{-1} |\sandwich{y}{w(z,x)}{y}|^2 \\
 &= \sum_{y\in\F_p^n} d^{-1} |\braket{y}{y+x}|^2 = \delta_{x,0}.
\end{align}
Hence, the frame operator vanishes on every irrep with $x\neq 0$.

\paragraph{Eigenvalues of the frame operator.} \label{par:eigenvalues-frame-op}
For our analysis of the filtered \ac{RB} protocol for arbitrary compact groups $G$ in Secs.~\ref{sec:data-guarantees} and \ref{sec:sampling-complexity}, the properties of the associated frame operator $S$ and, in particular, $\snorm{S_\lambda^+}=\snorm{s_\lambda^+}$ will be of some importance. 
This is exactly the inverse of the smallest non-zero eigenvalue of $s_\lambda$, as $s_\lambda$ is positive semidefinite. 
By Eq.~\eqref{eq:frame-op-constant}, the eigenvalues of $s_\lambda$ are of the form $\tr(P_\lambda^{(i)} M)/d_\lambda$ where $P_\lambda^{(i)}$ are the projections associated to an irrep decomposition of $V(\lambda)$ which diagonalizes $s_\lambda$.
Using Hölder's inequality, we can bound these eigenvalues as follows:
\begin{equation}
 \frac{\tr(P_\lambda^{(i)} M)}{d_\lambda} 
 \leq 
 \frac{\min\{d_\lambda,d-1\}}{d_\lambda} 
 = 
 \min\left\{ 1,\frac{d-1}{d_\lambda} \right\} .
\end{equation}
Here, we have used the bound $d-1$ since, if $\lambda$ is not the trivial irrep, $P_\lambda$ is supported on the traceless subspace, and thus we can consider the corresponding restriction of $M$ of rank $d-1$.
Hence, if $\snorm{S_\lambda^+}\neq 0$, we have the lower bound
\begin{align}
 \snorm{S_\lambda^+}
 \geq
 \max\left\{ 1 , \frac{d_\lambda}{d-1}\right\} .
 \label{eq:lower-bound-Slambda}
\end{align}
This lower bound is tight, e.g.~for unitary 2-designs.

In the following discussion, we often restrict to the case where $[P_\lambda, M] = 0$. 
We refer to this condition as the measurement \emph{$M$ being aligned with the irrep $\lambda$}.
Moreover, many calculations are simplified by the assumption that $\lambda$ is multiplicity-free in $\omega$.
Both restrictions are met by the Clifford group, the local Clifford group and the Heisenberg-Weyl group together with the basis measurement in the associated $Z$-basis.

For any irrep $\lambda$ aligned with $M$ we can also give an upper bound on the spectral norm $\snorm{S_\lambda^+}$.
Let $P_\lambda^{(i)}$ be as above.
Then, we have $[P_\lambda,M]=0$ if and only if $[P_\lambda^{(i)},M]=0$ for all $i=1,\dots,n_\lambda$.
Assuming that $s_\lambda \neq 0$, the smallest non-zero eigenvalue is thus given as $\tr(P_\lambda^{(i)} M)/d_\lambda=\rank(P_\lambda^{(i)} M)/d_\lambda\geq 1/d_\lambda$ for some $i$.
Thus, we obtain the bound: 
\begin{align}
\snorm{S_\lambda^+} 
 \leq
 d_\lambda 
 \qquad
 \text{(for $\lambda$ aligned with $M$).}
 \label{eq:upper-bound-Slambda}
\end{align}
This bound is tight, for instance for the Heisenberg-Weyl group.
Note however, that the bound is overestimating $\snorm{S_\lambda^+}$ for large irreps, and it is possible to give tighter bounds under additional assumptions.

\subsection{Signal guarantees for filtered randomized benchmarking}
\label{sec:data-guarantees}

We now derive a general guarantee for the expected signal $\FD_\lambda(m)$ of filtered \ac{RB} that is then discussed and analyzed in detail in the remainder of this work. 

With the notation introduced in the preliminaries, we can, as a first step, very compactly write the expected signal for a general implementation function $\phi$.
Recall that to isolate the decay parameter associated with a given irreducible representation $\tau_\lambda$ we have introduced the filter function $f_\lambda$ in Eq.~\eqref{eq:filter-function}.
In the notation of Sec.~\ref{sec:operators-norms}, it reads:
\begin{equation}
    f_\lambda(i,g_1,\dots,g_m) = \osandwich{E_i}{\omega(g_1\cdots g_m)S^+P_\lambda}{\rho}.
\end{equation}
Note that $f_\lambda$ is real-valued, as all involved superoperators are in fact Hermiticity-preserving:
For the projector $P_\lambda$, this follows from the definition \eqref{eq:projection-isotypes} and the fact that real representations have real characters.
Moreover, $S$ is a quantum channel, hence $S^+$ is Hermiticity-preserving as argued in Sec.~\ref{sec:operators-norms}.
Thus, we can ommit the complex conjugation in the computation of the estimator $\hat{\FD}_\lambda(m)$.

Let us define the noisy quantum channel $\tilde M \coloneqq \sum_{i \in [d]} \oketbra{E_i}{\tilde E_i}$.
Then, by linearity of the expected value and the assumption of i.i.d.\ experiments in the data acquisition, the filtered \ac{RB} signal becomes
\begin{align}
  \MoveEqLeft[1] \FD_\lambda(m) \\
  &= \sum_{i \in [d]} \int_{G^{m}} f_\lambda(i, g_1,\dots,g_m) \, \\
  &\qquad \times p(i| g_1, \ldots, g_m) \dd\nu(g_1)\cdots\dd\nu(g_m) \\
  &= \sum_{i \in [d]} \osandwich{\rho}%
  {P_\lambda S^+ 
  \int \omega(g_1)^\dagger\cdots\omega(g_m)^\dagger}{E_i} \\
  &\qquad\times\osandwich{\tilde E_i}{\dd(\phi\nu)(g_m)\dots\dd(\phi\nu)(g_1)}{\tilde\rho} \\
  &= \osandwichb{\rho}%
  {P_\lambda S^+ 
   \left( \int \omega(g)^\dagger (\argdot) \dd(\phi\nu)(g)\right)^m \left( \tilde M \right) }%
  {\tilde\rho} 
  \\
  &= 
  \osandwichb{\rho}%
  {X_\lambda S_\lambda^+\,
  \widehat{\phi\nu}[\omega_{\lambda}]^m \big( X_\lambda^\dagger \tilde M \big) }%
  {\tilde\rho}\,. \label{eq:Flambda-as-Fourier-transform}
\end{align}
We use in the last line that $S^+$ and $\omega(g)$ are block-diagonal, hence $X_\lambda^\dagger S^+ = S_\lambda^+ X_\lambda^\dagger$ and $X_\lambda^\dagger \omega(g) =  \omega_{\lambda}(g)X_\lambda^\dagger$.

We observe that filtered randomized benchmarking explicitly exposes the $m$th power of the Fourier transform of the implementation map $\phi$ and the measure $\nu$ restricted to the irreducible representation $\tau_\lambda$.
This observation already justifies the name \emph{filtered} \ac{RB}.
It is interesting to note that the general structure of a filtered \ac{RB} signal is even slightly simpler than the signal form of \ac{RB} protocols that include an inversion, c.f.~\cite[Eq.~(75)]{helsen_general_2022}:
The latter additionally involves an inverse Fourier transformation.

Without further assumption on in the implementation map $\phi$, 
the form of Eq.~\eqref{eq:Flambda-as-Fourier-transform} is far from describing a simple functional dependence. 
For sufficiently large $m$, we can, however, hope that the signal is mostly governed by the dominant eigenvalue of $\widehat{\phi\nu}[\omega_{\lambda}]$, giving rise to a simpler signal model that can be fitted.
This approximation constitutes the core of \ac{RB}. 
And the following theorem formulates a precise statement aligned with this expectation.

To show that the filtered \ac{RB} signal \eqref{eq:Flambda-as-Fourier-transform} indeed follows a `simple' exponential decay, we use that $\widehat{\phi\nu}[\omega_{\lambda}]$ is close to $\widehat{\omega\nu}[\omega_{\lambda}]$ if the implementation is sufficiently good.
Recall from Sec.~\ref{sec:rho-designs} that $\widehat{\omega\nu}[\omega_{\lambda}]$ has the interpretation as a moment operator for the probability measure $\nu$ and, thus, {we have the following spectral decomposition (cf.~Eq.~\eqref{eq:moment-op-spectral-decomposition}):
\begin{equation}
\label{eq:moment-op-FT-spectral-decomposition}
 \widehat{\omega\nu}[\tau_\lambda]
 \simeq
 \begin{bmatrix}
  \id & 0 \\
  0 & \Lambda_\lambda
 \end{bmatrix}\,.
\end{equation}
The subdominant eigenvalues are controlled by the spectral gap $\Delta_\lambda$ of the moment operator, i.e.~$\snorm{\Lambda_\lambda} \leq \Delta_\lambda$.
}
If the deviation of  $\widehat{\phi\nu}[\omega_{\lambda}]$ from $\widehat{\omega\nu}[\omega_{\lambda}]$ is small compared to $\Delta_\lambda$, the perturbation theory of invariant subspaces \cite{stewart_matrix_1990} ensures that $\widehat{\phi\nu}[\omega_{\lambda}]$ 
{can be block-diagonalized similar to the moment operator in Eq.~\eqref{eq:moment-op-FT-spectral-decomposition}.}
This gives rise to a simple explicit formula for the \ac{RB} signal.
Formally, we arrive at the following result:

\begin{theorem}[Signal guarantee for filtered \ac{RB} with random circuits]
\label{thm:rb-data-random-circuits}
 Fix a non-trivial irrep of real type $\lambda$ in $\omega$ and let $\omega_\lambda\subset\omega$ be its isotypic subrepresentation with multiplicity $n_\lambda$.
 Suppose the spectral gap of $\widehat{\omega\nu}[\omega_{\lambda}]$ is larger than $\Delta_\lambda>0$ and there is  $\delta_\lambda >0$ such that
 \begin{equation}\tag{$\mathbb A$}
 \label{eq:thm:initialbelief}
  \snormb{\widehat{\phi\nu}[\omega_{\lambda}] - \widehat{\omega\nu}[\omega_{\lambda}]} \leq \delta_\lambda < \frac{\Delta_\lambda}{4} \, .
 \end{equation}
 Then, the $\lambda$-filtered \ac{RB} signal given in Eq.~\eqref{eq:Flambda-as-Fourier-transform} obeys
 \begin{equation}\tag{$\mathbb S$}\label{eq:thm:signal-form}
   \FD_\lambda(m)  =  \tr\left(A_\lambda I_\lambda^m\right) + \tr\left( B_\lambda O_\lambda^m\right) ,
 \end{equation}
 where $A_\lambda, I_\lambda\in \R^{n_\lambda \times n_\lambda}$, and $B_\lambda, O_\lambda\in \R^{k_\lambda\times k_\lambda}$ for $k_\lambda\coloneqq d_\lambda d^2 - n_\lambda$.
Moreover, $I_\lambda$ and $O_\lambda$ do not depend on the initial state and measurement and we have
 \begin{align}
 \label{eq:bounds-on-perturbed-blocks}
  \spec_{\C}(I_\lambda) &\subset {D_1(0) \cap D_{2\delta}(1)} \, , &
  \snorm{O_\lambda} &\leq 1 - \Delta_\lambda + 2\delta_\lambda \, \,
 \end{align}
 {where $D_\varepsilon(z)$ is the disc of radius $\varepsilon$ centered at $z\in\C$.}
 The magnitude of the second matrix exponential decay is suppressed as
\begin{equation}\tag{$\mathbb B$}
\abs[\big]{ \tr\left( B_\lambda O_\lambda^m\right) }
 \leq
 c_\lambda\, \sqrt{\osandwich{\rho}{P_\lambda}{\rho}}\, g(\delta_\lambda/\Delta_\lambda) (1-\Delta_\lambda + 2\delta_\lambda)^m.
 \label{eq:thm:subdominant-decay-bound}
\end{equation}
Here, $c_\lambda$  is an irrep-dependent constant, defined as
\begin{align}
 c_\lambda
 &\coloneqq
 \begin{cases} 
  \sqrt{\tr S_\lambda}\, \snorm{S_\lambda^+} & \text{ if } [P_\lambda,M]=0, \\
  \min\{\sqrt{d_\lambda},\sqrt{d}\} \snorm{S_\lambda^+} & \text{ else}.
 \end{cases}
\end{align}
The function given by $g(x) \coloneqq (1-4x)x + \frac{1+x}{1-4x}$ is monotonically increasing and diverges for $x\rightarrow 1/4$.
\end{theorem}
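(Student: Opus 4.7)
The starting point is the compact signal expression
\begin{equation}
\FD_\lambda(m)
= \osandwich{\rho}{X_\lambda S_\lambda^+\,\widehat{\phi\nu}[\omega_\lambda]^m\bigl(X_\lambda^\dagger \tilde M\bigr)}{\tilde\rho}
\end{equation}
derived in Eq.~\eqref{eq:Flambda-as-Fourier-transform}. Since everything else is a fixed linear functional of the $m$-th power, the whole task reduces to understanding the spectral structure of the Fourier operator $\widehat{\phi\nu}[\omega_\lambda]$ as a perturbation of the ``noise-free'' moment operator $\widehat{\omega\nu}[\omega_\lambda]$.

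First I would invoke Prop.~\ref{prop:fourier-transform} (for the Haar piece) together with the general discussion of moment operators in Sec.~\ref{sec:rho-designs} to pin down the unperturbed spectrum. Using the isotypic structure $\omega_\lambda\simeq\tau_\lambda\otimes\id_{n_\lambda}$, the operator $\widehat{\omega\nu}[\omega_\lambda]$ is unitarily block-diagonalizable into an $n_\lambda$-dimensional eigenspace at eigenvalue $1$ (the multiplicity block) and a complementary invariant subspace on which it acts with spectral norm at most $1-\Delta_\lambda$; this is exactly the spectral gap structure in Eq.~\eqref{eq:moment-op-FT-spectral-decomposition}. Because $\omega$ is Hermiticity-preserving (and $\lambda$ of real type), $\widehat{\phi\nu}[\omega_\lambda]$ is represented by a real matrix in a suitable basis of Hermitian operators, so throughout the argument the perturbed blocks can be taken to be real.

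The central step is matrix perturbation theory for invariant subspaces, which is compiled in App.~\ref{sec:perturbation-theory}. Under the gap condition $\delta_\lambda<\Delta_\lambda/4$ from \eqref{eq:thm:initialbelief}, Stewart's theorem ensures that $\widehat{\phi\nu}[\omega_\lambda]$ admits a (non-orthogonal) direct-sum decomposition into two invariant subspaces close to the unperturbed ones, and thus can be written as $I_\lambda\oplus O_\lambda$ in suitable adapted bases. Bauer--Fike-type bounds then yield the spectral inclusions: the dominant block $I_\lambda$ has spectrum in $D_{2\delta_\lambda}(1)\cap\overline{D_1(0)}$ (the latter follows from the general $\snormb{\widehat{\phi\nu}[\omega_\lambda]}\leq 1$), while $\snorm{O_\lambda}\leq 1-\Delta_\lambda+2\delta_\lambda$. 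The Hermiticity-preservation of $\phi$ together with invariance of the subspaces makes $I_\lambda,O_\lambda$ real matrices of the claimed sizes $n_\lambda$ and $k_\lambda=d_\lambda d^2-n_\lambda$. Substituting the block decomposition back into the signal gives
\begin{equation}
\FD_\lambda(m)=\tr(A_\lambda I_\lambda^m)+\tr(B_\lambda O_\lambda^m),
\end{equation}
where $A_\lambda,B_\lambda$ absorb all SPAM-dependent contractions with $\rho,\tilde\rho,\tilde M,X_\lambda,S_\lambda^+$, thus isolating the dynamics \eqref{eq:thm:signal-form}.

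Finally, I would derive the suppression bound \eqref{eq:thm:subdominant-decay-bound}. For the subdominant term, the trace-vs-operator inequality $|\tr(B_\lambda O_\lambda^m)|\leq \snorm{O_\lambda^m}\trnorm{B_\lambda}\leq (1-\Delta_\lambda+2\delta_\lambda)^m\,\trnorm{B_\lambda}$ reduces the task to bounding $\trnorm{B_\lambda}$. Expanding $B_\lambda$ as a product involving the perturbed spectral projector onto $V_2$, one factor of $S_\lambda^+$, and the SPAM vectors, the perturbation bound from App.~\ref{sec:perturbation-theory} on how far the perturbed projectors deviate from the unperturbed ones provides the function $g(\delta_\lambda/\Delta_\lambda)=(1-4x)x+(1+x)/(1-4x)$, which has the stated monotonicity and blow-up at $x=1/4$. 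The constant $c_\lambda$ then arises from estimating the SPAM inner products against $S_\lambda$: using $\osandwich{\tilde\rho}{P_\lambda}{\tilde\rho}\leq \osandwich{\rho}{P_\lambda}{\rho}$ (contractivity of $\ESP^\dagger$), Cauchy--Schwarz, and $\trnorm{X_\lambda^\dagger \tilde M X_\lambda}\leq \tr S_\lambda$ when $[P_\lambda,M]=0$ (which yields simultaneous diagonalizability), one obtains $c_\lambda=\sqrt{\tr S_\lambda}\,\snorm{S_\lambda^+}$; otherwise one uses the cruder bound $\trnorm{X_\lambda^\dagger \tilde M X_\lambda}\leq \min\{\sqrt{d_\lambda},\sqrt d\}\fnorm{\tilde M}$. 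The main technical obstacle is the careful bookkeeping of the non-orthogonal projectors produced by perturbation theory and the extraction of the explicit constant $g(\delta_\lambda/\Delta_\lambda)$; once this is in place, the remaining estimates are essentially norm inequalities.
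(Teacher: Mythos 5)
Your plan follows the paper's proof essentially step for step: the same starting identity \eqref{eq:Flambda-as-Fourier-transform}, the same reduction to real operators via Hermiticity preservation, the same invocation of the Stewart-type invariant-subspace perturbation theorem from App.~\ref{sec:perturbation-theory} to split $\widehat{\phi\nu}[\tau_\lambda]$ into the two blocks $I_\lambda$ and $O_\lambda$, and the same H\"older-plus-norm-factorization argument ($|\tr(B_\lambda O_\lambda^m)|\le\trnorm{B_\lambda}\snorm{O_\lambda}^m$, with $g$ coming from $\snorm{L_{\lambda,2}}\snorm{R_{\lambda,2}}$ and $c_\lambda$ from the SPAM contractions). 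So the route is right; there are, however, two justification errors.

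First, the inclusion $\spec_\C(I_\lambda)\subset D_1(0)$ does not follow from ``$\snormb{\widehat{\phi\nu}[\omega_\lambda]}\le 1$'': that spectral-norm bound is false in general (a CP trace-non-increasing map such as a reset channel has superoperator spectral norm $\sqrt d$, and assumption \eqref{eq:thm:initialbelief} only gives $\snormb{\widehat{\phi\nu}[\omega_\lambda]}\le 1+\delta_\lambda$). The paper instead observes that $\widehat{\phi\nu}[\omega]$ maps quantum channels to CP trace-non-increasing maps and hence has $\diamond\to\diamond$ norm at most $1$; the spectral radius is bounded by any submultiplicative operator norm, which gives $D_1(0)$. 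Second, the inequality $\osandwich{\tilde\rho}{P_\lambda}{\tilde\rho}\le\osandwich{\rho}{P_\lambda}{\rho}$ you invoke is not a consequence of any contractivity of $\ESP$ and fails in general (a noise channel can map a state with small overlap with $V(\lambda)$ to one with large overlap). It is also not needed: in the factorization $\twonorm{S_\lambda^+X_\lambda\oketbra{\rho}{\tilde\rho}}\le\snorm{S_\lambda^+}\twonorm{X_\lambda^\dagger\rho}\,\twonorm{\tilde\rho}$ the factor $\sqrt{\osandwich{\rho}{P_\lambda}{\rho}}=\twonorm{X_\lambda^\dagger\rho}$ comes from the \emph{ideal} state appearing in the filter function, while the noisy state only contributes $\twonorm{\tilde\rho}\le 1$. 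With these two fixes the argument closes and reproduces the stated constants.
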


As we see later in the proof of Thm.~\ref{thm:rb-data-random-circuits}, for Haar-random sampling $\nu=\mu$, the matrix $I_\lambda$ is in fact exactly the matrix $M_\lambda$ derived by \textcite{helsen_general_2022}, there appearing in the signal form of uniform standard \ac{RB} with general compact groups $G$.
In other words, \emph{filtered \ac{RB} and standard \ac{RB} yield the same decay rates}.

Note that an important special case of Theorem \ref{thm:rb-data-random-circuits} 
is given by $\Delta_\lambda=1$, 
i.e.\ when filtered \ac{RB} is performed with an \emph{exact $\omega_\lambda\otimes\omega$-design}.
The theorem only applies to non-trivial irreps of $\omega$.
The case of the trivial irrep can be derived analogously, but differs in the specific bounds.
We provide the statement for the trivial irrep in Sec.~\ref{sec:filtering-trivial-irrep} at the end of this section.
As mentioned in Sec.~\ref{sec:setting}, we assume, for the sake of presentation, that the irrep is of \emph{real type}.
However, our statement still holds with minor changes for irreps of complex or quaternionic type, as briefly outlined in Sec.~\ref{sec:complex-quaternionic-type}.
The differences to the real type is an effectively increased multiplicity 
for irreps of complex or quaternionic type.

Before presenting the {proof of Thm.~\ref{thm:rb-data-random-circuits} in Sec.~\ref{sec:proof-data-form}, it is instructive to take a closer look at their structure.
Theorem \ref{thm:rb-data-random-circuits}} consists of three central parts: First, the precise \emph{assumption} \eqref{eq:thm:initialbelief} on the quality of the implementation.
The assumption quantifies an initial belief that is sufficient to ensure the correct functioning of the \ac{RB} protocol.
Second, a statement for the \emph{expected signal form} \eqref{eq:thm:signal-form}. 
Here, the first summand is the dominant contribution that constitutes the model for fitting the signal. 
In particular, after performing the fit, the \ac{RB} protocol outputs the eigenvalues of $I_\lambda$ as the \ac{RB} decay parameters.
The second summand in Eq.~\eqref{eq:thm:signal-form} describes a sub-dominant contribution to the expected \ac{RB} signal that we want to be small in order to fit the dominant exponential decay.
To this end, the third part of the theorem, provides a \emph{bound on the sub-dominant contribution} \eqref{eq:thm:subdominant-decay-bound}.
Since by assumption $1- \Delta_\lambda + 2\delta_\lambda < 1$, the bound decays exponentially in $m$ and becomes small for sufficiently large $m$.
We derive more explicit sufficient conditions on the sequence length as corollaries of Theorem~\ref{thm:rb-data-random-circuits} in Section~\ref{sec:sequence-lengths}.

We will now discuss the parts in more detail.

\subsubsection{Assumptions on the quality of the implementation function}
\label{sec:implementation-quality}

The central assumption of Theorem \ref{thm:rb-data-random-circuits} reads
\noeqref{eq:thm:initialbelief_restatement}
\begin{equation}
\label{eq:thm:initialbelief_restatement}
\tag{\ref{eq:thm:initialbelief}}
\snorm{\widehat{\phi\nu}[\omega_{\lambda}] - \widehat{\omega\nu}[\omega_{\lambda}]} \leq \delta_\lambda < \frac{\Delta_\lambda}{4} \, .
\end{equation}
Although our analysis works in the full perturbative regime $\delta_\lambda < \Delta_\lambda/4$, there are certain quantities like $g(\delta_\lambda/\Delta_\lambda)$ that diverge for $\delta_\lambda/\Delta_\lambda\rightarrow 1/4$.
In practice, this means that the implementation error should be bounded away from $1/4$, for instance $\delta_\lambda/\Delta_\lambda\leq 1/5$ such that $g(1/5)\approx 6$ can be considered constant.

Intuitively, the assumption $\snorm{\widehat{\phi\nu}[\omega_{\lambda}] - \widehat{\omega\nu}[\omega_{\lambda}]} < \Delta_\lambda/4$ can be phrased as the assumption that the implementation function $\phi$ is sufficiently close to the reference representation $\omega$ \emph{on average w.r.t.~the measure $\nu$}.
Since we do typically not know the implementation function $\phi$, there is a priori no way of determining this error measure and verifying assumption \eqref{eq:thm:initialbelief_restatement}. 
Hence, the assumption that the implementation error $\snorm{ \widehat{\phi\nu}[\omega_{\lambda}] - \widehat{\omega\nu}[\omega_{\lambda}]}$ is small should be seen as an initial belief on the quality of the experiment.
An advantage of the approach taken here is that this initial belief only ever involves the quality of gates in $\supp(\nu)$.
Typically, these are gates native to the platform. 
Experimentally motivated noise models might then be used to approximate the implementation error $\snorm{ \widehat{\phi\nu}[\omega_{\lambda}] - \widehat{\omega\nu}[\omega_{\lambda}]}$, or trust can be build in independent experiments.

Since this implementation error does not bear an obvious operational meaning, we attempt to relate this quantity to more familiar ones in the following.
As a starting point, we may use the bound
 \begin{align}
  \snorm{ \widehat{\phi\nu}[\omega_{\lambda}] - \widehat{\omega\nu}[\omega_{\lambda}]}
  &\leq \int_G \snorm{ \bar\omega_\lambda(g)\otimes ( \phi(g) - \omega(g) )  } \\
  &= \int_G \snorm{\phi(g) - \omega(g)} \dd\nu(g), \label{eq:implementation-error-bound}
 \end{align}
where we first used Eq.~\eqref{eq:def-fourier-transform-dagger-cc} and the triangle inequality, and then that unitaries have unit spectral norm.
Eq.~\eqref{eq:implementation-error-bound} is probably a crude bound as averaging inside the norm might significantly reduce the error.
Moreover, we discard the irrep-specific component.
Nevertheless, the RHS of Eq.~\eqref{eq:def-fourier-transform-dagger-cc} has a clear meaning as the average error of the gates that are primitives in the experiment,
although the spectral distance of quantum channels lacks an operational interpretation.

Furthermore, recall that the \emph{average gate fidelity} between a quantum channel $\mathcal{E}$ and a unitary gate $U$ is defined as
\begin{equation}
 F_\mathrm{avg}(U,\mathcal E) \coloneqq \int \sandwich{\psi}{U^\dagger \mathcal{E}(\ketbra{\psi}{\psi}) U}{\psi} \dd\psi.
\end{equation}
Hence, the average gate \emph{in}fidelity between $\phi(g)$ and $\omega(g)$ is 
\begin{align}
\MoveEqLeft[2] 1 - F_\mathrm{avg}(\omega(g),\phi(g)) \\
&= \int \sandwich{\psi}{\left(\id - \omega(g)^\dagger\phi(g)\right)(\ketbra{\psi}{\psi})}{\psi} \dd\psi \\
&\leq \snorm{\id - \omega(g)^\dagger\phi(g)} 
= \snorm{\omega(g)-\phi(g)}.
\end{align}
In the last step, we used that the spectral norm is unitarily invariant so we can multiply with the unitary superoperator $\omega(g)$ from the left.
Hence, if we assume that the $\nu$-average of $\snorm{\phi(g) - \omega(g)}$ is small, this implies that the $\nu$-averaged infidelity between $\phi$ and $\omega$ is small, too.
On a superficial level, this is exactly the quantity which randomized benchmarking claims to measure.%
\footnote{The difficulties in this interpretation of \ac{RB} coming from the intrinsic gauge freedom of the protocol have been intensively discussed in the literature \cite{proctor2017WhatRandomizedBenchmarking,wallman2018randomized,Merkel18,helsen_general_2022}. We do not intend to contribute to this discussion in this work.}
Thus, one can understand the assumption that the implementation is reasonable good in spectral distance as a consistency condition for \ac{RB}.
That is, if the implementation is sufficiently good, we are in a regime where \ac{RB} can estimate how good it precisely is.

Finally, let us comment on the condition $\snorm{ \widehat{\phi\nu}[\omega_{\lambda}] - \widehat{\omega\nu}[\omega_{\lambda}]} \leq \Delta_\lambda/4$.
If $\nu$ is a poor approximation to a design, then the spectral gap is small and hence the implementation has to be rather good.
In contrast, if $\nu$ is an exact design, then $\Delta_\lambda = 1$, and more noise can be tolerated.
Hence, there is a trade-off between the quality of gates and the quality of random circuits.
In particular, if the gate errors become too large, the scrambling is no longer controlled by the random circuit.
More concretely, local random circuits have spectral gaps that scale as $O(1/n)$ (cf.~Sec.~\ref{sec:application-to-random-circuits}), thus \emph{the implementation of gates has to improve with the number of qubits}, at least for a general perturbative argument to hold.
A constant spectral gap $O(1)$ can be achieved by ``parallelizing'' the gates in the form of e.g.~a brickwork circuit. 
However, in this case, $\omega(g)$ describes an entire layers of parallel gates and we expect that the the implementation error generally scales with $n$.
More precisely, let us assume that the noise is local and we can write $\phi(g) = \omega(g) \otimes_{i=1}^n \mathcal{N}_i(g)$ where $\mathcal{N}_i(g)$ are single-qubit Pauli noise channels.
Using the telescopic identity $\otimes_{i=1}^n \mathcal{N}_i(g) - \id = \sum_{k=1}^n \otimes_{i=1}^{k-1} \mathcal{N}_i(g) \otimes (\mathcal{N}_k(g) - \id) \otimes \id^{\otimes n-k}$, we then find
\begin{align}
 \snorm{\phi(g) - \omega(g)}
 &=
 \snorm{ \omega(g) \left( \otimes_{i=1}^n \mathcal{N}_i(g) - \id \right)} \\
 &\leq
 \sum_{k=1}^n \snorm{\mathcal{N}_k(g) - \id} \,
\end{align}
where we used the triangle inequality and the fact that Pauli channels and the identity have unit spectral norm.
If the local noise is bounded as $\snorm{\mathcal{N}_k(g) - \id} \leq \varepsilon$ we thus obtain, using Eq.~\eqref{eq:implementation-error-bound}:
\begin{equation}
 \snorm{ \widehat{\phi\nu}[\omega_{\lambda}] - \widehat{\omega\nu}[\omega_{\lambda}]}
 \leq \varepsilon n \,.
\end{equation}
Hence, we can ensure that the implementation error is smaller than a constant spectral gap, if the single-qubit error rate $\varepsilon$ scales as $1/n$.
Similar findings have been reported by \textcite{Liu21BenchmarkingNear-term} and \textcite{Dalzell21RandomQuantumCircuits} who require gate error rates which are $O(1/n)$ and $O(1/n\log(n))$ respectively, {in related settings}.
Moreover, it was recently shown by \textcite{aharonov_polynomial-time_2022} that a constant gate error may render the circuit classically simulable.
Thus, it is to be expected that $n$-dependent gate erros are in fact necessary.

\subsubsection{The dominant signal}\label{sec:dominantSignal}
In the perturbative regime, Theorem~\ref{thm:rb-data-random-circuits} ensures that the \ac{RB} signal is the sum of two (matrix) exponential decays, Eq.~\eqref{eq:thm:signal-form}.
In particular, if the irrep $\tau_\lambda$ appears in $\omega$ without multiplicities the signal becomes a scalar decay governed by a single decay parameter $I_\lambda \in (1 - 2 \delta_\lambda, 1] \subset \RR$.
If $\tau_\lambda$ has multiplicity $n_\lambda$ the matrix $I_\lambda$ has the corresponding dimension. 
If $I_\lambda$ is diagonalizable (over $\CC$) the RB signal is a linear combination of exponentials in the (up-to $n_\lambda$) inequivalent and potentially complex eigenvalues of $I_\lambda$.
As a consequence the signal can decay and oscillate with the sequence length.
The decay parameter $I_\lambda$ does not dependent on \ac{SPAM} errors, these only affect the linear coefficients $A_\lambda$ and $B_\lambda$.
This behavior provides the desired \ac{SPAM} robustness of extracting the decay parameters.

Even though not included in the statement of the theorem,
the trace of the matrix coefficients $A_\lambda$ (the \ac{SPAM} constants) coincides with the filtered \ac{RB} signal which we would obtain for an ideal implementation map $\phi=\omega$ and the Haar measure $\nu=\mu$ but with the \emph{same \ac{SPAM} errors}.
From the expression \eqref{eq:tr-Alambda} for the trace of $A_\lambda$ given in Sec.~\ref{sec:proof-data-form} below, we obtain the 
\emph{signal under only \ac{SPAM} errors}:
\begin{align}
 \tr(A_\lambda)
 =
 \osandwich{\rho}{P_\lambda S^{+} \widehat{\omega}[\omega](\tilde M)}{\tilde\rho}  
 =
 \FD_\lambda(m)_\SPAM \, .
 \label{eq:Flambda-SPAM}
\end{align}

Furthermore, if $\lambda$ is multiplicity-free and aligned with $M$, then $\widehat{\omega}[\tau_\lambda](X_\lambda^\dagger \tilde M)$ is proportional to $X_\lambda^\dagger$ with proportionality factor given as $\tr(P_\lambda\tilde M)/d_\lambda$ (cf.~Sec.~\ref{sec:frame-operator}).
In this case, we have
\begin{align}
\FD_\lambda(m)_\SPAM
&=
d_\lambda^{-1}s_\lambda^{-1}\tr(P_\lambda\tilde M) \osandwich{\rho}{P_\lambda}{\tilde\rho} \\
&=
\frac{\tr(P_\lambda\tilde M)}{\tr(P_\lambda M)} \osandwich{\rho}{P_\lambda}{\tilde\rho} \, , \label{eq:FDspam_aligned}
\end{align}
where we used that $\tr(P_\lambda M) = d_\lambda s_\lambda$.
In the absence of \ac{SPAM} noise, we thus recover the ideal, noiseless signal \eqref{eq:ideal-signal-intro} from Sec.~\ref{sec:overview-protocol}:
\begin{equation}
    \FD_\lambda(m)_\ideal = \osandwich \rho {P_\lambda} \rho\,. 
\end{equation}

We can, thus, measure the deviation from the ideal signal due to \ac{SPAM} in terms of the two relative quantities 
that we call the \emph{SPAM visibilities}:
\begin{align}
 v_\mathrm{SP} &\coloneqq \frac{\left|\osandwich{\rho}{P_\lambda}{\tilde\rho}\right|}{\osandwich{\rho}{P_\lambda}{\rho}} , &
 v_\mathrm{M} &\coloneqq \frac{\abs{\tr(P_\lambda\tilde M)}}{\tr(P_\lambda M)}.
 \label{eq:visibilities}
\end{align}
In terms of the visibilities, we can rewrite the absolute value of the signal affected only by \ac{SPAM} for $\lambda$ multiplicity-free and aligned with $M$ as
$|\FD_\lambda(m)_\SPAM| = v_\mathrm{SP}v_\mathrm{M} \osandwich{\rho}{P_\lambda}{\rho}$.
SPAM errors can decrease the signal-to-noise ratio as well as the ratio between the dominant and sub-dominant signals.
As a result the number of samples and required sequence length to accurately estimate the dominant \ac{RB} signal and extract the decay parameter also depends on the strength of the \ac{SPAM} noise.
This situation should not come as a surprise as the ability to extract information depends crucially on the quality of state preparation and measurement.
We will make use of the visibilities to formulate explicit bounds in the following section, and will eventually assume that they are lower bounded by a constant.

Note that similar assumptions about the \ac{SPAM} constants were made in Refs.~\cite{HarHinFer19,Flammia2019EfficientEstimation}.
In contrast to \emph{stability} condition in Ref.~\cite{Flammia2019EfficientEstimation}, which requires that \ac{SPAM} constants are within additive error of their ideal value, the here introduced \emph{visibilities} capture relative deviations.

\paragraph{Examples.}
As an instructive example, we consider depolarizing \ac{SPAM} noise. 
Recall that $\tilde\rho = \ESP(\rho)$ and $\tilde M = M \EM$ with state preparation and measurement noise channels $\ESP$ and $\EM$, respectively.
Assuming that $\ESP=\EM = p\, \id + (1-p)\oketbra{\one}{\one}/d$ is a depolarizing channel, we have
\begin{align}
 v_\mathrm{SP} &= p + \frac{1-p}{d}\frac{\osandwich{\rho}{P_\lambda}{\one}}{\osandwich{\rho}{P_\lambda}{\rho}}               = p, \\
v_\mathrm{M} &= p + \frac{1-p}{d} \frac{\osandwich{\one}{P_\lambda M}{\one}}{ d_\lambda s_\lambda} = p,
\end{align}
where we use twice that $P_\lambda(\one) = 0$ since $\lambda$ is, by assumption, not the trivial irrep.
Hence, $\FD_\lambda(m)_\SPAM$ is suppressed by $p^2$ compared to the \ac{SPAM}-free situation.

In principle and without further assumptions, the state-preparation errors could increase the visibility and can change the sign of the filtered \ac{RB} signal. 
For example, consider the representation $b = 00$ of $\Cl{1}{2}^{\otimes 2}$.
Then, for $\rho = \ketbra {00}{00}$ and the maximally entangled Bell-state $\tilde\rho = \ketbra{\Psi_+}{\Psi_+}$, we have $\osandwich \rho {P_{00}} {\tilde \rho} = - \osandwich \rho {P_{00}} \rho$.

If $\HW{n}{p}<G$ and $\rho$ is a pure stabilizer state, e.g.\ $\rho = \ketbra 0 0$, we can establish that $v_\mathrm{SP} \leq 1$. 
To see this, recall that $w(a)$ denote the Weyl operators for $a \in \FF_p^n$.
Any stabilizer state can be written as $\rho = \frac1d \sum_{a \in L} \xi^{f(a)} w(a)$, where $L$ is a suitable subspace of $\FF_p^n$, $f:\, L\rightarrow \F_p$ is a suitable function on $L$, and $\xi = \exp(2\pi i/p)$ is a primitive $p$-th root of unity.
Furthermore, if $\HW{n}{p}<G$, $P_\lambda$ is diagonal in the Weyl basis and can be written as $P_\lambda = \frac1d \sum_{a \in \Omega} \oketbra {w(a)} {w(a)}$ for some set $\Omega \subset \FF_p^n$.
Using $\snorm{w(a)} \leq 1$ for all $a$, we then have $|\osandwich \rho {P_\lambda} {\tilde \rho}| = \frac1d \sum_{a \in L \cap \Omega} |\obraket{w(a)}{\tilde \rho}| \leq \frac1d \sum_{a \in L \cap \Omega} 1 = \osandwich \rho {P_\lambda} \rho$.
Hence, we have shown that $v_\mathrm{SP} \leq 1$.

Next, we show a similar statement for the measurement visibility $v_\mathrm{M}$:
Here, we only need that $[P_\lambda, M] = 0$, which is in particular the case if $\HW{n}{p}<G$.
Then, $P_\lambda M$ is a projector with range in the traceless subspace (since $\lambda$ is non-trivial by assumption).
Hence, $\tr(P_\lambda\tilde M) = \tr(P_\lambda M \EM)$ depends only on the unital and trace-preserving part of $\EM$ and we can thus replace $\EM$ with its projection onto unital and trace-preserving channels.
Since these channels have spectral norm one \cite[Thm.~4.27]{Wat18}, we finally find that $\abs{\tr(P_\lambda\tilde M)} \leq \trnorm{P_\lambda M} \snorm{\EM} = \tr(P_\lambda M)$, and thus $v_\mathrm{M} \leq 1$.

Although our assumptions do not explicitly exclude examples of `malicious noise', we generally expect physical noise processes in state preparation and measurement to be less targeted and unable to change the sign of the \ac{RB} signal (as this would lead to clearly observable negative decay curves).

\subsubsection{The bound on the sub-dominant signal}
\label{sec:bound-subdominant-signal}
For both gate-dependent noise and when using a non-uniform measure $\nu$, there exist a sub-dominant decay in the expected \ac{RB} signal, the second summand in Eq.~\eqref{eq:thm:signal-form}.
The third part of Theorem~\ref{thm:rb-data-random-circuits} provides 
the bound%
~\eqref{eq:thm:subdominant-decay-bound} on sub-dominant decay.
The constant $c_\lambda$ can introduce a prefactor to the bound scaling polynomially in the dimension of the irrep $\lambda$.
Note that this prefactor stems from the inverse of the effective measurement frame in the filter function. It is a direct consequence of not implementing an inverse gate at the end of the sequences and can be seen as the price to pay for inversionless \ac{RB} compared to standard \ac{RB}.

To be more concrete, let us again assume that $\lambda$ is aligned with $M$.
For $\omega_\lambda$ multiplicity-free, $S_\lambda = s_\lambda \id_\lambda$, cp.~Eq.~\eqref{eq:frame-op-constant}.
If $s_\lambda \neq 0$ (otherwise we have $\FD_\lambda(m)=0$), we find the simple expressions $\snorm{S_\lambda^+} = s_\lambda^{-1}$ and $\tr S_\lambda = d_\lambda s_\lambda$.
Since $[P_\lambda, M]=0$, Theorem \ref{thm:rb-data-random-circuits} then states that $c_\lambda = \sqrt{d_\lambda/s_\lambda}$.
By Eqs.~\eqref{eq:lower-bound-Slambda} and \eqref{eq:upper-bound-Slambda}, $d_\lambda \leq d_\lambda/s_\lambda \leq d_\lambda^2$, hence the $\lambda$-dependent prefactor cannot exceed $d_\lambda$.
In the case of multiplicites, we have to adapt our argument slightly to use Eqs.~\eqref{eq:lower-bound-Slambda} and \eqref{eq:upper-bound-Slambda} for bounds on $c_\lambda = \sqrt{\tr(P_\lambda M)} \snorm{S_\lambda^+}$.
Recall  from Sec.~\ref{par:eigenvalues-frame-op} that $\tr(P_\lambda M) = \tr(S_\lambda) \geq d_\lambda \snorm{S_\lambda^+}^{-1}$, since $\snorm{S_\lambda^+}^{-1}$ is exactly the smallest non-zero eigenvalue of $S_\lambda$ and each eigenvalue occurs at least $d_\lambda$ times.
On the other hand, $\tr(P_\lambda M) \leq \tr(P_\lambda) = n_\lambda d_\lambda$ by Hölder's inequality.
Using the bounds \eqref{eq:lower-bound-Slambda} and \eqref{eq:upper-bound-Slambda} on $\snorm{S_\lambda^+}$, we then obtain the following inequalities:
\begin{equation}
  \sqrt{d_\lambda} \leq c_\lambda  \leq  d_\lambda \sqrt{n_\lambda d_\lambda} \, .
\end{equation}

For our examples from Sec.~\ref{sec:frame-operator}, namely unitary 2-groups, local products of unitary 2-groups, and the Heisenberg-Weyl group, all irreps are multiplicity-free and $\HW{n}{p}<G$.\footnote{Scalable unitary 2-groups are either dense in the unitary group or a suitable subgroup of the Clifford group containing $\HW{n}{p}$ \cite{guralnick_decompositions_2005,BannaiEtAl:2020:tgroups}, see also Ref.~\cite[Sec.~V]{haferkamp_efficient_2023} for a comprehensive summary.}
The latter fact implies that the projectors $P_\lambda$ are diagonal in the Weyl basis and in particular $[P_\lambda, M]=0$.
Thus, we can use $c_\lambda=\sqrt{d_\lambda / s_\lambda}$, and the $s_\lambda$ which have been computed in Sec.~\ref{sec:frame-operator}:
\begin{align}
 \sqrt{d_\Ad/s_\Ad} &= (d+1)\sqrt{d-1} \tag*{(unitary 2-groups)}, \\
 \sqrt{d_b/s_b} &= \left[(p+1)\sqrt{p-1}\right]^{n-|b|} \tag*{(local unitary 2-groups)}, \\
 \sqrt{d_{0,z}/s_{0,z}} &= 1 \tag*{(Heisenberg-Weyl group)}.
\end{align}

Besides $c_\lambda$, the additional factors $\sqrt{\osandwich{\rho}{P_\lambda}{\rho}}$ and $g(\delta_\lambda/\Delta_\lambda)$ appear in the bound \eqref{eq:thm:subdominant-decay-bound}.
Here, $\sqrt{\osandwich{\rho}{P_\lambda}{\rho}}\leq 1$ is bounded and can, in fact, be very small for small irreps.
If the implementation error $\delta_\lambda/\Delta_\lambda$ is bounded away from $1/4$, $g(\delta_\lambda/\Delta_\lambda)$ can be considered constant for all practical purposes.
We comment on this in more detail in Sec.~\ref{sec:sequence-lengths}.

\subsubsection{Proof of Theorem~\ref{thm:rb-data-random-circuits}}
\label{sec:proof-data-form}

At the core of our argument is a statement from matrix perturbation theory.
In Appendix~\ref{sec:perturbation-theory}, we collect relevant results from the perturbation theory of invariant subspace given in Ref.~\cite{stewart_matrix_1990}, and derive a corollary, Theorem~\ref{thm:perturbation-of-moment-operator}, that specifically applies to moment operators.

As a first step, we use the identity \eqref{eq:Flambda-as-Fourier-transform} for the expected \ac{RB} signal in terms of the Fourier transform of $\phi\nu$ restricted to the isotype $\omega_\lambda$:
\begin{equation}
 \FD_\lambda(m)
  = 
  \osandwichb{\rho}{X_\lambda S_\lambda^+\, \widehat{\phi\nu}[\omega_{\lambda}]^m  \big( X_\lambda^\dagger \tilde M \big) }{\tilde\rho} \, .
  \label{eq:Flambda-as-Fourier-transform-2}
\end{equation}
Before we proceed, we argue that the involved operators are, in fact, real since they act on the real vector space of Hermitian matrices.
Indeed, $\tau_\lambda$ is by assumption real and, thus, the isotypic component $V(\lambda)$ splits as $V(\lambda) = H(\lambda)\oplus i H(\lambda)$ where $H(\lambda)\subset\Herm(\mathcal H)=: H$, c.f.~Sec.~\ref{sec:setting}.
Since real representations have real characters, we find that the projector $P_\lambda$ is Hermiticity-preserving and we can choose a basis such that $X_\lambda$ and $X_\lambda^\dagger$ are too.
As $\tilde M$ is a quantum channel, we thus find that $X_\lambda^\dagger \tilde M$ is Hermiticity-preserving.
Next, it is immediate from its definition that $\widehat{\phi\nu}[\omega_{\lambda}]$ preserves the set of Hermiticity-preserving maps $V \rightarrow V(\lambda)$.
Recall that $S$ is a quantum channel, hence $S_\lambda$ has to be Hermiticity-preserving and thus $S_\lambda^+$ is Hermiticity-preserving, too (c.f.~Sec.~\ref{sec:operators-norms}).
Finally, this shows that all objects in Eq.~\eqref{eq:Flambda-as-Fourier-transform-2} can be treated as (super-)operators on the real vector space of Hermitian matrices, in particular they can be described by real matrices.
Thus, we only consider the restriction to $\Herm(\mathcal H)$ in the following.

We can write $\omega_\lambda = T^\dagger \left(\tau_\lambda \otimes \id_{n_\lambda}\right) T$ and $H(\lambda) = T^\dagger ( H_\lambda \otimes \R^{n_\lambda})$, for a suitable real orthogonal matrix $T$ and irreducible subspace $H_\lambda$.
As in Eq.~\eqref{eq:fourier-transform-reducible-rep}, we then find $\widehat{\phi\nu}[\omega_\lambda]^m(X_\lambda^\dagger \tilde M) = T^\dagger (\widehat{\phi\nu}[\tau_\lambda]^m \otimes \id_{n_\lambda})(T X_\lambda^\dagger \tilde M)$.
For the sake of notation, let us define the superoperators $\tilde M_\lambda \coloneqq T X_\lambda^\dagger \tilde M \in \Hom(H,H_\lambda\otimes \R^{n_\lambda})$ and $Q_\lambda^\dagger\coloneqq \oketbra{\tilde\rho}{\rho}X_\lambda S_\lambda^+ T^\dagger \in \Hom(H_\lambda\otimes \R^{n_\lambda},H) $.
Note that $\oketbra{\tilde M_\lambda}{Q_\lambda}$ is a linear operator on $\Hom(H,H_\lambda\otimes \R^{n_\lambda})$.
With this, we find
\begin{align}
  \FD_\lambda(m)%
  &= 
  \osandwichb{\rho}{X_\lambda S_\lambda^+\, \widehat{\phi\nu}[\omega_{\lambda}]^m  \big( X_\lambda^\dagger \tilde M \big) }{\tilde\rho}%
  \\
  &= \tr\left[ Q_\lambda^\dagger \left(\widehat{\phi\nu}[\tau_\lambda]^m \otimes \id_{n_\lambda}\right)(\tilde M_\lambda)  \right] \\
  &= \tr\left[ \left(\widehat{\phi\nu}[\tau_\lambda]^m \otimes \id_{n_\lambda}\right) \oketbra{\tilde M_\lambda}{Q_\lambda} \right]  .
  \label{eq:Flambda-matrix-inner-product}
\end{align}
We treat $\widehat{\phi\nu}[\tau_{\lambda}] =  \widehat{\omega\nu}[\tau_{\lambda}]  + E$ as a perturbation of the moment operator $\widehat{\omega\nu}[\tau_{\lambda}]$. 
Recall from Sec.~\ref{sec:rho-designs}, Eq.~\eqref{eq:moment-op-spectral-decomposition}, that $\widehat{\omega\nu}[\tau_{\lambda}]$ is block-diagonal where the upper block corresponds to the range of the projector $\widehat{\omega}[\tau_\lambda]$.
By assumption \eqref{eq:thm:initialbelief}, $\snorm{E}\leq \delta_\lambda < \Delta_\lambda/4$, hence we can invoke Thm.~\ref{thm:perturbation-of-moment-operator} to write
\begin{equation}
\label{eq:phihat-block-diagonalization}
\widehat{\phi\nu}[\tau_{\lambda}] 
= 
R_{\lambda,1} I_\lambda L_{\lambda,1}\ad + R_{\lambda,2} O_\lambda L_{\lambda,2}\ad,
\end{equation}
for suitable real operators $R_\lambda=[R_{\lambda,1},R_{\lambda,2}]$ and $L_\lambda=[L_{\lambda,1}, L_{\lambda,2}]$ with $L_\lambda^\dagger R_\lambda = \id$.
Moreover, $I_\lambda$ is a real linear operator on the $n_\lambda$-dimensional perturbed range of $\widehat{\omega}[\tau_{\lambda}] = R_{\lambda,1} L_{\lambda,1}^\dagger$.
Likewise, $O_\lambda$ is a real linear operator on the $(d_\lambda d^2 - n_\lambda)$-dimensional perturbation of the kernel.
{
From Thm.~\ref{thm:perturbation-of-moment-operator}, Eq.~\eqref{eq:spectral-resolution-moment-op}, it is immediate that $I_\lambda = X_{\lambda,1}^\dagger \widehat{\phi\nu}[\tau_{\lambda}] R_{\lambda,1}$ where $X_{\lambda,1}$ is a partial isometry such that $\widehat{\omega}[\tau_{\lambda}] = X_{\lambda,1} X_{\lambda,1}^\dagger$.
Hence, we conclude that if $\nu=\mu$ is the Haar measure, $I_\lambda$ is exactly the matrix $M_\lambda$ in Ref.~\cite[Thm.~8]{helsen_general_2022}.
}
We arrive at the following expression:
\begin{align}
  \FD_\lambda(m)
  &= 
  \tr\Big[ (L_{\lambda,1}^\dagger\otimes \id_{n_\lambda}) \oketbra{\tilde M_\lambda}{Q_\lambda}\, ( R_{\lambda,1}\otimes \id_{n_\lambda}) \\
  &\qquad \times ( I_\lambda^m \otimes \id_{n_\lambda} ) \Big] \\
  &\quad +  
  \tr\Big[ (L_{\lambda,2}^\dagger\otimes \id_{n_\lambda}) \oketbra{\tilde M_\lambda}{Q_\lambda}\, ( R_{\lambda,2}\otimes \id_{n_\lambda}) \\
  &\qquad \times ( O_\lambda^m \otimes \id_{n_\lambda}) \Big] \\
  &=
  \tr\left[A_\lambda I_\lambda^m\right]
  + 
  \tr\left[ B_\lambda O_\lambda^m \right] ,
  \label{eq:perturbation-2nd-term_moment}
\end{align}
with $A_\lambda \coloneqq L_{\lambda,1}^\dagger \tr_{n_\lambda}\left(\oketbra{\tilde M_\lambda}{Q_\lambda}\right) R_{\lambda,1} $ and $B_\lambda \coloneqq L_{\lambda,2}^\dagger \tr_{n_\lambda}\left(\oketbra{\tilde M_\lambda}{Q_\lambda}\right) R_{\lambda,2} $.
In particular, we have the following expression used in Sec.~\ref{sec:dominantSignal}:
\begin{align}
 \tr(A_\lambda)
 &=
 \tr\left[ \left( \widehat{\omega}[\tau_\lambda] \otimes \id_{n_\lambda} \right) \oketbra{\tilde M_\lambda}{Q_\lambda} \right] \\
 &=
 \osandwichb{\rho}{X_\lambda S_\lambda^+\, \widehat{\omega}[\omega_{\lambda}] \big( X_\lambda^\dagger \tilde M \big) }{\tilde\rho} \,.
 \label{eq:tr-Alambda}
\end{align}
Here, we used that $R_{\lambda,1} L_{\lambda,1}^\dagger = \widehat{\omega}[\tau_\lambda]$ and then traced the steps leading to Eq.~\eqref{eq:Flambda-matrix-inner-product} backwards.

The claimed spectral bound on $O_\lambda$ follows directly from Thm.~\ref{thm:perturbation-of-moment-operator}, as well as {$\snorm{I_\lambda - \id} < 2\delta_\lambda$.
The latter statement already shows that $\spec_{\C}(I_\lambda) \subset D_{2\delta}(1)$.
Note that $\widehat{\phi\nu}[\omega]$ maps quantum channels to completely positive, trace-non increasing maps and thus has $\diamond\rightarrow\diamond$ norm at most 1.
Hence, its eigenvalues have absolute value $\leq 1$ and the same holds for $\widehat{\phi\nu}[\tau_\lambda]$ as it corresponds to a block in the block diagonalization of $\widehat{\phi\nu}[\omega]$ by the irreps of $\omega$.
Since every eigenvalue of $I_\lambda$ is also an eigenvalue of $\widehat{\phi\nu}[\tau_\lambda]$, we conclude that $\spec_{\C}(I_\lambda) \subset D_1(0) \cap D_{2\delta}(1)$}.
This establishes the signal form \eqref{eq:thm:signal-form}.

Next, we can bound the subdominant decays as follows 
\begin{align}
 \MoveEqLeft[1]|\tr\left[ B_\lambda O_\lambda^m \right]|\\
 &\leq
 \trnorm{B_\lambda}\snorm{O_\lambda}^m \\
 &\leq 
 \trnorm[\big]{(L_{\lambda,2}^\dagger\otimes \id_{n_\lambda}) \oketbra{\tilde M_\lambda}{Q_\lambda}\, ( R_{\lambda,2}\otimes \id_{n_\lambda})}
 \snorm{O_\lambda}^m \\
 &= 
 \twonorm[\big]{(L_{\lambda,2}^\dagger\otimes \id_{n_\lambda})(\tilde M_\lambda)}
 \twonorm[\big]{(R_{\lambda,2}^\dagger\otimes \id_{n_\lambda})(Q_\lambda)}
 \snorm{O_\lambda}^m \\
 &\leq
 \snorm{L_{\lambda,2}}\snorm{R_{\lambda,2}} \twonorm{X_\lambda^\dagger \tilde M} 
 \twonorm{S_\lambda^+X_\lambda\oketbra{\rho}{\tilde\rho}}
 \snorm{O_\lambda}^m.
\end{align}
Here, we have used that the partial trace is a contraction w.r.t.~to trace norm and that we have $\twonorm{AB}\leq \twonorm{A}\snorm{B}$.
To proceed, recall that $\tilde M = M\EM$ and $M=\sum_i\oketbra{E_i}{E_i}$ is a projection i.e.~$M^2=M$.
We have $\twonorm{X_\lambda^\dagger \tilde M}^2 = \tr(P_\lambda M \EM \EM^\dagger M)$.
The superoperator $M \EM \EM^\dagger M$ is completely positive and self-adjoint, however, it is generally not trace-preserving. 
The range of $P_\lambda$ for $\lambda$ non-trivial has to lie within the traceless subspace of $\End(\mathcal H)$.
Thus, for the trace inner product of $P_\lambda$ and $M \EM \EM^\dagger M$, only the part of $M \EM \EM^\dagger M$ restricted to the traceless subspace plays a role.
In particular, we can without loss of generality replace $M \EM \EM^\dagger M$ by its projection onto unital and trace-preserving quantum channels, since it only changes $M \EM \EM^\dagger M$ outside of the traceless subspace.
However, $M$ is already unital and trace-preserving, thus the projection of $M \EM \EM^\dagger M$ is in fact the projection of $\EM \EM^\dagger$, conjugated by $M$.
Since unital and trace-preserving quantum channels have spectral norm 1, we have $\snorm{\EM\EM^\dagger} = 1$, cf.~Ref.~\cite[Thm.~4.27]{Wat18}, and thus we find using Hölder's inequality:
\begin{align}
   \twonorm{X_\lambda^\dagger \tilde M}^2
   &= 
   \tr(P_\lambda M \EM \EM^\dagger M)  \\
   &\leq 
   \trnorm{P_\lambda M} \snorm{\EM \EM^\dagger}  
   \leq
   \min\{d_\lambda,d\} .
   \label{eq:main-thm-proof-1}
\end{align}
In particular, if $[P_\lambda,M]=0$, we obtain the refined upper bound:
\begin{align}
 \twonorm{X_\lambda^\dagger \tilde M}^2
   &= 
   \tr(P_\lambda M \EM \EM^\dagger M)  \\
   &\leq 
   \trnorm{P_\lambda M} \snorm{\EM \EM^\dagger}  \\
   &=
   \tr(P_\lambda M) 
   = 
   \tr(S_\lambda).
   \label{eq:main-thm-proof-2}
\end{align}

Finally, the state-dependent part can be written as
\begin{align}
 \twonorm{S_\lambda^+ X_\lambda\oketbra{\rho}{\tilde\rho}}
 &\leq \snorm{S_\lambda^+} \sqrt{\osandwich{\rho}{P_\lambda}{\rho}} \, \twonorm{\tilde\rho} \\
 &\leq \snorm{S_\lambda^+} \sqrt{\osandwich{\rho}{P_\lambda}{\rho}}\, .
\end{align}
Combining the above bounds, we define
\begin{align}
 c_\lambda
 &\coloneqq
 \begin{cases} 
  \sqrt{\tr S_\lambda} \snorm{S_\lambda^+} & \text{ if } [P_\lambda,M]=0 \\
  \min\{\sqrt{d_\lambda},\sqrt{d}\} \snorm{S_\lambda^+} & \text{ else}
 \end{cases}.
\end{align}
To obtain the final bound, we use the following result from Thm.~\ref{thm:perturbation-of-moment-operator}: 
\begin{align}
 \snorm{L_{\lambda,2}} \snorm{R_{\lambda,2}} & \leq g(\delta_\lambda/\Delta_\lambda)  \, ,\\
 g(x) &\coloneqq (1-4x)x + \frac{1+x}{1-4x} \, .
\end{align}
Combining the above results, we then find
\begin{align}
 \MoveEqLeft[2]\abs[\big]{\FD_\lambda(m) - \tr\left[A_\lambda I_\lambda^m\right] }\\
 &\leq 
 c_\lambda \,
 \sqrt{\osandwich{\rho}{P_\lambda}{\rho}}
 \snorm{L_{\lambda,2}}\snorm{R_{\lambda,2}} 
 \snorm{O_\lambda}^m  \\
 &<
 c_\lambda\, \sqrt{\osandwich{\rho}{P_\lambda}{\rho}}\, g(\delta_\lambda/\Delta_\lambda) (1-\Delta_\lambda + 2\delta_\lambda)^m.
\end{align}

\subsubsection{Filtering onto trivial irrep: Measuring average trace-preservation}
\label{sec:filtering-trivial-irrep}

In the proof of Theorem~\ref{thm:rb-data-random-circuits}, we assumed that the irrep $\lambda$ is not trivial.
We treat this case separately in this section, as the arguments simplify considerably.
The filtered \ac{RB} protocol for the trivial irrep and with $G$ an exact unitary $1$-design has been proposed as a protocol to detect ``incoherent'' leakage errors in Ref.~\cite{WalBarEme16b}. See also Ref.~\cite{chasseur_complete_2015} in this context.
Going beyond simply aiming at completeness,
this section, thus provides guarantees for incoherent leakage benchmarking for non-uniform measures and gate-dependent noise.
We think that our framework can be extended to also capture more general (coherent) leakage errors, we however leave this for future work.

In the following, we assume that the only trivial irrep in $\omega$ is spanned by the identity matrix $\one$.
Using that $P_1 = \oketbra{\one}{\one}/d$ and $S_1 = P_1$, we find that
\begin{align}
 \MoveEqLeft[1] F_1(m) \\
 &= \frac{1}{d} \obraket{\rho}{\one} \osandwich{\one}{S_1^+ \widehat{\phi\nu}[\omega](\tilde M)}{\tilde \rho} \\
 &= \frac{1}{d} \sum_{i\in[d]} \int_G \osandwich{\tilde E_i}{\phi(g_m)\dots\phi(g_1)}{\tilde\rho} \dd\nu(g_1)\dots\dd\nu(g_m) \\
 &= \frac{1}{d} \int_G \osandwich{\tilde\one}{\phi(g_m)\dots\phi(g_1)}{\tilde\rho} \dd\nu(g_1)\dots\dd\nu(g_m)\, .
\end{align}
Here, we set $\tilde\one\coloneqq\EM^\dagger(\one)$ which is simply $\one$ if the measurement noise is trace-preserving.
In this case,  $F_1(m)$ can be interpreted as the average trace-preservation of a sequence of length $m$.
In particular, if $\phi$ (and $\EM$) is trace-preserving, $F_1(m) = 1/d$ for all $m$.

Note that the Fourier transform evaluated at the trivial irrep $\tau_1(g)\equiv 1$ can be identified with the integral over $\phi$:
\begin{equation}
\widehat{\phi\nu}[\tau_1]
=
\int_G 1(\argdot)\phi(g) \dd\nu(g) 
\simeq
\int_G \phi(g) \dd\nu(g) \,.
\end{equation}
Thus, we indeed recover a formula similar to the non-trivial case:
\begin{equation}
 F_1(m)
 =
 \frac{1}{d}\osandwich{\tilde\one}{\widehat{\phi\nu}[\tau_1]^m}{\tilde\rho}\, .
 \label{eq:rb-data-trivial}
\end{equation}
We can then show that $F_1(m)$ has the form of an exponential decay by treating $\widehat{\phi\nu}[\tau_1]$ as a perturbation of $\widehat{\omega\nu}[\tau_1]$.

\begin{theorem}[Filtering onto trivial irrep]
Suppose that the trivial irrep is multiplicity-free and $\widehat{\omega\nu}[\tau_1] \simeq \int_G \omega(g)\dd\nu(g)$ has a spectral gap $\Delta_1 > 0$.
If there is 
$\delta_1 > 0$ such that
\begin{equation}
 \snorm[\bigg]{ \int_G \big( \phi(g) - \omega(g) \big) \dd\nu(g) } \leq \delta_1 < \frac{\Delta_1}{4}
\end{equation}
then the filtered \ac{RB} signal \eqref{eq:Flambda-as-Fourier-transform} is given by
\begin{equation}
 F_1(m) = \frac{1}{d} \left( \frac{1}{d}  \tr\EM(\one) \, I_1^m + \tr(B_1 O_1^m) \right),
\end{equation}
where $\EM$ is the trace non-increasing measurement noise channel, $1-2\delta_1 < I_1 \leq 1$ and $B_1$ and $O_1$ are real operators on the traceless subspace of $\Herm(\C^d)$.
Moreover, $I_1$ and $O_1$ do not depend on the initial state and measurement, and we have $I_1=1$ if $\phi(g)$ is trace-preserving $\nu$-almost everywhere.
Finally, we have the bounds $\snorm{O_1} \leq 1 + \Delta_1 - 2\delta_1$ and
\begin{equation}
 \frac{1}{d} \abs[\big]{\tr(B_1 O_1^m)} \\
 \leq 
 c_1 \big(1 + \Delta_1 - 2\delta_1\big)^m \, ,
\end{equation}
where 
\begin{equation}
 c_1 = \left[
  \frac{1 + 2 \sqrt{d}}{\Delta_1/\delta_1 - 4}  
  + 2 \, \twonorm{\EM(\one)_0} \left( 1 +  \frac{1}{\Delta_1/\delta_1 - 4} \right)
  \right] \,,
\end{equation}
and $\EM(\one)_0 = \EM(\one) - \tr(\EM(\one)) \, \one/d$ is the traceless part of $\EM(\one)$.
\end{theorem}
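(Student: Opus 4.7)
The plan is to specialize the proof of Thm.~\ref{thm:rb-data-random-circuits} to the trivial irrep, exploiting two simplifications: the irrep is one-dimensional ($n_1 = 1$), so the dominant block reduces to a scalar, and the ideal spectral projector is the explicit rank-one object $\widehat{\omega}[\tau_1] = \oketbra{\one}{\one}/d$. The starting point is the Fourier-type identity \eqref{eq:rb-data-trivial}, $F_1(m) = \tfrac{1}{d}\osandwich{\tilde\one}{\widehat{\phi\nu}[\tau_1]^m}{\tilde\rho}$, which reduces the analysis to the $m$-th power of a single operator $\widehat{\phi\nu}[\tau_1] \in \End(\End(\mathcal H))$.

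By hypothesis, $\widehat{\phi\nu}[\tau_1] - \widehat{\omega\nu}[\tau_1] = \int_G(\phi(g) - \omega(g))\dd\nu(g)$ has spectral norm at most $\delta_1 < \Delta_1/4$, so Thm.~\ref{thm:perturbation-of-moment-operator} produces a block diagonalization $\widehat{\phi\nu}[\tau_1] = R_1 I_1 L_1^\dagger + R_2 O_1 L_2^\dagger$ with a scalar $I_1 \in (1-2\delta_1, 1]$ and $\snorm{O_1} \leq 1 - \Delta_1 + 2\delta_1$. When $\phi$ is trace-preserving $\nu$-almost everywhere, $\one$ is a genuine left eigenvector of $\widehat{\phi\nu}[\tau_1]$ at eigenvalue $1$, forcing $I_1 = 1$ exactly. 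Substituting the block decomposition yields $F_1(m) = \tfrac{a_1}{d}\, I_1^m + \tfrac{1}{d}\tr(B_1 O_1^m)$ with $a_1 = \osandwich{\tilde\one}{R_1 L_1^\dagger}{\tilde\rho}$ and $B_1 = L_2^\dagger \oketbra{\tilde\rho}{\tilde\one}R_2$. Writing $R_1 L_1^\dagger = \widehat{\omega}[\tau_1] + (\text{a correction of size } \mathcal O(\delta_1/\Delta_1))$ from the same perturbation theorem, the leading piece of $a_1$ evaluates to $\tfrac{1}{d}\obraket{\tilde\one}{\one}\obraket{\one}{\tilde\rho} = \tfrac{1}{d}\tr\EM(\one)$ (using $\tr\tilde\rho = 1$), and the correction is absorbed into the subdominant term.

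The subdominant estimate follows from the Cauchy-Schwarz factorization
\begin{equation}
\tfrac{1}{d}\trnorm{B_1}\snorm{O_1}^m
 \leq
 \tfrac{1}{d}\snorm{L_2}\snorm{R_2}\twonorm{X_2^\dagger\tilde\one}\twonorm{X_2^\dagger\tilde\rho}(1-\Delta_1+2\delta_1)^m,
\end{equation}
with $X_2$ the isometry onto the perturbed subdominant subspace and $\snorm{L_2}\snorm{R_2} \leq g(\delta_1/\Delta_1)$ from Thm.~\ref{thm:perturbation-of-moment-operator}. The refinement that produces the $\twonorm{\EM(\one)_0}$-dependence in $c_1$, rather than a naive $\sqrt{d}$ bound, is that $\tilde\one = \EM^\dagger(\one)$ is close to $\one$, and $\one$ lies in the range of the \emph{unperturbed} projector $\widehat{\omega}[\tau_1]$, hence orthogonal to the unperturbed subdominant subspace. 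Decomposing $\tilde\one = \tfrac{\tr\EM(\one)}{d}\one + \EM(\one)_0$, the first summand is annihilated by the unperturbed $X_2^\dagger$ and survives only through the $\mathcal O(\delta_1/\Delta_1)$-displacement of the perturbed subspace, contributing the $(1+2\sqrt{d})/(\Delta_1/\delta_1 - 4)$ term of $c_1$, while the second summand contributes directly via its two-norm multiplied by the full $g(\delta_1/\Delta_1)$ factor. Combined with the uniform bound $\twonorm{X_2^\dagger\tilde\rho}\leq 1$ and collection of constants, this yields the stated $c_1$.

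The main technical obstacle is precisely this bookkeeping of $\mathcal O(\delta_1/\Delta_1)$ corrections: one must consistently allocate the perturbative displacement of the spectral projector so that the clean dominant coefficient $\tfrac{1}{d^2}\tr\EM(\one)$ survives, and simultaneously exploit the decomposition of $\tilde\one$ into its identity and traceless components so that only $\twonorm{\EM(\one)_0}$ — not the generic $\sqrt{d}$ norm of $\tilde\one$ — appears in the prefactor. This reflects the fact that the identity direction is singled out as a (nearly) protected fixed point of approximate trace-preserving measurement noise, a feature absent for non-trivial irreps.
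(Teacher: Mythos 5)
Your overall route is the paper's: start from Eq.~\eqref{eq:rb-data-trivial}, apply the two-sided perturbation theorem (Thm.~\ref{thm:perturbation-of-moment-operator}) to the one-dimensional trivial block with unperturbed projector $\oketbra{\one}{\one}/d$, get $I_1=1$ in the trace-preserving case from the vanishing of $E_{11}$ and $E_{12}$, and obtain the refined constant $c_1$ by splitting $\tilde\one=\EM^\dagger(\one)$ into its identity and traceless parts so that the identity component enters only through the $O(\delta_1/\Delta_1)$ displacement of the subdominant subspace. That mechanism is exactly what the paper implements when it expands $R_{2}^\dagger(\tilde\one)$ and $L_{2}^\dagger(\tilde\rho)$ in terms of $q_1,q_2$ and then uses the bounds on $\snorm{Q_1}$, $\snorm{Q_2}$ and $\snorm{Q_1}\snorm{Q_2}$.

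There is, however, one genuine gap: your treatment of the dominant coefficient. You write $R_1L_1^\dagger=\widehat{\omega}[\tau_1]+O(\delta_1/\Delta_1)$ and claim the correction ``is absorbed into the subdominant term.'' It cannot be: that correction multiplies $I_1^m$, not $O_1^m$, and $I_1^m\geq(1-2\delta_1)^m$ decays far more slowly than the claimed bound $(1-\Delta_1+2\delta_1)^m$ on $\tr(B_1O_1^m)$, so folding it into the second summand would invalidate both the exact form of the dominant coefficient and the decay estimate for the remainder. The paper sidesteps this entirely by invoking the identity $R_1L_1^\dagger=\Pi_1=\oketbra{\one}{\one}/d$ asserted in Thm.~\ref{thm:perturbation-of-moment-operator}, so the dominant coefficient is exactly $\tr(\EM(\one))/d^2$ and there is no correction to dispose of; your argument needs this identity (or an equivalent exact evaluation of $\osandwich{\tilde\one}{R_1L_1^\dagger}{\tilde\rho}$), not an $O(\delta_1/\Delta_1)$ expansion. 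A second, more minor, inconsistency: your displayed Cauchy--Schwarz step uses $\twonorm{X_2^\dagger\tilde\one}$ with the \emph{unperturbed} isometry $X_2$, under which the identity component of $\tilde\one$ is annihilated exactly and the $(1+2\sqrt{d})/(\Delta_1/\delta_1-4)$ term of $c_1$ could never arise; the correct factorization is $\trnorm{B_1}=\twonorm{R_{2}^\dagger(\tilde\one)}\,\twonorm{L_{2}^\dagger(\tilde\rho)}$ with the perturbed $R_2,L_2$, which is what your subsequent prose (and the paper's explicit computation) actually analyzes.
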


\begin{proof}
We start from Eq.~\eqref{eq:rb-data-trivial} and apply perturbation theory to the block diagonalization of $\widehat{\omega\nu}[\tau_1]\simeq \int_G \omega(g)\dd\nu(g)$ as in the proof of Thm.~\ref{thm:rb-data-random-circuits}, c.f.~Sec.~\ref{sec:proof-data-form}.
Moreover, we can again restrict to the action on the real vector space of Hermitian matrices $\Herm(\C^d)$.
We then obtain from Thm.~\ref{thm:perturbation-of-moment-operator} in App.~\ref{sec:perturbation-theory} that
\begin{equation}
 \widehat{\phi\nu}[\tau_1] = R_{1} I_1 L_{1}^\dagger + R_{2} O_1 L_{2}^\dagger = I_1 P_1 + R_{2} O_1 L_{2}^\dagger \,.
\end{equation}
Here, we used that the first block is $1\times 1$, thus the first term becomes $I_1\, R_{1} L_{1}^\dagger$, and $R_{1} L_{1}^\dagger = P_1 = \oketbra{\one}{\one}/d$ is the projection onto the first block, i.e.~onto the trivial irrep of $\omega$.
The other block corresponds to the traceless subspace.
We then obtain
\begin{align}
 F_1(m) 
 &=
 \frac{1}{d}\osandwich{\tilde\one}{P_1}{\tilde\rho} I_1^m + \frac{1}{d} \osandwich{\tilde\one}{R_{2} O_1^m L_{2}^\dagger}{\tilde\rho} \\
 &=
 \frac{\tr\tilde\one}{d^2} I_1^m + \frac{1}{d} \osandwich{\tilde\one}{R_{2} O_1^m L_{2}^\dagger}{\tilde\rho} \,.
\end{align}

Let $E = \widehat{\phi\nu}[\tau_1] - \widehat{\omega\nu}[\tau_1]$ be the perturbation error, then we can use the formula $I_1 = 1 + E_{11} + E_{12}Q_1$ from Thm.~\ref{thm:perturbation-of-moment-operator}, where $E_{ij}=X_i^\dagger E X_j$ are the blocks of perturbation (here we rename the operators $P_i$ in Thm.~\ref{thm:perturbation-of-moment-operator} to $Q_i$ to avoid confusion).
Since the first block is one-dimensional, we can make the identification $X_1 \equiv \oket{\one}/\sqrt{d}$.
Note that if $\phi$ is trace-preserving $\nu$-almost everywhere, then
\begin{equation}
 X_1^\dagger E  = \frac{1}{\sqrt{d}} \int_G \obra{\one}(\phi(g)-\omega(g)) = \frac{1}{\sqrt{d}} \int_G \big[\obra{\one} - \obra{\one}\big] = 0.
\end{equation}
Thus, $E_{11} = 0$ and $E_{12}=0$ which shows that $I_1 = 1$ in this case.

Next, we use the formulae $R_{2} = X_1 Q_2 + X_2 ( Q_1Q_2 + \id_2)$ and $L_{2} = X_2 - X_1 Q_1^\dagger$.
Recall that $Q_1$ maps from the first block to the second block and vice versa for $Q_2$. 
Since the second block corresponds to the traceless subspace, we can make the further identifications $Q_1 =  \oket{q_1}$ and $Q_2 = \obra{q_2}$ for suitable traceless operators $q_1,q_2$ on $\Herm(\C^d)$. 
Note that we then have $\snorm{Q_i} = \twonorm{q_i}$ for $i=1,2$. 
Let $\tilde\rho_0 \simeq \tilde\rho - \one/d$ and $\tilde\one_0 \simeq \tilde\one - \tr(\tilde\one) \one/d $ be the traceless part of $\tilde\rho$ and $\tilde\one$, respectively.
We then find that
\begin{align}
 R_{2}^\dagger(\tilde\one) 
 &= (\id_2 + Q_2^\dagger Q_1^\dagger) X_2^\dagger(\tilde\one) + Q_2^\dagger X_1^\dagger(\tilde\one) \\
 &= \left( \obraket{q_1}{\tilde\one_0} + \frac{\tr\tilde\one}{\sqrt{d}} \right) q_2  + \tilde\one_0 \, ,\\
 L_{2}^\dagger(\tilde\rho)
 &= X_2^\dagger(\tilde\rho) - \frac{\tr\tilde\rho}{\sqrt{d}} \, q_1
 = \tilde\rho_0 - \frac{1}{\sqrt{d}} q_1\,.
\end{align}
Next, we use the bounds \eqref{eq:intermediateP1bound}, \eqref{eq:P2bound}, and \eqref{eq:P1P2bound} on $Q_1$ and $Q_2$, namely
\begin{align}
 \snorm{Q_1} &< 4\frac{\delta_1}{\Delta_1} < 1 \, , \qquad
 \snorm{Q_2} \leq \frac{2}{\Delta_1/\delta_1 - 4} \, , \\
 \snorm{Q_1}\snorm{Q_2} &\leq \frac{4 \delta_1/\Delta_1}{\Delta_1/\delta_1 - 4} \leq \frac{1}{\Delta_1/\delta_1 - 4} \, .
\end{align}
Moreover, we use $\tr\tilde\one = \tr(\EM(\one)) \leq \tr\one = d$ and $\twonorm{\tilde\rho_0}\leq 1$ to obtain
\begin{widetext}
 \begin{align}
 \MoveEqLeft[2]\osandwich{\tilde\one}{R_{2} O_1^m L_{2}^\dagger}{\tilde\rho}\\
 &=
 \left(\frac{\tr\tilde\one}{\sqrt{d}} + \obraket{\tilde\one_0}{q_1} \right) \osandwich{q_2}{O_1^m}{\tilde\rho_0 - q_1/\sqrt{d}}
 +
 \osandwich{\tilde\one_0}{O_1^m}{\tilde\rho_0 - q_1/\sqrt{d}} \\
 &\leq
 \left(\frac{\tr\tilde\one}{\sqrt{d}} + \twonorm{\tilde\one_0} \snorm{Q_1} \right) 
  \left( \snorm{Q_2} \twonorm{\tilde\rho_0} + \frac{1}{\sqrt{d}} \snorm{Q_1}\snorm{Q_2}  \right) \snorm{O_1}^m
 +
  \twonorm{\tilde\one_0} \left( \twonorm{\tilde\rho_0} + \frac{1}{\sqrt{d}} \snorm{Q_1} \right)\snorm{O_1}^m \\
 &\leq
  \left( \sqrt{d} \snorm{Q_2} + \snorm{Q_1}\snorm{Q_2}\right) \snorm{O_1}^m 
 +
  \twonorm{\tilde\one_0} \left( 1 + \frac{1}{\sqrt{d}}\snorm{Q_1}\right) 
  \bigg( 1 +  \snorm{Q_1}\snorm{Q_2} \bigg)  \snorm{O_1}^m  \\
 &\leq
  \left[
  \frac{1 + 2 \sqrt{d}}{\Delta_1/\delta_1 - 4}  
  + 2 \, \twonorm{\tilde\one_0} \left( 1 +  \frac{1}{\Delta_1/\delta_1 - 4} \right)
  \right]
  \snorm{O_1}^m  \, .
\end{align}
\end{widetext}
The remaining claims follow from Thm.~\ref{thm:perturbation-of-moment-operator} as in Thm.~\ref{thm:rb-data-random-circuits}.
\end{proof}

\subsubsection{Signal form for representations of complex and quaternionic type}
\label{sec:complex-quaternionic-type}

{

In Sec.~\ref{sec:setting}, we assumed that the representation $\omega$ decomposes into the same number of irreps when considered as a complex representation on $V=L(\mathcal H)$ or as a real representation on $H = \Herm(\mathcal H)$.
In other words, all irreps of $\omega$ are of \emph{real type}.
However, this assumption may not always be fulfilled.
For instance, consider the cyclic group $G=\langle R \rangle$ generated by $R:=e^{i\pi/4 X}$, the single-qubit $\pi/2$-rotation about the $X$ axis.
The irreps of $\omega$ correspond to the invariant subspaces of
\begin{equation}
 R(\cdot)R^\dagger \simeq
 \begin{bmatrix}
  1 & 0 \\
  0 & 1 \\
 \end{bmatrix}
 \oplus
 \begin{bmatrix}
  0 & -1 \\
  1 & 0
 \end{bmatrix}\, .
\end{equation}
Clearly, the second block is not diagonalizable over $\R$.
Hence, in addition to two trivial irreps, $\omega$ has a two-dimensional irrep over $\R$.
However, over the complex numbers, the latter irrep decomposes further into two one-dimensional irreps.
This changes the expected signal form compared to Thm.~\ref{thm:rb-data-random-circuits}, as we briefly discuss in the following.

In Sec.~\ref{sec:real-representations} of the preliminaries, we discussed that a real irrep $\tau$ of a group $G$ may become reducible after complexification.
The possible cases that can occur are related to the commutant of $\tau$ which, by Frobenius' theorem, is either isomorphic to $\R$, $\C$, or the quaternions $\mathbb H$.
For the statement of our results, we consider the first case which implies that the complexification of $\tau$ is again irreducible.
The example given in terms of $R$ above falls into the second case.

The results in Sec.~\ref{sec:data-guarantees} can be generalized to irreps of complex and quaternionic type by noticing that Thm.~\ref{thm:rb-data-random-circuits} relies on the block diagonalization of the moment operator $\widehat{\omega\nu}[\tau_\lambda]$ in Eq.~\eqref{eq:moment-op-FT-spectral-decomposition}:
\begin{equation}
 \widehat{\omega\nu}[\tau_\lambda]
 \simeq
 \begin{bmatrix}
  \id & 0 \\
  0 & \Lambda_\lambda
 \end{bmatrix}\,.
\end{equation}
Recall from Sec.~\ref{sec:proof-data-form} that we can consider $\widehat{\omega\nu}[\tau_\lambda]$ as a real operator and hence its matrix representation is real.
Here, the identity block acts on the the range of $\widehat{\omega}[\tau_\lambda]$ and $\Lambda_\lambda$ acts on its kernel.
Similar to Prop.~\ref{prop:fourier-transform}, one finds that the range is exactly $(\tau_\lambda)'\otimes \R^{n_\lambda}$, where $(\tau_\lambda)'$ is the commutant of $\tau_\lambda$ and $n_\lambda$ is the multiplicity of $\tau_\lambda$ in $\omega$.
Schur's lemma then results in different dimensions of the upper left block, namely $n_\lambda$, $2n_\lambda$, or $4n_\lambda$, depending on whether the type of $\tau_\lambda$ is real, complex, or quaternionic, respectively.
The remaining arguments in the proof of Thm.~\ref{thm:rb-data-random-circuits} are still valid, thus we can conclude that the expected signal form is
\begin{equation}\tag{$\mathbb S'$}\label{eq:thm:signal-form-complex-quaternionic}
   \FD_\lambda(m)  =  \tr\left(A_\lambda I_\lambda^m\right) + \tr\left( B_\lambda O_\lambda^m\right) ,
 \end{equation}
 where now $A_\lambda, I_\lambda\in \R^{t_\lambda n_\lambda \times t_\lambda n_\lambda}$ and $t_\lambda=1,2,4$ when $\tau_\lambda$ is of real, complex, or quaternionic type, respectively.
Hence, the effect of irreps of complex or quaternionic type is to \emph{effectively increase the multiplicity}.
As a consequence, we obtain a matrix exponential decay in the expected signal, even for multiplicity-free irreps such as the example $G=\langle R \rangle$ above.
}

\subsection{Sampling complexity of filtered randomized benchmarking}
\label{sec:sampling-complexity}

The expression for the filtered \ac{RB} signal $\FD_\lambda(m)$ from~Eq.~\eqref{eq:filtered-rb-data} has the form of an expectation value for the random variable $f_\lambda(i, g_1,\dots,g_m)$ where $(i,g_1,\dots,g_m) \sim p(i|g_1,\dots,g_n)\dd\nu(g_1)\dots\dd\nu(g_m)$.
In the following, we consider the unbiased estimator $\hat{\FD}_\lambda(m)$, given as the mean of $N$ iid samples $(i^{(l)},g_1^{(l)},\dots,g_n^{(l)})$:
\begin{align}
\label{eq:FD-estimator}
  \hat{\FD}_\lambda(m)%
  &= \frac{1}{N} \sum_{l=1}^N f_\lambda(i^{(l)}, g_1^{(l)},\dots,g_m^{(l)})\, .
\end{align}
In this section, we derive bounds on the number of samples $N$ needed to guarantee that $\hat{\FD}_\lambda(m)$ is, with high probability, close to $\FD_\lambda(m)$.
We base our analysis on the variance of $\hat{\FD}_\lambda(m)$ because 
the function $f_\lambda$ may take on values as large as the Hilbert space dimension $d$. 
We have $\Var[\hat{\FD}_\lambda(m)] = \Var[f_\lambda]/N$, thus the second moment of $f_\lambda$ is key for our sampling complexity bounds.
To this end, we show that the second moment of $f_\lambda$ is close to the second moment in the idealized situation where all gates are noiseless and sampled from the Haar measure on $G$, provided that the sequence length $m$ is sufficiently large.
Hence, the noisy implementation and the non-uniform sampling cannot disturb the efficiency of filtered randomized benchmarking.

The proof strategy is as follows:
First, we show that the second moment $\EE[f_\lambda^2]$ has a similar form as the first moment $\EE[f_\lambda] = \FD_\lambda(m)$ and thus admits an analogous perturbative expansion as in Thm.~\ref{thm:rb-data-random-circuits}.
Then, we proceed by deriving appropriate bounds on the subdominant terms, in analogy to Sec.~\ref{sec:subdominant-decay}.
Finally, we combine these results to relate $\EE[f_\lambda^2]$ to its value in the idealized situation, and derive additive and relative-precision guarantees for the estimator $\hat{\FD}_\lambda(m)$.
For additive precision, we find that the sampling complexity is essentially the same as in the idealized situation.
In contrast, relative precision requires that the number of sampling increases with the sequence length as $1/I_\lambda^{2m}$ where $I_\lambda$ is the decay parameter from Thm.~\ref{thm:rb-data-random-circuits}.
This is however unavoidable since the signal $\FD_\lambda(m)$ decays as $I_\lambda^m$.

The estimator $\hat{\FD}_\lambda(m)$ is sometimes called a \emph{single-shot estimator} as it requires that every circuit is measured exactly once.
Another frequently used data acquisition scheme that also yields an estimator 
for $\FD_\lambda(m)$ is the following: 
Sample $N_C$ different random circuits according to $\nu$, and then take $N_M$ samples from the outcome distribution for each circuit.
We discuss the resulting \emph{multi-shot estimator} in App.~\ref{sec:estimators}, and show that its variance involves an additional term compared to the one of $\hat{\FD}_\lambda(m)$, which is a fourth moment of $\nu$ w.r.t.~$\omega$.
As explained in App.~\ref{sec:estimators}, the sampling complexity of such a scheme is generally higher than for the single-shot estimator (see also Ref.~\cite{helsen_thrifty_2022} in this context).
The precise difference depends on the Hilbert space dimension and makes using the single-shot estimator particularly important for small dimensions.
When the Hilbert space dimension is large compared to the inverse desired precision, using less sequences and more shots per sequence yields essentially the same sampling complexity.
The techniques in this section can be readily applied to the multi-shot estimator and we expect a qualitatively similar statement to Thm.~\ref{thm:sampling-complexity-additive} in this case.

In analogy to Eq.~\eqref{eq:Flambda-as-Fourier-transform} for the first moment $\FD_\lambda(m)$, the perturbative expansion of the second moment $\EE[f_\lambda^2] $ is based on the following observation:
\begin{align}
 \EE[f_\lambda^2]
  &=
   \sum_{i \in [d]} \int_{G^{m}} f_\lambda(i, g_1,\dots,g_m)^2 \\
  &\qquad \times
   p(i| g_1, \ldots, g_m) \dd\nu(g_1)\cdots\dd\nu(g_m) \\
  &=
  \sum_{i \in [d]}\int 
  \left[\osandwichb{\rho}%
  {P_\lambda S^+ 
  \omega(g_1)^\dagger\cdots\omega(g_m)^\dagger}{E_i}\right]^2  \\
  &\qquad \times
   \osandwichb{\tilde E_i}{\dd(\phi\nu)(g_m)\dots\dd(\phi\nu)(g_1)}{\tilde\rho} \\
  &= 
   \obra[\big]{\rho^{\otimes 2}}(P_\lambda S^+)^{\otimes 2} \\
  &\qquad \times
   \left( \int \omega(g)^{\dagger\otimes 2} (\argdot) \dd(\phi\nu)(g)\right)^m \left( \tilde M_3 \right)
   \oket[\big]{\tilde\rho} \\
  &=
  \osandwichb{\rho^{\otimes 2}}%
  {(X_\lambda S_\lambda^+)^{\otimes 2}\,
  \widehat{\phi\nu}[\omega_{\lambda}^{\otimes 2}]^m \big( X_\lambda^{\dagger\otimes 2} \tilde M_3 \big) }%
  {\tilde\rho}\, ,
  \label{eq:second-moment-as-fourier-transform}
\end{align}
where we have defined 
\begin{equation}
  \tilde M_3 
  \coloneqq 
  \sum_{i\in [d]} \oketbra{E_i\otimes E_i}{\tilde E_i}\, .
\end{equation}

Note that $\omega_\lambda^{\otimes 2}$ is generally reducible and can thus be decomposed into isotypic representations $\omega^{(2)}_\sigma$.
Moreover, $S_\lambda^{\otimes 2}$ commutes with $\omega_\lambda^{\otimes 2}$ and is thus block-diagonal in this decomposition.
Finally, we can write the partial isometry $X_\lambda^{\otimes 2}$ as a concatenation of partial isometries on the isotypic components. 
In summary, we have:
\begin{align}
 \omega_\lambda^{\otimes 2} &= \bigoplus_{\sigma \in \Irr(\omega_\lambda^{\otimes 2})} \omega^{(2)}_\sigma, &
 (S_\lambda^+)^{\otimes 2} &= \bigoplus_{\sigma \in \Irr(\omega_\lambda^{\otimes 2})} T_\sigma^+, \\
 X_\lambda^{\otimes 2} &= \bigoplus_{\sigma \in \Irr(\omega_\lambda^{\otimes 2})} Y_\sigma \, .
\end{align}
Hence, the second moment can be written as 
\begin{multline}
   \EE[f_\lambda^2] \\
   = 
   \sum_{\sigma \in \Irr(\omega_\lambda^{\otimes 2})} 
   \osandwichb{\rho^{\otimes 2}}%
   {Y_\sigma T_\sigma^+\,
   \widehat{\phi\nu}[\omega^{(2)}_\sigma]^m \big( Y_\sigma^{\dagger} \tilde M_3 \big) }%
   {\tilde\rho}\, .
\label{eq:RB-data-second-moment-decomp}
\end{multline}
The expressions on the right-hand side have the same form as the filtered \ac{RB} signal itself.
Thus, we can argue as in the proof of Theorem \ref{thm:rb-data-random-circuits} to compute the RHS of Eq.~\eqref{eq:RB-data-second-moment-decomp}, provided that the appropriate assumptions are fulfilled for every $\sigma\in\Irr(\omega_\lambda^{\otimes 2})$.
As it turns out, the multiplicity of $\sigma$ in $\omega_\lambda^{\otimes 2}$ does not affect the form of Eq.~\eqref{eq:RB-data-second-moment-decomp}. 

\begin{theorem}[Data guarantees for second moment of filtered \ac{RB} with random circuits]
\label{thm:second-moment}
Fix a non-trivial irrep $\tau_\lambda$ appearing in $\omega$.
Suppose that for all $\sigma\in\Irr(\tau_\lambda^{\otimes 2})$ the spectral gap of $\widehat{\omega\nu}[\tau_{\sigma}]$ is lower bounded by $\Delta_\sigma > 0$, and there are $\delta_\sigma >0$ such that
 \begin{equation}
  \snorm{\widehat{\phi\nu}[\tau_\sigma] - \widehat{\omega\nu}[\tau_\sigma]} \leq \delta_\sigma < \frac{\Delta_\sigma}{4}.
 \end{equation}
 Then, the second moment of the $\lambda$-filtered \ac{RB} signal estimator obeys
 \begin{equation}
   \EE[f_\lambda^2]  = \sum_{\substack{\sigma\in\Irr(\omega) \\ \cap\Irr(\tau_\lambda^{\otimes 2})}} \tr\left( C_\sigma I_\sigma^m\right) +  \sum_{\sigma \in \Irr(\tau_\lambda^{\otimes 2})} \tr\left( D_\sigma O_\sigma^m\right)\, .
 \label{eq:data-form-second-moment}
 \end{equation}
 Here, $C_\sigma,I_\sigma\in \R^{n_\sigma \times n_\sigma }$, where $n_\sigma$ is the multiplicity of $\tau_\sigma$ in $\omega$.
The matrices $I_\sigma$ and $O_\sigma$ do not depend on the initial state and measurement.  For $\sigma\in\Irr(\omega) \cap\Irr(\tau_\lambda^{\otimes 2})$, $I_\sigma$ and $O_\sigma$ are the same as in Thm.~\ref{thm:rb-data-random-circuits}.
We have the bounds
 \begin{align}
  \spec_{\C}(I_\sigma) & \subset D_1(0)\cap D_{2\delta_\sigma}(1) \,, \\
  \snorm{O_\sigma} & \leq 
    \begin{cases}
        1 - \Delta_\sigma + 2\delta_\sigma, & \text{if } \sigma\in\Irr(\omega), \\
        1 - \Delta_\sigma + \delta_\sigma, & \text{else}.
    \end{cases}
 \end{align}
Define the quantities
\begin{align}
 \Delta_\lambda^{(3)} &\coloneqq  \min_{\sigma\in\Irr(\tau_\lambda^{\otimes 2})} \Delta_\sigma, &
 r_\lambda^{(3)} &\coloneqq  \max_{\sigma\in\Irr(\tau_\lambda^{\otimes 2})} \delta_\sigma/\Delta_\sigma, 
\label{eq:sampling-complexity-max-bounds}
\end{align}
Then, we can bound the second sum of matrix exponentials as follows:
\begin{align}
 \abs[\bigg]{ \sum_{\sigma} \tr\left( D_\sigma O_\sigma^m\right) } 
 &\leq
 c_\lambda \snorm{S_\lambda^+} \osandwich{\rho}{P_\lambda}{\rho}\, g(r_\lambda^{(3)}) \\
 &\qquad \times \left(1 - \Delta_\lambda^{(3)} \left( 1 - 2 r_\lambda^{(3)} \right)\right)^m  \, ,
 \label{eq:sampling-complex-decay-bound}
\end{align}
where $c_\lambda$ is given in Thm.~\ref{thm:rb-data-random-circuits} and $g(x) \coloneqq  (1-4x)x + \frac{1+x}{1-4x}$.
\end{theorem}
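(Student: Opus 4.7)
The plan is to read Eq.~\eqref{eq:RB-data-second-moment-decomp} as a sum of objects each of which has \emph{exactly} the structural form of the filtered RB signal \eqref{eq:Flambda-as-Fourier-transform}, and then to apply the perturbative analysis of Thm.~\ref{thm:rb-data-random-circuits} term by term. Recall that $\omega_\lambda=\tau_\lambda^{\oplus n_\lambda}$, so $\omega_\lambda^{\otimes 2}$ is a (reducible) representation whose isotypic components are labelled by $\sigma\in\Irr(\tau_\lambda^{\otimes 2})$. Writing $\omega^{(2)}_\sigma\simeq \tau_\sigma\otimes\id_{n^{(2)}_\sigma}$ for these isotypes and using Eq.~\eqref{eq:fourier-transform-reducible-rep}, we get $\widehat{\phi\nu}[\omega^{(2)}_\sigma]\simeq\widehat{\phi\nu}[\tau_\sigma]\otimes \id_{n^{(2)}_\sigma}$, so the $m$-th power in Eq.~\eqref{eq:RB-data-second-moment-decomp} reduces to powers of $\widehat{\phi\nu}[\tau_\sigma]$ tensored with the identity on the multiplicity space, on which we can then perform matrix perturbation theory exactly as in the proof of Thm.~\ref{thm:rb-data-random-circuits}.

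For every $\sigma$ with $\delta_\sigma<\Delta_\sigma/4$, Thm.~\ref{thm:perturbation-of-moment-operator} decomposes
\[
\widehat{\phi\nu}[\tau_\sigma] \;=\; R_{\sigma,1}I_\sigma L_{\sigma,1}^\dagger + R_{\sigma,2}O_\sigma L_{\sigma,2}^\dagger,
\]
with the same spectral bounds as in Thm.~\ref{thm:rb-data-random-circuits}. Here the crucial dichotomy arises: if $\sigma\in\Irr(\omega)$, then by Prop.~\ref{prop:fourier-transform} the unperturbed moment operator $\widehat{\omega\nu}[\tau_\sigma]$ has an $n_\sigma$-dimensional identity block (with $n_\sigma$ the multiplicity of $\tau_\sigma$ in $\omega$), so $I_\sigma\in\R^{n_\sigma\times n_\sigma}$ survives and its eigenvalues cluster in $D_{2\delta_\sigma}(1)$; being determined only by $\tau_\sigma$, $\phi$ and $\nu$, it coincides with the $I_\sigma$ of Thm.~\ref{thm:rb-data-random-circuits}. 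If $\sigma\notin\Irr(\omega)$, then $\widehat{\omega}[\tau_\sigma]=0$ so there is no identity block to perturb; we directly bound the full operator by the triangle inequality, $\snorm{\widehat{\phi\nu}[\tau_\sigma]}\leq \snorm{\widehat{\omega\nu}[\tau_\sigma]}+\delta_\sigma\leq 1-\Delta_\sigma+\delta_\sigma$, which accounts for the missing factor of two in the bound on $\snorm{O_\sigma}$. Inserting the decomposition back into Eq.~\eqref{eq:RB-data-second-moment-decomp}, tensoring with $\id_{n^{(2)}_\sigma}$, and partial tracing over the multiplicity space yields the two sums in Eq.~\eqref{eq:data-form-second-moment}, with $C_\sigma,D_\sigma$ of the announced dimensions.

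The main obstacle is bounding the sum of \emph{all} subdominant contributions by the compact expression \eqref{eq:sampling-complex-decay-bound}. For each $\sigma$, the proof of Thm.~\ref{thm:rb-data-random-circuits} gives
\[
\abs{\tr(D_\sigma O_\sigma^m)}\leq \snorm{L_{\sigma,2}}\snorm{R_{\sigma,2}}\, \twonorm{Y_\sigma^\dagger \tilde M_3}\, \twonorm{T_\sigma^+ Y_\sigma^\dagger\oketbra{\rho^{\otimes 2}}{\tilde\rho}}\,\snorm{O_\sigma}^m .
\]
The functional factor $\snorm{L_{\sigma,2}}\snorm{R_{\sigma,2}}\leq g(\delta_\sigma/\Delta_\sigma)\leq g(r^{(3)}_\lambda)$ is handled by monotonicity of $g$, and the geometric factor $\snorm{O_\sigma}^m\leq(1-\Delta^{(3)}_\lambda(1-2r^{(3)}_\lambda))^m$ by the uniform bounds \eqref{eq:sampling-complexity-max-bounds}. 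The non-trivial step is bundling the two Hilbert--Schmidt factors when summing over $\sigma$: applying Cauchy--Schwarz together with $\sum_\sigma Y_\sigma Y_\sigma^\dagger = P_\lambda^{\otimes 2}$ (the isotype projector of $\omega_\lambda^{\otimes 2}$) and the block-diagonality of $(S_\lambda^+)^{\otimes 2}$, one collapses the $\sigma$-sums into a single expression. Concretely, one bounds $\sum_\sigma \twonorm{Y_\sigma^\dagger \tilde M_3}^2$ by replicating the argument leading to Eqs.~\eqref{eq:main-thm-proof-1}--\eqref{eq:main-thm-proof-2} for the two-copy measurement $\tilde M_3$, producing the prefactor $c_\lambda$ of Thm.~\ref{thm:rb-data-random-circuits}, while $\sum_\sigma \twonorm{T_\sigma^+ Y_\sigma^\dagger\oketbra{\rho^{\otimes 2}}{\tilde\rho}}^2$ contributes an extra $\snorm{S_\lambda^+}$ (from the second copy of the pseudo-inverse frame operator) and produces $\osandwich{\rho^{\otimes 2}}{P_\lambda^{\otimes 2}}{\rho^{\otimes 2}}^{1/2}=\osandwich{\rho}{P_\lambda}{\rho}$.

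Combining these pieces delivers \eqref{eq:sampling-complex-decay-bound}. The bounds on $\spec_\C(I_\sigma)$ follow from Thm.~\ref{thm:perturbation-of-moment-operator} together with the fact that $\widehat{\phi\nu}[\tau_\sigma]$ inherits $\snorm{\argdot}_{\diamond\to\diamond}\leq 1$ from $\widehat{\phi\nu}[\omega]$ mapping quantum channels to trace non-increasing CP maps, as argued for $I_\lambda$ in Sec.~\ref{sec:proof-data-form}. The reason the matrices $I_\sigma$ and $O_\sigma$ for $\sigma\in\Irr(\omega)$ agree with those in Thm.~\ref{thm:rb-data-random-circuits} is that they are intrinsic to the spectral decomposition of $\widehat{\phi\nu}[\tau_\sigma]$ and do not depend on the ambient bra-ket context (state, measurement, multiplicity space).
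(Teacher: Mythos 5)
Your proposal is correct and follows essentially the same route as the paper: decompose $\EE[f_\lambda^2]$ over $\sigma\in\Irr(\tau_\lambda^{\otimes 2})$, apply Thm.~\ref{thm:perturbation-of-moment-operator} to $\widehat{\phi\nu}[\tau_\sigma]$ when $\sigma\in\Irr(\omega)$ and a direct triangle-inequality bound (giving $1-\Delta_\sigma+\delta_\sigma$) when $\widehat{\omega}[\tau_\sigma]=0$, then uniformize via $\Delta_\lambda^{(3)}$, $r_\lambda^{(3)}$ and the monotonicity of $g$. The only divergence is in summing the subdominant terms — you use Cauchy--Schwarz over $\sigma$ with $\sum_\sigma Y_\sigma Y_\sigma^\dagger=P_\lambda^{\otimes 2}$, whereas the paper bounds the state-dependent factor uniformly per $\sigma$ by $\snorm{S_\lambda^+}^2\osandwich{\rho}{P_\lambda}{\rho}$ and sums only the measurement factor — and both yield the same prefactor $c_\lambda\snorm{S_\lambda^+}\osandwich{\rho}{P_\lambda}{\rho}$.
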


The proof of the theorem is given in Sec.~\ref{sec:proof-thm-10}.
We proceed with a discussion of Thm.~\ref{thm:second-moment}. 

\subsubsection{Discussion of assumptions and dominant signal}
\label{sec:discussion-second-moment}

The similarities to Thm.~\ref{thm:rb-data-random-circuits} are imminent, hence we concentrate on the differences between the theorems for the first and second moment of $f_\lambda$.

\paragraph{Simplified assumptions.}
Instead of involving only a single irrep $\lambda$, Thm.~\ref{thm:second-moment} involves all irreps appearing in the tensor square $\tau_\lambda^{\otimes 2}$.
Note that we would find a similar situation in Thm.~\ref{thm:rb-data-random-circuits}, if we would not filter on irreps, but on reducible subrepresentations instead.
Theorem \ref{thm:second-moment} makes only irrep-specific assumptions as the perturbative expansion is done independently for every irrep $\sigma\in\Irr(\tau_\lambda^{\otimes 2})$.
In practice, it might be simpler to use $\sigma$-independent bounds.
To this end, the quantity $\Delta_\lambda^{(3)}$ introduced in Thm.~\ref{thm:second-moment} is helpful since it bounds the spectral gap of the third moment operator $\widehat{\omega\nu}[\tau_\lambda^{\otimes 2}]$:
\begin{align}
 1 - \Delta_\lambda^{(3)}
 &\geq \max_{\sigma \in \Irr(\tau_\lambda^{\otimes 2})} \snorm{ \widehat{\omega\nu}[\tau_\sigma] - \widehat{\omega}[\tau_\sigma] } \\
 &=
 \snorm{ \widehat{\omega\nu}[\tau_\lambda^{\otimes 2}] - \widehat{\omega}[\tau_\lambda^{\otimes 2}] } \, .
\end{align}
A sufficient condition for the assumptions of Thm.~\ref{thm:second-moment} is then $\delta_\lambda^{(3)}\leq\Delta_\lambda^{(3)}/4$, where
\begin{align}
 \delta_\lambda^{(3)} 
 &\geq
 \snorm{ \widehat{\phi\nu}[\tau_\lambda^{\otimes 2}] - \widehat{\omega\nu}[\tau_\lambda^{\otimes 2}] }\\
 &=
 \max_{\sigma \in \Irr(\tau_\lambda^{\otimes 2})} \snorm{ \widehat{\phi\nu}[\tau_\sigma] - \widehat{\omega\nu}[\tau_\sigma] } \, .
\end{align}
In this case, we have $r_\lambda^{(3)}\leq \delta_\lambda^{(3)} / \Delta_\lambda^{(3)} $ and $g(r_\lambda^{(3)})\leq g(\delta_\lambda^{(3)} / \Delta_\lambda^{(3)})$ since $g$ is monotonic.

\paragraph{Form of the dominant signal and the role of SPAM noise.}
In the following, we discuss the role of the matrix coefficients appearing in the dominant terms of the second moment \eqref{eq:data-form-second-moment}. 
Analogously to the \ac{SPAM} constants $\tr(A_\lambda)$ discussed in Sec.~\ref{sec:dominantSignal}, 
we derive in Sec.~\ref{sec:proof-thm-10}, Eq.~\eqref{eq:trace-Csigma} that 
\begin{align}
 \tr(C_\sigma) 
 &= \osandwichb{\rho^{\otimes 2}}%
   {Y_\sigma T_\sigma^+\,
   \widehat{\omega}[\omega^{(2)}_\sigma] \big( Y_\sigma^{\dagger} \tilde M_3 \big) }%
   {\tilde\rho} \, . 
\end{align}
Hence, $\tr(C_\sigma)$ is the $\sigma$-contribution to the \emph{second moment of the ideal, noiseless implementation $\phi=\omega$ with unitaries sampled from the Haar measure $\nu=\mu$, but subject to the same \ac{SPAM} noise}.
Consequently, the sum over $\sigma$ yields the total second moment:
\begin{align}
 \sum_{\sigma\in\Irr(\omega_\lambda^{\otimes 2})} \tr(C_\sigma) 
 &= 
 \osandwichb{\rho^{\otimes 2}}%
   {(X_\lambda S_\lambda^+)^{\otimes 2}\,
   \widehat{\omega}[\tau_\lambda^{\otimes 2}] \big( X_\lambda^{\dagger\otimes 2} \tilde M_3 \big) }%
   {\tilde\rho}\\
 &=: \EE[f_\lambda^2] _\SPAM\, .
\label{eq:sum-trace-Csigma}
\end{align}
Note that irreps $\sigma\in\Irr(\tau_\lambda^{\otimes 2})$ which are \emph{not} in $\Irr(\omega)$ do not contribute to the sum since $\widehat{\omega}[\omega^{(2)}_\sigma]=0$ in this case.

As in Sec.~\ref{sec:data-guarantees}, we would like to compare the second moment under only \ac{SPAM} noise $\EE[f_\lambda^2] _\SPAM$ to the ideal, noiseless second moment $\EE[f_\lambda^2] _\ideal$.
The discussion is made somewhat more complicated by the presence of multiplicites in $\tau_\lambda^{\otimes 2}$ even if all irreps in $\omega$ are multiplicity-free.
Consequently, the rank of the projector $\widehat{\omega}[\tau_\lambda^{\otimes 2}]$ is generally much larger than the rank of $\widehat{\omega}[\tau_\lambda]$ and given by the summed multiplicities of the irreps $\sigma\in\Irr(\tau_\lambda^{\otimes 2})\cap\Irr(\omega)$, c.f.~App.~\ref{sec:sampling_complexity_ideal_case}.

Similar to Sec.~\ref{sec:dominantSignal}, we conjecture that for any physically relevant setting, the effect of \ac{SPAM} noise is the reduction of the magnitude of the $\sigma$-contribution compared to the \ac{SPAM}-free case, $\abs{\tr(C_\sigma)} \leq \tr(C_\sigma)_\ideal$.
We are able to prove this under the assumption that $G$ contains the Heisenberg-Weyl group $\HW{n}{p}$.

\begin{proposition}
\label{prop:2nd-moment-vis-leq-one}
 Suppose that $\HW{n}{p} \subset G$, the measurement is in the computational basis, and $\rho$ is a computational basis state.
 Then we have 
 \begin{multline}
 \abs{\tr(C_\sigma)}
 \leq
 \tr(C_\sigma)_\ideal \\
 \coloneqq
 \osandwichb{\rho^{\otimes 2}}{P^{(2)}_\sigma (S^+)^{\otimes 2} \widehat{\omega}[\omega^{\otimes 2}]\big(M_3 \big) }{\rho} \, .
 \end{multline}
\end{proposition}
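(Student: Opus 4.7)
The plan is to deduce the inequality from a Cauchy--Schwarz bound applied to a positive-semidefinite ``middle'' operator, combined with a symmetry argument exploiting the inclusion $\HW{n}{p}\subset G$. First, since $\rho = \ketbra{x}{x}$ is a computational basis state and the measurement is in the computational basis, a short calculation gives $M_3\oket{\rho} = \sum_i \delta_{i,x}\oket{E_i^{\otimes 2}} = \oket{\rho^{\otimes 2}}$, so the ideal expression collapses to a quadratic form $\tr(C_\sigma)_\ideal = \osandwich{\rho^{\otimes 2}}{\Theta_\sigma}{\rho^{\otimes 2}}$, where $\Theta_\sigma \coloneqq P_\sigma^{(2)}(S^+)^{\otimes 2}\widehat{\omega}[\omega^{\otimes 2}]$ denotes the common middle superoperator. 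Similarly, the noisy side unfolds as $\tilde M_3\oket{\tilde\rho} = \sum_i p_i \oket{E_i^{\otimes 2}}$ with non-negative weights $p_i \coloneqq \sandwich{i}{\EM^\dagger \ESP(\rho)}{i}$ summing to $\tr \EM^\dagger \ESP(\rho)\leq 1$ (since both $\EM^\dagger$ and $\ESP$ are trace-non-increasing), so that $\tr(C_\sigma) = \sum_i p_i \osandwich{\rho^{\otimes 2}}{\Theta_\sigma}{E_i^{\otimes 2}}$.

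Next I would establish two key properties of $\Theta_\sigma$. (i) \emph{Positivity}: $P_\sigma^{(2)}$, $(S^+)^{\otimes 2}$, and $\widehat{\omega}[\omega^{\otimes 2}]$ all commute with the $G$-action $\omega^{\otimes 2}(g)$ and hence share a common isotypic block structure by Schur's lemma; two of them are orthogonal projectors and the third is positive semidefinite, so their commuting product is PSD. (ii) \emph{Equality of the diagonal matrix elements}: since $E_i = w(0,i)\,\rho\,w(0,i)^\dagger$ with $w(0,i)\in\HW{n}{p}\subset G$, we have $E_i^{\otimes 2} = \omega^{\otimes 2}(g_i)\rho^{\otimes 2}$ for $g_i = w(0,i)$, and the $G$-equivariance of $\Theta_\sigma$ together with unitarity of $\omega^{\otimes 2}(g_i)$ yields $\osandwich{E_i^{\otimes 2}}{\Theta_\sigma}{E_i^{\otimes 2}} = \osandwich{\rho^{\otimes 2}}{\Theta_\sigma}{\rho^{\otimes 2}} = \tr(C_\sigma)_\ideal$ for every basis label $i$.

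Combining (i) and (ii) through Cauchy--Schwarz for the PSD form $\Theta_\sigma$ gives
\begin{equation*}
 \bigl|\osandwich{\rho^{\otimes 2}}{\Theta_\sigma}{E_i^{\otimes 2}}\bigr|^2 \leq \osandwich{\rho^{\otimes 2}}{\Theta_\sigma}{\rho^{\otimes 2}}\,\osandwich{E_i^{\otimes 2}}{\Theta_\sigma}{E_i^{\otimes 2}} = \tr(C_\sigma)_\ideal^2
\end{equation*}
for each $i$, and summing against the probabilities $p_i$ with $\sum_i p_i\leq 1$ yields $|\tr(C_\sigma)|\leq \tr(C_\sigma)_\ideal$, which is the claim. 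The main technical obstacle will be making the identification of $\Theta_\sigma$ rigorous on both sides: the noisy-case middle expression from Thm.~\ref{thm:second-moment} is $Y_\sigma T_\sigma^+\widehat{\omega}[\omega^{(2)}_\sigma]Y_\sigma^\dagger$, adapted to the restricted space $V(\lambda)^{\otimes 2}$, while the ideal-case one is $P_\sigma^{(2)}(S^+)^{\otimes 2}\widehat{\omega}[\omega^{\otimes 2}]$ acting on $V^{\otimes 2}$; reconciling the two relies on the fact that the $\sigma$-isotype of $\omega^{\otimes 2}$ intersected with $V(\lambda)^{\otimes 2}$ coincides with the $\sigma$-isotype of $\omega^{(2)}_\lambda$. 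A related nuance is that $\widehat{\omega}[\omega^{\otimes 2}]$ is naturally a projector on the space of homomorphisms $\Hom(V,V^{\otimes 2})$ rather than on $V^{\otimes 2}$, so the sandwich manipulations and the equivariance argument in step (ii) must be interpreted at the level of the associated bilinear form obtained after evaluation of the homomorphism at the respective state.
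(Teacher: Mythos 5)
Your argument breaks down at its very first step, and the ``nuance'' you flag at the end is in fact a fatal flaw rather than a technicality. The Fourier transform $\widehat{\omega}[\omega^{\otimes 2}]$ acts on the \emph{map} $\tilde M_3\in\Hom(V,V^{\otimes 2})$, i.e.\ $\widehat{\omega}[\omega^{\otimes 2}](\tilde M_3)=\int_G\omega^{\otimes 2}(g)^\dagger\,\tilde M_3\,\omega(g)\,\rmd\mu(g)$, so the group integral couples the output leg ($\omega^{\otimes 2}(g)^\dagger$ on the left) to the input leg ($\omega(g)$ acting on $\tilde\rho$ on the right). You cannot first evaluate $\tilde M_3\oket{\tilde\rho}=\sum_i p_i\oket{E_i^{\otimes 2}}$ and then apply a residual, $g$-independent superoperator $\Theta_\sigma$ on $V^{\otimes 2}$: the correct unfolding is $\tr(C_\sigma)=\int_G\sum_i\osandwich{\rho^{\otimes 2}}{P_\sigma^{(2)}(S^+)^{\otimes 2}\omega^{\otimes 2}(g)^\dagger}{E_i^{\otimes 2}}\osandwich{\tilde E_i}{\omega(g)}{\tilde\rho}\,\rmd\mu(g)$, and there is no common middle operator to apply Cauchy--Schwarz to. The only legitimate quadratic-form picture lives on $V^{\otimes 3}$ via the trivial-isotype projector $P_1^{(3)}$ (this is exactly how the paper proves non-negativity in Prop.~\ref{prop:Csigma}); but once SPAM noise is present, $\EM$ and $\ESP$ sit \emph{inside} the group average on the third tensor leg, which destroys both the positive-semidefiniteness and the $G$-equivariance your steps (i) and (ii) rely on. A further sign of trouble: even pointwise, the noisy weights $\tilde p(i|g)$ and ideal weights $p(i|g)$ are not comparable, so a term-by-term majorization of the form you propose cannot close the argument.

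The paper's proof takes an entirely different route that leans much harder on the hypothesis $\HW{n}{p}\subset G$: it expands $\widehat{\omega}[\omega^{\otimes 2}]$ in the Weyl basis $X_{a,b,a+b}$ of $\HW{n}{p}$-equivariant maps, where it is forced to be diagonal with entries $0$ or $1$; computes $\widehat{\omega}[\omega^{\otimes 2}](\tilde M_3)$ explicitly to see that only $Z$-type Weyl labels survive and the measurement noise enters solely through factors $\osandwich{Z(z+z')}{\EM}{Z(z+z')}$ with modulus at most $d$; and then bounds the remaining state overlaps using $|\obraket{Z(z+z')}{\tilde\rho}|\le 1$ together with the fact that $P_\sigma^{(2)}$ and $(S^+)^{\otimes 2}$ are PSD and diagonal in the Weyl basis, so every ideal-case term is non-negative and the triangle inequality yields the claim. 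If you want to salvage your structural approach, you would need to first establish this diagonality and then argue term by term in the Weyl basis --- at which point you have essentially reproduced the paper's proof.
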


\begin{proof}
 We consider the matrix representation of $\widehat{\omega}[\omega^{\otimes 2}]$ in the orthogonal basis $X_{a,b,c}\coloneqq \oketbra{w(a)\otimes w(b)}{w(c)}$ of $\Hom(V,V\otimes V)$ for $a,b,c\in\F_p^{2n}$. 
 Note that $\twonorm{X_{a,b,c}}=d^{3/2}$. 
 Let $\widehat{\omega}[\omega^{\otimes 2}]|_{\HW{n}{p}}$ be the restriction of the Fourier transform to the Heisenberg-Weyl group.
 The range of $\widehat{\omega}[\omega^{\otimes 2}]|_{\HW{n}{p}}$ is exactly given by the subspace of $\HW{n}{p}$-equivariant maps, this is $X \mathcal{W}(a) = \mathcal{W}(a)^{\otimes 2} X$ for all $a\in\F_p^{2n}$, where $\mathcal{W}(a) = w(a)(\argdot)w(a)^\dagger$ is a unitary Weyl channel.
 It is straightforward to check that an orthonormal basis for this subspace is given by $X_{a,b,a+b}$ for $a,b\in\F_p^{2n}$.
 By the invariance of the Haar measure on $G$, $\widehat{\omega}[\omega^{\otimes 2}]$ is left and right invariant under the multiplication with $\widehat{\omega}[\omega^{\otimes 2}]|_{\HW{n}{p}}$, hence $\widehat{\omega}[\omega^{\otimes 2}]$ is diagonal in the $X_{a,b,a+b}$ basis.
 In particular, the diagonal entries are either 0 or 1 since $\widehat{\omega}[\omega^{\otimes 2}]$ is a projector.
 Hence, we find
 \begin{align}
  \MoveEqLeft[2]\widehat{\omega}[\omega^{\otimes 2}]\big(\tilde M_3 \big)\\
  &=
   \frac{1}{d^{3}}\sum_{a,b}
   \oket{X_{a,b,a+b}}
   \obraket{X_{a,b,a+b}}{\tilde M_3} \\
  &=
   \frac{1}{d^{3}} \sum_{a,b} \sum_{x\in\F_p^n} \oket{X_{a,b,a+b}} \\
  &\qquad \times
   \obraket{w(a)\otimes w(b)}{E_x \otimes E_x} \obraket{\tilde E_x}{w(a+b)} \\
  &=
   \frac{1}{d^{3}} \sum_{a,b}
   \delta_{a_x,0}\delta_{b_x,0} \oket{X_{a,b,a+b}} \\
  &\qquad \times
   \sum_{x\in\F_p^n} \xi^{-(a_z+b_z)\cdot x} \osandwich{E_x}{\EM}{Z(a_z+b_z)} \\
  &=
   \frac{1}{d^{3}}
   \sum_{z,z'} \oket{X_{z,z',z+z'}}\osandwich{Z(z+z')}{\EM}{Z(z+z')}  \, .
 \end{align}
 Here, the sums are over suitable index sets for $a,b$ and $z,z'$ that correspond to the support of $\widehat{\omega}[\omega^{\otimes 2}]$.
 In the second step, we used the definition of the Weyl operators, c.f.~Eq.~\eqref{eq:weyl-ops}, and the orthonormality of the computational basis to conclude that the contribution of Weyl operators with a non-vanishing $X$ component $a_x\neq 0$ is zero.
 Afterwards, we use of the definition $Z(z) = \sum_x \xi^{z\cdot x} \ketbra{x}{x}$.
 Note that we have $|\osandwich{Z(z+z')}{\EM}{Z(z+z')}| \leq d$ with equality in the noiseless case $\EM=\id$.
 Next, we consider the overlap of $X_{z,z',z+z'}$ with the remaining terms.
 To this end, we use $\rho=\ketbra{x}{x}$ for some $x$, and that $P^{(2)}_\sigma$ and $(S^+)^{\otimes 2}$ are positive semi-definite and diagonal in the Weyl basis.
 We find
 \begin{align}
  \MoveEqLeft[2]\abs[\big]{\osandwichb{\rho^{\otimes 2}}{P^{(2)}_\sigma (S^+)^{\otimes 2}}{Z(z)\otimes Z(z')} \obraket{Z(z+z')}{\tilde\rho}} \\
  &=
  \big(P^{(2)}_\sigma\big)_{z,z'} (S^+)_z (S^+)_{z'}  \\
  &\qquad \times \abs{\obraket{\rho^{\otimes 2}}{Z(z)\otimes Z(z')}}\, \abs{\obraket{Z(z+z')}{\tilde\rho}} \\
  &\leq
  \big(P^{(2)}_\sigma\big)_{z,z'} (S^+)_z (S^+)_{z'} \\
  &=
   \osandwichb{\rho^{\otimes 2}}{P^{(2)}_\sigma (S^+)^{\otimes 2}}{Z(z)\otimes Z(z')}\, \obraket{Z(z+z')}{\rho} \, .
 \end{align}
 Here, $(P^{(2)}_\sigma)_{z,z'}$ and $(S^+)_z$ are the diagonal entries of $P^{(2)}_\sigma$ and $S^+$.
 In the last step, we use that $\obraket{\rho^{\otimes 2}}{Z(z)\otimes Z(z')}\obraket{Z(z+z')}{\tilde\rho} = \xi^{(z+z')\cdot x} \xi^{-(z+z')\cdot x} = 1$.
 Finally, we obtain the desired result:
 \begin{align}
 \abs{\tr(C_\sigma)}
 &\leq
 \frac{1}{d^3} \sum_{z,z'}
  \big|\osandwichb{\rho^{\otimes 2}}{P^{(2)}_\sigma (S^+)^{\otimes 2}}{Z(z)\otimes Z(z')}\big| \\
 &\quad \times
  \big|\obraket{Z(z+z')}{\tilde\rho} \big|\, 
  \abs[\big]{\osandwich{Z(z+z')}{\EM}{Z(z+z')}}
  \\
 &\leq 
 \frac{1}{d^2} \sum_{z,z'}
 \osandwichb{\rho^{\otimes 2}}{P^{(2)}_\sigma (S^+)^{\otimes 2}}{Z(z)\otimes Z(z')} \\
 &\quad \times
  \obraket{Z(z+z')}{\rho} \\
 &=
 \tr(C_\sigma)_\ideal \, .
 \label{eq:tr-Csigma-ideal-weyl-basis}
 \end{align}
\end{proof}

Although, the contribution $\tr(C_\sigma)$ per irrep $\sigma$ can be split in a state preparation and a measurement term, we think that doing so does not bear the same level of insight as the treatise in Sec.~\ref{sec:dominantSignal}.
Thus, we simply define the \emph{second moment \ac{SPAM} visibility} as
\begin{equation}
 v_\SPAM^{(2)} \coloneqq \frac{\EE[f_\lambda^2] _\SPAM}{\EE[f_\lambda^2] _\ideal} \,.
 \label{eq:2nd-order-visibility}
\end{equation}
As a consequence of Prop.~\ref{prop:2nd-moment-vis-leq-one}, we have $v_\SPAM^{(2)} \leq 1$ for groups $G$ that contain a Heisenberg-Weyl group.

\paragraph{Examples.}
Let us again consider depolarizing \ac{SPAM} noise. 
That is, we assume $\tilde\rho = \ESP(\rho)$ and $\tilde M = M \EM$ with state preparation and measurement noise channels $\ESP=\EM = p\, \id + (1-p)\oketbra{\one}{\one}/d$.
Moreover, let us assume that $S$ is invertible.
Then, we find that 
\begin{align}
 \MoveEqLeft[1]\EE[f_\lambda^2] _\SPAM\\
 &=
 \osandwichb{\rho^{\otimes 2}}{P_\lambda^{\otimes 2} (S^{-1})^{\otimes 2} \widehat{\omega}[\omega^{\otimes 2}]\big(\tilde M_3 \big) }{\tilde\rho} \\
 &=
 \osandwichb{\rho^{\otimes 2}}{P_\lambda^{\otimes 2} (S^{-1})^{\otimes 2} \widehat{\omega}[\omega^{\otimes 2}]\big(M_3 \big) \EM\ESP}{\rho} \\
 &=
 p^2 \EE[f_\lambda^2] _\ideal \\
 &\quad
 +
 \frac{1-p^2}{d}
 \osandwichb{\rho^{\otimes 2}}{P_\lambda^{\otimes 2} (S^{-1})^{\otimes 2} \widehat{\omega}[\omega^{\otimes 2}]\big(M_3 \big)}{\one}\obraket{\one}{\rho} \\
 &=
 p^2 \EE[f_\lambda^2] _\ideal \\
 &\quad
 +
 \frac{1-p^2}{d}
 \sum_{i\in[d]} \int_G 
 \osandwichb{\rho}{P_\lambda S^{-1} \omega(g)}{E_i}^2
 \dd\mu(g) \,,
 \label{eq:2nd-moment-dep-noise}
\end{align}
where we used $\osandwich{E_i}{\omega(g)}{\one} = \obraket{E_i}{\one} = 1$ in the last step.
Using that the expressions are real-valued, we find
\begin{align}
 \MoveEqLeft[2]\sum_{i\in[d]} \int_G 
 \osandwichb{\rho}{P_\lambda S^{-1} \omega(g)}{E_i}^2
 \dd\mu(g) \\
 &=
 \osandwichb{\rho}{P_\lambda S^{-1} \omega(g)^\dagger
 M
 \omega(g) S^{-1} P_\lambda}{\rho}
 \dd\mu(g) \\
 &=
 \osandwichb{\rho}{P_\lambda S^{-1} S S^{-1} P_\lambda}{\rho}  \\
 &=
 \osandwichb{\rho}{P_\lambda S^{-1}}{\rho} \,,
\end{align}
where we inserted the definition of $M$ and $S$, and used that $P_\lambda$ and $S$ commute.
Typically, the second term in Eq.~\eqref{eq:2nd-moment-dep-noise} is small. 
To see this, let us for simplicity assume that $\HW{n}{p} \subset G$, the measurement is in the computational basis, and $\rho$ is a computational basis state.
Then, $P_\lambda$ and $M$ are diagonal in the Weyl basis and commute.
Moreover, choose a decomposition of the $\lambda$-isotype such that $S_\lambda = \sum_i s_\lambda^{(i)} P_\lambda^{(i)}$  where $s_\lambda^{(i)} = \tr(P_\lambda^{(i)} M)/d_\lambda$, see Eq.~\eqref{eq:frame-op-constant}.
We find
\begin{align}
\frac{1}{d} \osandwichb{\rho}{P_\lambda S^{-1}}{\rho} 
&=
\frac{1}{d} \osandwichb{\rho}{P_\lambda M S^{-1}}{\rho} \\
&=
\frac{1}{d^2} \tr(P_\lambda M S^{-1}) \\
&=
\frac{1}{d^2} \sum_{i=1}^{n_\lambda} \frac{d_\lambda}{\tr(P_\lambda^{(i)} M)} \tr(P_\lambda^{(i)} M)\\
&=
\frac{n_\lambda d_\lambda}{d^2}\,,
\label{eq:tr-Plambda-M-Sinv}
\end{align}
where we used that $M(\rho)=\rho$ and $|\obraket{\rho}{Z(z)}|^2 = 1$.
Hence, the \ac{SPAM} visibility becomes
\begin{equation}
 v_\SPAM^{(2)} = p^2 + \frac{1-p^2}{\EE[f_\lambda^2] _\ideal} \frac{n_\lambda d_\lambda}{d^2} \, .
 \label{eq:2nd-moment-vis-example}
\end{equation}
Note that Prop.~\ref{prop:2nd-moment-vis-leq-one} implies the lower bound $\EE[f_\lambda^2] _\ideal \geq \frac{n_\lambda d_\lambda}{d^2}$.
We formulate this as a separate proposition.

\begin{proposition}[Second moment bounds]
\label{prop:2nd-moment-bounds}
 Suppose that $\HW{n}{p} \subset G$, the measurement is in the computational basis, and $\rho$ is a computational basis state.
 Then we have the following bounds on the ideal second moment.
 \begin{equation}
  \frac{k_\lambda d_\lambda}{d^2}
  \leq 
  \EE[f_\lambda^2] _\ideal 
  \leq
  \frac{(k_\lambda d_\lambda)^2}{d^2} \, .
 \end{equation}
 Here, $k_\lambda \leq n_\lambda$ is the number of distinct non-zero eigenvalues of $S_\lambda$.
 In particular if $S_\lambda$ is invertible, then $k_\lambda = n_\lambda$.
\end{proposition}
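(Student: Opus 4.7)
The strategy is to exploit the Weyl-basis decomposition already developed in the proof of Prop.~\ref{prop:2nd-moment-vis-leq-one}. Specializing that computation to the ideal, noiseless setting $\EM=\id$ with $\rho=\ketbra{x}{x}$ a computational basis state, the phase products $\obraket{\rho^{\otimes 2}}{Z(z)\otimes Z(z')}\obraket{Z(z+z')}{\rho}$ appearing in \eqref{eq:tr-Csigma-ideal-weyl-basis} collapse to $\xi^{(z+z')\cdot x}\xi^{-(z+z')\cdot x}=1$. Summing the individual $\tr(C_\sigma)_\ideal$ over all $\sigma$ then gives
\begin{equation*}
\EE[f_\lambda^2]_\ideal = \frac{1}{d^2}\sum_{z,z'\in\F_p^n} D_{z,z'}\,(P_\lambda)_{(z,0)}(P_\lambda)_{(z',0)}(S^+)_{(z,0)}(S^+)_{(z',0)},
\end{equation*}
where $D_{z,z'}\in\{0,1\}$ is the Weyl-basis diagonal entry of the $G$-equivariance projector $\widehat\omega[\omega^{\otimes 2}]$, and every other factor is non-negative because $S\succeq 0$ and $P_\lambda$ is an orthogonal projector.

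The upper bound is then immediate: bounding $D_{z,z'}\leq 1$ factorizes the double sum as $T^2/d^2$, where $T\coloneqq\sum_z(P_\lambda)_{(z,0)}(S^+)_{(z,0)} = \tr(P_\lambda M S^+)$; the second identity uses the Weyl expansion $M=\tfrac{1}{d}\sum_z\oketbra{Z(z)}{Z(z)}$ of the completely-dephasing channel. Since $[P_\lambda,M]=0$ and $S_\lambda\simeq\id_\lambda\otimes s_\lambda$ by \eqref{eq:frame-op-block-decomposition}, the trace evaluates to $T = d_\lambda\tr(s_\lambda^+ s_\lambda) = k_\lambda d_\lambda$, since $s_\lambda^+ s_\lambda$ is the rank-$k_\lambda$ orthogonal projector onto the range of $s_\lambda$. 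This delivers $\EE[f_\lambda^2]_\ideal \leq (k_\lambda d_\lambda)^2/d^2$.

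For the lower bound I would isolate a block of pairs $(z,z')$ on which $D_{z,z'}=1$ is \emph{guaranteed} by the existence of a universal $G$-equivariant map in $\Hom(V,V^{\otimes 2})$. A canonical choice is the adjoint $\mu^\dagger$ of the algebra multiplication $\mu:V\otimes V\to V$, which is $G$-equivariant for any $G\subset\U(\mathcal H)$ and admits the Weyl expansion $\mu^\dagger=\tfrac{1}{d^2}\sum_{a,b}\xi^{\phi(a,b)}X_{a,b,a+b}$. Combining this with the diagonality of $\widehat\omega[\omega^{\otimes 2}]$ in the $X_{a,b,a+b}$ basis established in the cited proof, and extracting only the contributions from the pure-$Z$ sector of the $\lambda$-isotype, one obtains a sub-sum that already evaluates to $T = k_\lambda d_\lambda$ before dividing by $d^2$, yielding the claimed lower bound.

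The Weyl-rewriting and the computation of $T$ are essentially bookkeeping that follows directly from the block structure of the frame operator. The delicate step is the lower bound, where one must ensure that sufficiently many $D_{z,z'}$ are non-zero \emph{uniformly} in $G$ beyond the assumption $\HW{n}{p}\subset G$. The plan is to bypass a full characterization of $G$-equivariance by leveraging the group-independent $G$-equivariance of $\mu^\dagger$, which already accounts for exactly the trace $T=k_\lambda d_\lambda$ that also appears in the upper bound; this explains why both bounds are expressed in terms of the same quantity $k_\lambda d_\lambda$, and why they become tight (up to a linear-versus-quadratic gap in $k_\lambda d_\lambda$) when the support of $\widehat\omega[\omega^{\otimes 2}]$ on the pure-$Z$ sector is extremal.
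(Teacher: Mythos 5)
Your upper bound is correct and is essentially the paper's argument: expand $\EE[f_\lambda^2]_\ideal$ in the Weyl basis, drop the restriction of the double sum to the support of $\widehat{\omega}[\omega^{\otimes 2}]$ using non-negativity of each summand, and factorize into $\tr(P_\lambda M S^+)^2/d^2$ with $\tr(P_\lambda M S^+)=d_\lambda\tr(s_\lambda^+s_\lambda)=k_\lambda d_\lambda$. Nothing to add there.

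The lower bound is where your proposal breaks. Your plan is to exhibit a universal $G$-equivariant element of $\Hom(V,V\otimes V)$ --- the adjoint $\mu^\dagger$ of operator multiplication --- and to conclude that the diagonal entries $D_{z,z'}$ of $\widehat{\omega}[\omega^{\otimes 2}]$ equal $1$ on a large enough set of pairs. But $\mu^\dagger$ has \emph{non-zero} (phase) coefficients on \emph{every} basis element $X_{a,b,a+b}$. If $\widehat{\omega}[\omega^{\otimes 2}]$ were genuinely diagonal in that basis with $0/1$ entries, then fixing $\mu^\dagger$ would force \emph{all} $d^4$ diagonal entries to be $1$, i.e.\ the $G$-twirl would coincide with the $\HW{n}{p}$-twirl. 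That is false whenever $G$ is strictly larger than the Heisenberg--Weyl group: for $G=\U(d)$, $d\geq 3$, the rank of $\widehat{\omega}[\omega^{\otimes 2}]$ is $6$ (two trivial and four adjoint intertwiners, cf.\ App.~C.1), not $d^4$. The resolution is that $HW$-covariance only forces $\widehat{\omega}[\omega^{\otimes 2}]$ to be \emph{supported on} $\linspan\{X_{a,b,a+b}\}$ (all these basis elements carry trivial $\HW{n}{p}$-weight, so they lie in a single isotype); within that span the projector is generally \emph{not} diagonal, and its range consists of specific linear combinations such as $I_\Ad^{(1)}=\sum_{a,b}\oketbra{w(a)\otimes w(b)}{w(a)w(b)}$ rather than individual $X_{z,z',z+z'}$. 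Hence the objects $D_{z,z'}$ you want to set to $1$ are not the quantities controlling the sub-sum, the off-diagonal matrix elements carry phases, and the manifest non-negativity you rely on to "extract a sub-sum evaluating to $T$" is lost. You also never specify \emph{which} pairs $(z,z')$ should yield a sub-sum equal to $T=k_\lambda d_\lambda$ rather than $T^2$; the relevant set is the antidiagonal $z'=-z$ (so that $Z(z+z')=\one$), and it is exactly the handling of this contribution that requires a different idea.

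The paper's lower bound avoids all of this by an indirect, manifestly positive argument: insert a \emph{fictitious depolarizing SPAM channel} of strength $1-p$. The resulting second moment is computed exactly in Eq.~\eqref{eq:2nd-moment-dep-noise}--\eqref{eq:tr-Plambda-M-Sinv} and equals $p^2\,\EE[f_\lambda^2]_\ideal+(1-p^2)\,k_\lambda d_\lambda/d^2$; the second term is precisely the antidiagonal contribution $\tfrac1d\osandwich{\rho}{P_\lambda S^{+}}{\rho}$, isolated by the $\oketbra{\one}{\one}/d$ part of the depolarizing channel and evaluated via $\tr(P_\lambda M S^+)=k_\lambda d_\lambda$ --- no diagonality of $\widehat{\omega}[\omega^{\otimes 2}]$ is needed for this step, only the diagonality of $P_\lambda$, $M$, $S$ as superoperators on $V$, which \emph{does} follow from $\HW{n}{p}\subset G$. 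Combining with Prop.~\ref{prop:2nd-moment-vis-leq-one}, $\EE[f_\lambda^2]_\SPAM\leq\EE[f_\lambda^2]_\ideal$, and letting $p\to 0$ yields $k_\lambda d_\lambda/d^2\leq\EE[f_\lambda^2]_\ideal$. If you want a direct route instead, you must work with the actual intertwiner basis of the range of $\widehat{\omega}[\omega^{\otimes 2}]$ (as in App.~C), not with putative diagonal entries.
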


\begin{proof}
The first inequality follows from Prop.~\ref{prop:2nd-moment-vis-leq-one} and Eq.~\eqref{eq:2nd-moment-vis-example} after a straightforward modification to Eq.~\eqref{eq:tr-Plambda-M-Sinv} when $S_\lambda$ is not invertible.
Moreover, from Eqs.~\eqref{eq:tr-Csigma-ideal-weyl-basis} and \eqref{eq:tr-Plambda-M-Sinv} we find in a similar manner
\begin{align}
 \MoveEqLeft[1]\EE[f_\lambda^2] _\ideal\\
 &=
 \frac{1}{d^2} \sum_{z,z'}
 \osandwichb{\rho^{\otimes 2}}{(P_\lambda S^+)^{\otimes 2}}{Z(z)\otimes Z(z')} \obraket{Z(z+z')}{\rho} \\
 &=
 \frac{1}{d^2} \sum_{z,z'}
 \osandwichb{Z(z)\otimes Z(z')}{(P_\lambda M S^+)^{\otimes 2}}{Z(z)\otimes Z(z')} \\
 &\leq
 \frac{1}{d^2}
 \tr(P_\lambda M S^+)^2 
 =
 \frac{(k_\lambda d_\lambda)^2}{d^2}\,.
\end{align}
\end{proof}

For our typical examples from Sec.~\ref{sec:frame-operator}, we compute the exact second moments subject to \ac{SPAM} noise in App.~\ref{sec:sampling_complexity_ideal_case}.
In particular, we give the explicit \ac{SPAM} dependence for these examples.
Moreover, we find $\EE[f_\lambda^2] _\SPAM \leq \EE[f_\lambda^2] _\ideal$ in agreement with Prop.~\ref{prop:2nd-moment-vis-leq-one}, and that the lower bound in Prop.~\ref{prop:2nd-moment-bounds} for the ideal second moments is surprisingly tight.
The results are summarized in the following Prop.~\ref{prop:second-moments-ideal}.
Interestingly, the second moment of local unitary 3-groups is bounded by a constant if the local dimension $p$ is chosen as 2, i.e.~for qubits.

\begin{proposition}[Second moments of ideal implementation for typical examples]
\label{prop:second-moments-ideal}
For our typical examples from Sec.~\ref{sec:frame-operator}, namely unitary 3-groups, local unitary 3-groups, and the Heisenberg-Weyl group, we have $\EE[f_\lambda^2] _\SPAM\leq\EE[f_\lambda^2] _\ideal$ and the ideal second moments are given as
\begin{align}
 \EE[f_\Ad^2] _\ideal
 &\leq
 \begin{cases}
  1 - d^{-2} & d=2, \\
  3 - d^{-2} & d\geq 3,
 \end{cases}
 \tag*{(unitary 3-groups)}, \\
 \EE[f_b^2]_\ideal
 &\leq 
 \begin{cases}
  3^{n-|b|}/d^2 & p=2, \\
  3^{n-|b|} & p\geq 3,
 \end{cases}
 \tag*{(local unitary 3-groups)}, \\
 \EE[f_{0,z}^2]_\ideal
 &= 
 d^{-2}, 
 \tag*{(Heisenberg-Weyl groups)}.
\end{align}
\end{proposition}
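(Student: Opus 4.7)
The plan is to handle the SPAM inequality and the three case computations separately. For $\EE[f_\lambda^2]_\SPAM \leq \EE[f_\lambda^2]_\ideal$, this is an immediate consequence of Proposition \ref{prop:2nd-moment-vis-leq-one} once one observes that in each of the three examples the group $G$ contains $\HW{n}{p}$ as a subgroup (the Heisenberg-Weyl case being trivial, unitary 3-groups of interest being the Clifford group or a superset, and local unitary 3-groups decomposing with local Paulis in each tensor factor), and the standard filtered RB setup places $\rho$ as a computational basis state measured in that basis. So the real content of the proposition lies in computing $\EE[f_\lambda^2]_\ideal$ for each family.

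The starting point for all three computations is the general formula
\begin{equation}
\EE[f_\lambda^2]_\ideal = \osandwichb{\rho^{\otimes 2}}{(X_\lambda S_\lambda^+)^{\otimes 2}\,\widehat{\omega}[\tau_\lambda^{\otimes 2}]\big(X_\lambda^{\dagger\otimes 2} M_3\big)}{\rho},
\end{equation}
together with the explicit forms of $S_\lambda$ and $P_\lambda$ determined in Section \ref{sec:frame-operator}. For the Heisenberg-Weyl group, the irreps $\tau_{0,z}$ are one-dimensional, so $\tau_{0,z}^{\otimes 2}$ stays irreducible and $\widehat{\omega}[\tau_{0,z}^{\otimes 2}]$ is simply the projector onto a one-dimensional diagonal subspace of $\End(\mathcal H)\otimes\End(\mathcal H)$; combining with $s_{0,z}=1$ and the Weyl-basis evaluation of $M_3$ already performed in the proof of Proposition \ref{prop:2nd-moment-vis-leq-one} (applied with $\EM=\id$) collapses everything to the stated value $d^{-2}$.

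For the unitary 3-group case, the key step is to decompose $\Ad^{\otimes 2}$ into $G$-irreps using the standard unitary-group decomposition (trivial, two adjoints via the symmetric/antisymmetric parts, and the two irreps appearing in the symmetric square of the traceless subspace), write $\widehat{\omega}[\Ad^{\otimes 2}]$ as the sum of the corresponding isotypic projectors using Proposition \ref{prop:fourier-transform}, and then evaluate each projector's action on $M_3$ and the induced overlap with $\rho^{\otimes 2}$ using $S^{-1}P_\Ad = (d+1)P_\Ad$ as in the linear XEB computation. The dimensional split $d=2$ versus $d\geq 3$ comes from the fact that for $d=2$ the symmetric square of $\Ad$ contains fewer irreps (one of the two generic irreps vanishes), giving a smaller bound. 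The local 3-group case reduces to the unitary 3-group case by tensor factorisation: since $G=G_1\otimes\cdots\otimes G_n$ with each $G_i$ a local 3-design, both $P_b^{\otimes 2}$ and $S_b^{\otimes 2}$ factor across sites, and so does the Haar twirl, so $\EE[f_b^2]_\ideal$ is simply the product of $n-|b|$ single-site adjoint contributions (each bounded by $3-p^{-2}$ for $p\geq 3$ or by $1-1/4$ for $p=2$ and then divided by the appropriate $d$ factor from the trivial sites) times $|b|$ trivial contributions, yielding the stated expressions.

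The main obstacle is the unitary 3-group computation, because one must carefully identify which irreps of $\Ad^{\otimes 2}$ lie in $\Irr(\omega^{\otimes 2})$ and compute the scalar $S_\sigma$-eigenvalues for each, along with the overlaps $\osandwich{\rho^{\otimes 2}}{P_\sigma^{(2)}}{\rho^{\otimes 2}}$ and the $M_3$ overlap, all in a basis-free way so that the final expression simplifies to the clean bounds claimed. The bookkeeping involved (and specifically the special behaviour at $d=2$) is delicate enough that it is most cleanly relegated to Appendix \ref{sec:sampling_complexity_ideal_case}, with the main text only stating the resulting bounds.
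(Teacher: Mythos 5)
Your overall strategy is the paper's: expand the ideal second moment as $\osandwichb{\rho^{\otimes 2}}{(X_\lambda S_\lambda^+)^{\otimes 2}\,\widehat{\omega}[\tau_\lambda^{\otimes 2}]\big(X_\lambda^{\dagger\otimes 2} M_3\big)}{\rho}$, decompose $\tau_\lambda^{\otimes 2}$, evaluate the surviving isotypic projectors through explicit intertwiners, factorize over sites for local 3-groups, and note that the Heisenberg--Weyl case collapses to a one-dimensional projector; the SPAM inequality then follows from the same computation by a H\"older step (or, where $\HW{n}{p}<G$, from Prop.~\ref{prop:2nd-moment-vis-leq-one}). Two details in your unitary 3-group step are wrong, however, and they are exactly where the stated constants come from.

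First, the set of contributing irreps is $\Irr(\omega)\cap\Irr(\tau_\Ad^{\otimes 2})$, not $\Irr(\omega^{\otimes 2})$: by Prop.~\ref{prop:fourier-transform}, $\widehat{\omega}[\tau_\sigma]=0$ unless $\tau_\sigma$ appears in $\omega = 1\oplus\Ad$ itself. Hence the ``two irreps appearing in the symmetric square of the traceless subspace'' that you list never contribute, for any $d$; only the trivial isotype (multiplicity one, intertwiner $\oketbra{F_0}{\one}$) and the adjoint isotype of $\tau_\Ad^{\otimes 2}$ survive. Second, and consequently, your explanation of the $d=2$ case is not the right mechanism. The split comes from the multiplicity of $\Ad$ in $\Ad^{\otimes 2}$, which is $2$ for $d\geq 3$ but $1$ for $d=2$, cf.~Eq.~\eqref{eq:mult-adjoint-irrep}; for $d=2$ the surviving copy is antisymmetric ($I_\Ad^{(1)}=-I_\Ad^{(2)}$, reflecting the anticommutation of the traceless Paulis), whereas $X_\Ad^{\dagger\otimes 2}M_3$ and $\rho_0^{\otimes 2}$ are symmetric under swapping the two tensor factors, so the adjoint contribution vanishes identically and only the trivial term $1-d^{-2}=3/4$ remains. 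For $d\geq 3$ one must additionally orthogonalize the two non-orthogonal adjoint intertwiners (they have overlap $-2d^3(d^2-1)$) before projecting, which is where the term $2(d+1)(d-1)(d-2)/(d^2(d+2))$ in the exact expression originates. With these corrections, your product formula for the local case and the Heisenberg--Weyl evaluation go through essentially as you describe.
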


\paragraph{Non-malicious SPAM noise.}
According to Thm.~\ref{thm:second-moment}, the matrix coefficients $C_\sigma$ are modulated by the matrices $I_\sigma$ in the presence of gate noise and non-uniform sampling.
A central problem in analyzing the resulting sampling complexity is that -- although the total second  moment is non-negative -- some of the contributions $\tr(C_\sigma)$ per irrep might be \emph{negative}.
The explicit examples in App.~\ref{sec:sampling_complexity_ideal_case} show that this can in fact happen (e.g.~if $G$ is a unitary 3-design), as a consequence of malicious (i.e.~fine-tuned and thus unrealistic) \ac{SPAM} noise.
Indeed, a negative contribution requires that either the measurement noise introduces permutations of the measurement outcomes (thus rendering the measurement useless) or that we accidentally prepare a state $\tilde\rho$ which has vanishing fidelity with $\rho$.

Proposition \ref{prop:2nd-moment-vis-leq-one} already shows that the SPAM-free contributions $\tr(C_\sigma)_\ideal$ are non-negative under the assumptions that $\HW{n}{p}\subset G$.
The following lemma shows the same statement in a more general setting. 

\begin{proposition}[Non-negativity of $\tr(C_\sigma)$ in SPAM-free case]
\label{prop:Csigma}
 Suppose that the measurement basis $E_i = \ketbra{i}{i}$ can be generated with gates from $G$ acting on $E_1$, and assume that the initial state is $\rho = E_1$.
 Then, for any irrep $\sigma\in\Irr(\omega_\lambda^{\otimes 2})$, we have 
 \begin{equation}
 \tr(C_\sigma)_{\ideal}
 =
 \osandwichb{\rho^{\otimes 2}}{P^{(2)}_\sigma (S^+)^{\otimes 2} \widehat{\omega}[\omega^{\otimes 2}]\big(M_3 \big) }{\rho}{}
 \geq 0.
 \end{equation}
 Here, $P^{(2)}_\sigma$ is the projector onto the $\sigma$-isotype of $\omega_\lambda^{\otimes 2}$.
\end{proposition}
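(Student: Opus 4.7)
The plan is to rewrite $\tr(C_\sigma)_\ideal$ as a quadratic form $\obraket{\psi}{(\tilde S\otimes\id_V)(\psi)}$ on $V^{\otimes 3}$, where $\tilde S\coloneqq P^{(2)}_\sigma(S^+)^{\otimes 2}$, and then to exploit that $\tilde S$ is PSD as a superoperator on $V\otimes V$ with respect to the Hilbert--Schmidt inner product. The key mechanism is that the scalar prefactor $\abs{\sandwich{1}{U_h}{1}}^2$ arising after performing the $M_3$-sum can be absorbed into a third tensor factor, turning the Haar average into the orthogonal projector onto the $\omega^{\otimes 3}$-invariant subspace.

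First, I would unpack
\begin{equation*}
\widehat{\omega}[\omega^{\otimes 2}](M_3)(\rho) = \int_G \omega(g)^{\dagger\otimes 2}\bigl(M_3(\omega(g)(\rho))\bigr)\dd\mu(g).
\end{equation*}
Using $\rho = E_1$ and the transitivity hypothesis $E_i = \omega(g_i)(\rho)$, one has $M_3(\omega(g)(\rho)) = \sum_i\abs{\sandwich{i}{U_g}{1}}^2 E_i\otimes E_i$. Left-translating $h = g^{-1}g_i$ in each summand and using $\ket{i}=U_{g_i}\ket{1}$, the coefficient $\abs{\sandwich{i}{U_{g_i h^{-1}}}{1}}^2 = \abs{\sandwich{1}{U_h}{1}}^2$ is independent of $i$, and the sum over $i$ produces a factor of $d$. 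Then using $\abs{\sandwich{1}{U_h}{1}}^2 = \obraket{\rho}{\omega(h)(\rho)}$ to pull out a third tensor factor gives
\begin{equation*}
\tr(C_\sigma)_\ideal = d\int_G \obraket{\rho^{\otimes 3}}{(\tilde S\otimes\id_V)\,\omega(h)^{\otimes 3}(\rho^{\otimes 3})}\dd\mu(h) = d\,\obraket{\rho^{\otimes 3}}{(\tilde S\otimes\id_V)\,\Pi_0(\rho^{\otimes 3})},
\end{equation*}
where $\Pi_0\coloneqq\int_G\omega(h)^{\otimes 3}\dd\mu(h)$ is the orthogonal projector onto the $\omega^{\otimes 3}$-invariant subspace. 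Since $\tilde S$ commutes with $\omega(g)^{\otimes 2}$, $\tilde S\otimes\id_V$ commutes with $\omega(g)^{\otimes 3}$, hence with $\Pi_0$; combined with $\Pi_0^2=\Pi_0$ and self-adjointness, the expression equals $d\obraket{\Pi_0(\rho^{\otimes 3})}{(\tilde S\otimes\id_V)\Pi_0(\rho^{\otimes 3})}$.

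The conclusion will then follow from the fact that $\tilde S$ is a PSD superoperator: $P^{(2)}_\sigma$ is an orthogonal projector; $S = \widehat{\omega}[\omega](M)$ is PSD (being an integral of unitary conjugates of the orthogonal projector $M$), so $(S^+)^{\otimes 2}$ is PSD as well; and the two commute because $P^{(2)}_\sigma$ is central in the commutant of $\omega^{\otimes 2}$. Tensoring with $\id_V$ preserves PSD, rendering the quadratic form manifestly non-negative. The main subtlety I anticipate is bookkeeping the three levels of linear maps---operators on $\mathcal H$, superoperators on $V = \End(\mathcal H)$, and endomorphisms of $\Hom(V, V\otimes V)$---and recognizing that the relevant positivity is PSD with respect to the Hilbert--Schmidt inner product on $V\otimes V$, not complete positivity (which is known to fail for $S^+$, e.g.\ on the example in Eq.~\eqref{eq:frame-operator-as-dep-channel}).
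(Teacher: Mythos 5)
Your proof is correct and follows essentially the same route as the paper's: both use the transitivity assumption to absorb the sum in $M_3$ into the Haar integral (producing the factor $d$), lift the scalar $\abs{\sandwich{1}{U_g}{1}}^2$ into a third tensor factor so that the Haar average becomes the projector onto the trivial isotype of $\omega^{\otimes 3}$, and conclude by observing that the resulting operator is a product of commuting positive semidefinite (super)operators. The only difference is cosmetic: you make the last step explicit by sandwiching with $\Pi_0(\rho^{\otimes 3})$, whereas the paper directly invokes positivity of the commuting product $\big(P^{(2)}_\sigma(S^+)^{\otimes 2}\otimes\id\big)P_1^{(3)}$.
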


\begin{proof}
 Since the measurement basis is generated by gates from $G$, we can write
 \begin{align}
  \MoveEqLeft[1]\osandwichb{\rho^{\otimes 2}}{P^{(2)}_\sigma(S^+)^{\otimes 2} \widehat{\omega}[\omega^{\otimes 2}]\big(M_3 \big) }{\rho}\\
  &=
  d
  \int_G 
  \osandwichb{\rho^{\otimes 2}}{P^{(2)}_\sigma (S^+)^{\otimes 2}\, \omega^{\otimes 2}(g)^\dagger}{E_1^{\otimes 2}} \\
  &\qquad\times
  \osandwich{E_1}{\omega(g)}{\rho} 
  \dd\mu(g) \nonumber \\
  &=
  d
  \int_G 
  \osandwichb{\rho^{\otimes 3}}{ \big(P^{(2)}_\sigma(S^+)^{\otimes 2}\otimes\id\big) \, \omega^{\otimes 3}(g)^\dagger}{E_1^{\otimes 3}}
  \dd\mu(g) \nonumber \\
  &=
  d\,
  \osandwichb{\rho^{\otimes 3}}{ \big(P^{(2)}_\sigma(S^+)^{\otimes 2}\otimes\id\big) P_1^{(3)}}{E_1^{\otimes 3}}\, ,
  \label{eq:lem-Csigma-1}
 \end{align}
 where $P_1^{(3)}$ is the projector onto the trivial isotype in $\omega^{\otimes 3}$.
 Recall that $S^+$ is in the commutant of $\omega$, and thus $(S^+)^{\otimes 2}$ commutes with $P_\sigma^{(2)}$ as it projects onto a subrepresentation of $\omega^{\otimes 2}$.
 Since both commute with $\omega^{\otimes 2}(g)$ for all $g\in G$, $(P^{(2)}_\sigma(S^+)^{\otimes 2}\otimes\id)$ commutes with $P_1^{(3)}$, hence the product $(P^{(2)}_\sigma(S^+)^{\otimes 2}\otimes\id) P_1^{(3)}$ is a positive semidefinite operator.
 By assumption, $\rho = E_1$, and thus Eq.~\eqref{eq:lem-Csigma-1} is non-negative as claimed.
\end{proof}

Thus, negative contributions $\tr(C_\sigma)$ can usually only occur as a consequence of malicious noise.
In particular, if the \ac{SPAM} noise is sufficiently small, it can be guaranteed that $\tr(C_\sigma)$ remains non-negative.
Although it is straightforward to derive sufficient bounds on the strength of the \ac{SPAM} noise for this purpose, this would not add much to the discussion. 
Instead, we add the non-negativity of $\tr(C_\sigma)$ as an assumption and define to this end:

\begin{definition}[Non-malicious \ac{SPAM} noise]
\label{def:non-malicious-SPAM}
 We call the \ac{SPAM} noise \emph{non-malicious} if $\tr(C_\sigma)\geq 0$ for all $\sigma\in\Irr(\tau_\lambda^{\otimes 2})$ .
\end{definition}

\subsubsection{Guarantees for sampling complexity}
\label{sec:guarantees-sampling-complexity}

Finally, we prove guarantees for the sampling complexity of filtered randomized benchmarking.
To this end, we assume that the subdominant contributions to $\FD_\lambda(m)$ and $\EE[f_\lambda^2] $ are bounded and combine them with Chebyshev's inequality.
Suitable error bounds that accomplish this are later derived in Sec.~\ref{sec:sequence-lengths}, and formulated as Lem.~\ref{lem:sequence-length}, \ref{lem:sequence-length-second-moment}, and \ref{lem:relative-sequence-length-second-moment} 

In the following, we assume for simplicity that all relevant irreps $\sigma\in\Irr(\omega)\cap\Irr(\tau_\lambda^{\otimes 2})$ are multiplicity-free in $\omega$.
Note that this is certainly fulfilled if \emph{all} irreps of $\omega$ are multiplicity-free -- which is the case for most relevant examples.

\begin{theorem}[Sampling complexity of filtered \ac{RB} -- additive precision]
\label{thm:sampling-complexity-additive}
Fix a non-trivial irrep $\lambda\in\Irr(\omega)$ such that $\lambda$ and all $\sigma\in\Irr(\omega)\cap\Irr(\tau_\lambda^{\otimes 2})$ are multiplicity-free in $\omega$.
Suppose that we have non-malicious \ac{SPAM} noise, and the assumptions of 
Thm.~\ref{thm:second-moment} are fulfilled.
Moreover, let the sequence length $m$ be sufficiently large such that the subdominant terms in 
$\EE[f_\lambda^2] $ (see Eq.~\eqref{eq:data-form-second-moment}) are bounded by an additive error $\alpha > 0$.
Then, the mean estimator $\hat{\FD}_\lambda(m)$ for $N$ samples, cf.~Eq.~\eqref{eq:FD-estimator}, is close to the expected value with high probability,
\begin{equation}
\Pr\left[ \left|\hat{\FD}_\lambda(m) - \FD_\lambda(m) \right| > \epsilon  \right] \leq \delta \, ,
\end{equation}
provided that 
\begin{align}
 N 
 &\geq 
 \frac{1}{\varepsilon^2 \delta}
 \left( \EE[f_\lambda^2] _\SPAM + \alpha \right) \, .
\end{align}
Here, $\EE[f_\lambda^2] _\SPAM$ is the second moment of the filter function for the ideal, noiseless implementation $\phi=\omega$ where unitaries are sampled from the Haar measure on $G$, but with the same \ac{SPAM} noise, c.f.~Eq.~\eqref{eq:sum-trace-Csigma}.
\end{theorem}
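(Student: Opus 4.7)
The plan is to reduce the problem to a variance bound via Chebyshev's inequality and then to apply Theorem~\ref{thm:second-moment} to bound the second moment of $f_\lambda$ by the SPAM-affected ideal second moment plus a controllable error.

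First, since $\hat{\FD}_\lambda(m)$ is the mean of $N$ iid copies of $f_\lambda$, we have $\Var[\hat{\FD}_\lambda(m)] = \Var[f_\lambda]/N \leq \EE[f_\lambda^2]/N$. Chebyshev's inequality then yields
\begin{equation}
 \Pr\!\left[ \left|\hat{\FD}_\lambda(m) - \FD_\lambda(m) \right| > \varepsilon  \right]
 \leq \frac{\EE[f_\lambda^2]}{N \varepsilon^2}\,.
\end{equation}
Hence it suffices to show $\EE[f_\lambda^2] \leq \EE[f_\lambda^2]_{\SPAM} + \alpha$ and then choose $N \geq (\EE[f_\lambda^2]_{\SPAM} + \alpha)/(\varepsilon^2\delta)$.

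Next, I invoke Theorem~\ref{thm:second-moment}, whose assumptions are in force, to decompose
\begin{equation}
 \EE[f_\lambda^2] = \sum_{\sigma\in\Irr(\omega)\cap\Irr(\tau_\lambda^{\otimes 2})} \tr(C_\sigma I_\sigma^m) + \sum_{\sigma\in\Irr(\tau_\lambda^{\otimes 2})} \tr(D_\sigma O_\sigma^m)\,.
\end{equation}
The second (subdominant) sum is bounded by $\alpha$ by the sequence length hypothesis (quantitatively via Eq.~\eqref{eq:sampling-complex-decay-bound}, as will be spelled out in Lem.~\ref{lem:sequence-length-second-moment}). For the first (dominant) sum, the multiplicity-free assumption on each relevant $\sigma$ implies $n_\sigma=1$, so each $I_\sigma$ is a $1\times 1$ real matrix, i.e.~a real scalar. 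The spectral localization $\spec_\C(I_\sigma)\subset D_1(0)\cap D_{2\delta_\sigma}(1)$ from Theorem~\ref{thm:second-moment} combined with $\delta_\sigma<\Delta_\sigma/4\leq 1/4$ then forces $I_\sigma\in[1-2\delta_\sigma,1]\subset(0,1]$, hence $I_\sigma^m\in(0,1]$.

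Finally, by the non-malicious SPAM hypothesis (Def.~\ref{def:non-malicious-SPAM}) each $\tr(C_\sigma)\geq 0$, so $\tr(C_\sigma I_\sigma^m)\leq \tr(C_\sigma)$ term by term. Summing over $\sigma$ and applying Eq.~\eqref{eq:sum-trace-Csigma} gives
\begin{equation}
 \sum_{\sigma\in\Irr(\omega)\cap\Irr(\tau_\lambda^{\otimes 2})} \tr(C_\sigma I_\sigma^m) \leq \sum_{\sigma} \tr(C_\sigma) = \EE[f_\lambda^2]_{\SPAM}\,,
\end{equation}
which combined with the $\alpha$-bound on the subdominant sum closes the argument. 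The main subtlety is precisely the step in which the non-negativity assumption is used: without it, a negative $\tr(C_\sigma)$ could be amplified (in absolute value) by $I_\sigma^m$ being close to but not equal to its reference value, and the elegant telescoping to $\EE[f_\lambda^2]_\SPAM$ would fail. The multiplicity-free hypothesis is what makes the $I_\sigma$ scalar and real, sidestepping the complication of oscillatory complex eigenvalues that would arise in the general multiplicity case.
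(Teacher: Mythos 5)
Your proposal is correct and follows essentially the same route as the paper's proof: Chebyshev's inequality, discarding the (nonpositive) $-\FD_\lambda(m)^2$ term, bounding the subdominant sum in Thm.~\ref{thm:second-moment} by $\alpha$, and using non-malicious SPAM ($\tr(C_\sigma)\geq 0$) together with $I_\sigma^m\leq 1$ to telescope the dominant sum to $\EE[f_\lambda^2]_\SPAM$ via Eq.~\eqref{eq:sum-trace-Csigma}. Your explicit justification that the multiplicity-free hypothesis makes each $I_\sigma$ a real scalar in $(0,1]$ is a slightly more careful spelling-out of a step the paper leaves implicit, but the argument is the same.
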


\begin{proof}
Chebyshev's inequality guarantees that $|\hat{\FD}_\lambda(m) - \FD_\lambda(m) | \leq \epsilon $ with probability at most $1-\delta$, provided that the number of samples $N$ fulfills
\begin{equation}
N 
\geq 
\frac{\Var[f_\lambda]}{\varepsilon^2 \delta} 
= 
\frac{1}{\varepsilon^2 \delta}
\left( \EE[f_\lambda^2] - \FD_\lambda(m)^2 \right)
\end{equation}
Discarding $\FD_\lambda(m)^2$ will only make the right hand side larger, thus we can concentrate on bounding the second moment.
Under the made assumptions, we find using Thm.~\ref{thm:second-moment} that 
\begin{align}
 \EE[f_\lambda^2] 
 &\leq
 \sum_\sigma
 \tr( C_\sigma )  I_\sigma^m + \alpha \\
 &\leq
 \sum_\sigma
 \tr( C_\sigma ) + \alpha
 =
 \EE[f_\lambda^2]_\SPAM + \alpha \,, 
 \end{align}
 where the sums are taken over $\sigma\in\Irr(\omega)\cap\Irr(\tau_\lambda^{\otimes 2})$.
 For the second and third step, we used that $I_\sigma \leq 1$ and $\tr(C_\sigma)\geq 0$ for all $\sigma$, as well as $\sum_\sigma \tr(C_\sigma) = \EE[f_\lambda^2] _\SPAM$ by Eq.~\eqref{eq:sum-trace-Csigma}.
\end{proof}

Note that the variance $\Var[f_\lambda]$, which we bounded in the proof of the sampling complexity theorem \ref{thm:sampling-complexity-additive}, is in fact decaying with $m$, and so is $N$.
However, the reason for this is the simple fact that the quantity to be estimated, $\FD_\lambda(m)$, is decaying with $m$, too.
Hence, we would have to reduce the error $\varepsilon$ with increasing $m$ to get meaningful estimates.
Therefore, we give a sampling complexity guarantee with relative error in the following.

\begin{theorem}[Sampling complexity of filtered \ac{RB} -- relative precision]
\label{thm:sampling-complexity-relative}
Fix a non-trivial irrep $\lambda\in\Irr(\omega)$ such that $\lambda$ and all $\sigma\in\Irr(\omega)\cap\Irr(\tau_\lambda^{\otimes 2})$ are multiplicity-free in $\omega$.
Suppose that we have non-malicious SPAM noise, and the assumptions of Thm.~\ref{thm:rb-data-random-circuits} and Thm.~\ref{thm:second-moment} are fulfilled.
Moreover, let the sequence length $m$ be sufficiently large such that the subdominant terms in $\FD_\lambda(m)$ and $\EE[f_\lambda^2] $ are bounded by relative errors $\gamma > 0$ and $\kappa > 0$, respectively. 
Then, the mean estimator $\hat{\FD}_\lambda(m)$ for $N$ samples, cf.~Eq.~\eqref{eq:FD-estimator}, is close to the expected value with high probability,
\begin{equation}
\Pr\left[ \left|\hat{\FD}_\lambda(m) - \FD_\lambda(m) \right| > \varepsilon  \FD_\lambda(m) \right] \leq \delta,
\end{equation}
provided that 
\begin{align}
 N 
 &\geq 
 \frac{1}{\varepsilon^2 \delta}
 \left(
 \frac{(1+\kappa)}{(1-\gamma)^2}
 \frac{\EE[f_\lambda^2] _\SPAM}{\FD_\lambda(m)_\SPAM^2}
 I_\lambda^{-2m}
 - 1 \right) \, .
\end{align}
Here, $\FD_\lambda(m)_\SPAM$ and $\EE[f_\lambda^2] _\SPAM$ are the first and second moment of the filter function, respectively, for the ideal, noiseless implementation $\phi=\omega$ where unitaries are sampled from the Haar measure on $G$, but with the same SPAM noise, c.f.~Eqs.~\eqref{eq:Flambda-SPAM} and \eqref{eq:sum-trace-Csigma}.
\end{theorem}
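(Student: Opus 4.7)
The plan is to follow essentially the same Chebyshev-inequality strategy as in Theorem~\ref{thm:sampling-complexity-additive}, but now carefully tracking the relative scales in both numerator and denominator. Since $\hat{F}_\lambda(m)$ is the empirical mean of $N$ i.i.d.~copies of $f_\lambda$, Chebyshev gives
\begin{equation}
\Pr\!\left[|\hat{F}_\lambda(m) - F_\lambda(m)| > \varepsilon F_\lambda(m)\right] \leq \frac{\Var[f_\lambda]}{\varepsilon^2 F_\lambda(m)^2 N},
\end{equation}
so it suffices to upper bound $\Var[f_\lambda]/F_\lambda(m)^2 \leq \EE[f_\lambda^2]/F_\lambda(m)^2 - 1$. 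The task therefore reduces to an upper bound on $\EE[f_\lambda^2]$ and a lower bound on $F_\lambda(m)^2$, each derived from the perturbative signal forms already at our disposal.

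For the denominator, I use Theorem~\ref{thm:rb-data-random-circuits}: since $\lambda$ is multiplicity-free, $I_\lambda$ is a scalar with $I_\lambda\in(1-2\delta_\lambda,1]\subset\R$, and $\tr(A_\lambda)=F_\lambda(m)_{\SPAM}$ as noted in Eq.~\eqref{eq:Flambda-SPAM}. The relative-error assumption $|\tr(B_\lambda O_\lambda^m)|\leq \gamma\, F_\lambda(m)_{\SPAM} I_\lambda^m$ then yields $F_\lambda(m)\geq (1-\gamma)F_\lambda(m)_{\SPAM} I_\lambda^m$ and hence $F_\lambda(m)^2\geq (1-\gamma)^2 F_\lambda(m)_{\SPAM}^2 I_\lambda^{2m}$. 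For the numerator, I apply Theorem~\ref{thm:second-moment}: all relevant $\sigma\in\Irr(\omega)\cap\Irr(\tau_\lambda^{\otimes 2})$ are multiplicity-free, so each $C_\sigma$ is a scalar and each $I_\sigma$ is a scalar with $|I_\sigma|\leq 1$. The non-malicious SPAM assumption gives $\tr(C_\sigma)\geq 0$, so
\begin{equation}
\sum_\sigma \tr(C_\sigma) I_\sigma^m \;\leq\; \sum_\sigma \tr(C_\sigma) \;=\; \EE[f_\lambda^2]_{\SPAM},
\end{equation}
invoking the identity \eqref{eq:sum-trace-Csigma}. Combined with the $\kappa$-relative bound on the subdominant terms, this yields $\EE[f_\lambda^2]\leq (1+\kappa)\EE[f_\lambda^2]_{\SPAM}$.

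Plugging both bounds back into Chebyshev gives
\begin{equation}
\frac{\Var[f_\lambda]}{F_\lambda(m)^2}
\;\leq\;
\frac{(1+\kappa)}{(1-\gamma)^2}\,\frac{\EE[f_\lambda^2]_{\SPAM}}{F_\lambda(m)_{\SPAM}^2}\,I_\lambda^{-2m}-1,
\end{equation}
and requiring $N$ to dominate the right-hand side divided by $\varepsilon^2\delta$ produces exactly the stated bound. No serious obstacle is expected: all the hard analytic work is already packaged into Theorems~\ref{thm:rb-data-random-circuits} and \ref{thm:second-moment}, and the non-malicious-SPAM hypothesis is precisely what is needed to turn the alternating-sign sum $\sum_\sigma \tr(C_\sigma) I_\sigma^m$ into a clean monotone upper bound. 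The only subtle point is being careful that the $\gamma$ and $\kappa$ bounds are genuinely \emph{relative} to the SPAM-only dominant pieces $F_\lambda(m)_{\SPAM} I_\lambda^m$ and $\sum_\sigma \tr(C_\sigma)I_\sigma^m$ respectively, as this is what the to-be-established Lemmas~\ref{lem:sequence-length}--\ref{lem:relative-sequence-length-second-moment} will deliver.
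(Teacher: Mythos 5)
Your proposal is correct and follows essentially the same route as the paper: Chebyshev on the mean estimator, the lower bound $\FD_\lambda(m)\geq(1-\gamma)\FD_\lambda(m)_\SPAM I_\lambda^m$ from Theorem~\ref{thm:rb-data-random-circuits}, and the upper bound $\EE[f_\lambda^2]\leq(1+\kappa)\sum_\sigma\tr(C_\sigma)I_\sigma^m\leq(1+\kappa)\EE[f_\lambda^2]_\SPAM$ using non-malicious SPAM and $I_\sigma\leq 1$. The only cosmetic difference is that the paper keeps the ratio $(I_\sigma/I_\lambda^2)^m$ together before bounding, which is mathematically identical to your ordering of the steps.
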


\begin{proof}
Chebyshev's inequality guarantees that $|\hat{\FD}_\lambda(m) - \FD_\lambda(m) | \leq \varepsilon \FD_\lambda(m) $ with probability at most $1-\delta$, provided that the number of samples $N$ fulfills
\begin{equation}
N 
\geq 
\frac{\Var[f_\lambda]}{\FD_\lambda(m)^2 \varepsilon^2 \delta} 
= 
\frac{1}{\varepsilon^2 \delta}
\left(  \frac{\EE[f_\lambda^2] }{\FD_\lambda(m)^2} - 1 \right)
\end{equation}
Theorems \ref{thm:rb-data-random-circuits} and \ref{thm:second-moment} then give
\begin{align}
 \frac{\EE[f_\lambda^2] }{\FD_\lambda(m)^2}
 &\leq
 \frac{(1+\kappa)\sum_{\sigma} \tr( C_\sigma )  I_\sigma^m  }{(1-\gamma)^2 \tr(A_\lambda)^2 I_\lambda^{2m}} \\
 &\leq
 \frac{(1+\kappa)}{(1-\gamma)^2}
 \frac{1}{\FD_\lambda(m)_\SPAM^2}  
 \sum_{\sigma}
 \tr(C_\sigma)
 \left(\frac{I_\sigma}{I_\lambda^2}\right)^m , 
 \end{align}
 where the sums are taken over $\sigma\in\Irr(\omega)\cap\Irr(\tau_\lambda^{\otimes 2})$.
 The claim then follows as in the proof of Thm.~\ref{thm:sampling-complexity-additive}.
\end{proof}

\subsubsection{Proof of Theorem~\ref{thm:second-moment}}
\label{sec:proof-thm-10}

Since $\omega_\lambda$ is a $\tau_\lambda$-isotype, $\Irr(\omega_\lambda^{\otimes 2}) = \Irr(\tau_\lambda^{\otimes 2})$.
For $\sigma \in \Irr(\tau_\sigma^{\otimes 2})$, let 
$m_\sigma$ denote the multiplicity of $\tau_\sigma$ in $\omega_\lambda^{\otimes 2}$.
Starting from the decomposition given in Eq.~\eqref{eq:RB-data-second-moment-decomp}, we treat the Fourier operators $\widehat{\phi\nu}[\omega^{(2)}_\sigma]\simeq\widehat{\phi\nu}[\tau_{\sigma}]\otimes\id_{m_\sigma}$ independently.
As in the proof of Thm.~\ref{thm:rb-data-random-circuits}, we write
\begin{multline}
  \osandwichb{\rho^{\otimes 2}}%
   {Y_\sigma T_\sigma^+\,
   \widehat{\phi\nu}[\omega^{(2)}_\sigma]^m \big( Y_\sigma^{\dagger} \tilde M_3 \big) }%
   {\tilde\rho} \\
  = \tr\left[ \left(\widehat{\phi\nu}[\tau_{\sigma}]\otimes\id_{m_\sigma}\right)^m \oketbra{\tilde M_{3,\sigma}}{Q_{3,\sigma}} \right] \,,
\end{multline}
for suitable superoperators $\tilde M_{3,\sigma}$ and $Q_{3,\sigma}$.
Moreover, we can again restrict all operators to their action on Hermitian matrices, and thus consider them as real operators.

First, let us consider the case when the irrep $\sigma\in\Irr(\tau_\lambda^{\otimes 2})$ is \emph{not} contained in $\omega$.
Then, the Haar moment operator $\widehat{\omega}[\tau_\sigma]$ is identically zero, so $\widehat{\omega\nu}[\tau_\sigma]$ cannot have an eigenvalue 1.
In this case, we may not invoke perturbation theory, but we can simply set 
\begin{align}
 O_\sigma &\coloneqq  \widehat{\phi\nu}[\tau_{\sigma}], & D_\sigma &\coloneqq  \tr_{m_\sigma}(\oketbra{\tilde M_{3,\sigma}}{Q_{3,\sigma}}).
\end{align}
Hence, we have $\snorm{O_\sigma} \leq \snorm{\widehat{\omega\nu}[\tau_\sigma]} + \delta_\sigma \leq 1 - \Delta_\sigma + \delta_\sigma$.
Analogous to the proof of Thm.~\ref{thm:rb-data-random-circuits}, we then find
\begin{equation}
 \abs[\big]{ \tr\left( D_\sigma O_\sigma^m\right) } 
 \leq  
 \twonorm{Y_\sigma^\dagger \tilde M_3}\twonormb{T_\sigma^+ Y_\sigma \oketbra{\rho^{\otimes 2}}{\tilde\rho}} \snorm{O_\sigma}^m  .
 \label{eq:thm-second-moment-0}
\end{equation}

If $\sigma\in\Irr(\tau_\lambda^{\otimes 2})\cap\Irr(\omega)$, we proceed analogously to the proof of Thm.~\ref{thm:rb-data-random-circuits} and apply perturbation theory to $\widehat{\phi\nu}[\tau_{\sigma}]$ with parameters $(\delta_\sigma,\Delta_\sigma)$.
This results in 
\begin{multline}
  \osandwichb{\rho^{\otimes 2}}%
   {Y_\sigma T_\sigma^+\,
   \widehat{\phi\nu}[\omega^{(2)}_\sigma]^m \big( Y_\sigma^{\dagger} \tilde M_3 \big) }%
   {\tilde\rho} \\
  = \tr\left[ C_\sigma I_\sigma^m \right] + \tr\left[ D_\sigma O_\sigma^m \right],
\end{multline}
where $I_\sigma \in \RR^{n_\sigma \times n_\sigma}$ with $n_\sigma$ the multiplicity of $\tau_\sigma$ in $\omega$ and $C_\sigma = L_{\lambda,1}^\dagger \tr_{m_\sigma}(\oketbra{\tilde M_{3,\sigma}}{Q_{3,\sigma}}) R_{\lambda,1}$ and $D_\sigma = L_{\lambda,2}^\dagger \tr_{_\sigma}(\oketbra{\tilde M_{3,\sigma}}{Q_{3,\sigma}}) R_{\lambda,2}$.
Using $R_{\lambda,1}L_{\lambda,1}^\dagger = \widehat{\omega}[\tau_\sigma]$, we find in particular that
\begin{align}
 \tr(C_\sigma) 
 &= \tr( \oketbra{\tilde M_{3,\sigma}}{Q_{3,\sigma}}\, \widehat{\omega}[\tau_\sigma]\otimes\id_{m_\sigma} ) \\
 &= \osandwichb{\rho^{\otimes 2}}%
   {Y_\sigma T_\sigma^+\,
   \widehat{\omega}[\omega^{(2)}_\sigma] \big( Y_\sigma^{\dagger} \tilde M_3 \big) }%
   {\tilde\rho} \, .
 \label{eq:trace-Csigma}
\end{align}
Moreover, we have the following bound:
\begin{multline}
 \abs[\big]{ \tr\left( D_\sigma O_\sigma^m\right) } \\
 \leq  
 g(\delta_\sigma/\Delta_\sigma) \,
 \twonorm{Y_\sigma^\dagger \tilde M_3}\twonorm{T_\sigma^+ Y_\sigma \oketbra{\rho^{\otimes 2}}{\tilde\rho}} \snorm{O_\sigma}^m \, .
 \label{eq:thm-second-moment-1}
\end{multline}

Finally, we derive the claimed bound on the sum $\sum_{\sigma\in\Irr(\tau_\lambda^{\otimes 2})} \tr\left( D_\sigma O_\sigma^m\right)$.
To this end, we use Eqs.~\eqref{eq:thm-second-moment-0} and \eqref{eq:thm-second-moment-1} for $\sigma \notin \Irr(\omega)$ and $\sigma \in \Irr(\omega)$, respectively.
The bounds are almost identical, except for the appearance of the factor $g(\delta_\sigma/\Delta_\sigma)$ in Eq.~\eqref{eq:thm-second-moment-1}.
However, $g(x) \geq 1$ for all $x\geq 0$, hence the prefactor is uniformly bounded by
\begin{align}
 \max_{\sigma\in\Irr(\tau_\lambda^{\otimes 2})\cap\Irr(\omega)}  g(\delta_\sigma/\Delta_\sigma) 
 &= g\Big( \max_{\sigma\in\Irr(\tau_\lambda^{\otimes 2})\cap\Irr(\omega)}  \delta_\sigma/\Delta_\sigma\Big) \\
 &\leq g\Big( \max_{\sigma\in\Irr(\tau_\lambda^{\otimes 2})}  \delta_\sigma/\Delta_\sigma\Big) \\
 &=  g(r_\lambda^{(3)})\,,
\end{align}
where we additionally used the monotonicity of $g$.
Furthermore, we find the uniform bounds:
\begin{align}
 \twonorm{ T_\sigma^+ Y_\sigma \oketbra{\rho^{\otimes 2}}{\tilde\rho} } 
 &\leq 
 \snorm{T_\sigma^+} \sqrt{\osandwich{\rho^{\otimes 2}}{P_\sigma}{\rho^{\otimes 2}}} \\
 &\leq
 \snorm{S_\lambda^+}^2 \osandwich{\rho}{P_\lambda}{\rho}\,, \\
 \max_{\sigma\in\Irr(\tau_\lambda^{\otimes 2})} \snorm{O_\sigma}
 &\leq
 1 - \min_{\sigma\in\Irr(\tau_\lambda^{\otimes 2})} \Delta_\sigma \left( 1 - 2 \frac{\delta_\sigma}{\Delta_\sigma} \right) \\
 &\leq 
 1 - \Delta_\lambda^{(3)} \left( 1 - 2 r_\lambda^{(3)} \right).
\end{align}
Next, we compute, using the concavity of the square root:
\begin{align}
\MoveEqLeft[1]\bigg(\sum_{\sigma\in\Irr(\tau_\lambda^{\otimes 2})} \twonorm{Y_\sigma^\dagger \tilde M_3} \bigg)^2\\
&= 
\bigg(\sum_{\sigma\in\Irr(\tau_\lambda^{\otimes 2})} \sqrt{\tr\left( P_\sigma M_3 \EM \EM^\dagger M_3^\dagger \right)} \bigg)^2 \\
&\leq 
\sum_{\sigma\in\Irr(\tau_\lambda^{\otimes 2})}  \sum_{i,j\in[d]} \osandwich{E_i}{\EM\EM^\dagger}{E_j} \osandwich{E_j\otimes E_j}{P_\sigma}{E_i\otimes E_i} \\
&=
\sum_{i,j\in[d]} \osandwich{E_i}{\EM\EM^\dagger}{E_j} \osandwich{E_j}{P_\lambda}{E_i}^2 \\
&\leq
\twonorm{X_\lambda^\dagger \tilde M}^2 \, .
\end{align}
Here, we have used 
$\sum_\sigma P_\sigma = P_\lambda^{\otimes 2}$, 
$\osandwich{E_i}{\EM\EM^\dagger}{E_j}\geq 0$ since $\EM$ and $\EM^\dagger$ are completely positive, 
$\osandwich{E_i}{P_\lambda}{E_j}^2\leq \osandwich{E_i}{P_\lambda}{E_j}$, and inserted the expansion as in Eq.~\eqref{eq:main-thm-proof-1}.
Hence, we can use the appropriate bounds from the proof of Thm.~\ref{thm:rb-data-random-circuits} for this sum.

Using the definition of $c_\lambda$ from Thm.~\ref{thm:rb-data-random-circuits} and combining the above bounds, we then find the claimed bound:
\begin{align}
 \abs[\bigg]{ \sum_{\sigma \in \Irr(\tau_\lambda^{\otimes 2})} \tr\left( D_\sigma O_\sigma^m\right) }
 &\leq
  c_\lambda \snorm{S_\lambda^+} \osandwich{\rho}{P_\lambda}{\rho} \, g(r_\lambda^{(3)}) \\
 &\qquad\times
  \left(1 - \Delta_\lambda^{(3)} \left( 1 - 2 r_\lambda^{(3)} \right)\right)^m  .
\end{align}

\subsubsection{Sampling complexity of filtering onto trivial irrep}

Here, things are slightly simpler:
The filter function takes only the values 0 and 1, and hence Hoeffding's inequality guarantees sample efficiency.

\subsection{Sufficient sequence lengths}
\label{sec:sequence-lengths}

One of the main implications of Theorem~\ref{thm:rb-data-random-circuits} is 
that we can control the ratio between the dominant signal and the subdominant signal in \eqref{eq:thm:signal-form}.
Since the subdominant signal is suppressed with increasing sequence length $m$ according to \eqref{eq:thm:subdominant-decay-bound}, it is sufficient to choose $m$ large enough to be able to accurately extract the decay parameters.
In the following, we derive and discuss a corresponding lower bound on the sequence length.
Let us stress that using too short sequences may involve the danger of \emph{overestimating decay rates} and thus -- if interpreted as average gate fidelities -- of reporting \emph{too large gate fidelities}.
Similar concerns have already been raised in the non-uniform \ac{RB} literature \cite{boone2019randomized,Proctor2019DirectRandomized}.

We focus here on generally applicable bounds.
In particular, we do not make use of any details about the noise in the implementation map, the measure, or even the representation theory of the group $G$.
We expect that for specific settings, a more refined analysis using more assumptions on the noise and a specific measure and group yields improved the bounds.

Furthermore, the required relative suppression might be relaxed using a more sophisticated data processing.
The suitability and limitations of methods like ESPRIT \cite{roy_estimation_1986} for randomized benchmarking has been previously discussed in Ref.~\cite{helsen_general_2022}.
Note however that the performance of these methods also crucially depend on the separation of poles on the real axis and the total number of poles.
We leave the study of their applicability to future work.

\subsubsection{Exposing the dominant signal}
\label{sec:subdominant-decay}

If the desired suppression of the subdominant signal is $\alpha$, then the bound \eqref{eq:thm:subdominant-decay-bound} in Thm.~\ref{thm:rb-data-random-circuits} yields the following lower bound on the sequence length $m$:
\begin{equation}
  m \geq 
  \frac{ \log c_\lambda + \frac12  \log\,\osandwich{\rho}{P_\lambda}{\rho} + \log g\big(\frac{\delta_\lambda}{\Delta_\lambda}\big) + \log \frac1\alpha}{\log\frac{1}{1-\Delta_\lambda + 2 \delta_\lambda}}.
  \label{eq:sequence-length-bound}
\end{equation}
The right hand side of Eq.~\eqref{eq:sequence-length-bound} depends heavily on the group $G$ and the irrep $\tau_\lambda$ on which we filter, on the spectral gap $\Delta_\lambda$, as well as the SPAM noise and implementation error $\delta$.

The bound \eqref{eq:sequence-length-bound} only guarantees the suppression by an \emph{additive error} $\alpha$ which is fine as long as $\FD_\lambda(m) = O(1)$.
However, this is not the case for two reasons.
First, $\FD_\lambda(m)$ can be (exponentially) small for small irreps, as it ideally measures the overlap of the initial state with the irrep.
Second, $\FD_\lambda(m)$ decays with $m$ and thus the error should decrease with $m$, too.
Hence, a sensible solution is to require that the subdominant terms are instead suppressed by a \emph{relative error} $\gamma$.
The derivation of an analogous bound to \eqref{eq:sequence-length-bound} is postponed to the proof of the following Lemma.

\begin{lemma}[Sequence length bounds for extraction of decay parameters]
\label{lem:sequence-length}
For any non-trivial irrep $\lambda \in \Irr(\omega)$, the subdominant decay in Thm.~\ref{thm:rb-data-random-circuits} is bounded by $\alpha > 0$ provided that
\begin{multline}
 m \geq 
 \frac{\Delta_\lambda^{-1}}{1 - 2 \frac{\delta_\lambda}{\Delta_\lambda}}
 \Big( \log c_\lambda + \tfrac12  \log\, \osandwich{\rho}{P_\lambda}{\rho}\\
  + \log g\big(\tfrac{\delta_\lambda}{\Delta_\lambda}\big) + \log \tfrac1\alpha \Big).
 \label{eq:sequence-length-bound-2}
\end{multline}
Moreover, if $\lambda$ is multiplicity-free, the subdominant decay is smaller than $\gamma\, |\tr(A_\lambda I_\lambda^m)|$ for $\gamma > 0$ provided that 
\begin{multline}
 m \geq 
 \frac{\Delta_\lambda^{-1}}{1 - 4 \frac{\delta_\lambda}{\Delta_\lambda}}
 \Big( \log c_\lambda + \tfrac12  \log\,\osandwich{\rho}{P_\lambda}{\rho} + \log g\big(\tfrac{\delta_\lambda}{\Delta_\lambda}\big) \\
  + \log\big(|\FD_\lambda(m)_\SPAM|^{-1}\big)  + \log \tfrac1\gamma \Big) \, .
  \label{eq:relative-sequence-length-bound}
\end{multline}
\end{lemma}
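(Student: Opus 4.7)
The plan is to directly manipulate the subdominant-decay bound \eqref{eq:thm:subdominant-decay-bound} of Theorem~\ref{thm:rb-data-random-circuits} using elementary logarithmic inequalities. Throughout, write $B_\lambda \coloneqq c_\lambda \sqrt{\osandwich{\rho}{P_\lambda}{\rho}}\,g(\delta_\lambda/\Delta_\lambda)$ and $p \coloneqq 1 - \Delta_\lambda + 2\delta_\lambda$, so that \eqref{eq:thm:subdominant-decay-bound} reads $|\tr(B_\lambda O_\lambda^m)| \leq B_\lambda\, p^m$. Note that the perturbative assumption $\delta_\lambda < \Delta_\lambda/4$ guarantees $p < 1$, so $p^m$ is indeed exponentially decreasing.

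For the additive bound, I would require $B_\lambda\, p^m \leq \alpha$ and solve for $m$. Taking logarithms (and flipping the inequality since $\log p < 0$) gives the explicit threshold $m \geq (\log B_\lambda + \log(1/\alpha))/\log(1/p)$. To convert this into the form stated in \eqref{eq:sequence-length-bound-2}, I would apply the standard inequality $-\log(1-x) \geq x$ valid for $x \in [0,1)$ to $x = \Delta_\lambda - 2\delta_\lambda$, yielding $\log(1/p) \geq \Delta_\lambda(1 - 2\delta_\lambda/\Delta_\lambda)$; substituting into the denominator then yields the claimed upper bound on the required sequence length.

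For the relative bound, I would require $B_\lambda\, p^m \leq \gamma\,|\tr(A_\lambda I_\lambda^m)|$. Since $\lambda$ is multiplicity-free, the dominant block is one-dimensional: $A_\lambda$ and $I_\lambda$ are scalars, $A_\lambda = \FD_\lambda(m)_\SPAM$ by \eqref{eq:Flambda-SPAM}, and $I_\lambda \in (1-2\delta_\lambda, 1]$ by Section~\ref{sec:dominantSignal}. Thus $\tr(A_\lambda I_\lambda^m) = \FD_\lambda(m)_\SPAM\, I_\lambda^m$, and the inequality becomes $(p/I_\lambda)^m \leq \gamma |\FD_\lambda(m)_\SPAM|/B_\lambda$. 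Taking logarithms produces the lower bound $m \geq (\log B_\lambda + \log(1/\gamma) + \log(1/|\FD_\lambda(m)_\SPAM|))/\log(I_\lambda/p)$.

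The only nontrivial step is to bound $\log(I_\lambda/p)$ from below by $\Delta_\lambda(1 - 4\delta_\lambda/\Delta_\lambda)$, which is where the factor $1/(1-4\delta_\lambda/\Delta_\lambda)$ in \eqref{eq:relative-sequence-length-bound} arises (as opposed to the weaker $1/(1-2\delta_\lambda/\Delta_\lambda)$ of the additive case). Using $I_\lambda \geq 1-2\delta_\lambda$, I would verify the algebraic identity
\begin{equation}
\frac{1-2\delta_\lambda}{1-\Delta_\lambda+2\delta_\lambda} \;\geq\; \frac{1}{1-(\Delta_\lambda - 4\delta_\lambda)},
\end{equation}
which after cross-multiplication reduces to $2\delta_\lambda(\Delta_\lambda - 4\delta_\lambda) \geq 0$ and thus holds precisely because of the assumption $\Delta_\lambda > 4\delta_\lambda$. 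Applying $-\log(1-x) \geq x$ once more with $x = \Delta_\lambda - 4\delta_\lambda$ then gives $\log(I_\lambda/p) \geq \Delta_\lambda(1 - 4\delta_\lambda/\Delta_\lambda)$, and substitution completes the proof. The main (and essentially only) obstacle is spotting the right intermediate inequality that converts the tighter lower bound on $I_\lambda$ into exactly the $1 - 4\delta_\lambda/\Delta_\lambda$ factor rather than a weaker, less symmetric constant; everything else is routine log/arithmetic manipulation.
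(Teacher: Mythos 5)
Your proposal is correct and follows essentially the same route as the paper: start from the bound \eqref{eq:thm:subdominant-decay-bound}, compare against $|\tr(A_\lambda I_\lambda^m)| = |\FD_\lambda(m)_\SPAM|\,I_\lambda^m > |\FD_\lambda(m)_\SPAM|(1-2\delta_\lambda)^m$ in the multiplicity-free case, and convert the resulting logarithms via $-\log(1-x)\geq x$. The only point of divergence is the key inequality $\log\tfrac{1-\Delta_\lambda+2\delta_\lambda}{1-2\delta_\lambda}\leq -(\Delta_\lambda-4\delta_\lambda)$: the paper obtains it by arguing that the left-hand side is convex in $\delta_\lambda$ and hence lies below the chord from $\delta_\lambda=0$ to $\delta_\lambda=\Delta_\lambda/4$, whereas you verify the equivalent statement $\tfrac{1-2\delta_\lambda}{1-\Delta_\lambda+2\delta_\lambda}\geq\tfrac{1}{1-\Delta_\lambda+4\delta_\lambda}$ by cross-multiplication, which reduces to $2\delta_\lambda(\Delta_\lambda-4\delta_\lambda)\geq 0$ --- a valid and arguably more elementary derivation that yields the identical final bound.
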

Here, $\FD_\lambda(m)_\SPAM$ denotes again the filtered \ac{RB} signal that one would obtain for perfect unitaries sampled from the Haar measure on $G$, but subject to SPAM noise.

\begin{proof}
The first result follows directly by using the bound $\log(1+x) \leq x$ for $x > -1$ in Eq.~\eqref{eq:sequence-length-bound}.
Next, let us assume that $\lambda$ is multiplicity-free.
Then, we can rewrite the first term in $\FD_\lambda(m)$ using Thm.~\ref{thm:rb-data-random-circuits} and Eq.~\eqref{eq:Flambda-SPAM} as follows:
\begin{align}
 \abs{\tr(A_\lambda I_\lambda^m)}
 &= |\FD_\lambda(m)_\SPAM| \, I_\lambda^m  \\
 &>
 |\FD_\lambda(m)_\SPAM| \, (1 - 2\delta_\lambda)^m.
\end{align}
We then find the following bound for the relative suppression:
\begin{multline}
 \frac{\abs[\big]{\FD_\lambda(m) - \tr\left( A_\lambda I_\lambda^m\right) }}{\abs{\tr\left( A_\lambda I_\lambda^m\right)}}
 <
 \frac{c_\lambda \sqrt{\osandwich{\rho}{P_\lambda}{\rho}}  \, g\big(\tfrac{\delta_\lambda}{\Delta_\lambda}\big)}{|\FD_\lambda(m)_\SPAM|} \\
 \times
 \left(\frac{1-\Delta_\lambda+2\delta_\lambda}{1 - 2\delta_\lambda}\right)^m .
 \label{eq:lemma-seq-length-relative-suppression}
\end{multline}
We use the following inequality which is valid for all $\delta_\lambda/\Delta_\lambda \in [0,1/4] $ and $\Delta\in [0,1)$:
\begin{align}
 \log\left(\frac{1-\Delta_\lambda+2\delta_\lambda}{1-2\delta_\lambda}\right) 
 &\leq 
 \log(1-\Delta_\lambda) \left(1 - 4\frac{\delta_\lambda}{\Delta_\lambda}\right) \\
 &\leq
  - \left(1 - 4\frac{\delta_\lambda}{\Delta_\lambda}\right)\Delta_\lambda\, .
\label{eq:log-bound-relative-error}
\end{align}
The first inequality follows since the left hand side is strictly monotonically increasing in $\delta_\lambda$ for any $\Delta_\lambda\in [0,1)$.
Thus, it is a convex function which is upper bounded in the interval $[0,\Delta_\lambda/4]$ by a straight line.
Requiring that Eq.~\eqref{eq:lemma-seq-length-relative-suppression} is less than $\gamma>0$, and using Eq.~\eqref{eq:log-bound-relative-error} yields the claimed bound on $m$.
\end{proof}

We proceed by discussing the individual contributions to the bounds in Lem.~\ref{lem:sequence-length}.
In general, we conjecture that $\snorm{S_\lambda^+} = O(\poly(d_\lambda))$ such that we have $\log c_\lambda = O(\log(d_\lambda))$.
Hence, we expect that the sequence length scales at least with $\log(d_\lambda)$ in the worst case.
By again specializing $\lambda$ to be multiplicity-free and aligned with $M$, we can make this more precise:
By Eqs.~\eqref{eq:lower-bound-Slambda} and \eqref{eq:upper-bound-Slambda}, we have $d_\lambda \leq d_\lambda/s_\lambda \leq d_\lambda^2$, hence we find $\log(c_\lambda)=\frac12\log(d_\lambda/s_\lambda)=\Theta(\log d_\lambda)$.
Furthermore, we have $\delta_\lambda/\Delta_\lambda < 1/4$ in the perturbative regime, and hence the prefactor in Eq.~\eqref{eq:sequence-length-bound-2} cannot exceed $2/\Delta_\lambda$.
As discussed in Sec.~\ref{sec:implementation-quality}, $g(x)$ diverges for $x\rightarrow\infty$, and we thus have to assume that the implementation error is actually bounded away from $1/4$, say $\delta_\lambda/\Delta_\lambda\leq 1/5$ such that $\log g(1/5)\approx 1.798\leq 1.8$.
Under this assumption, we can also bound the otherwise diverging prefactor in Eq.~\eqref{eq:relative-sequence-length-bound} as $5/\Delta_\lambda$.
Finally, the contribution by $\osandwich{\rho}{P_\lambda}{\rho}\leq 1$ is negative and may counter the effect of $\log(d_\lambda)$ in certain regimes.
Under the above assumptions, we can then bring Eq.~\eqref{eq:sequence-length-bound-2} into the simplified form
\begin{equation}
 m 
 \geq \frac{2}{\Delta_\lambda} \left( \log d_\lambda + \tfrac12  \log\,\osandwich{\rho}{P_\lambda}{\rho} + \log \tfrac1\alpha + 1.8 \right). 
 \label{eq:sequence-length-bound-2-simplified}
\end{equation}

The same simplification applies to the relative error bound \eqref{eq:relative-sequence-length-bound}.
In this case, we have to  additionally bound the term depending on $\FD_\lambda(m)_\SPAM$.
Assuming that $[P_\lambda,M]=0$,
we can rewrite $\FD_\lambda(m)_\SPAM = v_\mathrm{SP}v_\mathrm{M} \osandwich{\rho}{P_\lambda}{\rho}$ in terms of the SPAM visibilities introduced in Section~\ref{sec:dominantSignal}.
We arrive at the bound (assuming $\delta_\lambda/\Delta_\lambda\leq 1/5$):
\begin{multline}
 m 
 \geq \frac{5}{\Delta_\lambda} \Big( \log d_\lambda + \tfrac12  \log(\osandwich{\rho}{P_\lambda}{\rho}^{-1}) \\
 + \log\tfrac1\gamma + \log \tfrac{1}{v_\mathrm{SP}v_\mathrm{M}} + 1.8 \Big). 
 \label{eq:relative-sequence-length-bound-simplified}
\end{multline}
This bound is almost identical to the additive error bound \eqref{eq:sequence-length-bound-2-simplified}, however, it also depends on the amount of \ac{SPAM} noise relative to the ideal coefficients.
This is unavoidable since the \ac{SPAM} noise typically decreases the strength of the signal as discussed in Section \ref{sec:dominantSignal}.
For example, for depolarizing \ac{SPAM} noise of strength $1-p$ we found that $\FD_\lambda(m)_\SPAM$ is suppressed by $p^2$.
Accordingly, we find a contribution of $2 \log(1/p)$ to the sequence length \eqref{eq:relative-sequence-length-bound-simplified} in this situation.

Finally, an important contribution to both bounds \eqref{eq:sequence-length-bound-2-simplified} and \eqref{eq:relative-sequence-length-bound-simplified} is the inverse spectral gap.
For fast-scrambling random circuits, like brickwork circuits, we have $\Delta^{-1} = O(1)$, i.e.~the spectral gap is independent of the dimension $d=2^n$.
In this setting, we also have $\log(d_\Ad)=O(n)$, and thus the sequence length of filtered \ac{RB} requires \emph{linear circuit depth}.
We comment on this in more detail for various random circuits in Sec.~\ref{sec:application-to-random-circuits}, and give precise scalings with small constants.

However, the scaling depends on both $d_\lambda$ and $\Delta_\lambda^{-1}$, allowing for scenarios with even shorter circuit depth for smaller irrep dimensions.
For instance, we have $d_\lambda=1$ for the Heisenberg-Weyl group $\HW{n}{p}$, and many irreps of the local Clifford group have sub-exponential dimension.
Since these are both local groups, it is also straightforward to obtain large spectral gaps.

\subsubsection{Obtaining sampling complexity bounds}
\label{sec:subdominant-decays-second-moment}

As in Sec.~\ref{sec:subdominant-decay}, we study the suppression of the subdominant decays in Thm.~\ref{thm:second-moment}.
Our goal is to derive sufficient conditions on the sequence length $m$ under which the assumptions of our sampling complexity theorems \ref{thm:sampling-complexity-additive} and \ref{thm:sampling-complexity-relative} are fulfilled.
To this end, the following lemma gives a sufficient condition for the sequence length $m$ in terms of the spectral gap $\Delta_\lambda^{(3)}$, and irrep-specific quantities.

\begin{lemma}
\label{lem:sequence-length-second-moment}
The sum of subdominant decays in Thm.~\ref{thm:second-moment} is less than $\beta > 0$ provided that
\begin{align}
 m \geq 
 \frac{(\Delta_\lambda^{(3)})^{-1}}{1 - 2 r_\lambda^{(3)}} 
 \Big( \log(c_\lambda \snorm{S_\lambda^+}) + \log \,\osandwich{\rho}{P_\lambda}{\rho} \\
  + \log g(r_\lambda^{(3)}) + \log \tfrac{1}{\beta}  \Big).
 \label{eq:sequence-length-bound-sampling-complexity}
\end{align}
\end{lemma}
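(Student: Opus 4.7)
The plan is to directly invoke the subdominant decay bound in Theorem~\ref{thm:second-moment} and require it to be at most $\beta$. Specifically, I would start from the estimate
\[
 \abs[\bigg]{\sum_{\sigma \in \Irr(\tau_\lambda^{\otimes 2})} \tr(D_\sigma O_\sigma^m)}
 \leq c_\lambda \snorm{S_\lambda^+} \osandwich{\rho}{P_\lambda}{\rho}\, g(r_\lambda^{(3)})\, \big(1 - \Delta_\lambda^{(3)}(1 - 2 r_\lambda^{(3)})\big)^m \, ,
\]
and set the right-hand side less than or equal to $\beta$. Taking logarithms and collecting the $m$-independent terms yields the condition
\[
 m \, \log\frac{1}{1 - \Delta_\lambda^{(3)}(1 - 2 r_\lambda^{(3)})}
 \geq
 \log(c_\lambda \snorm{S_\lambda^+}) + \log\,\osandwich{\rho}{P_\lambda}{\rho} + \log g(r_\lambda^{(3)}) + \log\tfrac{1}{\beta} \, .
\]

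Next, I would apply the elementary inequality $\log(1 - x)^{-1} \geq x$ for $x \in [0,1)$ (equivalently, $\log(1-x) \leq -x$) with $x = \Delta_\lambda^{(3)}(1 - 2 r_\lambda^{(3)})$, which is in $[0,1)$ since $r_\lambda^{(3)} < 1/4$ by the assumption inherited from Theorem~\ref{thm:second-moment}. This lower-bounds the logarithmic factor and turns the implicit condition on $m$ into the explicit, linear bound claimed in the lemma:
\[
 m \geq \frac{(\Delta_\lambda^{(3)})^{-1}}{1 - 2 r_\lambda^{(3)}}\Big( \log(c_\lambda \snorm{S_\lambda^+}) + \log\,\osandwich{\rho}{P_\lambda}{\rho} + \log g(r_\lambda^{(3)}) + \log\tfrac{1}{\beta} \Big) \, .
\]

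There is essentially no obstacle here; the argument is the direct second-moment analogue of the derivation of Eq.~\eqref{eq:sequence-length-bound-2} from the bound \eqref{eq:thm:subdominant-decay-bound} given in the proof of Lemma~\ref{lem:sequence-length}. The only mild subtlety worth noting in passing is that $g(r_\lambda^{(3)}) \geq 1$ so the logarithm on the right-hand side is non-negative, and that all quantities $c_\lambda, \snorm{S_\lambda^+}, g(r_\lambda^{(3)}), \osandwich{\rho}{P_\lambda}{\rho}^{-1}$ are finite under the standing hypotheses, so the bound is well-defined.
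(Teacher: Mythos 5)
Your proposal is correct and is exactly the paper's argument: the paper proves this lemma by stating that it "follows in complete analogy to the proof of Lem.~\ref{lem:sequence-length}," i.e., by setting the right-hand side of the subdominant bound \eqref{eq:sampling-complex-decay-bound} equal to $\beta$, taking logarithms, and applying $\log(1-x)\leq -x$ with $x=\Delta_\lambda^{(3)}(1-2r_\lambda^{(3)})$, which is precisely what you do.
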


\begin{proof}
The statement follows in complete analogy to the proof of Lem.~\ref{lem:sequence-length}.
\end{proof}

As for Lemma \ref{lem:sequence-length}, we expect that the term $\log(c_\lambda \snorm{S_\lambda^+})$ generally scales as $O(\log(d_\lambda)$, however now with a larger prefactor.
If $\lambda \in \Irr(\omega)$ is multiplicity-free and $[P_\lambda,M]=0$, we can use $s_\lambda^{-1}\leq d_\lambda$, c.f.~Eq.~\eqref{eq:upper-bound-Slambda}, to obtain $\log(c_\lambda \snorm{S_\lambda^+})=\frac12 \log(d_\lambda/s_\lambda^3) \leq 2 \log(d_\lambda)$.
As in Sec.~\ref{sec:subdominant-decay}, we want to assume that the implementation error is bounded away from 1/4, say $\delta_\sigma/\Delta_\sigma \leq 1/5$ for all $\sigma$, such that we have $r_\lambda^{(3)}\leq 1/5$ and $g(r_\lambda^{(3)}) \leq 1.8$.
Then, we can bring Eq.~\eqref{eq:sequence-length-bound-sampling-complexity} into the more appealing form
\begin{align}
 m \geq \frac{2}{\Delta_\lambda^{(3)}} \Big( 2\log d_\lambda + \log \,\osandwich{\rho}{P_\lambda}{\rho} + \log\tfrac{1}{\beta} + 1.8 \Big).
\end{align}

Next, we want to derive a sequence length bound involving relative errors.
Compared to Lem.~\ref{lem:sequence-length}, the proof is slightly more delicate since we compare the subdominant decays with the \emph{sum} of dominant ones $\sum_\sigma \tr(C_\sigma I_\sigma^m)$.
Finding a good lower bound for this sum is made difficult by the fact that some of the individual terms may a priori be negative, c.f.~the discussion in Sec.~\ref{sec:discussion-second-moment}.
We do not think that this happens in any practically relevant scenario, but include this as an assumption in the following lemma.

\begin{lemma}
\label{lem:relative-sequence-length-second-moment}
Assume non-malicious \ac{SPAM} noise and fix a non-trivial irrep $\lambda\in\Irr(\omega)$ such that all $\sigma\in\Irr(\omega)\cap\Irr(\tau_\lambda^{\otimes 2})$ are multiplicity-free in $\omega$.
Then, the sum of subdominant decays in Thm.~\ref{thm:second-moment} is suppressed by a relative error $\kappa > 0$ compared to the dominant decays, provided that
\begin{multline}
 m \geq 
 \frac{(\Delta_\lambda^{(3)})^{-1}}{1 - 4\frac{\delta_\lambda^{(3)}}{\Delta_\lambda^{(3)}}}
 \Big( \log \tfrac{c_\lambda}{s_\lambda} +  \log \,\osandwich{\rho}{P_\lambda}{\rho} \\
 + \log g\big(\tfrac{\delta_\lambda^{(3)}}{\Delta_\lambda^{(3)}}\big) + \log(1/\EE[f_\lambda^2] _\SPAM) + \log\tfrac{1}{\kappa} \Big) \, .
  \label{eq:relative-sequence-length-bound-second-moment}
\end{multline}
Here, $\EE[f_\lambda^2] _\SPAM$ is the second moment of the filter function that one would obtain for the ideal, noiseless implementation $\phi=\omega$ where unitaries are sampled from the Haar measure on $G$, but subject to \ac{SPAM} noise, c.f.~Eq.~\eqref{eq:sum-trace-Csigma}.
\end{lemma}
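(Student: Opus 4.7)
The strategy is to adapt the argument of Lemma~\ref{lem:sequence-length}, replacing the comparison with the single dominant scalar decay $\tr(A_\lambda I_\lambda^m)$ by a comparison with the full sum of dominant contributions $\sum_{\sigma\in\Irr(\omega)\cap\Irr(\tau_\lambda^{\otimes 2})} \tr(C_\sigma I_\sigma^m)$ appearing in Thm.~\ref{thm:second-moment}. The principal subtlety is to exhibit a clean lower bound on this sum; this is exactly where the multiplicity-freeness hypothesis and the non-malicious \ac{SPAM} assumption enter.

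First, since each $\sigma$ is multiplicity-free in $\omega$, Thm.~\ref{thm:second-moment} ensures that $I_\sigma \in \R^{1\times 1}$ is a scalar with eigenvalue in $D_1(0) \cap D_{2\delta_\sigma}(1)$, so $I_\sigma \geq 1 - 2\delta_\sigma \geq 1 - 2\delta_\lambda^{(3)}$, using $\delta_\sigma \leq \delta_\lambda^{(3)}$ (cf.~the discussion following Thm.~\ref{thm:second-moment}). Combined with the non-malicious \ac{SPAM} assumption $\tr(C_\sigma) \geq 0$ and identity \eqref{eq:sum-trace-Csigma}, this gives
\begin{equation}
 \sum_{\sigma \in \Irr(\omega)\cap \Irr(\tau_\lambda^{\otimes 2})} \tr(C_\sigma I_\sigma^m) \;\geq\; (1 - 2\delta_\lambda^{(3)})^m \, \EE[f_\lambda^2]_{\SPAM}\,.
\end{equation}

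Next, I would invoke the subdominant bound \eqref{eq:sampling-complex-decay-bound} from Thm.~\ref{thm:second-moment} and form the relative suppression ratio. Requiring it to be at most $\kappa$ yields
\begin{equation}
 \frac{c_\lambda\, \snorm{S_\lambda^+}\, \osandwich{\rho}{P_\lambda}{\rho}\, g(r_\lambda^{(3)})}{\EE[f_\lambda^2]_{\SPAM}}
 \cdot
 \left( \frac{1 - \Delta_\lambda^{(3)}(1-2 r_\lambda^{(3)})}{1 - 2\delta_\lambda^{(3)}} \right)^{m}
 \;<\; \kappa \,.
\end{equation}
Since $\lambda$ is multiplicity-free, $\snorm{S_\lambda^+} = 1/s_\lambda$, reproducing the $\log(c_\lambda/s_\lambda)$ term in the claim.

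Finally, I would take logarithms and bound the base of the exponential by the analogue of \eqref{eq:log-bound-relative-error}, using convexity of $\delta \mapsto \log \tfrac{1 - \Delta_\lambda^{(3)} + 2\delta}{1-2\delta}$ in the interval $[0, \Delta_\lambda^{(3)}/4]$ to obtain
\begin{equation}
 \log \frac{1 - \Delta_\lambda^{(3)}(1-2r_\lambda^{(3)})}{1-2\delta_\lambda^{(3)}}
 \;\leq\; -\,\Delta_\lambda^{(3)}\left(1 - 4\tfrac{\delta_\lambda^{(3)}}{\Delta_\lambda^{(3)}}\right).
\end{equation}
Solving for $m$ gives the claimed bound. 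The main obstacle is the lower bound on $\sum_\sigma \tr(C_\sigma I_\sigma^m)$: without non-malicious \ac{SPAM} the summands could cancel and no relative guarantee would be possible, and without multiplicity-freeness the $I_\sigma$ would be matrices whose powers are not trivially lower bounded by $(1-2\delta_\sigma)^m$; both hypotheses are precisely what render the argument clean.
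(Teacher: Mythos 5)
Your proposal is correct and follows essentially the same route as the paper's proof: lower-bounding the dominant sum via $I_\sigma \geq 1-2\delta_\lambda^{(3)}$ and $\tr(C_\sigma)\geq 0$ together with $\sum_\sigma \tr(C_\sigma)=\EE[f_\lambda^2]_\SPAM$, forming the relative suppression ratio from the subdominant bound of Thm.~\ref{thm:second-moment}, and then applying the logarithmic estimate \eqref{eq:log-bound-relative-error} exactly as in Lem.~\ref{lem:sequence-length}. The paper merely compresses these steps into a single displayed ratio and the remark that the rest follows as in Lem.~\ref{lem:sequence-length}; your write-up fills in the same details.
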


\begin{proof}
We use $r_\lambda^{(3)}\leq\delta_\lambda^{(3)}/\Delta_\lambda^{(3)}$ where $\delta_\lambda^{(3)} \coloneqq  \max_\sigma \delta_\sigma$, c.f.~Sec.~\ref{sec:discussion-second-moment}, and $I_\sigma \geq 1-2\delta_\sigma \geq 1-2\delta_\lambda^{(3)}$.
We then have
\begin{multline}
 \frac{\left|\sum_{\sigma\in\Irr(\tau_\lambda^{\otimes 2})} \tr(D_\sigma O_\sigma^m)\right|}{\sum_{\sigma\in\Irr(\tau_\lambda^{\otimes 2})} \tr(C_\sigma)I_\sigma^m}
 \leq
 \frac{c_\lambda/s_\lambda \, \osandwich{\rho}{P_\lambda}{\rho} \, g(\delta_\lambda^{(3)}/\Delta_\lambda^{(3)})}{\EE[f_\lambda^2] _\SPAM} \\
 \times
 \left(
 \frac{1-\Delta_\lambda^{(3)} + 2\delta_\lambda^{(3)}}{1-2\delta_\lambda^{(3)}}
 \right)^m \, .
 \label{eq:relative-error-bound-second-moment}
\end{multline}
The statement then follows as in the proof of Lem.~\ref{lem:sequence-length}.
\end{proof}

As for Lem.~\ref{lem:sequence-length-second-moment}, assuming that $\delta_\lambda^{(3)}/\Delta_\lambda^{(3)}\leq 1/5$ and $[P_\lambda,M]=0$, it is sufficient to fulfill the simplified bound
\begin{align}
 m 
 &\geq 
 \frac{5}{\Delta_\lambda^{(3)}} 
 \Big( 2 \log d_\lambda +  \log\,\osandwich{\rho}{P_\lambda}{\rho} \\
 &\qquad
  + \log(1/\EE[f_\lambda^2] _\SPAM) + \log \tfrac{1}{\kappa} + 1.8 \Big) \\
 &=
 \frac{5}{\Delta_\lambda^{(3)}} 
 \Big( 2 \log d_\lambda +  \log\,\osandwich{\rho}{P_\lambda}{\rho} + \log(1/\EE[f_\lambda^2]_\ideal)  \\
 &\qquad
  + \log( 1 / v_\SPAM^{(2)} ) + \log \tfrac{1}{\kappa} + 1.8 \Big) . \ \
 \label{eq:2nd-moment-relative-sequence-length-bound-simplified}
\end{align}
Hence, we have a similar situation as in Eq.~\eqref{eq:relative-sequence-length-bound-simplified}.
In Sec.~\ref{sec:discussion-second-moment}, we showed that for weak depolarizing \ac{SPAM} noise with strength $1-p$,  $v_\SPAM^{(2)} \approx p^2$ and thus find the same \ac{SPAM} noise dependence as in the previous Sec.~\ref{sec:subdominant-decay}.
We have a more detailed comparison between Eq.~\eqref{eq:relative-sequence-length-bound-simplified} and Eq.~\eqref{eq:2nd-moment-relative-sequence-length-bound-simplified} in the following.

To fulfill the assumptions of our sampling complexity Theorem \ref{thm:sampling-complexity-relative}, the sequence length $m$ has to be sufficiently large such that the subdominant terms in the first and second moment are suppressed by relative errors $\gamma$ and $\kappa$.
By Lemmas \ref{lem:sequence-length} and \ref{lem:relative-sequence-length-second-moment}, it is sufficient to choose $m$ as follows
\begin{align}
 m 
 &\geq 
 \frac{5}{\Delta_\lambda} \Big( \log d_\lambda + \tfrac12  \log\,\osandwich{\rho}{P_\lambda}{\rho} \\
 &\qquad
  + \log(1/\FD_\lambda(m)_\SPAM)  + \log \tfrac{1}{\gamma} + 1.8 \Big)\, , \label{eq:seq-length-comparison-1} \\
 m 
 &\geq 
 \frac{5}{\Delta_\lambda^{(3)}} \Big( 2 \log d_\lambda +  \log\,\osandwich{\rho}{P_\lambda}{\rho} \\
 &\qquad
 + \log(1/\EE[f_\lambda^2] _\SPAM) + \log \tfrac{1}{\gamma} + 1.8 \Big)\, . \label{eq:seq-length-comparison-2}
\end{align}
Note that in contrast to Sec.~\ref{sec:subdominant-decay}, where to goal was to reliably find the dominant decay parameter, we do not require $\kappa$ to be small.
In principle, it is enough if $\kappa = O(1)$, for instance $\kappa=1$ would be sufficient.
In practice, $\kappa$ enters linearly in the required number of samples by Thm.~\ref{thm:sampling-complexity-relative}, but only logarithmically in Eq.~\eqref{eq:seq-length-comparison-2}.
Hence, one would try to choose it as small as possible, finding a compromise between number of samples and sequence length.

In general, it is not simple to answer which of the lower bounds \eqref{eq:seq-length-comparison-1} and \eqref{eq:seq-length-comparison-2} is larger.
Typically, we expect that the second bound implies the first one for the following reasons:
First, we typically have $\Delta_\lambda^{(3)}\leq \Delta_\lambda$ with equality in many practically relevant cases, see Sec.~\ref{sec:application-to-random-circuits}.
Second, the first two terms in Eq.~\eqref{eq:seq-length-comparison-2} are twice as large as in Eq.~\eqref{eq:seq-length-comparison-1}.
However, the comparison between $\FD_\lambda(m)_\SPAM$ and $\EE[f_\lambda^2] _\SPAM$ is not as simple.
In Prop.~\ref{prop:second-moments-ideal}, we saw examples in which either the first or second moment are larger. 

\subsubsection{Example}
\label{sec:seq-length-examples}

To be able to make a more concrete statement, we consider the case that $G$ is a unitary 3-design and $d=2^n$ (i.e.~a $n$-qubit system).
Recall that in this case, we have $\FD_\lambda(m)_\SPAM = v_\mathrm{M} v_\mathrm{SP} \osandwich{\rho}{P_\lambda}{\rho}$.
If relevant, we assume that the \ac{SPAM} visibilities are such that $v_\SPAM^{(2)} \approx v_\mathrm{M} v_\mathrm{SP}$ (which is e.g.~the case for depolarizing \ac{SPAM} noise).
We use $c_\Ad = (d+1)\sqrt{d-1}$ and $c_\Ad/s_\Ad = (d+1)^2\sqrt{d-1}$, c.f.~Secs.~\ref{sec:frame-operator} and \ref{sec:bound-subdominant-signal}, as well as $\osandwich{\rho}{P_\Ad}{\rho} = (d-1)/d$.
To achieve additive error suppression, Lems.~\ref{lem:sequence-length} and \ref{lem:sequence-length-second-moment} require us to compute the following expressions:
\begin{align}
\log(c_\Ad) + \frac12 \log{\osandwich{\rho}{P_\Ad}{\rho}}
&=
\log \frac{d^2-1}{\sqrt{d}}  \, , \label{eq:seq-length-irrep-terms-3-design} \\
\log(c_\Ad/s_\Ad) +  \log{\osandwich{\rho}{P_\Ad}{\rho}}
 &=
 \log \tfrac{\sqrt{d-1}(d^2-1)(d+1)}{d} \, .
\end{align}
Thus, the second expression is always larger than the first, and assuming that the error in Lems.~\ref{lem:sequence-length} and \ref{lem:sequence-length-second-moment} are chosen as $\alpha \geq \beta$, we find that the following bound on $m$ is sufficient for the assumptions of Thm.~\ref{thm:sampling-complexity-additive}:
\begin{equation}
 m 
 \geq 
 \frac{2}{\Delta_\Ad^{(3)}} \big( 1.75n + \log \tfrac{1}{\alpha} + 1.8 \big)\, , \quad
 \text{(additive error)}
 \label{eq:seq-length-3-design}
\end{equation}
where we assumed that $d=2^n$ and used that $\log \frac{\sqrt{d-1}(d^2-1)(d+1)}{d}\leq 1.75n$ for all $n\in\N$.

For relative error suppression by Lems.~\ref{lem:sequence-length} and \ref{lem:relative-sequence-length-second-moment}, we use the exact expression for the second moment, computed in App.~\ref{sec:sampling-complexity-3-designs}, Eq.~\eqref{eq:2nd-moment-3-design}.
After a short calculation, we find the expressions
\begin{align}
\MoveEqLeft[3]\log(c_\Ad) + \frac12 \log{\osandwich{\rho}{P_\Ad}{\rho}} + \log(1/\FD_\Ad(m)_\ideal)\\
&=
\log\left( \sqrt{d}(d+1) \right) , \\
\MoveEqLeft[3] \log(c_\Ad/s_\Ad) +  \log{\osandwich{\rho}{P_\Ad}{\rho}} + \log(1/\EE[f_\Ad^2] _\ideal) \\
 &=
 \log \frac{\sqrt{d-1}d(d+1)(d+2)}{3d-2} \, .
\end{align}
Again, the second expression is always larger than the first.
Assuming $\gamma\leq\kappa$, we hence find that it is sufficient for $m$ to fulfill the bound
\begin{multline}
 m 
 \geq 
 \frac{5}{\Delta_\Ad^{(3)}} \left( 1.75n + \log(1/v_\mathrm{M} v_\mathrm{SP}) + \log \tfrac{1}{\gamma} + 1.8 \right)\, . \\
 \tag*{(relative error)}
\end{multline}
Here, we again used that for $d=2^n$ and $n\geq 2$, $\log \frac{\sqrt{d-1}d(d+1)(d+2)}{3d-2}\leq 1.75n$.

\subsection{Application to common random circuits}
\label{sec:application-to-random-circuits}

The signal guarantees for filtered \ac{RB} presented in Thms.~\ref{thm:rb-data-random-circuits} and \ref{thm:second-moment} prominently involve the spectral gap of the used random circuit, and so do the sequence lengths bounds derived in Sec.~\ref{sec:sequence-lengths}.
In this section, we want to illustrate these statements by applying our results to common choices of groups $G$ and random circuits $\nu$, resulting in concrete lower bounds on the sequence length for filtered \ac{RB} in practically relevant cases.
As a byproduct, this section may serve as a guideline for the scenarios which are not explicitly covered in this paper.

Concretely, we consider the unitary group $G=\U(2^n)$ and the Clifford group $G=\Cl{n}{2}$.
For both groups, we can make use of previously derived results for unitary 3-designs.
In particular, we have the sequence length bound \eqref{eq:seq-length-3-design} which we here state in a slightly different form:
\begin{equation} \label{eq:sequence-length-applications}
\begin{aligned}
 m &\geq \frac{2}{\Delta_3(\nu)} \big( 1.75 n + \log\tfrac{1}{\alpha} + 1.8 \big) \, .
\end{aligned}
\end{equation}
Here, we use that the $t$-design spectral gaps $\Delta_t(\nu)$ (i.e.~w.r.t.~$\omega^{\otimes t}$) for $t=2,3$ bound the relevant gaps in Lem.~\ref{lem:sequence-length} and \ref{lem:sequence-length-second-moment}, respectively, and that the gaps are monotonic in $t$.
For sequence lengths $m$ larger than the bound \eqref{eq:sequence-length-applications}, Theorem \ref{thm:rb-data-random-circuits} then guarantees that the expected filtered \ac{RB} signal $\FD_\Ad(m)$ is well-described by a single exponential decay of the form $A_\Ad I_\Ad^m$, up to an additive error $\alpha$. 
Moreover, Theorem \ref{thm:sampling-complexity-additive} gives a 
lower bound on the number of samples $N$ sufficient to estimate $\FD_\Ad(m)$ within additive error $\alpha$ and with probability $1-\delta$:
\begin{equation}
 N \geq 
 \frac{1}{\varepsilon^2 \delta}
 \big(3 + \alpha\big) \, ,
\end{equation}
Here, we used the bound \eqref{eq:2nd-moment-3-design-bound} on the second moment of unitary 3-designs given in App.~\ref{sec:sampling-complexity-3-designs}.

In the following, we discuss the dependence of the above bounds on the spectral gaps $\Delta_t(\nu)$ for various random circuits and $t=2,3$.
Studies of random circuits usually give spectral gaps that scale as $1/n$ or better in the number of qubits $n$ \cite{harrow_random_2009,brown_convergence_2010,haferkamp_efficient_2023,haferkamp_improved_2021,HarMeh18}. 

However, non-asymptotic results with explicit and \emph{small} constants for low designs orders are not so easy to obtain.
In the following, we summarize literature results for so-called \emph{local random circuits} and \emph{brickwork circuits} (to be defined shortly), and complement them with own numerical studies for small numbers of qubits.
For the design orders we are interested in, $t=2$ and $3$, the spectral gaps of these random circuits are the same when defined with gates from the unitary group or the Clifford group.
We then discuss how the results can be adapted when Haar-random unitaries are replaced by Clifford generators including the case of different probability weights.

\subsubsection{A collection of spectral gap bounds}

Let us begin by defining local random circuits and brickwork circuits.

\begin{definition}
\label{def:LRC-BW}
 Let $\mu_2$ denote the Haar measure on $\U(d^2)$.
 \begin{enumerate}
  \item Let $\mathcal G = ([n],E)$ be a graph on $n$ vertices. 
  A \emph{\ac{LRC} on $n$ qudits with connectivity graph $\mathcal{G}$} is a probability measure $\mu_\LRC$ on $\U(d^n)$ given by drawing an edge $e\in E$ uniformly at random, and apply a Haar-random unitary $U\sim \mu_2$ to the two qudits connected by $e$.
  A \emph{\ac{NN} \acl{LRC} with open/periodic boundary conditions} is a \ac{LRC} where $E = \{ (i,i+1) \, | \, i\in[n-1] \}$ and $E = \{ (i,i+1) \, | \, i\in[n-1] \} \cup \{ (n,1) \}$, respectively.
  \item Let $\mu_\mathrm{even}$ and $\mu_\mathrm{odd}$ be the probability measures on $\U(d^n)$ that apply independent Haar-random unitaries $U\sim \mu_2$ in parallel on the qudit pairs $(i,i+1)$ where $i\in[n-1]$ is even or odd, respectively. 
  A \ac{BW} circuit on $n$ qudits is given by the probability measure $\mu_\BW = \mu_\mathrm{even}\ast\mu_\mathrm{odd}$.
 \end{enumerate}
\end{definition}

As noted by \textcite{znidaric_exact_2008}, the exact $t=2$ spectral gap for \ac{NN} local random circuits follows from the equivalence to an integrable spin chain model and evaluates to 
\begin{align}
\label{eq:exact-LRC-gaps}
 \Delta_2(\mu_\LRC)
 &=
 \begin{cases}
  \frac{1}{n-1}(1-\frac45\cos(\pi/n)) & \text{(OBC)} \\
  \frac{2}{n}(1-\frac45\cos(\pi/n)) & \text{(PBC)}
 \end{cases}\,.
\end{align}
Asymptotically, we thus have $\Delta_2(\mu_\LRC) \sim \frac{1}{5n}$ and $\Delta_2(\mu_\LRC) \sim \frac{2}{5n}$ for open and periodic boundary conditions, respectively.
The spectral gaps for \ac{NN} local random circuits were also computed numerically for $t=2$ and $n\leq21$ qubits, as well as for $t=3$ and $n\leq 11$ by \textcite{cwiklinski_local_2013}.
For \acp{LRC} on a complete graph (i.e.~with all-to-all connectivity), \textcite{brown_convergence_2010} have established the scaling $\Delta_t = \frac{5}{6n} + O(\frac{1}{n^2})$ with a $t$-independent leading coefficient.
Finally, \textcite[Sec.~IV]{haferkamp_improved_2021} used Knabe bounds to promote exact values of the spectral gap for $t=2,\dots,5$ from small $n$ to lower bounds for all $n$.
We summarize the relevant results on spectral gap bounds for $t=2,3$ in Tab.~\ref{tab:spectral_gaps}.

\begin{table}[h]
\centering
\begin{tabular}{lcccc}
\toprule
 random circuit & $t=2$ & $t=3$ \\
 \midrule
 \ac{NN} local random circuit with PBC & $5n$ & $5n$  \\
 \ac{NN} local random circuit with OBC & $5n$ & $5n$ \\
 local random circuit on complete graph & $\sim 6n/5$ & $\sim 6n/5$ \\
 brickwork (odd number of layers) & $25/9$ & 42 \\
 brickwork (even number of layers) & $50/9$ & 42 \\
 \bottomrule
\end{tabular} 
\caption{
  Upper bounds on the inverse spectral gap $\Delta^{-1}_t$ of the $t$-th moment operator for certain families of random quantum circuits on $n$ qubits. 
  Here, \ac{NN} stands for nearest neighbor, and PBC and OBC mean periodic and open boundary conditions, respectively. 
  The results for \ac{NN} local random circuits follow from the exact results for $t=2$ in Ref.~\cite{znidaric_exact_2008}, c.f.~Eq.~\eqref{eq:exact-LRC-gaps}, and the $t=3$ bounds in Ref.~\cite{haferkamp_improved_2021}.
  The asymptotic scaling $\sim 6n/5$ for local random circuits on complete graphs is shown in Ref.~\cite{brown_convergence_2010}.
  The brickwork result for $t=2$ is deduced from the frame potential calculation in Ref.~\cite{Hun19}, while the result for $t=3$ follows by applying the detectability lemma to the $t=3$ spectral gap of local random circuits.
}
\label{tab:spectral_gaps}
\end{table}

There is also evidence to believe that the spectral gaps of LRC are independent of $t$, as long as $t$ is small enough (including $t=2,\dots,4$) \cite{haferkamp_improved_2021,hunter_hones_private}.
In this case, not only the $t=2$ but also the $t=3$ bounds for \ac{LRC} presented in Tab.~\ref{tab:spectral_gaps} are asymptotically tight.
For $n \leq 10$, we have numerically computed the respective spectral gaps and present them in Fig.~\ref{fig:LRC-spectrum}; finding a good agreement with Ref.~\cite{haferkamp_improved_2021}.
In particular, the $t=2$ and $t=3$ spectral gaps are identical up to numerical precision.
We can observe that the spectral gap for small $n$ deviates notably from the lower bound $1/5n$.
This can be used to reduce the sequence lengths of filtered \ac{RB} experiments on small systems by a factor of 1.3 to 5.

\begin{figure}[h]
\centering
\includegraphics{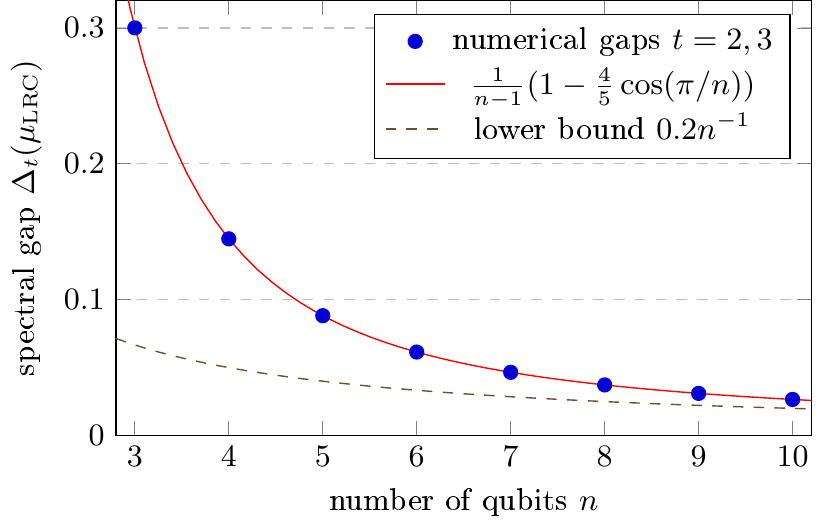}
\caption{%
  Spectral gap of nearest-neighbor local random circuits with open boundary conditions.
  The numerical results coincide for $t=2$ and $t=3$, and agree with the analytical results for $t=2$, c.f.~Eq.~\eqref{eq:exact-LRC-gaps}.
  The results are in good agreement with Refs.~\cite{haferkamp_improved_2021} and asymptotically approach the lower bound $1/5n$.
}
\label{fig:LRC-spectrum}
\end{figure}

The spectral gap bound of \ac{NN} \acp{LRC} can be used to bound the spectral gaps of the corresponding brickwork circuit by using the \emph{detectability lemma} \cite{aharonov_detectability_2008,anshu_simple_2016}:
\begin{equation}
 1 - \Delta_t(\mu_\BW) \leq \left( \frac{n \Delta_t(\mu_{\LRC})}{4} + 1\right)^{-\frac12}.
\end{equation}
While this technique generally results in the correct scaling in $n$, the constants are often sub-optimal.
For instance, inserting the bound $\Delta_2(\mu_{\LRC}) \geq \frac{1}{5n}$ results in $\Delta_2(\mu_\BW) \geq 0.024$.
For $t=2$, \textcite{haferkamp_improved_2021} instead rely on the explicit computation of the brickwork frame potential by \textcite{Hun19} to bound the circuit depth.
The latter result yields a bound on the difference of the brickwork moment operator $\MO_2(\mu_\BW)$ to the Haar projector in the Schatten 2-norm (also called \emph{Frobenius norm}) as follows
\begin{equation}
\label{eq:brickwork-frame-potential-bound}
 \twonormb{ \MO_2(\mu_\BW)^m - \MO_2(\mu) }^2 \leq 2 \left[ 1 + \left(\frac45\right)^{2(2m-1)} \right]^{\lfloor\frac{n}{2}\rfloor-1} - 2\, .
\end{equation}
Although the spectral norm is upper bounded by the 2-norm, $\snorm{\argdot}\leq\twonorm{\argdot}$, 
there is an obstacle in directly deriving a bound on the spectral gap $\Delta(\mu_\BW)$ from this result:
The moment operator $\MO_2(\mu_\BW)$ is not normal and thus there might be a strict inequality in 
\begin{align}
 \snormb{ \MO_2(\mu_\BW)^m - \MO_2(\mu) } 
 &\leq  \snormb{ \MO_2(\mu_\BW) - \MO_2(\mu) }^m \\
 &= \left( 1 - \Delta(\mu_\BW) \right)^m.
\end{align}
However, using the symmetrization trick, c.f.~Sec.~\ref{sec:rho-designs}, we can nevertheless deduce a decent spectral gap bound from Eq.~\eqref{eq:brickwork-frame-potential-bound}.
To the best of our knowledge, this is the best bound on the spectral gap of brickwork circuits and it has not been reported in the literature so far (the possibility has occurred to experts though \cite{jonas}) and one order of magnitude larger than the bound from the detectability lemma.

To this end, recall that $\mu_\BW = \mu_\mathrm{even}\ast\mu_\mathrm{odd}$ where the two layers act in parallel on qubit pairs $(i,i+1)$ where $i$ is even or odd, respectively.
These two layers are each symmetric and $\mu_\mathrm{even}^{\ast 2}=\mu_\mathrm{even}$, $\mu_\mathrm{odd}^{\ast 2}=\mu_\mathrm{odd}$.
The `inverted measure' \eqref{eq:inversted_measure} is then given as $\tilde \mu_\BW = \mu_\mathrm{odd}\ast\mu_\mathrm{even}$ such that $\eta_\BW\coloneqq \tilde \mu_\BW \ast \mu_\BW = \mu_\mathrm{odd}\ast\mu_\mathrm{even}\ast\mu_\mathrm{odd}$.
Powers then have the form $\eta_\BW^{\ast m} = \mu_\mathrm{odd}\ast (\mu_\mathrm{even}\ast\mu_\mathrm{odd})^{\ast m}$, i.e.~they correspond to a brickwork circuit which starts and ends with an odd layer and thus involves an odd number of layers in total.
Since the moment operator of the symmetrized measure $\eta_\BW$ is self-adjoint, we then find by using the appropriate bound in Ref.~\cite{Hun19}:
\begin{align}
 (1- \Delta(\eta_\BW))^m
 &=
 \snormb{ \MO_2(\eta_\BW) - \MO_2(\mu) }^m \\
 &=
 \snormb{ \MO_2(\eta_\BW)^m - \MO_2(\mu) } \\
 &\leq
 \twonormb{ \MO_2(\eta_\BW)^m - \MO_2(\mu) } \\
 &\leq
 \left( 2 \left[ 1 + \left(\frac45\right)^{4m} \right]^{\lfloor\frac{n}{2}\rfloor-1} - 2 \right)^{\frac12}.
\end{align}
Taking the $m$-th root on both sides, we can observe that the left hand side does not depend on $m$ anymore. 
Hence, we can take the limit $m\rightarrow\infty$ of the resulting right hand side to obtain an upper bound on $1- \Delta(\eta_\BW)$.
To compute the limit, note that we have the following lower and upper bound for sufficiently large $m$:
\begin{equation}
\label{eq:limit}
\begin{split}
  2 \left(\left\lfloor\frac{n}{2}\right\rfloor-1\right)  \left(\frac45\right)^{4m} 
  &\leq
  2 \left[ 1 + \left(\frac45\right)^{4m} \right]^{\lfloor\frac{n}{2}\rfloor-1} - 2 \\
  &\leq
  4 \left(\left\lfloor\frac{n}{2}\right\rfloor-1\right)  \left(\frac45\right)^{4m}.
\end{split}
\end{equation}
The lower bound in Eq.~\eqref{eq:limit} follows from Bernoulli's inequality $(1+x)^r \geq 1 + rx$ for $r\in\N_0$ and $x\geq -1$.
The upper bound follows from $(1+x)^r \leq e^{rx} \leq 1 + 2 rx$ which holds for any $x\geq 0$ and $r\geq 0$ such that $0\leq rx \leq 1$.
The latter condition is certainly fulfilled for large enough $m$.
Taking the $2m$-th root of Eq.~\eqref{eq:limit} and the limit $m\rightarrow\infty$, we see that the lower and upper bound both converge to $16/25$, hence we arrive at the result
\begin{equation}
 1- \Delta(\eta_\BW) \leq \frac{16}{25} \qquad  \Rightarrow \qquad  \Delta(\eta_\BW) \geq \frac{9}{25}.
\end{equation}
By Eq.~\eqref{eq:symmetrized_spectral_gap}, this implies the following bound on the spectral gap of $\mu_\BW$ (i.e.~brickwork circuits with an \emph{even} number of layers):
\begin{equation}
 \Delta(\mu_\BW) \geq \frac{9}{50}\, .
\end{equation}
However, we expect this lower bound to be quite loose and attribute this to the proof technique.
Realistically, we do not expect a large difference between the spectral gap of the ``symmetric'' brickwork circuit $\eta_\BW$ and the non-symmetric one $\mu_\BW$.
The obtained bounds can also be found in Tab.~\ref{tab:spectral_gaps}.

\subsubsection{Spectral gaps for random circuits composed of Clifford generators}

We are in particular interested in random circuits where the individual components are not drawn from the local Haar measure $\mu_2$, but instead from a set of local generators according to the measure $\nu$.
For local random circuits, we can lift results on the local spectral gaps of $\nu$ to global spectral gaps using the ``local-to-global'' lemma in Ref.~\cite{OnoBueKli17}.

\begin{lemma}[{\cite[Lem.~16]{OnoBueKli17}}]
\label{lem:local-to-global}
 Let $\nu$ be a symmetric probability measure on $\U(d^2)$ and let $([n],E)$ be a graph on $n$ vertices.
 Let $\nu_e$ be the measure $\nu$ with support on the factors corresponding to $e\in E$, and set $\nu_\LRC = \sum_{e\in E} p_e \nu_e$ for some probabilites $p_e$.
 Let $\mu_2$ be the Haar measure on $\U(d^2)$, $\mu_\LRC$ as in Def.~\ref{def:LRC-BW}, and let $\mu$ be the Haar measure on $\U(d^n)$.
 Then, we have
 \begin{align}
  \MoveEqLeft[2]\snorm{ \MO_{\rho_n}(\nu_\LRC) - \MO_{\rho_n}(\mu) } \\
  &\leq
  1 - \left( 1 - \snorm{ \MO_{\rho_2}(\nu) - \MO_{\rho_2}(\mu_2)  } \right) \\
  &\quad\times
  \left( 1 - \snorm{  \MO_{\rho_n}(\mu_\LRC) - \MO_{\rho_n}(\mu)} \right),
 \end{align}
 for any representation $\rho_n$ of $\U(d^n)$ that factorizes for $\U(d)^{\times n}$ as $\rho_n(g_1,\dots,g_n) = \rho(g_1)\otimes\dots\otimes\rho(g_n)$.
\end{lemma}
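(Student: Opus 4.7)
The plan is to establish the bound by a variational argument exploiting the symmetry of $\nu$ and the factorization property of $\rho_n$. Abbreviate $H \coloneqq \MO_{\rho_n}(\mu)$, $H_e \coloneqq \MO_{\rho_n}(\mu_{2,e})$ for the Haar projection acting only on edge $e$, and $M_2 \coloneqq \MO_{\rho_2}(\mu_2)$. Left- and right-invariance of the Haar measure yield the identities $H_e H = H = H H_e$ and $\MO_{\rho_n}(\nu_e) H = H = H \MO_{\rho_n}(\nu_e)$, so both $\MO_{\rho_n}(\nu_\LRC)$ and $\MO_{\rho_n}(\mu_\LRC)$ act as the identity on $\ran H$ and preserve its orthocomplement $V_n^0$. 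Because $\nu$ (and hence $\nu_\LRC$) is symmetric, $\MO_{\rho_n}(\nu_\LRC)$ is self-adjoint, so the spectral norm admits the variational characterization $\snormb{\MO_{\rho_n}(\nu_\LRC) - H} = \sup_{v \in V_n^0,\,\|v\|=1} \abs{\langle v, \MO_{\rho_n}(\nu_\LRC)\, v\rangle}$.

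The next step is a per-edge estimate. Because $\rho_n$ factorizes over $\U(d)^{\times n}$, the operator $\MO_{\rho_n}(\nu_e)$ acts as $\MO_{\rho_2}(\nu)$ on the edge-$e$ tensor factor and as the identity on the remaining factors. By the same Haar-invariance argument applied within $\U(d^2)$, $\MO_{\rho_2}(\nu)$ is block-diagonal with respect to $\ran M_2 \oplus \ker M_2$: it acts as the identity on the first block and as a self-adjoint operator $M'$ with $\snorm{M'} = \snorm{\MO_{\rho_2}(\nu) - M_2} \eqqcolon \alpha$ on the second. Lifting the decomposition to the edge-$e$ tensor factor of $V_n$ and using the orthogonality $\ran H_e \perp \ker H_e$ to kill cross terms, I obtain for any unit vector $v$
\[
 \langle v, \MO_{\rho_n}(\nu_e)\, v\rangle = \|H_e v\|^2 + \langle (\id - H_e) v,\, M'_e (\id - H_e) v\rangle\,.
\]
Combining $\|H_e v\|^2 = \langle v, H_e v\rangle \in [0,1]$ with $|\langle w, M'_e w\rangle| \leq \alpha \|w\|^2$ and invoking the triangle inequality yields $\abs{\langle v, \MO_{\rho_n}(\nu_e)\, v\rangle} \leq \alpha + (1-\alpha)\langle v, H_e v\rangle$.

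Finally, averaging with weights $p_e$ and again using the triangle inequality gives
\[
 \abs{\langle v, \MO_{\rho_n}(\nu_\LRC)\, v\rangle} \leq \alpha + (1-\alpha)\langle v, \MO_{\rho_n}(\mu_\LRC)\, v\rangle\,.
\]
Since $\MO_{\rho_n}(\mu_\LRC) = \sum_e p_e H_e$ is a convex combination of orthogonal projections it is positive semidefinite, and together with $\MO_{\rho_n}(\mu_\LRC)|_{\ran H} = \id$ this implies $\langle v, \MO_{\rho_n}(\mu_\LRC)\, v\rangle \leq \snormb{\MO_{\rho_n}(\mu_\LRC) - H} \eqqcolon \beta$ for every unit $v \in V_n^0$. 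Taking the supremum over such $v$ proves the claimed bound $\snormb{\MO_{\rho_n}(\nu_\LRC) - H} \leq \alpha + \beta - \alpha\beta = 1 - (1-\alpha)(1-\beta)$. I expect the most delicate part to be the per-edge estimate: one has to carefully check that the cross terms between $\ran H_e$ and $\ker H_e$ genuinely vanish, and that the sign issue in taking absolute values is handled correctly, both of which rely on the block structure inherited from the factorization hypothesis on $\rho_n$.
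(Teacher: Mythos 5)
Your argument is correct and, since the paper imports this lemma from Ref.~\cite{OnoBueKli17} without reproducing a proof, it supplies a genuinely self-contained derivation. The structure is the standard one for such local-to-global gap bounds: the essential point is that, as a quadratic form on unit vectors, each edge term obeys $\abs{\langle v,\MO_{\rho_n}(\nu_e)v\rangle}\leq \alpha + (1-\alpha)\langle v,H_e v\rangle$ rather than merely $\snorm{\MO_{\rho_n}(\nu_e)-H_e}\leq\alpha$, and averaging then turns $\sum_e p_e H_e$ into the Haar-gate moment operator; this is exactly what produces the multiplicative bound $1-(1-\alpha)(1-\beta)$ instead of the weaker $\alpha+\beta$ that a naive triangle inequality would give. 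The supporting steps all check out: self-adjointness of $\MO_{\rho_n}(\nu_\LRC)$ from symmetry of $\nu$, the block structure and vanishing cross terms from $\MO_{\rho_n}(\nu_e)H_e=H_e=H_e\MO_{\rho_n}(\nu_e)$, and $\alpha\leq 1$ from $\snorm{\MO_{\rho_2}(\nu)}\leq 1$.

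Two remarks. First, your identification $\sum_e p_e H_e=\MO_{\rho_n}(\mu_\LRC)$ requires the Haar-gate circuit to carry the \emph{same} edge weights $p_e$ as $\nu_\LRC$, whereas $\mu_\LRC$ from Def.~\ref{def:LRC-BW} is defined with uniform weights $1/\abs{E}$ while $\nu_\LRC$ has arbitrary $p_e$. With genuinely mismatched weights the stated inequality can fail (concentrate $p_e$ on a single edge $e_0$: any unit vector in $\ran H_{e_0}\cap\ker H$ gives $\snorm{\MO_{\rho_n}(\nu_{e_0})-H}=1$, while the right-hand side stays below $1$ for a gapped $\nu$ on a connected graph). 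So you are proving the intended statement; the mismatch is a defect of the lemma's transcription, not of your proof. Second, your per-edge reduction uses that the restriction of $\rho_n$ to the two-qudit subgroup on edge $e$ acts as $\rho_2\otimes\id$ on the corresponding tensor factors; strictly this is a bit more than the stated hypothesis (factorization over $\U(d)^{\times n}$ only), but it is the reading under which the lemma is meaningful and is how it is applied, so flagging it as an assumption suffices.
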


We have numerically computed the spectral gap of the $t$-th moment operator $\MO_t(\nu_\mathcal{G})$ for a probability measure $\nu_\mathcal{G}$ which draws from the set of two-qubit Clifford generators 
\begin{multline}
 \mathcal{G} \coloneqq  \big\{ P U \; | \; P\in\{\one,X,Y,Z\}^{\otimes 2}, \\
 \; U \in \{\one,S,H\}^{\otimes 2}\cup\{CX\} \big\} \,.
 \label{eq:cliff_generator_set}
\end{multline}
To this end, we have varied the probability $p$ of choosing $CX$ as the Clifford component $U$ in the generator set $\mathcal{G}$, and found that the value $p=0.35$ maximizes the spectral gap, resulting in $\Delta_t^{-1}(\nu_{\mathcal{G}}) = 10.99$ for both $t=2,3$.
Using Lemma \ref{lem:local-to-global} and Table \ref{tab:spectral_gaps}, we then find the following bound on the inverse spectral gap of the \ac{LRC} $\nu_{\LRC,\mathcal{G}}$ composed of gates from $\mathcal{G}$:
\begin{align}
 \Delta_t^{-1}(\nu_{\LRC,\mathcal{G}})
 &\leq
 \Delta_t^{-1}(\nu_{\LRC}) 
 \Delta_t^{-1}(\nu_{\mathcal{G}}) \\
 &\leq
 \begin{cases}
   55n & \text{NN, open/periodic BC}, \\
   14n & \text{all-to-all connectivity},
 \end{cases}
 \label{eq:spectral_gap_LRC_generators}
\end{align}
where $t=2,3$.
Recall that the obtained upper bound yields a bound on the sequence length $m$, i.e.~the number of generators to be applied, by Eq.~\eqref{eq:sequence-length-applications}.

We can now compare this result to the circuit depth that one would obtain for a \ac{LRC} with Haar-random 2-qubit Clifford unitaries.
Note that an arbitrary 2-qubit Clifford unitary can be implemented using at most 3 $CX$ gates in depth $\leq 9$ \cite{bravyi_6-qubit_2022,bravyi_hadamard-free_2021}.
Hence, the required depth would be at most $9m$ where $m$ is the sequence length for the exact \ac{LRC} given by Eq.~\eqref{eq:sequence-length-applications}.
Since this bound differs from the one using generators only by a prefactor, it is sufficient to compare $9 \times 5n = 45 n$ (\ac{NN}) and $9 \times 5n/6 = 15n/2$ (all-to-all) with the respective bounds \eqref{eq:spectral_gap_LRC_generators} obtained before.
We see that the exact \ac{LRC} implementation would require a similar circuit depth than our generator-based approach.
However, the latter scheme only requires $p=0.35$ $CX$ gates on average while random 2-qubit Clifford unitaries require an average number of 1.5 $CX$ gates \cite{bravyi_6-qubit_2022}.
Moreover, we expect the bound \eqref{eq:spectral_gap_LRC_generators} to be rather loose due to the use of Lem.~\ref{lem:local-to-global}, such that the generator-based local random circuits should perform equally well or better in practice.

\begin{figure}[h]
\centering
\includegraphics{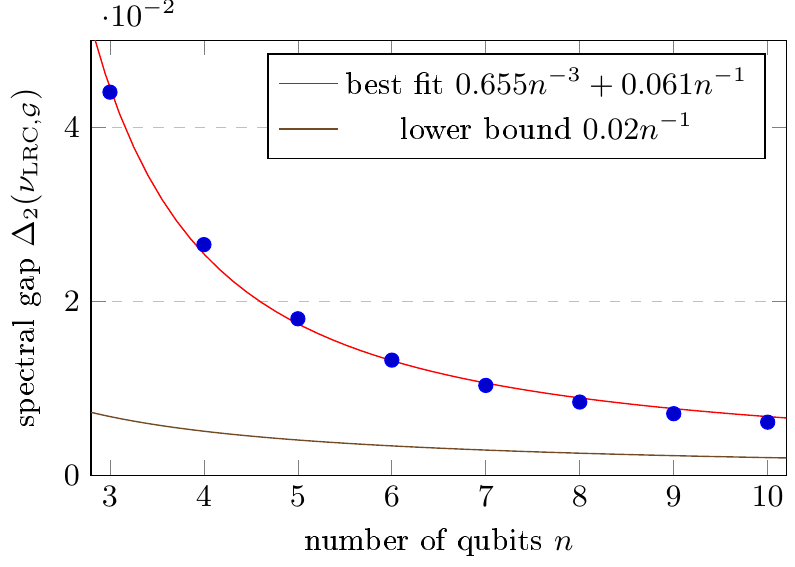}
\caption{%
 Numerically computed spectral gap $\Delta_2(\nu_{\LRC,\mathcal{G}},p=0.35)$ of local random circuits with local gates drawn from the set of Clifford generators $\mathcal{G}$ and $CX$ probability $p=0.35$.
 The fit suggests that the derived bound \eqref{eq:spectral_gap_LRC_generators} can be reduced by a factor of 3.
}
\label{fig:LRC-spectrum-generators}
\end{figure}

For a more precise comparison, we have exactly computed the spectral gap $\Delta_2(\nu_{\LRC,\mathcal{G}})$ for \ac{NN} open boundary conditions.
The results for $n\leq 10$ qubits are shown in Fig.~\ref{fig:LRC-spectrum-generators}.
Our numerical results suggest that the bound \eqref{eq:spectral_gap_LRC_generators} can be improved for \ac{NN} connectivity by a factor of 3, resulting in $\Delta_t^{-1}(\nu_{\LRC,\mathcal{G}})\leq 16.5n$.

Finally, the detectability lemma \cite{aharonov_detectability_2008} in its generalized version \cite{anshu_simple_2016} gives a bound on the spectral gap of brickwork circuits in terms of the one of a \ac{NN} local random circuit $\mu_{\LRC}$ as follows (see e.g.~Ref.~\cite[Eq.~(33)]{haferkamp_improved_2021})
\begin{equation}
 \snorm{\MO_t(\mu_\BW) - \MO_t(\mu)} \leq \left( \frac{n \Delta_t(\mu_{\LRC})}{4} + 1\right)^{-\frac12}.
\end{equation}
This bound still holds true if the Haar-random 2-qubit unitaries in both random circuits are replaced by gates from $\mathcal{G}$ drawn according to $\nu_{\mathcal G}$.
Using the conjectured bound $\Delta_t^{-1}(\nu_{\LRC,\mathcal{G}})\leq 16.5n$, we then find
\begin{equation}
 \Delta_t(\nu_{\BW,\mathcal{G}})^{-1} 
 \leq 134
 \qquad (t = 2,3).
\end{equation}
We suspect that this bound can be significantly improved.

\subsection{Discussion of sequence length bounds}
\label{sec:random-walk-numerics}

One might wonder whether the upper bounds derived in Thms.~\ref{thm:rb-data-random-circuits} and \ref{thm:second-moment} are tight, and if not, how loose they are.
A potential leeway in these bounds directly results in sub-optimal lower bounds on the sufficient sequence lengths in Sec.~\ref{sec:sequence-lengths}.
{
As we argue in the following, there is no reason to believe that the bounds in Thms.~\ref{thm:rb-data-random-circuits} and \ref{thm:second-moment} are not tight, nevertheless the sequence length bounds are generally \emph{not} tight.
The latter statement can be illustrated at the example of a unitary 2-group $G$ in the absence of any noise.
In this case, filtered \ac{RB} computes (c.f.~Sec.~\ref{sec:overview-protocol}):
\begin{equation}
 \FD_{\Ad}(m)
  =: (d+1) Z_\nu(m) - \frac{d+1}{d} \,,
\end{equation}
where
\begin{equation}
Z_\nu(m) \coloneqq \int_G \sum_{x\in\F_2^{n}} p_\ideal(x|g)^2 \dd\nu^{\ast m}(g)
\end{equation}
is the \emph{collision probability} of the $m$-layer random circuit.
The convergence of $Z_\nu(m)$ has been studied in the context of \emph{anti-concentration} of outcome distributions and the quantum advantage of random circuit sampling.
For specific random circuits, it was shown that $Z_\nu(m)$ is close the Haar-random value $2/(d+1)$, up to a relative error, if the random circuit has logarithmic depth \cite{BarakEtAl:2020:Spoofing,dalzell_random_2022}.
This implies that the subdominant contributions to $\FD_{\Ad}(m)$ are suppressed by an \emph{additive} error already at logarithmic instead of linear depth.
We discuss this issue in some detail in the following and, to this end, numerically compute $\FD_\Ad(m)$ for a noiseless random circuit.
}

A more detailed numerical analysis for linear \ac{XEB} with different random Clifford circuits was recently presented by \textcite{chen_linear_2022}, for up to 1225 qubits and different noise models.
Since this is a special case of our filtered \ac{RB} protocol, it nicely shows a typical behavior of the filtered \ac{RB} signal.
The findings in Ref.~\cite{chen_linear_2022} for these examples are in general agreement with the following discussion (even though much more detailed).

\begin{figure*}
\centering
\includegraphics{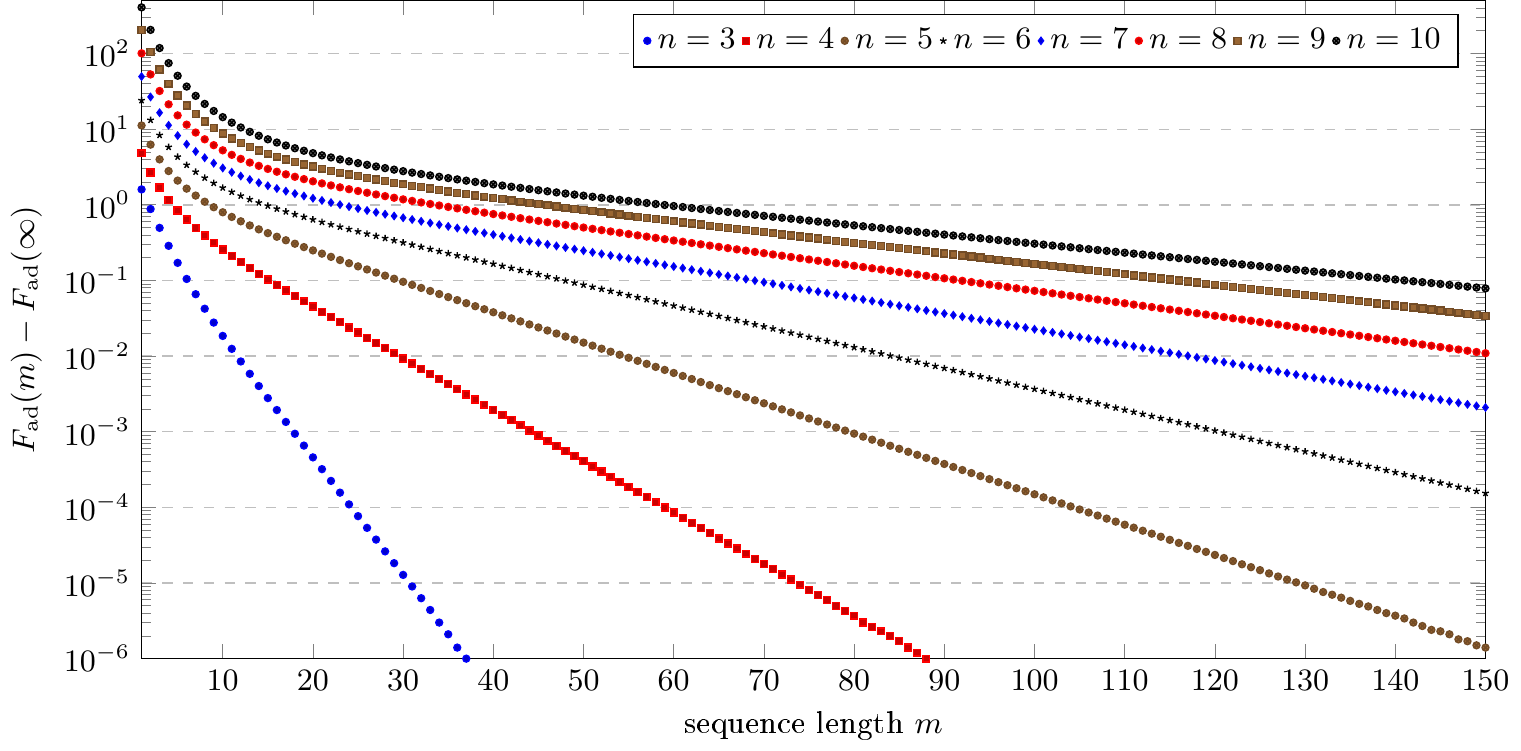}
\caption{
  Decay of the contribution to the filtered \ac{RB} signal $\FD_\Ad(m)$ stemming from the use of a random circuit instead of an exact unitary 2-design.
  The data has been computed numerically for $n=3,\dots,10$ qubits using a local random circuit with Haar-random 2-local unitaries in the noise-free setting.
  The decay is quickly dominated by a single exponential decay given by the second largest eigenvalue $(1-\Delta)$ of the second moment operator, and thus in good agreement with the bound of Thm.~\ref{thm:rb-data-random-circuits}.
  The fast decay in the beginning can be contributed to a superposition of exponential decays associated with smaller eigenvalues.
  To be able to observe the additional \ac{RB} decay in the presence of noise, the \ac{RB} decay must be slower than the `mixing decay' shown here.
  This is exactly quantified by the condition \eqref{eq:thm:initialbelief} in Thm.~\ref{thm:rb-data-random-circuits}.
}
\label{fig:LRC_decays}
\end{figure*}

{Here, we consider a nearest-neighbor local random circuit with open boundary conditions as defined in Def.~\ref{def:LRC-BW}.}
Formally, this is described by the following probability measure on $G=\U(2^n)$:
\begin{equation}
\label{eq:LRC-measure}
 \mu_\LRC = \frac{1}{n-1}\sum_{i=1}^{n-1} \mu_{(i,i+1)},
\end{equation}
where $\mu_{(i,i+1)}$ is the local Haar measure on the pair $(i,i+1)$.
We have numerically computed the noiseless filtered \ac{RB} signal $\FD_\Ad(m)$ for $n=3,\dots,10$ qubits and sequence lengths $m=1,\dots,150$.
To analyze the subdominant contributions to the signal coming from the use of the random circuit in the sense of Thm.~\ref{thm:rb-data-random-circuits}, we have subtracted the asymptotic value $1 - \frac{1}{d}$ from the data.
The difference is shown in Fig.~\ref{fig:LRC_decays}.
Since we deal with the noise-free case, the signal is given as a linear combination of exponential decays with decay rates corresponding to the eigenvalues of the second moment operator $\MO_2(\mu_\LRC) = \widehat{\omega\nu}_\LRC[\omega]$.
Consequently, we can observe two regimes: For small $m$, all eigenvalues decay quickly except for the second largest one which then dominates the signal for larger values of $m$.

Indeed, we find that the latter regime is well-approximated by a single exponential decay.
Corresponding fits give decay rates that are in very good agreement with the theoretical expectation $1-\Delta_2$ from Thm.~\ref{thm:rb-data-random-circuits}. 
To this end, we have compared the fit parameters with the analytical expression for the spectral gaps $\Delta_2$ of the second moment operator $\MO_2(\nu)$, c.f.~Eq.~\eqref{eq:exact-LRC-gaps}.
Next, we want to compare the magnitude of the subdominant contribution as a function of $m$ with the prediction by Lem.~\ref{lem:sequence-length}.
In the absence of noise, we have $\delta_\Ad=0$ and hence we find that we should take $m\geq \Delta_2^{-1}( \log((d^2-1)/\sqrt{d}) + \log\tfrac{1}{\alpha})$ to achieve an additive error $\alpha>0$.
Numerically, we find that the subdominant contribution already falls below a fixed additive error for values of $m$ that are up to two times smaller than our bound.
Since asymptotically $\Delta_2^{-1} \sim 5n$ (see Sec.~\ref{sec:application-to-random-circuits}) and $\log((d^2-1)/\sqrt{d})  \approx 1.05n$ , Eq.~\eqref{eq:seq-length-3-design} predicts a quadratic dependence on $n=\log_2(d)$.
Although we have not gathered enough data points, our data seem to be better compatible with a $n\log(n)$ dependence, and this is indeed what we expect from the anti-concentration results \cite{dalzell_random_2022}.
Nevertheless, the above results suggest that the concrete numerical values resulting from our bound \eqref{eq:seq-length-3-design} are not unrealistic and may still serve as a decent estimate for the required sequence lengths of filtered \ac{RB}, at least for a small to moderate number of qubits $n$.

As the subdominant contributions decay exactly with $1-\Delta_2$, the reason for the sub-optimality of our bound has to lie in the prefactors.
Indeed, our bound \eqref{eq:thm:subdominant-decay-bound} is valid for all $m$, and thus includes the fast scrambling regime for small $m$.
On the logarithmic scale of Fig.~\ref{fig:LRC_decays}, the bound thus corresponds to a shift of the exponential decays in the vertical direction.
Nevertheless, the prefactor of the bound \eqref{eq:thm:subdominant-decay-bound}, $c_\Ad \sqrt{\osandwich{\rho}{P_\Ad}{\rho}} = (d^2-1)/\sqrt{d}$, is still a bit too large compared to $\FD_\Ad(1)$, the maximum value of the signal.
Indeed, assuming the initial state is $\rho=\ketbra{0}{0}^{\otimes n}$, and that we measure in the computational basis, we find using $M= M_\loc^{\otimes n}$ that
\begin{equation}
\widehat{\omega\mu}_{(i,i+1)}[\omega](M) = M_\loc^{\otimes i-1}\otimes S_\loc \otimes M_\loc^{\otimes n - i - 1},
\end{equation}
where $S_\loc = \widehat{\omega}[\omega](M_\loc^{\otimes 2})$ is the frame operator on 2 qubits, cp.~Eq.~\eqref{eq:frame-operator-as-dep-channel}.
We arrive at the following expression:
\begin{align}
 \FD_\Ad(1) 
 &= \frac{d+1}{n-1} \sum_{i=1}^{n-1} \osandwich{\rho}{P_\Ad \widehat{\omega\mu}_{(i,i+1)}[\omega](M)}{\rho} \\
 &= (d+1) \bigg[ \tr\big(\ketbra{0}{0} M_\loc(\ketbra{0}{0})\big)^{n-2} \\
 &\qquad \times
  \tr\big(\ketbra{0}{0}^{\otimes 2} S_\loc(\ketbra{0}{0}^{\otimes 2})\big) - \frac{1}{d} \bigg] \\
 &= (d+1) \left( \frac{2}{5} - \frac{1}{d} \right).
\end{align}
Thus, we have $\FD_\Ad(1) - \osandwich{\rho}{P_\Ad}{\rho} = (d+1)\frac{2}{5} - 2$.

In conclusion, a uniform bound of the form \eqref{eq:thm:subdominant-decay-bound} for all $m$ still requires an exponentially large prefactor, resulting in a quadratic lower bound on the sequence length.
Improving on our sequence length bounds thus requires to directly estimate the extent of the fast scrambling regime for small $m$.
To this end, we think that more information about the spectrum of the second moment operator $\MO_2(\mu_\LRC)$ beyond its spectral gap is needed.
The presence of sufficiently large noise will however perturb the moment operator in a way that mixes eigenspaces, and thus complicates an analytical treatment.
Here, new and more explicit techniques beyond matrix perturbation theory may be needed to derive tight bounds.
However, it is unclear whether the noise-free scaling persists under the general noise assumptions used in Thm.~\ref{thm:rb-data-random-circuits}, or whether further assumptions on the noise are needed.
For instance, previous works in this direction explicitly assume local noise \cite{Liu21BenchmarkingNear-term,Dalzell21RandomQuantumCircuits}.
Finally, we think that these techniques would also be highly relevant in studying the properties of noisy random circuits in general.

\subsection{Towards better filter functions}
\label{sec:other-filters}

In Section \ref{sec:sequence-lengths}, we have argued that filtered randomized benchmarking 
as presented here 
needs sufficiently long random sequences in order to suppress the subdominant decay terms.
The main source of these terms is the usage of random circuits (with non-uniform measures) instead of the Haar-random unitaries and the decay is, in fact, showing their convergence towards the Haar measure.
In particular, this decay also occurs in the noise-free case, c.f.~Sec.~\ref{sec:random-walk-numerics}.

Instead of increasing the sequence length to achieve sufficient convergence of the random circuit, one might hope that a smart change in the filter function allows to filter on the relevant decays directly.
A similar consideration motivated the heuristic estimator for linear \ac{XEB} proposed in Ref.~\cite{RinottShohamKalai:2022}.
In this section, we propose two novel choices of filter functions and provide evidence that they indeed accomplish this goal.

A priori, it is not clear how these new schemes behave in the presence of general gate-dependent noise.
We found that the analysis for the new filter functions requires some non-trivial extensions to our perturbative approach, as we require a more detailed control on the perturbation of eigenspaces of the moment operator.
We leave a detailed analysis and comparison for future work.

We think that the framework of filtered randomized benchmarking and the following filter functions are of interest for the theory of linear \acf{XEB}.
First, a perturbative analysis has the potential to go beyond the usual white noise assumption \cite{Boixo2018CharacterizingQuantum,Dalzell21RandomQuantumCircuits}.
Second, the proposed filter functions might help to resolve some difficulties in finding appropriate estimators for the fidelity that can be equipped with rigorous guarantees when using linear \ac{XEB} with random circuits \cite{Liu21BenchmarkingNear-term}.

\subsubsection{Filtering using the exact frame operator}
\label{sec:filter-with-exact-frame-operator}

As sketched in Sec.~\ref{sec:overview-protocol}, filtered randomized benchmarking follows a similar idea as shadow tomography.
For ideal gates and ideal state preparation and measurement, $\hat{\FD}_\lambda(m)$ estimates for any sequence length $m$ the expression
\begin{align}
 \MoveEqLeft[1]\FD_\lambda(m) \\
 &= 
 \sum_{i \in [d]} \int_{G^{m}} f_\lambda(i , g_1\cdots g_m) \\
 &\qquad\times
  p(i| g_1, \ldots, g_m) \dd\nu(g_1)\cdots\dd\nu(g_m) \\
 &= \sum_{i \in [d]} \int_{G} \osandwich{\rho}{P_\lambda S^{-1}\omega(g)^\dagger}{E_i} \osandwich{E_i}{\omega(g)}{\rho} \dd\nu^{*m}(g) \\
 &= \osandwich{\rho}{P_\lambda S^{-1} S_{\nu^{*m}}}{\rho} \, .
\end{align}
Here, we defined the the frame operator associated to a sequence of length $m$ as
\begin{equation}
 S_{\nu^{*m}} = \int_G \omega(g)^\dagger M \omega(g) \dd\nu^{*m}(g)\, .
\end{equation}
If $\nu$ is not the Haar measure (or an appropriate exact design) then $S^{-1}$ does not cancel $S_{\nu^{*m}}$.
Since $\nu^{*m}$ eventually converges to the Haar measure, $S^{-1}$ becomes a good approximation to the exact inverse frame operator $S_{\nu^{*m}}^{-1}$ with increasing $m$.
The discrepancy between the two frame operators is visible as the additional decay in Thm.~\ref{thm:rb-data-random-circuits}.
Clearly, the solution would be to use the correct frame operator, i.e.~to redefine the filter function as 
\begin{equation}
  f_{\nu,\lambda}(i,g_1,\dots,g_m) \coloneqq \osandwich{E_i}{\omega(g_1\cdots g_m)S_{\nu^{*m}}^{-1} P_\lambda}{\rho}\, .
\label{eq:filter-with-correct-frame-op}
\end{equation}

However, the evaluation of this filter function 
is, in practice, more challenging than its Haar-random alternative.
The analytical and numerical evaluation of the frame operator and the dual frame is an active research topic in shadow tomography \cite{hu_classical_2021,bu_classical_2022,akhtar_scalable_2022,bertoni_shallow_2022,arienzo_closed-form_2022}.
The inversion of the frame operator is simplified for measures $\nu$ which are invariant under left multiplication with Pauli operators \cite{bu_classical_2022}, or consist only of Clifford unitaries.\footnote{$M$ is a Pauli channel and conjugation by Clifford channels maps Pauli channels to Pauli channels. The frame operator is thus a convex combination of Pauli channels, hence a Pauli channel itself and in particular diagonal in the Pauli basis.}
For those, the frame operator is diagonal in the Pauli basis, hence inversion is straightforward.
Nevertheless, the computational cost of the classical post-processing increases significantly for the filter function~\eqref{eq:filter-with-correct-frame-op}. 

\subsubsection{The trace filter}
\label{sec:trace-filter}

Our second proposal is based on the observation that every moment operator is block-diagonal, c.f.~Eq.~\eqref{eq:moment-op-spectral-decomposition}.
In particular, the exact frame operator $S_\nu = \widehat{\omega\nu}[\omega](M)$ can be decomposed as
\begin{equation}
 S_\nu = \widehat{\omega}[\omega](M) + (\widehat{\omega\nu}[\omega]-\widehat{\omega}[\omega])(M) = S + T_\nu,
\end{equation}
where $S=\widehat{\omega}[\omega](M)$ is the Haar-random frame operator which lies in the commutant of $\omega$ and $T_\nu$ is orthogonal to $S$, $\tr(ST_\nu)=0$.
With this notation, the filtered \ac{RB} signal \eqref{eq:filtered-rb-data} considered in the previous sections becomes (in the noise-free case)
\begin{align}
 \FD_\lambda(m) 
 &= 
 \osandwich{\rho}{P_\lambda S^{+} S_\nu}{\rho} \\
 &= 
 \osandwich{\rho}{P_\lambda S^{+} S}{\rho} + \osandwich{\rho}{P_\lambda S^{+} T_\nu}{\rho} .
\end{align}
Next, we show that -- under certain assumptions on the group $G$ and the measure $\nu$ -- we can change the filter function such that the second term vanishes identically.
The idea is to make the \ac{RB} signal take the form of a \emph{trace inner-product} of super-operators instead of a `matrix element' defined by $\rho$.
This allows us to project exactly on the commutant in the post-processing of $\omega$ in the data.
We, thereby, filter not only to an irrep but to the specific dominant subspace.
In this way, $\FD_\lambda(m)$ is not affected by the ``non-uniformness'' of our measure $\nu$ and we keep the simple structure of the inverse frame operator $S^{-1}$ in contrast to Sec.~\ref{sec:filter-with-exact-frame-operator}.

We exemplify this idea for unitary $2$-groups, i.e.\ essentially the multi-qubit Clifford group and specific subgroups thereof.
Let $\omega(g) = U_g(\argdot)U_g^\dagger$ and $G\subset\U(d)$ be a unitary 2-group. 
In this way, there is only a single non-trivial irrep to consider, namely the adjoint one $\lambda=\Ad$, and $P_\Ad = \id - \oketbra{\one}{\one}/d$.
Moreover, we assume that the measure $\nu$ is \emph{right-invariant} under the local Clifford group $\ClG{1}^{\times n}$, in particular we need $G \supset \ClG{1}^{\times n}$.
This is not much of a restriction, since we can always perform a layer of Haar-random single-qubit Clifford unitaries (or Haar-random single qubit unitaries) in the beginning of our random circuit at negligible cost.
As a consequence of the invariance assumption, the frame operator $S_\nu$ is not only diagonal in the Pauli basis, but its diagonal elements depend only on the support of a Pauli operator.
(Here, by `support' we mean on which qubits the operator acts non-trivially.)
To see this, let $w(u)$ and $w(v)$ be two Pauli operators with equal support, then we can find a local Clifford unitary $C=C_1\otimes\dots\otimes C_n$ such that $w(u) = C w(v) C^\dagger$.
Hence:
\begin{align}
 \osandwich{w(u)}{S_\nu}{w(u)}
 &=
 \int_{G} \osandwich{w(u)}{\omega(g)^\dagger M \omega(g)}{w(u)} \dd\nu(g) \\
 &=
 \int_{G} \osandwich{w(v)}{\omega(gC)^\dagger M \omega(gC)}{w(v)} \dd\nu(g) \\
 &=
 \osandwich{w(v)}{S_\nu}{w(v)}\, ,
\end{align}
where we have used the right invariance of $\nu$ in the last step.
Note that for any superoperator $\mathcal X$ with this property, we have
\begin{align}
 \tr(\mathcal X)
 &=
 d^{-1} \sum_{u\in\F_2^{2n}} \osandwich{w(u)}{\mathcal X}{w(u)} \\
 &=
 d^{-1} \sum_{z\in\F_2^n} 3^{|z|} \osandwich{Z(z)}{\mathcal X}{Z(z)}\, .
\end{align}
This is because every $Z(z)$-operator has support on $|z|$ many qubits and thus can be mapped to $3^{|z|}$ many different Pauli operators with identical support under local Cliffords.

Finally, we define the \emph{trace filter} as
\begin{equation}
\begin{split}
  f_\mathrm{tr}(i,g_1,\dots,g_m) &\coloneqq \osandwich{E_i}{\omega(g_1\cdots g_m)S^{-1}}{\xi} - \tr(\xi)\, , \\
  \xi &\coloneqq \frac{1}{d^2} \sum_{z\in\F_2^n} 3^{|z|} Z(z)\, .
\end{split}
\label{eq:trace-filter}
\end{equation}
Suppose we prepare the system in the $\ket 0$ state such that $\rho=d^{-1}\sum_z Z(z)$.
Using the diagonality of the frame operators, we then find that an ideal implementation yields the \ac{RB} signal
\begin{align}
 \FD_\mathrm{tr}(m) 
 &= 
 \frac{1}{d^2} \sum_{z\in\F_2^n} 3^{|z|} \osandwich{Z(z)}{S^{-1} S_\nu}{\rho} - \tr(\xi) \\
 &= 
 \frac{1}{d^3} \sum_{z\in\F_2^n} 3^{|z|} \osandwich{Z(z)}{S^{-1} S_\nu}{Z(z)} - \frac{1}{d} \\
 &=\frac1{d^2} \left[\tr(S^{-1}S_\nu)\right] - \frac1d\\
 &=
 \frac{1}{d^2} \left[ \tr(S^{-1}S) + \tr(S^{-1} T_\nu) \right] - \frac{1}{d}\,. 
\end{align}
Now, $S$ is in the commutant of $\omega$, hence so is $S^{-1}$, and thus $\tr(S^{-1} T_\nu)=0$.
In conclusion, $\FD_\mathrm{tr}(m) = \osandwich{\rho}{P_\Ad}{\rho}$ and is not decaying with the sequence length. 
Using the trace filter \eqref{eq:filter-function} in post-processing, thus, yields an estimator for the state fidelity.

Finally, we note there are also different ways to construct similar trace filters.
If one does not want to assume local Clifford invariance, but the frame operators are still diagonal, the scheme can be adapted as follows.
Instead of $\rho=\ketbra{0}{0}$, prepare tensor powers of the `facet' magic state $\ketbra{F}{F} = ( \one + (X+Y+Z)/\sqrt{3} )/2$ which have the form
\begin{equation}
 \ketbra{F}{F}^{\otimes n} = \frac{1}{d} \sum_{u\in\F_2^{2n}} 3^{-\mathrm{wt}(u)/2} w(u)\, ,
\end{equation}
where $\mathrm{wt}(u)$ denotes the Pauli weight of the operator $w(u)$.
The $\xi$ operator can then be changed as follows
\begin{equation}
 \xi = \frac{1}{d^2} \sum_{u\in\F_2^{2n}} 3^{\mathrm{wt}(u)/2} w(u) \,.
\end{equation}
It is easy to see that this provides the right result.

\subsection{Detailed comparison to related works}
\label{sec:related-works}

Here, we want to compare our results, and in particular the assumptions needed for our signal guarantees, with the previously existing work by \textcite{helsen_general_2022,kong_framework_2021,chen_randomized_2022,chen_linear_2022}.

The main differences are as follows:
Our results hold for \ac{RB} experiments with gates from an arbitrary compact group $G$ sampled according to an arbitrary (well-behaved) probability measure.
The work of \textcite{helsen_general_2022} deals with arbitrary finite groups $G$, and their approach is readily generalized to compact groups with equivalent guarantees \cite{kong_framework_2021}.
However, for non-uniform sampling with measure $\nu$, their results require that $\nu^{*m}$ converges quickly to the Haar measure in total variation distance, which is quite a strong assumption.
Moreover, most results in Refs.~\cite{helsen_general_2022,kong_framework_2021} hold for a \ac{RB} protocol with inversion gate.
However, Ref.~\cite{helsen_general_2022} introduces the idea of filtered \ac{RB}, but only for Haar-random sampling and without a proper analysis of its sampling complexity.
The works by \textcite{chen_randomized_2022,chen_linear_2022} improve on the latter aspects in the sense that no inversion gate is needed and general probability measures $\nu$ are admitted under the assumption that $\nu^{*m}$ converges to a unitary 2-design.
Explicit examples for such random circuits with Clifford gates are discussed in detail in Ref.~\cite{chen_linear_2022}.

To make this more precise, recall the central assumption \eqref{eq:thm:initialbelief} of our main theorems \ref{thm:rb-data-random-circuits} and \ref{thm:second-moment}:
\begin{equation}
\snorm{\widehat{\phi\nu}[\omega_{\lambda}] - \widehat{\omega\nu}[\omega_{\lambda}]} \leq \delta_\lambda < \frac{\Delta_\lambda}{4} \, .
\end{equation}
This can be phrased as the assumption that the implementation function $\phi$ is sufficiently close to the reference representation $\omega$ \emph{on average w.r.t.~the measure $\nu$}.
In particular, only the quality of \emph{gates in the support of $\nu$} matter.
This assumption is not only less stringent in several aspects compared to Refs.~\cite{helsen_general_2022,kong_framework_2021,chen_randomized_2022,chen_linear_2022}, but our analysis also leads to more profound and tight results, as we will now explain.

\begin{enumerate}
 \item\label{item:per_irrep} As a consequence of filtering onto the irrep $\lambda$, we only require that the implementation error is small ``per irrep'' and compared to the irrep-specific spectral gap $\Delta_\lambda$ of $\widehat{\omega\nu}[\omega_\lambda]$.
 Refs.~\cite{helsen_general_2022,kong_framework_2021,chen_randomized_2022,chen_linear_2022} require conditions on global quantities. 

 \item We require that the deviation of the Fourier transforms of $\phi$ and $\omega$ is small.
 These are quantities which are already averaged over the group.
 In contrast, both Refs.~\cite{helsen_general_2022,kong_framework_2021,chen_randomized_2022,chen_linear_2022} formally require that the error $\norm{\phi(g)-\omega(g)}$, averaged over the group, is small.
 Closer inspection shows that the proofs in Refs.~\cite{chen_randomized_2022,chen_linear_2022} can be adapted to allow averaging inside the norm. 
 This is not the case for the approach in Refs.~\cite{helsen_general_2022,kong_framework_2021}.
 In general, we have the bound (cf.~Eq.~\eqref{eq:implementation-error-bound} below):
 \begin{equation}
  \snorm{ \widehat{\phi\nu}[\omega_{\lambda}] - \widehat{\omega\nu}[\omega_{\lambda}]}
  \leq \int_G \snorm{\phi(g) - \omega(g)} \dd\nu(g).
 \end{equation}
 We expect that the triangle inequality is quite loose in realistic situations and thus ``averaging inside the norm'' should lead to smaller quantities.

 Note that we are using the spectral norm to measure the implementation error while Refs.~\cite{chen_randomized_2022,chen_linear_2022} use the diamond-to-diamond norm for this purpose.\footnote{Note that the diamond-to-diamond norm can only be used for the `global' Fourier transform $\widehat{\phi\nu}[\omega]$ and not on a per-irrep basis. This is because the irreps generally fail to be operator algebras and thus cannot be endowed with Schatten norms.}
 As spectral and diamond-to-diamond norm are not ordered, it is not straightforward to compare the two approaches.

 \item We measure errors in \emph{spectral norm} for reasons which we explain shortly.
 Other norm choices include the diamond-to-diamond norm for the Fourier transforms, and the diamond norm for its vectorized version \eqref{eq:def-fourier-transform-dagger-cc}, as well as mixed approaches.
 The results in Refs.~\cite{chen_randomized_2022,chen_linear_2022} hold for an implementation error that is measured in diamond-to-diamond norm, while $\nu$ is assumed to be an approximate unitary 2-design in either diamond-to-diamond norm or the spectral norm, with approximation error less than 1.

 In general, norm choices based on the diamond norm imply much stronger notions of approximation for random circuits than for the spectral norm.
 In Ref.~\cite{chen_linear_2022}, it is shown that in order to constitute an approximate unitary 2-design in diamond-to-diamond norm, a random circuit has to have at least logarithmic depth in the number of qudits $n$.
 Moreover, in the same paper, it is also argued that the sequence length $m$ has the be linear in $n$.
 This directly implies that the results based on approximation in diamond-to-diamond norm require circuits of depth at least $\Omega(n\log n)$.
 In contrast, approximation in spectral norm allows for constant-depth random circuits (e.g.~brickwork circuits), while the sequence length $m$ is still linear in $n$.
 Although not explicitly stated in Refs.~\cite{chen_randomized_2022,chen_linear_2022}, the results for the spectral norm in these works thus imply that circuits of depth $O(n)$ are in fact already enough.
 In practice, the omitted constants and the regime of $n$ eventually decide which scaling is favorable.

 We took the generally larger convergence rates in diamond norm as an indication to use the spectral norm instead.
 Moreover, there is no conceptional difficulty in applying the spectral norm to Fourier transforms evaluated at subrepresentations, c.f.~point~\ref{item:per_irrep} above.
 Thus, this norm choice seems to be better suited for our filtering approach which is arguably most relevant for groups with more than one non-trivial irrep.
 Since we are then able to make the analysis on a per-irrep basis, we can show that it is sufficient to choose the sequence length as $m=O(\log(d_\lambda))$, c.f.~Sec.~\ref{sec:sequence-lengths} (assuming a constant spectral gap).
 In particular, we recover the results in Refs.~\cite{chen_randomized_2022,chen_linear_2022} for unitary 2-designs, c.f.~Sec.~\ref{sec:application-to-random-circuits}, but obtain more fine-grained and tighter bounds for general compact groups $G$.

 Finally, we also found that choosing the spectral norm naturally involves the spectral gap.
 Since the spectral gap is a well-studied quantity in the theory of random circuits, there exist plenty of literature results and techniques to bound the spectral gap for the random circuit of interest.
 In practice, it is to be expected that finding convergence rates in diamond-to-diamond norm is much harder.
 Indeed, direct results are rare \cite{harrow_random_2009,DanCleEme09}, indicating that leveraging spectral gap bounds via norm inequalities at the cost of additional dimensional factors might be the only course of action (as it is often the case for the much better studied diamond norm approximations of the vectorized channel twirl \cite{brandao_local_2016,HarMeh18,haferkamp_improved_2021,haferkamp_random_2022} .

 \item Consequently, our assumptions on the probability measure $\nu$ are minimal.
 There are no assumptions on the spectral gap $\Delta_\lambda$ except that it is larger than zero.
 Its value, however, limits the amount of imperfection in $\phi$ that can be tolerated, as well the required sequence length of the random circuits.
 An similar limitation can be found for the spectral norm results in Ref.~\cite{chen_randomized_2022}, as well as a qualitatively similar statement for the diamond-to-diamond norm (although the implementation error is measured in a different norm, thus the results are not strictly comparable).
 Finally, Ref.~\cite{helsen_general_2022} requires that the probability measure $\nu$ is close to the Haar measure in \emph{total variation distance}. 
 This is of course much more stringent than our assumption on the existence of a spectral gap.
 
 \item We take averages with respect to the relevant measure $\nu$. 
 In particular, it is completely irrelevant whether the implementation is good outside the support of $\nu$.
 This is a desired property, since only gates in $\supp(\nu)$ are ever applied in the \ac{RB} experiment and hence all other gates should not play a role in the analysis.
 The work of \textcite{chen_randomized_2022,chen_linear_2022} shares this property, while \textcite{helsen_general_2022} use Haar averages, even for non-uniform sampling.

 \item To the best of our knowledge, our work gives the first sampling complexity bounds for filtered \ac{RB} for arbitrary groups and random circuits (including linear \ac{XEB}) that are close to the noise-free situation and in this sense, optimal. 
\end{enumerate}

Note that the requirement that the spectral gap $\Delta_\lambda$ is non-zero implies that the measure $\nu$ defines an approximate $\tau_\lambda\otimes\omega$-design.
The size of the spectral gap determines the convergence rate of the design and hence strongly influences our sequence length bound, c.f.~Lem.~\ref{lem:sequence-length}.
For this, it is sufficient that $\nu$ defines an approximate $\omega^{\otimes 2}$-design \emph{for the group $G$ which we actually use for benchmarking}.
This is the proper generalization of the \emph{unitary 2-design assumption} made in standard \ac{RB} literature, when the benchmarking group is not the full unitary group or Clifford group.
Likewise, our sampling complexity guarantees require that $\nu$ is an approximate $\tau_\lambda^{\otimes 2}\otimes\omega$-design (or $\omega^{\otimes 3}$-design), replacing the common unitary 3-design assumption.

\section{Conclusion}
\label{sec:conclusion}

\emph{Filtered \acf{RB}} is the collection of experimental data obtained by applying random sequences of gates, followed by a suitable post-processing.
Summarized by the motto `measure first -- analyze later', filtered \ac{RB} is part of a modern class of characterization protocols such as \emph{shadow tomography} \cite{Aaronson2018ShadowTomography,huang_predicting_2020,elben_randomized_2022}, randomized \emph{gate set tomography} \cite{Gu2021RandomizedLinearGate,Brieger21CompressiveGateSet}, and other random sequence protocols \cite{helsen_estimating_2021}.
These protocols essentially share their first stage -- the data acquisition from random sequences of gates -- and only differ in the post-processing of this data.
The advantage of such protocols is that different conclusions can be drawn from the same data.

Compared to standard \ac{RB}, filtered \ac{RB} has the advantage to avoid the application of the final \emph{inversion gate}.
This avoids the problem that the inverse of a unitary can have large circuit depth even if the original unitary does not.
Moreover, it relaxes the requirements on the used gate set as it is not needed to efficiently compute the inversion gate.
Another advantage is only apparent when the protocol is used for groups $G$ with more than one non-trivial irrep (i.e.~groups which are not unitary 2-designs) -- in this case, standard \ac{RB} produces a linear combination of exponential decays in one-to-one correspondence with the irreps of $G$.
In practice, fitting these decays is often not possible and there is no way of attributing them to individual irreps.
Filtered \ac{RB} allows to address specific irreps and produces a single exponential decay that is straightforward to fit.\footnote{Assuming that the irrep is multiplicity-free and a random circuit with sufficiently large spectral gap is used.}

In this work, we have developed a general theory of filtered \ac{RB} with random circuits and arbitrary gate-dependent (Markovian and time-stationary) noise.
Our theory is based on harmonic analysis on compact groups and neatly combines representation theory with the theory of random circuits.
As such, it can be seen as a mathematically elegant advancement of Fourier-based approaches to \ac{RB} \cite{Merkel18,helsen_general_2022}, which does not require to implement a Fourier transform `by hand' but instead effectively performs it in the post-processing.
{We hope that our theory clarifies to which extent a group structure is needed for \ac{RB} and how one can go `beyond groups' \cite{chen_randomized_2022}.}

Concretely, we have shown that if the implementation error of the used gates is small enough compared to the spectral gap of the random circuit, then the noise cannot close the spectral gap of the relevant moment operator.
We argued that for local noise, this is the case if gate errors scale as $O(1/n)$.
As a consequence, the filtered \ac{RB} signal has two well-defined contributions.
The dominant one has the form of a matrix exponential decay and quantifies the average performance of the gate implementation.
This decay is superimposed by a additional, subdominant decays reflecting the convergence of the random circuit to a 2-design for the group $G$.
Importantly, if the implementation error is too large, the spectral gap may close and control over the contributions to the signal is lost.
In this regime, the signal does generally not reflect the average circuit performance.

We have derived sufficient conditions on the depth of the random circuit which guarantee that the subdominant contribution to the signal is negligible and the relevant decays can be extracted.
For random circuits which mix sufficiently fast, a \emph{circuit depth which is at most linear in the number of qudits} is sufficient.
{Although one may hope that this scaling can be improved to logarithmic depth, it is not clear whether this can be done under the general assumptions in this paper, or whether further assumptions on the physical noise (e.g.~locality) is needed.}

Additionally, we have shown that the use of random circuits instead of uniformly drawn unitaries from $G$ does not change the sampling complexity of filtered \ac{RB}. 
In particular, filtered \ac{RB} is \emph{sampling-efficient} if it is sampling-efficient when uniformly distributed unitaries are used.
To this end, we have computed the sampling complexity of ideal filtered \ac{RB} for unitary 3-designs, local unitary 3-designs, and the Pauli group, and found that it is indeed sampling-efficient in these important cases.

To illustrate our general results, we have applied them to commonly used groups and random circuits and have derived concrete, small constants for sufficient sequence lengths.
These explicit computation may also serve as a guideline when applying our general results to other groups and random circuits.

Finally, we have discussed other choices of filter functions which should result in a further reduction of the necessary circuit depths for filtered \ac{RB}.
Moreover, we think that these proposals are highly relevant for linear \acf{XEB} and the related random circuit sampling benchmark \cite{Boixo2018CharacterizingQuantum,Liu21BenchmarkingNear-term}.
However, a rigorous analysis for these alternative filter functions requires new techniques beyond the ones used in this paper and we leave such a study for future work. 
Ref.~\cite{helsen_estimating_2021} showed that 
by using more general filter functions one can perform  other robust gate-set characterization tasks, 
including ``filtered'' versions of interleaved randomized benchmarking, RB tomography, or Pauli channel estimation.
Our techniques can be applied to analyse variants with non-uniform measures of these protocols. 

We note that our analysis can be extended to incorporate non-Markovian noise, we however leave such a treatise for future work.

\section{Acknowledgements}

We thank Jonas Haferkamp and Nicholas Hunter-Jones for helpful discussions about random circuits and spectral gaps throughout the project, and for pointing out relevant literature and methods for Sec.~\ref{sec:application-to-random-circuits}. 
We thank Yelyzaveta Vodovozova, Jadwiga Wilkens, and the anonymous referees of the QIP and QCTiP conferences for helpful comments on the draft.
Furthermore, we would like to thank Jianxin Chen, Cupjin Huang, Linghang Kong, and in particular Dawei Ding for reaching out to us, helping us to improve our summary of Refs.~\cite{chen_randomized_2022,chen_linear_2022}, and clarifying its relation to our current work. 
IR would like to thank Jonas Helsen, Emilio Onorati, Albert Werner and Jens Eisert for countless discussions on randomized benchmarking over the recent years. 
MH and MK are funded by the Deutsche Forschungsgemeinschaft (DFG, German Research Foundation) within the Emmy Noether program (grant number 441423094) and 
by the German Federal Ministry of Education and Research (BMBF) within the funding program ``quantum technologies -- from basic research to market'' via the joint project MIQRO (grant number 13N15522).

\section{Acronyms}

\begin{acronym}[POVM]\itemsep.5\baselineskip
\acro{ACES}{averaged circuit eigenvalue sampling}
\acro{AGF}{average gate fidelity}

\acro{BOG}{binned outcome generation}
\acro{BW}{brickwork}

\acro{CP}{completely positive}
\acro{CPT}{completely positive and trace preserving}

\acro{DFE}{direct fidelity estimation} 

\acro{FT}{Fourier transform}

\acro{GST}{gate set tomography}
\acro{GTM}{gate-independent, time-stationary, Markovian}

\acro{HOG}{heavy outcome generation}

\acro{irrep}{irreducible representation}

\acro{LRC}{local random circuit}

\acro{MUBs}{mutually unbiased bases} 
\acro{MW}{micro wave}

\acro{NISQ}{noisy and intermediate scale quantum}
\acro{NN}{nearest-neighbor}

\acro{OVM}{operator-valued measure}

\acro{POVM}{positive operator-valued measure}
\acro{PVM}{projector-valued measure}

\acro{QAOA}{quantum approximate optimization algorithm}

\acro{RB}{randomized benchmarking}

\acro{SFE}{shadow fidelity estimation}
\acro{SIC}{symmetric, informationally complete}
\acro{SPAM}{state preparation and measurement}

\acro{QPT}{quantum process tomography}

\acro{rf}{radio frequency}

\acro{TT}{tensor train}
\acro{TV}{total variation}

\acro{VQE}{variational quantum eigensolver}

\acro{XEB}{cross-entropy benchmarking}

\end{acronym}

\bibliography{generatorRBnew,mk}


\appendix

\section{Matrix perturbation theory}
\label{sec:perturbation-theory}

In this section, we review some results in matrix perturbation theory by \textcite{stewart_matrix_1990} and derive the corollaries needed to prove our main results.
To this end, we introduce some definitions.
First, we call a family of norms $\norm{\argdot}$ 
on matrices $\CC^{n\times m}$ \emph{matrix norms}, 
if it is submultiplicative w.r.t.\ to matrix multiplication; i.e.\ if $M\in\C^{n\times m}$ and $N\in\C^{m\times k}$, then $\norm{M N}\leq \norm{M}\norm{N}$. 
The \emph{separation function} between two square matrices $A\in\C^{n\times n}$ and $B\in\C^{m\times m}$ is given by
\begin{equation}
\label{eq:def-separation-function}
\sep(A,B) \coloneqq \inf_{\norm{P}=1} \norm{AP - PB},
\end{equation}
where the infimum is taken over all matrices $P\in\C^{n\times m}$.
The separation function is stable and continuous in the sense that \cite[Theorem~2.5, p.~234]{stewart_matrix_1990}
\begin{equation}
\label{eq:perturbation-separation-function}
\begin{aligned}
\sep(A,B) - \norm{E} - \norm{F} 
&\leq \sep(A+E,B+F) \\ 
&\leq \sep(A,B) + \norm{E} + \norm{F}.
\end{aligned}
\end{equation}
Following Ref.~\cite{stewart_matrix_1990}, we say that $A\in \CC^{n\times n}$ has a \emph{spectral resolution} if there is a block-diagonal decomposition of the form
\begin{equation}
\label{eq:spectral-resolution}
  [Y_1\, Y_2]\ad A\, [X_1\, X_2] = \begin{bmatrix} A_1 & 0\\ 0 & A_2
  \end{bmatrix},
\end{equation} 
for matrices $X_i$, $Y_j$ with $Y_i^\dagger X_j = \delta_{i,j}\id $, and $A_i = Y_i^\dagger A X_i$.
In this context, we write $E_{i,j} \coloneqq Y_i\ad E X_j$ for any matrix $E$. 

The separation function can be used to quantify the effect of small additive perturbations on a spectral resolution as follows.

\begin{theorem}[%
{\cite[Theorem~2.8, p.~238]{stewart_matrix_1990}}%
]\label{thm:left-invariant}
Let $A$ be a matrix with spectral resolution 
\begin{equation}
  [Y_1\, Y_2]\ad A\, [X_1\, X_2] = \begin{bmatrix} A_1 & 0 \\ 0 & A_2
  \end{bmatrix} \, ,
\end{equation}
$\norm{\argdot}$ be a matrix norm and $E$ some other matrix (perturbation). 
Suppose we have
\begin{align}
   \delta \coloneqq \sep(A_1, A_2) - \norm{E_{11}} - \norm{E_{22}} &> 0, \\ \text{and}\qquad \frac{\norm{E_{21}}\norm{E_{12}}}{\delta^2} &< \frac14,
\end{align} 
then there exist a unique matrix $P$ fulfilling
\begin{equation}
 P (A_1 + E_{11}) - (A_2 + E_{22}) P = E_{21} - P E_{12} P \, ,
 \label{eq:P-identities}
\end{equation}
and $\norm{P} < 2\, \frac{\norm{E_{21}}}\delta$,
such that the matrices
\begin{align}
  \tilde X_1 &\coloneqq X_1 + X_2 P, &
  \tilde A_1 &\coloneqq A_1 + E_{11} + E_{12} P, \\
  \tilde Y_2 &\coloneqq Y_2 - Y_1 P\ad, &
  \tilde A_2 &\coloneqq A_2 + E_{22} - P E_{12},
\end{align}
fulfill
\begin{align}\label{eq:subspace_cond}
  (A + E) \tilde X_1 &= \tilde X_1 \tilde A_1, &
  \tilde Y_2\ad (A + E) &= \tilde A_2 \tilde Y_2\ad\, .
\end{align}
\end{theorem}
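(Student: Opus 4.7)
The plan is to reduce the statement to finding a fixed point of a quadratic (Riccati-type) matrix equation, and then solve that equation via the Banach fixed-point theorem on a small ball. The key observation is that if the perturbed right-invariant subspace is parametrized as the graph $\tilde X_1 = X_1 + X_2 P$ of some operator $P$ over the unperturbed subspace, then $(A+E)\tilde X_1 = \tilde X_1 \tilde A_1$ holds if and only if applying $Y_1^\dagger$ and $Y_2^\dagger$ to both sides is consistent. Using $Y_i^\dagger X_j = \delta_{ij}\id$ together with the block notation $E_{ij}=Y_i^\dagger E X_j$, the $Y_1^\dagger$ component immediately \emph{defines} $\tilde A_1 = A_1 + E_{11} + E_{12} P$, while the $Y_2^\dagger$ component yields exactly the Riccati equation \eqref{eq:P-identities}. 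Hence the whole theorem hinges on showing that this Riccati equation has a unique small solution.

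To solve it, I would introduce the Sylvester operator $L(P) \coloneqq (A_2+E_{22})P - P(A_1+E_{11})$ acting on matrices of the appropriate shape, and rewrite \eqref{eq:P-identities} as the fixed-point equation $P = \Phi(P)$ with $\Phi(P) \coloneqq L^{-1}\bigl(PE_{12}P - E_{21}\bigr)$ (up to an overall sign). Invertibility and a bound on $L^{-1}$ follow from the separation function: by definition \eqref{eq:def-separation-function}, $\|L(P)\| \geq \sep(A_1+E_{11}, A_2+E_{22})\,\|P\|$, and by the continuity property \eqref{eq:perturbation-separation-function} this lower bound is at least $\delta = \sep(A_1,A_2) - \|E_{11}\| - \|E_{22}\| > 0$, so $\|L^{-1}\| \leq \delta^{-1}$.

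The core analytic step is then to exhibit a closed ball $B_r = \{P : \|P\| \leq r\}$ on which $\Phi$ is a contraction and which is mapped into itself. Invariance gives the quadratic condition $\|E_{12}\|\,r^2 - \delta\,r + \|E_{21}\| \leq 0$, which has positive solutions precisely when $4\|E_{12}\|\|E_{21}\| \leq \delta^2$, matching the hypothesis $\|E_{21}\|\|E_{12}\|/\delta^2 < 1/4$ with strict inequality. Choosing $r$ to be the smaller root and bounding it via $r \leq 2\|E_{21}\|/\delta$ (which follows from the elementary inequality $(1-\sqrt{1-x})/x \leq 1$ applied to $x = 4\|E_{12}\|\|E_{21}\|/\delta^2$) already yields the claimed norm bound $\|P\| < 2\|E_{21}\|/\delta$. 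For contractivity one estimates
\begin{equation}
\|\Phi(P) - \Phi(Q)\| \leq \delta^{-1}\|E_{12}\|\bigl(\|P\|+\|Q\|\bigr)\|P-Q\| \leq 2r\|E_{12}\|\,\delta^{-1}\,\|P-Q\|,
\end{equation}
and $2r\|E_{12}\|/\delta < 1$ is exactly the condition enforced by the smaller root lying below the apex $\delta/(2\|E_{12}\|)$ of the quadratic. Banach's theorem then produces a unique $P \in B_r$ solving \eqref{eq:P-identities}, and both existence and uniqueness transfer verbatim to the original Riccati equation.

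Finally, I would verify the invariant subspace relations \eqref{eq:subspace_cond}: the first identity $(A+E)\tilde X_1 = \tilde X_1 \tilde A_1$ is by construction, as explained above, while the second statement about the left invariant subspace $\tilde Y_2 = Y_2 - Y_1 P^\dagger$ can either be derived by an analogous fixed-point argument applied to $A^\dagger$, or more cleanly by directly checking that $\tilde Y_2^\dagger(A+E) = \tilde A_2 \tilde Y_2^\dagger$ reduces to the adjoint of the same Riccati equation and hence has the same $P$ as its solution. The main obstacle is bookkeeping in the quadratic estimate: one must keep track of the interplay between the perturbed separation, the $E_{12}P^2$ term, and the resulting tight constant $1/4$ in the hypothesis; everything else is essentially algebra.
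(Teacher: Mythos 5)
Your proposal is correct: it reconstructs the standard graph-parametrization/Riccati-equation argument with a Banach fixed-point step on the ball of radius equal to the smaller root of $\norm{E_{12}}r^2-\delta r+\norm{E_{21}}\le 0$, which is precisely the proof of Theorem~2.8 in Stewart and Sun. The paper itself does not prove this statement at all — it imports it verbatim from Ref.~\cite{stewart_matrix_1990} — so there is nothing to contrast; the only points worth tightening in a written version are (i) the uniqueness claim in the full ball $\norm{P}<2\norm{E_{21}}/\delta$ (note any solution satisfies $\norm{E_{12}}\norm{P}^2-\delta\norm{P}+\norm{E_{21}}\ge 0$, so it lies either below the smaller root, where Banach applies, or above the larger root, which exceeds $2\norm{E_{21}}/\delta$), and (ii) the orientation of the separation function, since the Sylvester operator relevant to $P\mapsto P(A_1+E_{11})-(A_2+E_{22})P$ is $\sep$ with the arguments in the order matching the shape of $P$, a convention the paper's definition \eqref{eq:def-separation-function} leaves implicit.
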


\begin{remark}[Perturbation of real matrices]
\label{rem:pert-theory-real-matrices}
Note that Theorem \ref{thm:left-invariant} can also be used for real matrices:
Suppose $A$ is a real matrix with spectral resolution given by real matrices $X_i$, $Y_j$ with $Y_i^T X_j = \delta_{i,j}\id$, and $E$ is a real perturbation.
Let $P$ be the matrix produced by Thm.~\ref{thm:left-invariant}.
Then, complex conjugation of the defining Eq.~\eqref{eq:P-identities} shows that $\bar P$ is also a solution to Eq.~\eqref{eq:P-identities}.
Since the solution is unique, we have $\bar P = P$ and hence $P$ is real, as well as the matrices $\tilde X_i$, $\tilde Y_i$, and $\tilde A_i$.
\end{remark}

For the following discussion, we denote $\tilde Y_1 \coloneqq Y_1$ and $\tilde X_2 \coloneqq X_2$. 
Then also $\tilde X$ and $\tilde Y$ satisfy the orthogonality relations 
$\tilde Y_i^\dagger \tilde X_j = \delta_{i,j}\id$. 
In block matrix notation we then have
\begin{align}\label{eq:invar_subspaces}
[\tilde X_1\, \tilde X_2] 
&\coloneqq 
[X_1\,X_2] 
\begin{bmatrix}
\id & 0\\ P & \id 
\end{bmatrix}
\,, &
[\tilde Y_1\, \tilde Y_2] 
&\coloneqq
[Y_1\, Y_2] 
\begin{bmatrix}
\id & -P\ad\\ 0 & \id 
\end{bmatrix}. 
\end{align}
Moreover, the subspace conditions~\eqref{eq:subspace_cond} 
translate to 
\begin{align}
  \MoveEqLeft[2]
  [\tilde Y_1\, \tilde Y_2]\ad (A + E) [\tilde X_1 \,\tilde X_2] \\
  &= %
  \begin{bmatrix}
    \tilde Y_1\ad (A + E)\tilde X_1 & %
      \tilde Y_1\ad (A + E)\tilde X_2 \\
    \tilde Y_2\ad (A + E)\tilde X_1 & %
      \tilde Y_2\ad (A + E)\tilde X_2 \\
  \end{bmatrix}
  \\
  &= 
  \begin{bmatrix}
    \tilde Y_1\ad \tilde X_1 \tilde A_1& %
       Y_1\ad (A + E) X_2 \\
    \tilde Y_2\ad \tilde X_1 \tilde A_1& %
      \tilde A_2 \tilde Y_2\ad \tilde X_2 \\
  \end{bmatrix}
  = \begin{bmatrix}
    \tilde A_1& %
      E_{12} \\
    0 & %
      \tilde A_2 
  \end{bmatrix}. 
\end{align}

Of course an analogous  perturbation theorem holds when changing the roles of perturbed right- and left-invariant subspaces.  
To keep track of the notation let us write this down explicitly. 

\begin{corollary}[Flipped version]\label{cor:flipped_version}
Let $A$ be a matrix with spectral resolution 
\begin{equation}
  [Y_1\,Y_2]\ad A [X_1\, X_2] = \begin{bmatrix} A_1 & 0 \\ 0 & A_2 \end{bmatrix}, 
\end{equation}
$\norm{\argdot}$ be a matrix norm and $E$ some other matrix (perturbation). 
Suppose we have
\begin{align}
  \delta \coloneqq  \sep(A_1, A_2) - \norm{E_{11}} - \norm{E_{22}} &> 0, \\
  \text{and} \qquad \frac{\norm{E_{21}} \norm{E_{12}}}{\delta^2} &< \frac 14,
\end{align}
then there exist a unique matrix $P$ with $\norm P < 2\, \frac{\norm{E_{12}}} \delta$ such that the matrices
\begin{equation}
\label{eq:invar_subspaces_flipped}
\begin{aligned}
[\tilde X_1 \,\tilde X_2] 
&\coloneqq 
[X_1\, X_2] 
\begin{bmatrix}
\id & P\\ 0 & \id 
\end{bmatrix}, &
\tilde A_1 &\coloneqq A_1 + E_{11} - P E_{21} \, ,\\
[\tilde Y_1\, \tilde Y_2] 
&\coloneqq
[Y_1\, Y_2] 
\begin{bmatrix}
\id & 0\\ -P\ad & \id 
\end{bmatrix}, &
\tilde A_2 &\coloneqq A_2 + E_{22} + E_{21} P\, .
\end{aligned}
\end{equation}
fulfill $\tilde Y_i\ad \tilde X_j = \delta_{ij}\id$ and give rise to the following decomposition of the perturbed operator $A+E$:
\begin{equation}\label{eq:A+E-flipped_version}
  [\tilde Y_1\, \tilde Y_2]\ad (A+E) [\tilde X_1\, \tilde X_2]
  =
  \begin{bmatrix}
    \tilde A_1 & 0 \\
    E_{21} & \tilde A_2    
  \end{bmatrix}. 
  \end{equation} 
\end{corollary}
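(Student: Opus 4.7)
My plan is to reduce Corollary \ref{cor:flipped_version} to Theorem \ref{thm:left-invariant} by exchanging the roles of the two blocks in the spectral resolution of $A$. Concretely, I would set $X_i' \coloneqq X_{3-i}$ and $Y_i' \coloneqq Y_{3-i}$, so that $A$ acquires a relabeled spectral resolution with $A_1' = A_2$, $A_2' = A_1$, and perturbation blocks $E_{ij}' = E_{(3-i)(3-j)}$. Applied to this relabeled setup, Theorem \ref{thm:left-invariant} produces a matrix $P'$ with $\norm{P'} < 2\norm{E_{21}'}/\delta = 2\norm{E_{12}}/\delta$, together with perturbed right- and left-invariant subspaces $\tilde X_1' = X_2 + X_1 P'$, $\tilde Y_2' = Y_1 - Y_2 (P')\ad$ carrying eigenblocks $\tilde A_1' = A_2 + E_{22} + E_{21} P'$ and $\tilde A_2' = A_1 + E_{11} - P' E_{21}$. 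Setting $P \coloneqq P'$ and matching with \eqref{eq:invar_subspaces_flipped}, these quantities are precisely $\tilde X_2$, $\tilde Y_1$, $\tilde A_2$ and $\tilde A_1$ of the corollary, while $\tilde X_1 = X_1$ and $\tilde Y_2 = Y_2$ are definitional. The block-triangular identity \eqref{eq:A+E-flipped_version} then follows by the analogous block computation as at the end of Theorem \ref{thm:left-invariant}, using $\tilde Y_i\ad \tilde X_j = \delta_{ij}\id$ and the invariance relations.

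The main subtlety I expect is the behavior of the separation function under swapping: the hypothesis of the corollary uses $\sep(A_1, A_2)$, whereas a naive relabeling would produce $\sep(A_2, A_1)$, and these need not coincide. The clean way to handle this is to observe that the role of $\sep$ in Theorem \ref{thm:left-invariant} is only to guarantee invertibility (with a norm bound) of the Sylvester-type operator that solves for $P$. For the flipped problem, imposing right-invariance $(A+E)\tilde X_2 = \tilde X_2 \tilde A_2$ and eliminating $\tilde A_2 = A_2 + E_{22} + E_{21} P$ yields the fixed-point equation
\begin{equation}
 (A_1 + E_{11}) P - P (A_2 + E_{22}) \;=\; P E_{21} P - E_{12},
\end{equation}
whose governing linear operator $T:\, P \mapsto (A_1+E_{11})P - P(A_2+E_{22})$ is controlled by $\sep(A_1+E_{11}, A_2+E_{22})$. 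By the stability bound \eqref{eq:perturbation-separation-function}, this quantity is at least $\sep(A_1, A_2) - \norm{E_{11}} - \norm{E_{22}} = \delta > 0$, so $T$ is invertible with $\norm{T^{-1}} \leq 1/\delta$. Hence the corollary's hypothesis is exactly what is needed, regardless of any asymmetry of $\sep$.

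With invertibility in hand, the remainder is a Banach fixed-point argument of the same flavor as in Theorem \ref{thm:left-invariant}: the map $\Phi(P) \coloneqq T^{-1}(P E_{21} P - E_{12})$ is shown to map the ball $\{\norm{P} \leq 2\norm{E_{12}}/\delta\}$ into itself and to be a strict contraction there precisely under the assumption $\norm{E_{12}}\norm{E_{21}}/\delta^2 < 1/4$, yielding a unique $P$ with the claimed bound $\norm{P} < 2\norm{E_{12}}/\delta$. A direct change-of-basis computation using the unimodular matrices in \eqref{eq:invar_subspaces_flipped} then verifies that $[\tilde Y_1\,\tilde Y_2]\ad(A+E)[\tilde X_1\,\tilde X_2]$ takes the lower-triangular form \eqref{eq:A+E-flipped_version}, with diagonal blocks $\tilde A_1$ and $\tilde A_2$, and off-diagonal block $E_{21}$, finishing the proof. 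I expect the only place where care is needed in writing this out is bookkeeping the asymmetry between the two alternative fixed-point equations (one for right-invariance of $\tilde X_2$, one for left-invariance of $\tilde Y_1$); uniqueness of $P$ ensures both produce the same matrix.
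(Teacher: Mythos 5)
Your proposal is correct, but it takes a genuinely different route from the paper's. The paper proves Corollary~\ref{cor:flipped_version} in essentially one line by applying Theorem~\ref{thm:left-invariant} to the adjoint $A\ad$ with the norm $\norm{(\argdot)\ad}$, relabelling $X\leftrightarrow Y$, $E\to E\ad$, $P\to -P\ad$, and taking adjoints of all identities; the lower-triangular form \eqref{eq:A+E-flipped_version} and the bound on $\norm P$ then drop out by bookkeeping. You instead return to first principles: you set up the Sylvester-type fixed-point equation $(A_1+E_{11})P - P(A_2+E_{22}) = PE_{21}P - E_{12}$ obtained from right-invariance of $\tilde X_2$, observe that the governing linear operator is controlled by $\sep(A_1+E_{11},A_2+E_{22})\geq\delta$ via the stability bound \eqref{eq:perturbation-separation-function}, and rerun the contraction-mapping argument on the ball of radius $2\norm{E_{12}}/\delta$. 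This is more work (you are essentially re-proving the Stewart--Sun core rather than citing it), but it buys something real: it makes explicit why the hypothesis is correctly stated with $\sep(A_1,A_2)$ rather than $\sep(A_2,A_1)$ — a point the paper's adjoint trick quietly relies on and which, for a general (possibly non-symmetric) matrix norm and the paper's stated convention for $\sep$, deserves exactly the scrutiny you give it. Your closing remark is also sound, and can even be sharpened: the left-invariance condition for $\tilde Y_1 = Y_1 - Y_2P\ad$ reduces, upon expanding in the blocks, to the very same Sylvester equation, so one $P$ serves both roles without needing to invoke uniqueness.
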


\begin{proof}
We apply Theorem \ref{thm:left-invariant} to $A\ad$ and use the corresponding norm $\norm{(\argdot)\ad}$. 
More explicitly, we take adjoints of the equations in the theorem and make the following replacements: 
$A \to A\ad$, $X\to Y$, $Y\to X$, similarly for the quantities with a tilde, $E\to E\ad$, $P\to -P\ad$ and $\norm{\argdot} \to \norm{(\argdot)\ad}$.
This yields the claimed bounds. 
Moreover, 
\begin{align}
  \tilde Y_1\ad (A + E) &= \tilde A_1 \tilde Y_1\ad, &
  (A + E) \tilde X_2 &= \tilde X_2\tilde A_2\,, 
\end{align}
where 
\begin{align}
  \tilde Y_1 &\coloneqq Y_1 - Y_2 P\ad\, , &
  \tilde A_1 &\coloneqq A_1 + E_{11} - P E_{21} \, , \label{eq:tA1_flipped} \\ 
  \tilde X_2 &\coloneqq X_2 + X_1 P \, ,&  
  \tilde A_2 &\coloneqq A_2 + E_{22} + E_{21} P\, . \label{eq:tA2_flipped}
\end{align}
Denoting $\tilde X_1 \coloneqq X_1$ and $\tilde Y_2\coloneqq Y_2$ we have again $\tilde Y_i\ad \tilde X_j = \delta_{ij} \id$. 
Moreover, this yields the result in block matrix notation as stated in the corollary. 
\end{proof}

Concatenating both versions of the perturbation theorem, gives us an expression for the entire spectral resolution of the perturbation (see also the analogous derivation in Ref.~\cite{helsen_general_2022}).

\begin{theorem}[%
Two-sided version%
]%
\label{thm:two-sided-perturbation-thm}
Let a matrix $A$ have a spectral resolution 
\begin{equation}
  [Y_1\, Y_2]\ad A\, [X_1\, X_2] = \begin{bmatrix} A_1 & 0 \\0 & A_2
  \end{bmatrix} ,
\end{equation}
$\norm{\argdot}$ be a matrix norm and $E$ some other matrix (perturbation). 
If
\begin{align}\label{eq:twosided:Ebounds}
  \delta \coloneqq \sep(A_1, A_2) - \norm{E_{11}} - \norm{E_{22}} &> 0, \\
  \text{and} \qquad
  \frac{\norm{E_{21}}\norm{E_{12}}}{\delta^2} &< \frac14
\end{align}
then there exist unique matrices $P_1, P_2$ satisfying
\begin{equation}
\label{eq:bound_P2}
  \norm{P_1} < 2 \,
    \frac{%
      \norm{E_{21}}
    }{%
    \delta%
    }, 
  \quad%
  \norm{P_2} < 2 \,
    \frac{%
      \norm{E_{12}}
    }{%
      \delta \left( 1 - 
        4 \frac{\norm{E_{21}}\norm{E_{12}}}{\delta^2} \right)
    } \,,
\end{equation}
such that the matrices
\begin{align}
  [\tilde X_1\, \tilde X_2] 
  &= 
  [ X_1\, X_2] \begin{bmatrix} \id & 0\\ P_1 & \id \end{bmatrix} 
              \begin{bmatrix} \id & P_2 \\0 & \id \end{bmatrix}, \\
  [\tilde Y_1\, \tilde Y_2]\ad 
  &= 
  \begin{bmatrix} \id & -P_2 \\0 & \id \end{bmatrix} \begin{bmatrix} \id &0 \\ -P_1 & \id \end{bmatrix}  [Y_1\,  Y_2]\ad, \\
  \tilde A_1 &= A_1 + E_{11} + E_{12} P_1, \\
  \tilde A_2 &= A_2 + E_{22} - P_1 E_{12},
\end{align}
fulfil $\tilde Y_i\ad \tilde X_j = \delta_{ij} \id$ and give rise to the spectral resolution
\begin{equation}\label{eq:A+Etwo-sided}
  [\tilde Y_1\, \tilde Y_2]\ad ( A+ E) [\tilde X_1\, \tilde X_2] = 
  \begin{bmatrix}
    \tilde A_1& 0 \\
    0 & \tilde A_2 \\
  \end{bmatrix}.
\end{equation}

\end{theorem}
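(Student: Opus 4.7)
The plan is to obtain the two-sided block diagonalization by applying the one-sided Theorem~\ref{thm:left-invariant} first and then the flipped Corollary~\ref{cor:flipped_version} to the resulting block-triangular matrix. The first stage will produce $P_1$ that block-triangularizes $A+E$ from the right-invariant side, and the second stage will produce $P_2$ that eliminates the remaining off-diagonal block without disturbing the diagonal blocks.

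First I would invoke Theorem~\ref{thm:left-invariant} directly with the hypotheses \eqref{eq:twosided:Ebounds}. This yields a unique $P_1$ with $\|P_1\| < 2\|E_{21}\|/\delta$, and new frames $\tilde X_1 = X_1 + X_2 P_1$, $\tilde Y_2 = Y_2 - Y_1 P_1^\dagger$ (and $\tilde X_2 = X_2$, $\tilde Y_1 = Y_1$). As recorded in the discussion after Theorem~\ref{thm:left-invariant}, this produces the block-triangular form
\begin{equation}
[\tilde Y_1\,\tilde Y_2]^\dagger (A+E)[\tilde X_1\,\tilde X_2]
=\begin{bmatrix}\tilde A_1 & E_{12}\\ 0 & \tilde A_2\end{bmatrix},
\end{equation}
with $\tilde A_1 = A_1 + E_{11}+E_{12}P_1$ and $\tilde A_2 = A_2 + E_{22} - P_1 E_{12}$.

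Next I would apply Corollary~\ref{cor:flipped_version} to the block-diagonal matrix $\hat A = \tilde A_1\oplus \tilde A_2$ with perturbation $\hat E$ whose only nonzero block is $\hat E_{12} = E_{12}$. Here the hypotheses of the corollary are almost trivially met: $\hat E_{11}=\hat E_{22}=0$ so the required separation is just $\sep(\tilde A_1,\tilde A_2)$, and $\hat E_{21}=0$ makes the product condition vacuous. Using the stability bound \eqref{eq:perturbation-separation-function} for $\sep$ together with $\|P_1\|<2\|E_{21}\|/\delta$, I would derive
\begin{equation}
\sep(\tilde A_1,\tilde A_2)\ge \delta - 2\|P_1\|\|E_{12}\|
> \delta\Big(1-4\tfrac{\|E_{21}\|\|E_{12}\|}{\delta^2}\Big) > 0,
\end{equation}
which is positive by assumption. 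The corollary then delivers a unique $P_2$ with the stated bound $\|P_2\| < 2\|E_{12}\|/\sep(\tilde A_1,\tilde A_2)$, which combined with the above estimate gives exactly \eqref{eq:bound_P2}. Since $\hat E_{21}=0$ and $\hat E_{ii}=0$, the formulas \eqref{eq:tA1_flipped} and \eqref{eq:tA2_flipped} leave the diagonal blocks unchanged, so the final blocks are still $\tilde A_1$ and $\tilde A_2$, as claimed.

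Finally I would assemble the two transformations: composing the right-multiplications on $[X_1\,X_2]$ by $\begin{bmatrix}\id & 0\\ P_1 & \id\end{bmatrix}$ (from step one) and $\begin{bmatrix}\id & P_2\\ 0 & \id\end{bmatrix}$ (from step two) yields the expression for $[\tilde X_1\,\tilde X_2]$, and symmetrically for $[\tilde Y_1\,\tilde Y_2]^\dagger$. The biorthogonality $\tilde Y_i^\dagger \tilde X_j = \delta_{ij}\id$ is preserved at each stage because the elementary block-triangular factors are inverses of one another (as can be checked by direct multiplication). Uniqueness of $(P_1,P_2)$ follows from the uniqueness clauses in Theorem~\ref{thm:left-invariant} and Corollary~\ref{cor:flipped_version}. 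The main subtlety, and what I would check most carefully, is the $\sep$-stability step, since the whole scheme hinges on the fact that after the first perturbation the two diagonal blocks remain well separated; the factor $1-4\|E_{21}\|\|E_{12}\|/\delta^2$ in the denominator of the bound on $\|P_2\|$ is the precise quantitative record of this near-loss of separation.
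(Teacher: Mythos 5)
Your proposal is correct and follows essentially the same route as the paper: apply the one-sided Theorem~\ref{thm:left-invariant} to obtain $P_1$ and the block-triangular form, then apply Corollary~\ref{cor:flipped_version} to the remaining off-diagonal block $E_{12}$, using the stability of $\sep$ to verify that $\sep(\tilde A_1,\tilde A_2)\geq \delta(1-4\norm{E_{21}}\norm{E_{12}}/\delta^2)>0$, which yields exactly the stated bound on $\norm{P_2}$. The composition of the two frame changes and the preservation of biorthogonality are handled the same way in the paper.
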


\begin{proof}
We look at the output of the standard version of the perturbation statement, Theorem~\ref{thm:left-invariant}. 
There we have 
\begin{equation}
[\tilde Y_1\, \tilde Y_2]\ad ( A+ E) [\tilde X_1\, \tilde X_2] = \begin{bmatrix}
    \tilde A_1& %
     E_{12} \\
    0 & %
      \tilde A_2 \\
  \end{bmatrix}\,,
\end{equation}
with the existence of a matrix $P_1$ describing the perturbation between the invariant subspaces. 
In order to deal with the off-diagonal element $E_{12}$, let $\tilde A = \begin{bmatrix}\tilde A_1 &0 \\0 & \tilde A_2 \end{bmatrix}$ and consider 
$F = \begin{bmatrix} 0&  E_{12} \\ 0& 0 \end{bmatrix}$ as its perturbation, i.e., 
$F_{11} = F_{22} = F_{21} = 0$ and $F_{12} = E_{12}$. 

We can apply Corollary~\ref{cor:flipped_version} (flipped version) if 
\begin{equation}
  \delta' \coloneqq \sep(\tilde A_1, \tilde A_2)
\end{equation}
is large enough. 
Using the stability, cp.~Eq.~\eqref{eq:perturbation-separation-function}, of the separation function, i.e.\ 
\begin{equation}
  \abs{\sep(A_1 + B_1, A_2 + B_2) - \sep(A_1, A_2)} \leq \norm{B_1} + \norm{B_2}\, ,
\end{equation}
and the expressions for $\tilde A_1$ and $\tilde A_2$ from the first perturbation step,  
we find that 
\begin{equation}
\begin{split}
  \delta' &\geq \sep(A_1, A_2) - \norm{E_{11} + E_{12}P_1} - \norm{E_{22} - P_1E_{12}} \\ 
  &\geq \sep(A_1, A_2) - \norm{E_{11}} - \norm{E_{22}} - 2\norm{P_1}\norm{E_{12}} \\
  &\geq \delta - 4 \frac{\norm{E_{21}}\norm{E_{12}}}\delta\,,
\end{split}
\end{equation}
where in the last step we have used the norm-bound for $P_1$ and the definition of $\delta$ from the first step of the perturbation. 
Thus, $\delta' > 0$ is ensured if
\begin{equation}
  \frac14  > \frac{\norm{E_{21}}\norm{E_{12}}}{\delta^2}, 
\end{equation}
which is the condition from the first perturbation step. 
The other condition for applying Corollary~\ref{cor:flipped_version} (flipped version) is trivially fulfilled since $\norm{F_{21}} = 0$. 
Thus, we established the existence of $P_2$ with 
\begin{equation}
  \norm{P_2} \leq 2\, \frac{ \norm{E_{12}}}{\delta'} 
\end{equation}
giving rise to the stated condition. 

The final expressions for $\tilde A_i$ and the decomposition \eqref{eq:A+Etwo-sided} follow from the fact that 
$F = \begin{bmatrix} 0&  E_{12} \\ 0& 0 \end{bmatrix}$ 
does not contribute in \eqref{eq:tA1_flipped}, \eqref{eq:tA2_flipped} and \eqref{eq:A+E-flipped_version}. 
The statements for the perturbed invariant eigenspaces follows by combining the individual ones~\eqref{eq:invar_subspaces} and~\eqref{eq:invar_subspaces_flipped}.
The orthogonality relation follows by straightforward inspection. 
\end{proof}

It will also be useful to have a specialized perturbation result for \emph{moment operators} of random quantum circuits, where the perturbation is controlled in spectral norm.

\begin{theorem}[Perturbation of a moment operator]
\label{thm:perturbation-of-moment-operator}
Consider a moment operator, i.e., an operator $A$ with spectral decomposition
\begin{equation}
 A = [X_1\, X_2] \begin{bmatrix} \id & 0 \\ 0 & \Lambda \end{bmatrix} [X_1\, X_2]^\dagger ,
\end{equation}
where $[X_1\, X_2]$ is unitary.
Suppose that $\snorm{\Lambda} \leq 1 - \Delta$ for some $\Delta\in (0,1]$.
Let $E$ be a (potentially non-Hermitian) perturbation bounded as $\snorm{E} < \Delta/4$ with blocks $E_{ij} = X_i\ad E X_j$ w.r.t.~the spectral decomposition of $A$. 
Then, there exist unique matrices $P_1$ and $P_2$ with $\snorm{P_1} < 4 \snorm{E}/\Delta$ and 
$\snorm{P_2} < \frac{2\snorm E}{\Delta - 4\snorm E}$ 
such that the matrices
\begin{align}
  [R_{1}\, R_{2}] &= 
  [X_1\, X_2] %
  \begin{bmatrix} 
    \id & 0\\ P_1 & \id  
  \end{bmatrix}%
  \begin{bmatrix} \id & P_2 \\ 0& \id \end{bmatrix}, \\
  [L_{1}\, L_{2}]\ad &=  %
  \begin{bmatrix} \id & -P_2 \\ 0& \id \end{bmatrix}\begin{bmatrix} 
    \id & 0 \\ -P_1 & \id  
  \end{bmatrix} [X_1\, X_2]\ad \,,
  \label{eq:perturbed-basis} \\
  I &= \id + E_{11} + E_{12} P_1, \\
  O &= \Lambda + E_{22} - P_1 E_{12}\,, \label{eq:perturbed-projector}
\end{align}
fulfill $R_i\ad L_j = \delta_{ij} \id_i$ and $R_i L_j\ad = \delta_{ij} \Pi_i$, where $\id_i = X_i^\dagger X_i$ and $\Pi_i = X_i X_i^\dagger$ are the identity on the $i$-th eigenspace and the projector onto the $i$-th eigenspace, respectively.
We have the spectral resolution
\begin{equation}
\label{eq:spectral-resolution-moment-op}
\begin{aligned}
  [L_1\, L_2]\ad ( A+ E) [R_1\, R_2] &= 
  \begin{bmatrix}
    I & 0 \\
    0 & O \\
  \end{bmatrix} \,, \\
   A + E &= R_{1} I L_{1}\ad + R_{2} O L_{2}\ad \, .
\end{aligned}
\end{equation}
Moreover, the following bounds hold:
\begin{gather}
\begin{aligned}
  \snorm{I-\id} &< 2\snorm{E}, &
  \snorm{O-\Lambda} &< 2 \snorm E, \\
  \sep(I,O) &\geq \Delta - 4\snorm{E} > 0, &&
\end{aligned} \\
  \snorm{L_{2}}\snorm{R_{2}} \leq \frac{\snorm{E}}{\Delta}\left(1 - \frac{\snorm{E}}{\Delta}\right) + \frac{1+\snorm{E}/\Delta}{1-4\snorm{E}/\Delta}.
\end{gather}
Finally, if $A$ and $E$ are real matrices, and $[X_1\, X_2]$ is a real orthogonal matrix, then $P_1, P_2$ are real matrices and so are $[R_{1}\, R_{2}]$, $[L_{1}\, L_{2}]$, $I$, and $O$.
\end{theorem}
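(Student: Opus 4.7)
The plan is to specialize Theorem~\ref{thm:two-sided-perturbation-thm} to the situation at hand, with $A_1 = \id$, $A_2 = \Lambda$, and the matrix norm chosen to be the spectral norm. Since the spectral norm is unitarily invariant and $[X_1\, X_2]$ is unitary, each block satisfies $\snorm{E_{ij}} \leq \snorm{E}$. The essential preliminary step is a lower bound on the separation function: because $\snorm{\Lambda} \leq 1 - \Delta < 1$, the operator $\id - \Lambda$ is invertible with $\snorm{(\id-\Lambda)^{-1}} \leq 1/\Delta$, and hence for any $P$ with $\snorm{P} = 1$,
\begin{equation*}
\snorm{P - P\Lambda} = \snorm{P(\id - \Lambda)} \geq \frac{\snorm{P}}{\snorm{(\id-\Lambda)^{-1}}} \geq \Delta,
\end{equation*}
so that $\sep(\id,\Lambda) \geq \Delta$.

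Next I would verify the two hypotheses of Theorem~\ref{thm:two-sided-perturbation-thm}: the quantity $\delta = \sep(\id, \Lambda) - \snorm{E_{11}} - \snorm{E_{22}} \geq \Delta - 2\snorm{E}$ is strictly positive since $\snorm{E} < \Delta/4$, and the condition $\snorm{E_{21}}\snorm{E_{12}}/\delta^2 < 1/4$ reduces precisely to $2\snorm E < \Delta - 2\snorm E$, i.e. $\snorm E < \Delta/4$. This yields matrices $P_1, P_2$ and the spectral resolution; the expressions for $[R_1\,R_2]$, $[L_1\,L_2]$, $I$ and $O$ are obtained by direct substitution of $A_1=\id$, $A_2=\Lambda$ into the conclusion of the two-sided theorem. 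Starting from the raw bounds $\snorm{P_1} < 2\snorm E/\delta$ and $\snorm{P_2} < 2\snorm E/(\delta(1-4\snorm E^2/\delta^2))$, and using $\delta \geq \Delta - 2\snorm E$ together with the algebraic identity $(\Delta-2\snorm E)^2 - 4\snorm E^2 = \Delta(\Delta - 4\snorm E)$, I obtain the cleaner forms $\snorm{P_1} < 4\snorm E/\Delta$ and $\snorm{P_2} < 2\snorm E/(\Delta - 4\snorm E)$ stated in the theorem.

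The remaining estimates are immediate. The triangle inequality yields $\snorm{I - \id} \leq \snorm{E_{11}} + \snorm{E_{12}}\snorm{P_1} < \snorm E(1 + 4\snorm E/\Delta) < 2\snorm E$, and analogously for $\snorm{O - \Lambda}$; stability of the separation function, Eq.~\eqref{eq:perturbation-separation-function}, then gives $\sep(I, O) \geq \sep(\id,\Lambda) - \snorm{I-\id} - \snorm{O-\Lambda} \geq \Delta - 4\snorm E > 0$. The orthogonality relations $R_i^\dagger L_j = \delta_{ij}\id_i$ follow from the matrix identity
\begin{equation*}
\begin{bmatrix}\id & -P_2 \\ 0 & \id \end{bmatrix}\begin{bmatrix}\id & 0 \\ -P_1 & \id\end{bmatrix}\begin{bmatrix}\id & 0 \\ P_1 & \id\end{bmatrix}\begin{bmatrix}\id & P_2 \\ 0 & \id\end{bmatrix} = \id
\end{equation*}
together with $[X_1\, X_2]^\dagger[X_1\, X_2] = \id$, while the companion relation $R_i L_j^\dagger = \delta_{ij}\Pi_i$ is read off similarly from multiplying the factors in the opposite order. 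Reality of $P_1, P_2$ (and hence of all derived objects) is inherited from Remark~\ref{rem:pert-theory-real-matrices} via the uniqueness of the solutions.

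The main technical obstacle will be the product bound on $\snorm{L_2}\snorm{R_2}$. Unitarity of $[X_1\, X_2]$ gives the exact identities $L_2^\dagger L_2 = \id + P_1 P_1^\dagger$ and $R_2^\dagger R_2 = P_2^\dagger P_2 + (\id + P_1 P_2)^\dagger(\id + P_1 P_2)$, so that $\snorm{L_2} = \sqrt{1 + \snorm{P_1}^2}$ and $\snorm{R_2} \leq \sqrt{\snorm{P_2}^2 + (1+\snorm{P_1}\snorm{P_2})^2}$. Inserting the $P_i$-bounds in terms of $x \coloneqq \snorm E/\Delta$ and majorising the resulting root expression must yield the claimed upper bound. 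This last step is the delicate part: one must choose the algebraic majorisations so that the final expression matches the stated form, while everything preceding it is essentially bookkeeping on top of Theorem~\ref{thm:two-sided-perturbation-thm}.
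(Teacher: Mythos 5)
Your proposal follows the same strategy as the paper's proof: specialize Theorem~\ref{thm:two-sided-perturbation-thm} with $A_1=\id$, $A_2=\Lambda$ in spectral norm, verify $\delta\geq\Delta-2\snorm{E}>0$ and $\snorm{E_{12}}\snorm{E_{21}}/\delta^2<1/4$ from $\snorm{E}<\Delta/4$, and then simplify the resulting bounds. Your lower bound $\sep(\id,\Lambda)\geq\Delta$ via invertibility of $\id-\Lambda$ is a valid alternative to the paper's route (which uses $\sep(\id,0)=1$ and the stability estimate \eqref{eq:perturbation-separation-function}); both are fine. The verification of the hypotheses, the simplification of the $P_1$ and $P_2$ bounds via $(\Delta-2\snorm{E})^2-4\snorm{E}^2=\Delta(\Delta-4\snorm{E})$, the bounds on $\snorm{I-\id}$, $\snorm{O-\Lambda}$, $\sep(I,O)$, the orthogonality relations, and the reality argument via Remark~\ref{rem:pert-theory-real-matrices} all match the paper.

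The one genuine gap is the step you yourself flag: the bound on $\snorm{L_2}\snorm{R_2}$. Two remarks. First, your exact Gram-matrix identities are tighter than necessary; since $\sqrt{1+a^2}\leq 1+a$ and $\sqrt{b^2+(1+ab)^2}\leq 1+b+ab$, you can retreat to the paper's simpler triangle-inequality bounds $\snorm{L_2}\leq 1+\snorm{P_1}$ and $\snorm{R_2}\leq 1+\snorm{P_2}+\snorm{P_1}\snorm{P_2}$, whose product expands to $1+\snorm{P_1}+\snorm{P_2}+(2+\snorm{P_1})\snorm{P_1}\snorm{P_2}$. Second, and this is the real pitfall: the target expression $g(x)=(1-4x)x+\frac{1+x}{1-4x}$ with $x=\snorm{E}/\Delta$ is \emph{exactly} the value of that expansion when one inserts $\snorm{P_1}<4x$, $\snorm{P_2}<\frac{2x}{1-4x}$, and the \emph{joint} bound $\snorm{P_1}\snorm{P_2}\leq\frac{4x^2}{1-4x}$ — there is no slack. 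The joint bound must be obtained by multiplying the unsimplified estimates $\snorm{P_1}\leq 2\snorm{E}/\delta$ and $\snorm{P_2}\leq 2\snorm{E}\delta/(\delta^2-4\snorm{E}^2)$, giving $4\snorm{E}^2/(\delta^2-4\snorm{E}^2)\leq 4\snorm{E}^2/(\Delta^2-4\Delta\snorm{E})$; naively multiplying the two simplified bounds gives $\frac{8x^2}{1-4x}$, a factor of two worse, and the claimed inequality then no longer drops out of the algebra (your square-root improvements appear numerically to absorb this loss, but that would itself require proof). So to close the argument cleanly, keep the joint product bound separate from the individual ones, as the paper does in Eq.~\eqref{eq:P1P2bound}.
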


Note that for $\snorm E \to \Delta/4$ the upper bound on $\snorm{P_2}$, and consequently $\snorm{L_{2}}\snorm{R_{2}}$, diverges, i.e., the perturbation argument does not work in this limit. 

\begin{proof}
The proof follows by checking the conditions of Theorem~\ref{thm:two-sided-perturbation-thm}. 
First, we need to bound the separation function between $\id$ and $\Lambda$.
Using $\sep(\id,0)=1$ and the stability of the separation function \eqref{eq:perturbation-separation-function}, we have
\begin{equation}
  \sep(\id, 0 + \Lambda) \geq \sep(\id,0) - \snorm{\Lambda} \geq  1 - (1-\Delta) = \Delta.
\end{equation}
Since isometries have unit spectral norm, we find
\begin{equation}\label{eq:Ebounds}
  \snorm{E_{11}},\snorm{E_{21}}, \snorm{E_{12}},\snorm{E_{22}} \leq \snorm{E}.
\end{equation}
By assumption, we then have
\begin{equation}\label{eq:deltaE_bound}
  \delta \coloneqq \sep(\id, \Lambda) - \snorm{E_{11}} - \snorm{E_{22}} \geq \Delta - 2 \snorm E > \frac {\Delta} 2 > 0 
\end{equation}
and 
\begin{equation}
  \frac{\snorm{E_{12}}\snorm{E_{21}}}{\delta^2} \leq 
  \left(\frac{\snorm E}{\Delta - 2 \snorm E} \right)^2 \leq \frac 14\,. 
\end{equation}
Thus, Theorem~\ref{thm:two-sided-perturbation-thm} implies the existence of matrices $P_1$ and $P_2$ such that Eqs.~\eqref{eq:perturbed-projector} and \eqref{eq:perturbed-basis} hold. 

Next, we establish the claimed norm bounds on $P_1$ and $P_2$.
Since $x-1 > \frac{x}{2}$ for any $x > 2$, we have with $x = \Delta/(2\snorm{E}) > 2$
\begin{equation}
\label{eq:intermediateP1bound}
\begin{aligned}
  \snorm{P_1} &\leq 2 \,\frac{\snorm{E_{21}}}{\delta}  
  &\leq 
  \frac{2\snorm{E}}{\Delta-2\snorm E}
  = \frac{1}{\Delta/(2\snorm{E}) - 1} \\
  &< 4 \,\frac{\snorm E}{\Delta}\, .
\end{aligned}
\end{equation}
Using Eq.~\eqref{eq:Ebounds} and repeatedly Eq.~\eqref{eq:deltaE_bound}, the bound \eqref{eq:bound_P2} on $\snorm{P_2}$ becomes
\begin{align}\label{eq:intermediateP2bound}
 \snorm{P_2} 
 &\leq 
 \frac{2\snorm{E}}{\delta - 4\snorm{E}^2/\delta}\\
 &\leq
 \frac{2\snorm{E}}{\Delta - 2 \snorm{E} - 4\snorm{E}^2/(\Delta-2\snorm{E})} \\
 &= \frac{2\snorm{E}(\Delta-2\snorm{E})}{(\Delta - 2 \snorm{E})^2 - 4\snorm{E}^2} \\
 &= \frac{2\snorm{E}(\Delta-2\snorm{E})}{\Delta^2-4\snorm{E}\Delta} 
 \leq \frac{2\snorm E}{\Delta - 4\snorm E}.\label{eq:P2bound}
\end{align}

We continue by deriving the remaining bounds. 
We have
\begin{align}
  \snorm{I-\id} 
  &\leq \snorm{E_{11}} + \snorm{E_{12}P_1} \\
  &< \snorm{E} + 4\frac{\snorm{E}^2}{\Delta} 
  < 2 \snorm{E}, \\
  \snorm{O-\Lambda} 
  &\leq \snorm {E_{22}} + \snorm {P_1 E_{12}} \\
  &<  \snorm E + 4 \frac{\snorm{E}^2}{\Delta}
  < 2 \snorm E. 
\end{align}
Then, by the stability of the separation function, 
\begin{equation}
 \sep(I,O) \geq \sep(\id,\Lambda) - 4 \snorm{E} \geq \Delta - 4 \snorm{E}.
\end{equation}
Using the bounds on $P_1$ and $P_2$, we can bound the basis change operators $L_{2} = X_2-X_1 P_1\ad$ and $R_{2} = X_1 P_2 + X_2P_1P_2 + X_2$ as
\begin{align}
  \snorm{R_{2}} &\leq 1 + \snorm{P_1}\snorm{P_2} + \snorm{P_2} \, &
  \snorm{L_{2}} &\leq 1+ \snorm{P_1}.
\end{align}
We use the first inequalities in \eqref{eq:intermediateP1bound} and \eqref{eq:intermediateP2bound} on $P_1$ and $P_2$ and then Eq.~\eqref{eq:deltaE_bound} to obtain
\begin{align}
  \snorm{P_1}\snorm{P_2}
  &\leq 
  \frac{4 \snorm{E}^2}{\delta^2-4\snorm{E}^2} \\
  &\leq 
  \frac{4 \snorm{E}^2}{(\Delta - 2\snorm{E})^2-4\snorm{E}^2}\\
  &=
  \frac{4 \snorm{E}^2}{\Delta^2 - 4 \Delta \snorm{E}}\, . 
  \label{eq:P1P2bound}
\end{align}
Together with the bound \eqref{eq:P2bound} on $P_2$ this yields 
\begin{align}
\MoveEqLeft[1]
 \snorm{L_{2}}\snorm{R_{2}}\\
 &\leq 1 + \left(2+\snorm{P_1}\right) \snorm{P_1}\snorm{P_2} + \snorm{P_1} + \snorm{P_2} \\
 &\leq 1 + \left(2 + 4\frac{\snorm{E}}{\Delta}\right) \frac{4 \snorm{E}^2}{\Delta^2 - 4 \Delta \snorm{E}} \\
 &\qquad
   +4\frac{\snorm{E}}{\Delta} + \frac{2\snorm{E}}{\Delta-4\snorm{E}} \\
 &= \left(1-4\frac{\snorm{E}}{\Delta}\right)\frac{\snorm{E}}{\Delta} + \frac{1+\snorm{E}/\Delta}{1-4\snorm{E}/\Delta}.
\end{align}
The final statement for the perturbation of real matrices follows readily from Rem.~\ref{rem:pert-theory-real-matrices}. 
\end{proof}

\section{Different estimators and their variances}
\label{sec:estimators}

For our analysis, we use a mean estimator, defined in Eq.~\eqref{eq:filtered-rb-data-estimator} as
\begin{align}
\label{eq:filtered-rb-estimator-app}
  \hat{\FD}_\lambda(m)%
  &= \frac{1}{N} \sum_{l=1}^N f_\lambda(i^{(l)}, g_1^{(l)} \cdots g_m^{(l)}) \, .
\end{align}
Importantly, we assume that samples are taking iid from the joint distribution $\dd p(i,g_1,\dots,g_m) = p(i|g_1,\dots,g_m)\dd\nu(g_1)\dots\dd\nu(g_m)$, i.e.~a random circuit is sampled from $\nu^{\times m}$, and then measured once to obtain a sample from the outcome distribution.

More generally, one may sample $N_C$ random circuits, and then measure each circuit $N_M$ times.
Such an approach can have experimental advantages depending on the platform, and the degree and flexibility of control automation.
In principle, running the same circuit many times can be optimized for high sampling rates on a hardware level, while changing the circuit in each run may pose an additional classical control overhead.
However, many modern quantum devices can be automatically programmed, thus performing a different circuit in each run is not necessarily more resource-intensive.
Nevertheless, using a smaller number of random circuits may crucially reduce the runtime and memory consumption required for post-processing in large-scale benchmarking approaches.

That being said, the statistics also behave differently.
If each circuit is sampled $N_M$ times, we obtain an estimator of the form
\begin{align}
\label{eq:filtered-rb-estimator-app-2}
  \hat{\mathcal{F}}_\lambda(m)%
  &= \frac{1}{N_C N_M} \sum_{l=1}^{N_C} \sum_{k=1}^{N_M} f_\lambda(i^{(k,l)}, g_1^{(l)} \cdots g_m^{(l)}) \, .
\end{align}
Note that the samples $i^{(k,l)}$ for the same $l$ are conditioned on the same circuit, and thus the variance of $\hat{\mathcal{F}}_\lambda(m)$ is not simply $\Var[\FD_\lambda]/N_C N_M$.
Instead, we have to use the law of total variance, which allows us to use conditional expected values and variances:
\begin{widetext}
\begin{align}
\Var[\hat{\mathcal{F}}_\lambda(m)]
&=
\frac{1}{N_C}
\bigg(
 \frac{1}{N_M} \EE_{g_j} \Var[f_\lambda(i,g_j)\,|\,g_j]  
+ 
\Var_{g_j} \EE_i[f_\lambda(i,g_j)\,|\,g_j] \bigg) \label{eq:var-filtered-rb-estimator-app-3} \\
&=
\frac{1}{N_C}
\bigg(
 \frac{1}{N_M} 
 \EE_{g_j} \left( \EE_i[f_\lambda(i,g_j)^2\,|\,g_j] - \EE_i[f_\lambda(i,g_j)\,|\,g_j]^2 \right) 
 + 
 \EE_{g_j}\left( \EE_i[f_\lambda(i,g_j)\,|\,g_j]^2 \right) - \left( \EE_{g_j} \EE_i[f_\lambda(i,g_j)\,|\,g_j] \right)^2
\bigg) \\
&=
\frac{1}{N_C}
\bigg(
 \frac{1}{N_M} \EE_{g_j}\EE_i[f_\lambda(i,g_j)^2\,|\,g_j] 
 + 
 \left(1-\frac{1}{N_M}\right) \EE_{g_j}\EE_i[f_\lambda(i,g_j)\,|\,g_j]^2 
 -
 \left( \EE_{g_j} \EE_i[f_\lambda(i,g_j)\,|\,g_j] \right)^2
\bigg) \, . \label{eq:var-filtered-rb-estimator-app-2}
\end{align}
\end{widetext}

Hence, $\Var[\hat{\mathcal{F}}_\lambda(m)]$ consists of three terms:
The first and second moment of the filter function, $\EE[\FD_\lambda]$ and $\EE[f_\lambda^2]$, respectively, which give the variance of the estimator \eqref{eq:filtered-rb-estimator-app} and are thus discussed in the main text.
The middle term in Eq.~\eqref{eq:var-filtered-rb-estimator-app-2} gives an additional contribution if $N_M > 1$, i.e.~if more than one sample is taken per circuit.
In terms of the random circuit ensemble, it corresponds to a \emph{fourth moment}, while the other two terms correspond to second and third moments.

In terms of sample complexity, it does not seem advantageous to choose $N_M >  1$.
For instance, if we sample Haar-randomly from the unitary group, then one can check that 
\begin{align}
 \EE_{g_j} \Var[f_\lambda(i,g_j)\,|\,g_j]
 &=
 2\, \frac{d^2-1}{(d+2)(d+3)} \,, \\
 \Var_{g_j} \EE_i[f_\lambda(i,g_j)\,|\,g_j] 
 &=
 4\, \frac{d-1}{(d+2)(d+3)} \,, 
 \\ 
 \Var[\FD_\lambda^2]
 &=
 2\,\frac{d-1}{d+2} \, .
\end{align}
(Some of these expressions are computed in App.~\ref{sec:sampling-complexity-3-designs}.)
Hence, comparing $\Var[\hat{\mathcal{F}}_\lambda(m)]$ with $\Var[\hat{\FD}_\lambda(m)] = \Var[\FD_\lambda^2]/N$ for $N=N_C N_M$, one can readily verify that $\Var[\hat{\mathcal{F}}_\lambda(m)]$ is larger for any value of $N_C$ and $N_M$.
More precisely, the difference is of order $O((1/N_C - 1/N)/d)$.
Hence if $d$ is small (say less than 10 qubits) and the total number of samples $N$ is fixed, choosing $N_M = 1$ is clearly optimal.
If instead $N_C$ is fixed, then increasing $N_M$ far beyond $d$ does not improve the accuracy of the estimator.
Moreover, if $d$ is very large, than the second term in Eq.~\eqref{eq:var-filtered-rb-estimator-app-3} is negligible and thus $\Var[\hat{\mathcal{F}}_\lambda(m)] = O(1/N) = \Var[\hat{\FD}_\lambda(m)]$.
In this regime, the difference between the two estimators in terms of sampling complexity can be neglected.

In principle, the arguments in Sec.~\ref{sec:sampling-complexity} can be adapted to also treat the middle term in Eq.~\eqref{eq:var-filtered-rb-estimator-app-2} perturbatively, as we have done it in Thm.~\ref{thm:second-moment}.
However, in this case many more irreps, and thus terms, appear, which makes the analysis more difficult.
We think that a sampling complexity theorem for the estimator $\Var[\hat{\mathcal{F}}_\lambda(m)]$ similar to Thm.~\ref{thm:sampling-complexity-additive} can be formulated, but we expect the guarantees to be worse than for $\Var[\hat{\FD}_\lambda(m)]$.

\section{Sampling complexity for ideal implementations}

\label{sec:sampling_complexity_ideal_case}

In Sec.~\ref{sec:sampling-complexity}, Thm.~\ref{thm:sampling-complexity-relative}, we have established bounds on the sampling complexity of filtered \ac{RB}.
These involve the second moments of the estimator if an ideal implementation is used for which $\nu=\mu$ is the Haar measure and $\phi=\omega$ is the reference representation.
This ideal second moment is given by Eq.~\eqref{eq:sum-trace-Csigma} as follows
\begin{multline}
  \EE[f_\lambda^2]_\SPAM \\
  =
  \osandwichb{\rho^{\otimes 2}}%
  {(X_\lambda S_\lambda^+)^{\otimes 2}\,
  \widehat{\omega}[\omega_{\lambda}^{\otimes 2}] \big( X_\lambda^{\dagger\otimes 2} \tilde M_3 \big) }%
  {\tilde\rho}.
  \label{eq:rb-data-second-moment}
\end{multline}
Here,
\begin{equation}{}
  \tilde M_3 
  \coloneqq 
  \sum_{i\in [d]} \oketbra*{E_i\otimes E_i}{\tilde E_i}
  =
  \sum_{i\in [d]} \oketbra*{E_i\otimes E_i}{E_i}\EM ,
\end{equation}
is an operator associated with our measurement basis $E_i=\ketbra{i}{i}$ and the measurement noise $\EM$.
The second moment involves a projector of the form
\begin{equation}
\label{eq:second-moment-fourier-operator}
 \widehat{\omega}[\omega_{\lambda}^{\otimes 2}] = \int_G \omega_\lambda(g)^\dagger\otimes\omega_\lambda(g)^\dagger (\argdot) \,\omega(g) \dd\mu(g).
\end{equation}
Note that the above statements also hold if $\nu$ is an appropriate design, namely a $\bar\omega_\lambda\otimes\bar\omega_\lambda\otimes\omega$-design.
Generally, this is a weaker assumption than assuming a $\omega^{\otimes 3}$-design.

Analogous to the proof of Prop.~\ref{prop:fourier-transform}, only the irreps $\tau_{\sigma}\subset\omega$ which appear in $\tau_\lambda\otimes\tau_\lambda$ contribute to this projection.
Compared to Prop.~\ref{prop:fourier-transform}, we here use a different formulation which is more practical in the concrete cases below.
Let $\Sigma\coloneqq \Irr(\omega_\lambda\otimes\omega_\lambda) \cap\Irr(\omega)$ be the set of common irreps and let $n_\sigma$ and $m_\sigma$ be the multiplicities of $\tau_\sigma$ in $\omega$ and $\omega_\lambda\otimes\omega_\lambda$, respectively.
Let us label the copies as $\tau_\sigma^{(i)}$.
We find
\begin{align}
 \widehat{\omega}[\omega_{\lambda}^{\otimes 2}]  
 &= 
 \bigoplus_{\sigma\in\Sigma} 
 \bigoplus_{i=1}^{m_\sigma}
 \bigoplus_{j=1}^{n_\sigma}
 \int_G \tau_{\sigma}^{(i)}(g)^\dagger (\argdot) \,\tau_{\sigma}^{(j)}(g)  \dd\mu(g) \\
 &=
 \bigoplus_{\sigma}
 \bigoplus_{i=1}^{m_\sigma}
 \bigoplus_{j=1}^{n_\sigma}
 \oketbra{I_\sigma^{(i,j)}}{I_\sigma^{(i,j)}}, \label{eq:second-moment-contributions}
\end{align}
where $I_\sigma^{(i,j)}$ is a $G$-equivariant isometry between the irreps $\tau_{\sigma}^{(j)}$ and $\tau_{\sigma}^{(i)}$, a so-called \emph{intertwiner}.
In the following, we construct $G$-equivariant isomorphisms $I_\sigma^{(i,j)}$ which only after normalization with $\twonorm{I_\sigma^{(i,j)}}$ become an isometry.
We still refer to such maps as \emph{intertwiners}.

Unfortunately, not much more can be said about the common irreps without fixing a specific group and representation.
In general, $\tau_\lambda\otimes\tau_\lambda$ contains a trivial irrep if and only if $\tau_\lambda$ is a real representation.
For our applications, this is the case, hence the rank of $\widehat{\omega}[\omega_{\lambda}^{\otimes 2}]$ is at least the product of the multiplicity $n_\lambda$ of $\tau_\lambda$ in $\omega$ with the multiplicity of the trivial irrep.

In the following, we consider some concrete cases and assume that $G < \U(d)$.
The reference representation is taken as $\omega(g)\coloneqq U_g (\argdot) U_g^\dagger$ where $g\mapsto U_g$ is the defining representation of $\U(d)$.

The following can alternatively be deduced with appropriate Haar integration formulas, and has already been partially calculated elsewhere, for instance in Ref.~\cite{huang_predicting_2020}.
Here, we give a self-contained derivation in consistent notation which might be useful to readers not familiar with other works and can serve as a basis for analogous calculations for other groups.

\subsection{Second moment for unitary 3-designs}
\label{sec:sampling-complexity-3-designs}

If $G$ is a unitary 3-design, it is in particular a 2-design and hence $\omega$ decomposes into the trivial irrep $\tau_1$ and the adjoint irrep $\tau_\Ad$.
We fix $\omega_\lambda \equiv \tau_\Ad$.
The 3-design property implies that the integral in Eq.~\eqref{eq:second-moment-fourier-operator} is the same as over $\U(d)$.
For $\U(d)$, the representation $\tau_\Ad\otimes \tau_\Ad$ contains the trivial irrep with multiplicity one.
A general argument implies that $\tau_\Ad\otimes \tau_\Ad$ has to contain the adjoint irrep, too.
To find its multiplicity, we compute the character inner product.
In the following, $\chi$ and $\chi_\Ad$ denote the characters of $\omega$ and $\tau_\Ad$, respectively.
\begin{align}
 m_\Ad 
 \coloneqq \langle \chi_\Ad^2, \chi_\Ad \rangle 
 &= \langle (\chi-1)^2, \chi_\Ad \rangle = \langle \chi^2, \chi_\Ad \rangle - 2 \\
 &= \langle \chi^3, 1 \rangle - \langle \chi^2, 1 \rangle - 2 
 = 
 \begin{cases}
  1 & \text{if } d = 2, \\
  2 & \text{if } d \geq 3.  
 \end{cases} \label{eq:mult-adjoint-irrep}
\end{align}
Here, we multiply used $\chi = \chi_\Ad+1$ and the following combinatorical identity \cite{diaconis_eigenvalues_1994,rains_increasing_1998,scott_optimizing_2008}:
\begin{equation}
 \langle \chi^t,1\rangle 
 = \int_{U(d)} \abs{\tr U_g}^{2t} \dd\mu(g) 
 =
 \begin{cases}
  \frac{(2t)!}{t!(t+1)!} & \text{if } d = 2, \\
  t!  & \text{if } d \geq t.
 \end{cases}
\end{equation}
In summary, we find
\begin{equation}
 \rank \widehat{\omega}[\tau_\Ad^{\otimes 2}] =
 \begin{cases}
  2 & \text{if } d = 2, \\
  3  & \text{if } d \geq 3.
 \end{cases}
\end{equation}
Next, we derive the precise contributions of these irreps to $\widehat{\omega}[\tau_\Ad^{\otimes 2}]$ by Eq.~\eqref{eq:second-moment-contributions} and argue that the overlaps of the remaining terms in the second moment \eqref{eq:rb-data-second-moment} with these subspaces are small.

\paragraph{Contribution from the trival irreps.} 
Note that $\tau_\Ad\otimes \tau_\Ad$ is the restriction of $\omega\otimes\omega$ to the tensor square $M_0(d)^{\otimes 2}$ of complex traceless $d\times d$ matrices.
The trivial subrepresentation of $\omega\otimes\omega$ is spanned by the identity matrix $\one\otimes\one$ and the flip $F$ which can be written as $F  = F_0 + \frac{\one\otimes\one}{d}$ for a matrix $F_0 \in M_0(d)^{\otimes 2}$.
Hence, the trivial irrep of $\tau_\Ad\otimes \tau_\Ad$ is spanned by $F_0$ and thus equivalent to the trivial irrep of $\omega$, spanned by $\one$, under the intertwiner
\begin{equation}
 I_1 \coloneqq  \oketbra{F_0}{\one}.
\end{equation}
Indeed, $I_1/\twonorm{I_1}$ is an $G$-equivariant isometry.
The contribution of the trivial irreps in $\tau_\Ad\otimes \tau_\Ad$ and $\omega$ is then given as the orthogonal projection onto $I_1$:
\begin{align}
 \Pi_1(X) 
 &\coloneqq  \int_G \tau_\Ad(g)^\dagger \otimes \tau_\Ad(g)^\dagger (\argdot) \tau_1(g) \dd\mu(g) \\
 &= \frac{\obraket{I_1}{X}}{\obraket{I_1}{I_1}} \, I_1, \\
 \obraket{I_1}{I_1} 
 &= d(d^2-1).
\end{align}
Here, the normalization follows from $\twonorm{\one}^2 = d$ and $\twonorm{F_0}^2 = \twonorm{F}^2 - d^{-2} \twonorm{\one\otimes\one} = d^2-1$.

The overlap of $(X_\Ad^\dagger)^{\otimes 2} \tilde M_3$ with $I_1$ can be computed using the swap trick as follows
\begin{align}
\MoveEqLeft[1]
 \obraket{I_1}{(X_\Ad^\dagger)^{\otimes 2} \tilde M_3}\\
 &=
 \sum_{i=1}^d \obraket{\tilde E_i}{\one} \obraket{(X_\Ad^\dagger E_i)^{\otimes 2}}{F_0} \\
 &= \sum_{i=1}^d \tr(\tilde E_i) \tr\left[ \left(E_i - \frac{\one}{d}\right)^{\otimes 2} \left( F - \frac{\one\otimes\one}{d} \right) \right] \\
 &= \sum_{i=1}^d \tr(\tilde E_i) \left[ \tr\left(E_i - \frac{\one}{d}\right)^{2} - \frac{1}{d} \left(\tr\left(E_i -\frac{\one}{d} \right)\right)^2  \right] \\
 &= \sum_{i=1}^d \tr(\tilde E_i) \left( 1- \frac{1}{d} \right) \\
 &= \left( 1- \frac{1}{d} \right) \tr \EM^\dagger\left( \sum_i \ketbra{i}{i} \right)  \\
 &= ( d - 1 ) \frac{\tr\tilde\one}{d} \leq d-1 \, .
\end{align}
In the last step, we use $\sum_i E_i = \sum_i \ketbra{i}{i} = \one$ and the definition $\tilde\one\coloneqq\EM^\dagger(\one)$.
The final bound follows since the measurement noise $\EM$ is trace non-increasing and thus $\tr\tilde\one \leq \tr\one = d$.
In particular, if $\EM$ is trace-preserving, we have equality.
Hence, we have the following for the projection of $(X_\Ad^\dagger)^{\otimes 2} \tilde M_3$ onto the trival contribution. 
Recall that we have found $S_\Ad^+ = (d+1)\id_\Ad$ in Sec.~\ref{sec:frame-operator}.
\begin{align}
\left(S_\Ad^+\right)^{\otimes 2} \Pi_1((X_\Ad^\dagger)^{\otimes 2} \tilde M_3) 
&=
\frac{\tr\tilde\one}{d}\frac{(d-1)(d+1)^2}{d(d^2-1)} \oketbra{F_0}{\one} \\
&=
\frac{\tr\tilde\one}{d}\frac{d+1}{d} \oketbra{F_0}{\one}. 
\end{align}
Hence, the total contribution by the trivial irrep is
\begin{align}
\MoveEqLeft[2]
\osandwich{\rho^{\otimes 2}}
{X_\Ad^{\otimes 2}\left(S_\Ad^+\right)^{\otimes 2} \Pi_1((X_\Ad^\dagger)^{\otimes 2} \tilde M_3)}
{\tilde\rho} \\
&=
 \frac{\tr\tilde\one}{d}\frac{d+1}{d} \osandwich{\rho^{\otimes 2}}{X_\Ad^{\otimes 2}}{F_0} \obraket{\one}{\tilde\rho} \\
 &=
 \frac{\tr\tilde\one}{d}\frac{d+1}{d} \tr\left( \left(\rho-\frac{\one}{d}\right)^{\otimes 2} \left( F - \frac{\one\otimes\one}{d}\right) \right) \\
 &= 
 \frac{\tr\tilde\one}{d} \frac{d^2-1}{d^2}
 \leq
 \frac{d^2-1}{d^2}
 \label{eq:sampling-complexity-3-designs-trivial}
\end{align}

\paragraph{Contribution from the adjoint irreps.} 
The two adjoint irreps in $\tau_\Ad\otimes\tau_\Ad$ come from the identification of certain product operators $A\otimes B\in M_0(d)\otimes M_0(d)$ with their product $AB\in M_0(d)$ or $BA\in M_0(d)$ .
Clearly, under such maps, the representation $\tau_\Ad\otimes\tau_\Ad$ corresponds to $\tau_\Ad$.

To makes this precise and explicit, we describe the irreps in the orthogonal basis of Weyl operators.
In contrast to Sec.~\ref{sec:clifford-group}, we here use the `modular definition' of Weyl operators which is available in any dimension $d$.\footnote{However, the resulting groups do not have such nice properties as in the `finite field definition' in prime-power dimensions. Nevertheless, we here only need an orthogonal basis for the traceless subspace.}
To this end, we label the measurement basis as $\ket{x}$ where $x\in\Z_d$.
The Weyl operators are then defined as $w(x,z)\ket{y}\coloneqq \xi^{zy}\ket{y+x}$ where $z,x,y\in\Z_d$, $\xi$ is a primitive $d$-th root of unity, and all operations are performed in the ring $\Z_d$.
It is convenient to group the arguments as $a=(z,x)\in\Z_d^2$.
The Weyl operators are unitary, traceless if $a\neq 0$, and $w(a)w(b)\propto w(a+b)$.

Consider the following linear maps $M_0(d)\rightarrow M_0(d)\otimes M_0(d) $ defined in the Weyl basis:
\begin{align}
 I_\Ad^{(1)} &\coloneqq  \sum_{\substack{a,b\in\Z_d^2\setminus 0 \\ a+b \neq 0}} \oketbra{w(a)\otimes w(b)}{w(a)w(b)}, \\
 I_\Ad^{(2)} &\coloneqq  \sum_{\substack{a,b\in\Z_d^2\setminus 0 \\ a+b \neq 0}} \oketbra{w(a)\otimes w(b)}{w(b)w(a)}.
\end{align}
These maps identify a traceless Weyl operator $w(c)\in M_0(d)$ with the uniform superposition of all $w(a)\otimes w(b) \in M_0(d)\otimes M_0(d)$ such that $a+b = c$.
Since the sum is commutative, there are two choices of attributing the summands to tensor factors, leading to the two maps $I_\Ad^{(1)}$ and $I_\Ad^{(2)}$.
It is straightforward to check (e.g.~by computing $I_\Ad^{(i)\dagger}I_\Ad^{(i)}$) that both maps are isometries, up to normalization by $\twonorm{I_\Ad^{(i)}}$.
The two linear maps are linearly independent, except for the case $d=2$: 
There, the traceless Pauli operators $X,Y,Z$ mutually anti-commute and hence 
\begin{equation}
 I_\Ad^{(1)} = - I_\Ad^{(2)} \qquad \text{(for $d=2$)}.
\end{equation}
We claim that the images of $I_\Ad^{(1)}$ and $I_\Ad^{(2)}$ correspond to a single ($d=2$) or two copies ($d\geq 3$) of $\tau_\Ad$ in $\tau_\Ad\otimes\tau_\Ad$.
In the latter case, the two copies can be mapped onto each other by permuting the tensor factors.
Note that this is in perfect alignment with the previously computed multiplicity \eqref{eq:mult-adjoint-irrep}.

The claim can be verified by a simple, but somewhat lengthy calculation which shows that $I_\Ad^{(1)}$ and $I_\Ad^{(2)}$ are indeed fixed by the representation $\tau_\Ad^\dagger \otimes \tau_\Ad^\dagger (\argdot) \tau_\Ad$.
For $d=2$, we can take either of the two intertwiners, say $I_\Ad^{(1)}$, as a basis for the range of the projector
\begin{equation}
 \Pi_\Ad 
 \coloneqq  
 \int_G \tau_\Ad(g)^\dagger \otimes \tau_\Ad(g)^\dagger (\argdot) \tau_\Ad(g) \dd\mu(g)  \,.
\end{equation}
For $d\geq 3$, the two intertwiners are linearly independent, but do not form an orthogonal basis, as one can readily compute:
\begin{align}
 \obraket{ I_\Ad^{(1)} }{ I_\Ad^{(2)} } 
 &= d^2 \sum_{\substack{a,b\in\Z_d^2\setminus 0 \\ a+b \neq 0}} \obraket{w(b)w(a)}{w(a)w(b)} \\
 &= d^3 \sum_{a\neq 0} \sum_{\substack{b\neq 0 \\ b \neq -a}} \xi^{[a,b]} \\
 &= d^3 \sum_{a\neq 0} ( d^2 \delta_{a,0} - 2 ) \\
 &= - 2 d^3 (d^2-1).
\end{align}
Here, $\xi = e^{2\pi i/d}$ and $[a,b] = a_1b_2 - a_2 b_1$ is the symplectic form measuring whether $w(a)$ and $w(b)$ commute.
In the second to last step we used that character orthogonality implies $\sum_{b\in\Z_d^2} \xi^{[a,b]} = d^2 \delta_{a,0}$.
Likewise, we find 
\begin{align}
 \obraket{ I_\Ad^{(i)} }{ I_\Ad^{(i)} } 
 &= d^2 \sum_{\substack{a,b\in\Z_d^2\setminus 0 \\ a+b \neq 0}} \obraket{w(a)w(b)}{w(a)w(b)} \\
 &= d^3 (d^2-1)(d^2-2).
\end{align}
Hence, for $d\geq 3$, we use Gram-Schmidt orthogonalization on the intertwiners to express the projector: 
\begin{align}
 \tilde I_\Ad^{(2)} 
 &\coloneqq  I_\Ad^{(2)} - \frac{\obraket{ I_\Ad^{(2)} }{ I_\Ad^{(1)} } }{d^3 (d^2-1)(d^2-2)} I_\Ad^{(1)} 
 = I_\Ad^{(2)} + \frac{2}{d^2-2} I_\Ad^{(1)} \\
 \MoveEqLeft[2]
 \obraket{\tilde I_\Ad^{(2)} }{\tilde I_\Ad^{(2)} } \\
 &= \obraket{ I_\Ad^{(2)} }{ I_\Ad^{(2)} } + \frac{4}{d^2-2}\obraket{ I_\Ad^{(2)} }{ I_\Ad^{(1)} } \\
 &\qquad
  + \frac{4}{(d^2-2)^2}\obraket{ I_\Ad^{(1)} }{ I_\Ad^{(1)} } \\
 &= d^3 (d^2-1)(d^2-2) - \frac{8 d^3 (d^2-1)}{d^2-2} + \frac{4 d^3 (d^2-1)}{d^2-2} \\
 &= d^3 (d^2-1)(d^2-2) \left( 1 - \frac{4}{(d^2-2)^2} \right) \\
 &= d^3 (d^2-1)(d^2-2) \frac{ d^2 (d^2 - 4 ) }{(d^2-2)^2}.
\end{align}
For our purposes, it is enough to compute the projection of superoperators $L:\,M_0(d)\rightarrow M_0(d)\otimes M_0(d) $ which are invariant under the permutation of tensor factors of $M_0(d)\otimes M_0(d)$, this is $\pi L = L$ for $\pi \in S_2$.
Since $I_\Ad^{(2)} = \pi I_\Ad^{(1)}$ for $\pi\neq\id$, it is then immediate that $\obraket{I_\Ad^{(2)}}{L}=\obraket{I_\Ad^{(1)}}{L}$ and thus we have the following expansion for $d\geq 3$:
\begin{align}
 \Pi_\Ad(L) 
 &=
 \frac{\obraket{I_\Ad^{(1)}}{L}}{\obraket{ I_\Ad^{(1)} }{ I_\Ad^{(1)}}} I_\Ad^{(1)} 
 +
 \frac{\obraket{\tilde I_\Ad^{(2)}}{L}}{\obraket{\tilde I_\Ad^{(2)} }{\tilde I_\Ad^{(2)}}} \tilde I_\Ad^{(2)} \\
 &= \frac{\obraket{I_\Ad^{(1)}}{L}}{d^3 (d^2-1)(d^2-2)} \bigg( I_\Ad^{(1)} + \\
 &\quad
  +  \frac{(d^2-2)^2}{ d^2 (d^2 - 4 ) } \left( 1 +  \frac{2}{d^2-2} \right) \tilde I_\Ad^{(2)}\bigg) \\
 &= \frac{\obraket{I_\Ad^{(1)}}{L}}{d^3 (d^2-1)(d^2-2)} \bigg( I_\Ad^{(1)} + \\
 &\quad
  + \frac{d^2-2}{ d^2 - 4 } \left( I_\Ad^{(2)} + \frac{2}{d^2-2} I_\Ad^{(1)}\right) \bigg) \\
 &= \frac{\obraket{I_\Ad^{(1)}}{L}}{d^3 (d^2-1)(d^2-2)} \left( \frac{d^2-2}{d^2-4} I_\Ad^{(1)} +  \frac{d^2-2}{ d^2 - 4 } I_\Ad^{(2)}  \right) \\
 &= \frac{\obraket{I_\Ad^{(1)}}{L}}{d^3 (d^2-1)(d^2-4)}  \left( I_\Ad^{(1)} +  I_\Ad^{(2)}  \right).
 \label{eq:3-design-ad-projection}
\end{align}
We find the following for the overlap of $(X_\Ad^\dagger)^{\otimes 2} \tilde M_3$ with $I_\Ad^{(1)}$:
\begin{align}
\MoveEqLeft[1]
 \obraket{I_\Ad^{(1)}}{(X_\Ad^\dagger)^{\otimes 2} \tilde M_3} \\
 &= \sum_{\substack{a,b\in\Z_d^2\setminus 0 \\ a+b \neq 0}} \sum_{y\in\Z_d} 
   \obraket{w(a)\otimes w(b)}{(X_\Ad^\dagger E_y)^{\otimes 2}} \obraket{\tilde E_y}{w(a)w(b)} \\
 &= \sum_{\substack{a,b\in\Z_d^2\setminus 0 \\ a+b \neq 0}} \sum_{y\in\Z_d}
  \tr(w(a)^\dagger E_y)\tr(w(b)^\dagger E_y) \obraket{\tilde E_y}{w(a)w(b)}\\
 &= \sum_{\substack{z,z'\in\Z_d\setminus 0 \\ z+z' \neq 0}} \sum_{y\in\Z_d}
  \omega^{-(z+z')y} \obraket{\tilde E_y}{Z(z+z')} \\
 &= \sum_{\substack{z,z'\in\Z_d\setminus 0 \\ z+z' \neq 0}} 
 \Bigg( \EM^\dagger\bigg( \sum_{y\in\Z_d} \omega^{(z+z')y} \ketbra{y}{y} \bigg) \bigg| Z(z+z') \Bigg) \\
 &= \sum_{\substack{z,z'\in\Z_d\setminus 0 \\ z+z' \neq 0}} \osandwich{Z(z+z')}{\EM}{Z(z+z')} \\
 &= (d-2) \sum_{z\in\Z_d\setminus 0} \osandwich{Z(z)}{\EM}{Z(z)} \\
 &= d(d-2)\,\tr(P_\Ad  \tilde M),
\label{eq:3-design-ad-projection-overlap}
\end{align}
where we have used that $M = d^{-1} \sum_{z\in\Z_d}\oketbra{Z(z)}{Z(z)}$ and $\tilde M = M \EM$.
The expression $\tr(P_\Ad \tilde M)$ is discussed in the main text. In particular for $\lambda$-multiplicity free and aligned with M, we found  the upper bound $\tr(P_\Ad \tilde M) \leq \tr(P_\Ad M) = d-1$.

Using Eq.~\eqref{eq:3-design-ad-projection} with $L=(X_\Ad^\dagger)^{\otimes 2} \tilde M_3$, we find the following expression for the projection onto the adjoint contribution for $d\geq 3$:
\begin{equation}
 \Pi_\Ad\left( (X_\Ad^\dagger)^{\otimes 2} \tilde M_3 \right)
 = \frac{\tr(P_\Ad \tilde M)}{d^2 (d^2-1)(d+2)}  \left( I_\Ad^{(1)} +  I_\Ad^{(2)}  \right).
 \label{eq:3-design-Pi-ad-on-M3}
\end{equation}
Recall that for $d=2$, the projection is instead given by projecting onto $I_\Ad^{(1)}$ only.
However, since $\obraket{I_\Ad^{(1)}}{L}=\obraket{I_\Ad^{(2)}}{L}=-\obraket{I_\Ad^{(1)}}{L}$ for any symmetric $L$, the overlap has to vanish.
Applied to $L=(X_\Ad^\dagger)^{\otimes 2} \tilde M_3$ this immediately shows that the adjoint contribution to the second moment is zero.

Finally, multiplying Eq.~\eqref{eq:3-design-Pi-ad-on-M3} with $S_\Ad^+ = (d+1)\id_\Ad$ does only change the prefactor such that we get the contribution of the adjoint irreps to the second moment by contracting Eq.~\eqref{eq:3-design-Pi-ad-on-M3} with $\rho^{\otimes 2}$ and $\tilde\rho$.
Note that the left hand side is again symmetric and thus the contributions of $I_\Ad^{(1)}$ and $I_\Ad^{(2)}$ are identical.
What is left is to compute the overlap of $I_\Ad^{(1)}$ with the states.
Here, we again assume that $\rho$ is pure, and let $\rho_0 = X_\Ad^\dagger(\rho) \simeq \rho - \one/d$ be its traceless part.
From the definition of $I_\Ad^{(1)}$, we have $\obra{\rho_0^{\otimes 2}}I_\Ad^{(1)} = d^2 \obra{\rho_0^2}X_\Ad$ and the purity of $\rho$ implies $X_\Ad^\dagger(\rho_0^2) = (d-2)/d \, \rho_0$.
Hence, we find
\begin{equation}
 \osandwich{\rho^{\otimes 2}}{X_\Ad^{\otimes 2}  I_\Ad^{(1)} }{\tilde\rho}
 =
 d^2 \osandwich{\rho_0^{2}}{X_\Ad}{\tilde\rho}
 =
 d(d-2) \osandwich{\rho}{P_\Ad}{\tilde\rho}.
\end{equation}
Combining the above results, we obtain
\begin{align}
\MoveEqLeft[2]
\osandwich{\rho^{\otimes 2}}{X_\Ad^{\otimes 2} \left(S_\Ad^+\right)^{\otimes 2} \Pi_\Ad\left( (X_\Ad^\dagger)^{\otimes 2} \tilde M_3 \right) }{\tilde\rho} \\
 &= \frac{2\tr(P_\Ad \tilde M)(d+1)}{d^2 (d-1)(d+2)} \osandwich{\rho^{\otimes 2}}{X_\Ad^{\otimes 2}  I_\Ad^{(1)} }{\tilde\rho} \\
 &= \frac{2\tr(P_\Ad \tilde M)(d+1)(d-2)}{d (d-1)(d+2)} \osandwich{\rho}{P_\Ad}{\tilde\rho} \, .
 \label{eq:sampling-complexity-3-designs-adjoint}
\end{align}

\paragraph{Summing the contributions.}
We have found that the second moment for unitary 3-designs with SPAM noise can be expressed as
\begin{align}
\MoveEqLeft[2]
  \EE[f_\Ad^2]_\SPAM \\
  &=
  \frac{\tr\tilde\one}{d} \frac{d^2-1}{d^2}
  +
  \frac{2\tr(P_\Ad \tilde M)(d+1)(d-2)}{d (d-1)(d+2)} \osandwich{\rho}{P_\Ad}{\tilde\rho} \label{eq:2nd-moment-3-design-SPAM} \\
  &\leq 
  1 - \frac{1}{d^2}
  +
  \frac{2(d+1)(d-1)(d-2)}{d^2(d+2)} \label{eq:2nd-moment-3-design} \\
  &\leq
  \begin{cases}
    \frac{3}{4} & \text{if } d=2, \\
    3 - \frac{1}{d^2} & \text{if } d\geq 3.
  \end{cases} \label{eq:2nd-moment-3-design-bound}
\end{align}
In the second-to-last step we used $\tr\tilde\one\leq d$, $\osandwich{\rho}{P_\Ad}{\tilde\rho} \leq 1 - 1/d$, and the bound $\tr(P_\Ad \tilde M) \leq d-1$ from before.
Note that the right hand side of Eq.~\eqref{eq:2nd-moment-3-design} is exactly the expression for $\EE[f_\Ad^2] _\ideal$, i.e.~the second moment in the absence of SPAM noise.

Subtracting the square of the first moment, $(d-1)^2/d^2$, we thus find the following exact expression for the variance, in the absence of SPAM noise:
\begin{equation}
 \Var[f_\Ad] 
 =
 2\,
 \frac{d-1}{d+2} \, .
\end{equation}

\paragraph{Analysis of SPAM visibilities}
From Eq.~\eqref{eq:2nd-moment-3-design}, we can observe that the noise-free contributions coming from the trivial and the adjoint irreps in $\tau_\Ad^{\otimes 2}$ are modulated by the SPAM visibilities for the first moments.
In particular, we have
\begin{align}
 \frac{\tr\tilde\one}{d} &= \frac{\tr(P_1 \tilde M)}{d} = v_{\mathrm{M},1} \,, &
 \frac{\tr(P_\Ad\tilde M)}{d-1} &= v_{\mathrm{M},\Ad} \,, \\
 \osandwich{\rho}{P_\Ad}{\tilde\rho} &= v_{\mathrm{SP},\Ad} \frac{d-1}{d} \, .
\end{align}
We have the tight bounds
\begin{align}
0 &\leq v_{\mathrm{M},1} \leq 1 \, , &
- \frac{1}{d-1} &\leq  v_{\mathrm{SP},\Ad} \leq 1 \, , \\
- \frac{1}{d-1} &\leq v_{\mathrm{M},\Ad} \leq 1 \, .
\end{align}
The first two lower bounds are saturated for $\EM$ being the `discard' operation and $\tilde\rho$ being orthogonal to $\rho$, respectively.
Moreover, we have
\begin{equation}
 \tr(P_\Ad \tilde M) = \frac{1}{d} \sum_{z\neq 0} \osandwich{Z(z)}{\EM}{Z(z)}.
\end{equation}
Clearly, we can replace $\EM$ with its projection onto Weyl channels.
If $\EM = w(a)(\argdot)w(a)^\dagger$, then we find 
\begin{align}
 \frac{1}{d} \sum_{z\neq 0} \osandwich{Z(z)}{\EM}{Z(z)}
 &=
 \sum_{z\neq 0} \xi^{z\cdot a_x}
 =
 \begin{cases}
  d - 1, & \text{if } a_x = 0, \\
  -1, & \text{else}.
 \end{cases}
\end{align}
Hence, the lowest value can be achieved if $\EM$ is a convex combination of $X$-type Weyl operators, which shows the last lower bound.

We see that the contribution from the trivial irrep in Eq.~\eqref{eq:2nd-moment-3-design} cannot be negative, but the one from the adjoint irrep can, with lower bound $-\frac{2(d+1)(d-2)}{d^2(d+2)} \geq -\frac{2}{d}$.

\subsection{Second moment for local unitary 3-designs}
\label{sec:sampling-complexity-local-3-designs}

Next, we assume that $G$ is a local unitary 3-design, \ie~$G$ factorizes as $G = G_\loc^{\times m}$ where each copy of $G_\loc$ is a unitary 3-design acting on $\C^q$, and $d=q^m$ (e.g.~the single-qubit Clifford group $G_\loc=\Cl{1}{2}$).\footnote{The following arguments hold with minor adaptations if $G = G_1\times\dots\times G_m$ where each $G_i$ is a unitary 3-group, acting on Hilbert spaces of not necessarily equal dimensions.}
The irreps of $G$ are then simply the tensor products of the irreps of $G_\loc$ and hence we can label them by a binary vector $b\in\{0,1\}^m$ where $b_i=1$ and $b_i=0$ correspond to the trivial and adjoint irrep on the $i$-th factor, respectively.

By Eq.~\eqref{eq:second-moment-fourier-operator}, we have to decompose $\tau_b\otimes\tau_b$ into the relevant irreps $\tau_a$ appearing in $\omega$.
To this end, it is convenient to write $\tau_b\otimes \tau_b \simeq \bigotimes_{i=1}^m \tau_{b_i}\otimes\tau_{b_i}$.
Then, $\tau_{b_i}\otimes\tau_{b_i} = 1$ if $b_i=1$ and otherwise contains exactly one copy of the trivial irrep and $m_\Ad$ copies of the adjoint irrep, where $m_\Ad=1$ if $q=2$ and $m_\Ad=2$ else (c.f.~Eq.~\eqref{eq:mult-adjoint-irrep}).
Hence, the relevant irreps in $\tau_b\otimes\tau_b$ are labelled by $a \in \{0,1\}^m$ where $a_i = 1$ if $b_i=1$ and otherwise arbitrary.
If $a\circ b$ denotes the bitwise (Hadamard) product, then we can formulate this condition as $|a \circ b|=|b|$.
The multiplicity of the irrep $\tau_a$ is given as $m_a \coloneqq  m_\Ad^{|\bar a|}$.
Thus, we arrive at the decomposition
\begin{equation}
 \tau_b\otimes \tau_b 
 \simeq 
 \bigotimes_{i=1}^m \tau_{b_i}\otimes\tau_{b_i}
 \simeq
 \bigoplus_{a:\, |a \circ b| = |b|} \tau_a^{\oplus m_a }\; \oplus \; \text{irrelevant irreps}\, .
 \label{eq:tau-b-square-decomp}
\end{equation}

We treat each of the $2^{|\bar b|}$ irreps $\tau_a$ individually.
The rank of the projector $\widehat{\omega}[\tau_a^{\oplus m_a }]$ is $m_a = m_\Ad^{|\bar a|}$ and we can construct a basis for its range using tensor products of the intertwiners in Sec.~\ref{sec:sampling-complexity-3-designs}.
To this end, we also need to define a (local) intertwiner between $\tau_1$ and $\tau_1\otimes\tau_1$ which we can take as $J_1\coloneqq \oketbra{\one\otimes\one}{\one}$.

For now, let us assume that the first $|b|$ bits of $b$ are set and the remaining ones are zero, and the same holds for $a$, \ie~its first $|a|\geq |b|$ bits are set and otherwise zero.
Similar to Sec.~\ref{sec:sampling-complexity-3-designs}, we want to assume that $L:\,M(q^m)\rightarrow M(q^m)\otimes M(q^m)\simeq \bigotimes_{i=1}^m M(q)\otimes M(q) $ is symmetric in the sense that it is left-invariant under permutations $\bigotimes_{i=1}^m \pi_i$ for $\pi_i\in S_2$ that permute the factors on every qudit.
In this case, we can write
\begin{multline}
 \widehat{\omega}[\tau_a^{\oplus m_a }] (L)
 = \\
 \frac{\obraket{J_1^{\otimes |b|}\otimes I_1^{\otimes |a|-|b|} \otimes I_\Ad^{(1)\otimes |\bar a|}}{L}}{q^{3|b|}(q(q^2-1))^{|a|-|b|}(q^3 (q^2-1)(q^2-2m_\Ad))^{|\bar a|}} \\
 \times J_1^{\otimes |b|} \otimes I_1^{\otimes |a|-|b|}  \otimes  I_\Ad^{\otimes |\bar a|} \, ,
 \label{eq:local-3-design-L-projection}
\end{multline}
where we define $I_\Ad \coloneqq I_\Ad^{(1)}$ if $q=2$ and $I_\Ad \coloneqq I_\Ad^{(1)} +  I_\Ad^{(2)}$ for $q\geq 3$.

It seems reasonable to assume that both the measurement basis as well as the initial state share the locality structure of $G$, however the state preparation and measurement noise might fail to do so.
For a local measurement basis $\ket{x} = \bigotimes_{i=1}^m \ket{x_i}$, the measurement operator $M_3$ becomes
\begin{align}
 M_3 
 &= \sum_{x\in\Z_q^m} \oketbra{E_x\otimes E_x}{E_x} \\
 &\simeq \bigotimes_{i=1}^m \sum_{x_i\in\Z_q} \oketbra{E_{x_i}\otimes E_{x_i}}{E_{x_i}} =: M_{3,\loc}^{\otimes m} \, .
\end{align}
Retracing the steps from Sec.~\ref{sec:sampling-complexity-3-designs} carefully, we find that 
\begin{align}
 J_1^\dagger M_{3,\loc} &= \oketbra{\one}{\one\otimes\one}M_{3,\loc} =  \oketbra{\one}{\one}, \\
 I_1^\dagger M_{3,\loc} &= \oketbra{\one}{F_0}M_{3,\loc} = \frac{q-1}{q} \oketbra{\one}{\one}, \\
 I_\Ad^{(1)\dagger} M_{3,\loc} &= (q-2) \sum_{z\in\Z_q\setminus 0} \oketbra{Z(z)}{Z(z)} =: q(q-2) \tau.
\end{align}
Here we have set $\tau \coloneqq  P_\Ad M_\loc$.
Taking $L=M_3\EM$ in Eq.~\eqref{eq:local-3-design-L-projection} requires us to evaluate the following inner product:
\begin{align}
\MoveEqLeft[1]
 \obraket{J_1^{\otimes |b|}\otimes I_1^{\otimes |a|-|b|} \otimes I_\Ad^{(1)\otimes |\bar a|}}{M_3\EM}\\
 &=
 \left(\frac{q-1}{q}\right)^{|a|-|b|} \left(q(q-2)\right)^{|\bar a|} \tr\left( \oketbra{\one}{\one}^{\otimes |a|} \otimes  \tau^{\otimes |\bar a|} \EM \right) \\
 &\leq
 \left(\frac{q-1}{q}\right)^{|a|-|b|} \left(q(q-2)\right)^{|\bar a|} \tr\left( \oketbra{\one}{\one}^{\otimes |a|} \otimes  \tau^{\otimes |\bar a|} \right) \\
 &=
 \left(\frac{q-1}{q}\right)^{|a|-|b|} \left(q(q-2)\right)^{|\bar a|} q^{|a|} (q-1)^{|\bar a|} \\
 &=
 q^{|\bar a|+|b|} (q-1)^{|\bar b|} (q-2)^{|\bar a|} \\
 &=
 q^{m+|\bar a|-|\bar b|} (q-1)^{|\bar b|} (q-2)^{|\bar a|}
 \label{eq:local-3-design-measurement-contraction}
\end{align}
Here, the inequality follows as in Eq.~\eqref{eq:main-thm-proof-2}, i.e.~by the observation that we can replace $\EM$ by its unital and trace-preserving part with unit spectral norm and apply Hölder's inequality.
Note that we have equality in the absence of SPAM noise.

Furthermore, let $\rho$ be a pure product state, w.l.o.g.~$\rho = \rho_\loc^{\otimes m}$.
Then, we have to contract it with the operators in Eq.~\eqref{eq:local-3-design-L-projection}.
The necessarily computations have already been performed in Sec.~\ref{sec:sampling-complexity-3-designs}, in particular Eq.~\eqref{eq:sampling-complexity-3-designs-trivial} and \eqref{eq:sampling-complexity-3-designs-adjoint}.
\begin{align}
\MoveEqLeft[2]
 \osandwichb{\rho^{\otimes 2}}%
  {J_1^{\otimes |b|} \otimes I_1^{\otimes |a|-|b|} \otimes I_\Ad^{\otimes |\bar a|}  }%
  {\tilde\rho}\\
 &= \left(\frac{q-1}{q}\right)^{|a|-|b|}  (q-2)^{|\bar a|} m_\Ad^{|\bar a|} \\
 &\qquad
  \times \obraket[\big]{\one^{\otimes |a|} \otimes \left(q \rho_\loc - \one\right)^{\otimes |\bar a|}  }{\tilde\rho} \\
 &\leq \left(\frac{q-1}{q}\right)^{|a|-|b|}  (q-2)^{|\bar a|}  (q-1)^{|\bar a|} m_\Ad^{|\bar a|} \\
 &= q^{|b|-|a|} (q-1)^{|\bar b|}(q-2)^{|\bar a|} m_\Ad^{|\bar a|}\\
 &= q^{|\bar a|-|\bar b|} (q-1)^{|\bar b|}(q-2)^{|\bar a|} m_\Ad^{|\bar a|},
 \label{eq:local-3-design-state-contraction}
\end{align}
where the upper bound follows from Hölder's inequality.
Again, we have equality in the absence of SPAM noise.

Recall that we have up to now assumed that only the very first bits of $b$ and $a$ are set and the others are zero.
The result for an arbitrary bitstring $b\in\{0,1\}^m$ and $a\in\{0,1\}^m$ such that $|a\circ b|=|b|$ can be obtained by applying a suitable permutation to the basis of the projection in Eq.~\eqref{eq:local-3-design-L-projection}.
However, it is straightforward to check that the established upper bounds \eqref{eq:local-3-design-measurement-contraction} and \eqref{eq:local-3-design-state-contraction} still hold, even when such a permutation is applied.

Finally, using $S_b^+=(q+1)^{n-|b|}\id_b$ from Sec.~\ref{sec:frame-operator}, we obtain the following upper bound for the second moment of local unitary 3-designs for $q\geq 3$:
\begin{widetext}
\begin{align}
\EE[f_b^2]_\SPAM
 &=
 \osandwichb{\rho^{\otimes 2}}%
  {(X_b S_b^+)^{\otimes 2}\,
  \widehat{\omega}[\tau_b \otimes \tau_b] \big( X_b^{\dagger\otimes 2} \tilde M_3 \big) }%
  {\tilde\rho} \\
 &\leq 
  (q+1)^{2|\bar b|} 
  \sum_{a:\,|a\circ b|=|b|}
  \frac{q^{m+2(|\bar a|-|\bar b|)} (q-1)^{2|\bar b|} (q-2)^{2|\bar a|}2^{|\bar a|}}%
  {q^{3(m-|\bar b|)}(q(q^2-1))^{|\bar b|-|\bar a|}(q^3 (q^2-1)(q^2-4))^{|\bar a|}}
  \\
 &=
 \frac{(q^2-1)^{|\bar b|} }{q^{2m}} 
 \sum_{a:\,|a\circ b|=|b|}
 \left(\frac{2(q-2)}{q+2}\right)^{|\bar a|}\\
 &=
 \frac{(q^2-1)^{|\bar b|} }{q^{2m}} 
 \sum_{k=0}^{|\bar b|}
 \binom{|\bar b|}{k}
 \left(\frac{2(q-2)}{q+2}\right)^{k} \\
 &=
 \frac{(q^2-1)^{|\bar b|} }{q^{2m}} 
 \left(\frac{3q-2}{q+2}\right)^{|\bar b|} \label{eq:2nd-moment-local-unitary-3-designs} \\
 &\leq
 \frac{(3q^2)^{|\bar b|}}{q^{2m}} = 3^m \left(\frac{3}{q^2}\right)^{|b|} \leq 3^{m-|b|} \, .
\end{align}
\end{widetext}
The last inequality follows since we always have $3/q^2 \leq 1/3$ for $q\geq 3$.
Hence, the second moment is only reasonably bounded if all but logarithmically many bits in $b$ are set (i.e.~we only have a logarithmic number of adjoint irreps).
As argued earlier, the inequalities leading to the expression in Eq.~\eqref{eq:2nd-moment-local-unitary-3-designs} are tight in the absence of SPAM noise, and hence this is the result for $\EE[f_b^2]_\ideal$.

For $q=2$, we can deduce similarly to Sec.~\ref{sec:sampling-complexity-3-designs} that all contributions containing an adjoint irrep have to vanish.
Concretely, the overlap in Eq.~\eqref{eq:local-3-design-measurement-contraction} is zero whenever $|\bar a| \neq 0$.
Hence, the only irrep in Eq.~\eqref{eq:tau-b-square-decomp} which contributes to the second moment is the one for which $a = (1,\dots,1)$ is the all-ones vector.
We then find 
\begin{align}
\EE[f_b^2]_\SPAM
 &\leq
  \frac{q^m(q^2-1)^{2|\bar b|}}%
  {q^{3m - |\bar b|}(q(q^2-1))^{|\bar b|}} \\
  &=
  \frac{(q^2-1)^{|\bar b|}}%
  {q^{2m}} 
  =
  \left(\frac34\right)^{|\bar b|} \left(\frac14\right)^{|b|}
  \leq
  1.
 \label{eq:2nd-moment-local-unitary-3-designs-qubits}
\end{align}
Again, note that we have equality in the first inequality in the absence of SPAM noise.

\subsection{Second moment for the Heisenberg-Weyl/Pauli group}
\label{sec:sampling-complexity-paulis}

As another example, we consider the Heisenberg-Weyl group $G=\HW{n}{p}$ as defined in Sec.~\ref{sec:clifford-group}, and write the Weyl operators as $w(v)$ with $v\in \F_p^{2n}$.
The Heisenberg-Weyl group acts naturally on $(\C^p)^{\otimes n}$ and its conjugation representation decomposes into one-dimensional irreps since
\begin{equation}
 w(v) w(u) w(v)^\dagger = \xi^{[v,u]} w(u),
\end{equation}
where $\xi$ is a primitive $p$-th root of unity and $[v,u]$ is the standard symplectic product on $\F_p^{2n}$.
Hence, let us label the irreps of $\HW{n}{p}$ by $\tau_u$.

Clearly, $\tau_u\otimes\tau_u \simeq \tau_{2u}$ and hence the projector of the second moment becomes
\begin{equation}
 \widehat{\omega}[\tau_u\otimes\tau_u ] = \frac{\oketbra{I_u}{I_u}}{p^{3n}}, \qquad I_u \coloneqq  \oketbra{w(u)\otimes w(u)}{w(2u)}.
\end{equation}
Thus, we find for $u=(z,x)$, similar to the calculation in Eq.~\eqref{eq:3-design-ad-projection-overlap}:
\begin{align}
 \obraket{I_u}{(X_u^\dagger)^{\otimes 2}\tilde M_3} 
 &= \sum_{y\in\F_p^n} \obraket{w(u)\otimes w(u)}{E_y\otimes E_y} \obraket{\tilde E_y}{w(2u)} \\
 &= \delta_{x,0} \sum_{y\in\F_p^n} \xi^{2z\cdot y} \obraket{\EM^\dagger(\ketbra{y}{y})}{Z(2z)} \\
 &= \delta_{x,0}\, \osandwich{Z(2z)}{\EM}{Z(2z)} \leq \delta_{x,0}\, p^n.
\end{align}
Note that for qubits, $p=2$, we have $2z=0$ and hence the last inequality is an equality if $\EM$ is trace-preserving.

Hence, we find the following bound for the second moment
\begin{align}
\EE[f_b^2]_\SPAM
 &=
 \osandwichb{\rho^{\otimes 2}}%
  {(X_u S_u^+)^{\otimes 2}\,
  \widehat{\omega}[\tau_u \otimes \tau_u] \big( X_u^{\dagger\otimes 2} \tilde M_3 \big) }%
  {\tilde\rho} \\
 &\leq \frac{\delta_{x,0}}{p^{2n}} \obraket{\rho}{Z(z)}^2 \obraket{Z(z)}{\tilde\rho} \leq \frac{\delta_{x,0}}{p^{2n}},
\end{align}
where we have used Hölder's inequality twice in the last step.
Note that all inequalities are saturated in the absence of SPAM noise.

\end{document}